\titlespacing{\paragraph}{%
	0pt}{
	0.5\baselineskip}{
	1em}
\def\moverlay{\mathpalette\mov@rlay}
\def\mov@rlay#1#2{\leavevmode\vtop{%
		\baselineskip\z@skip \lineskiplimit-\maxdimen
		\ialign{\hfil$\m@th#1##$\hfil\cr#2\crcr}}}
\newcommand{\charfusion}[3][\mathord]{
	#1{\ifx#1\mathop\vphantom{#2}\fi
		\mathpalette\mov@rlay{#2\cr#3}
	}
	\ifx#1\mathop\expandafter\displaylimits\fi}
\colorlet{DarkRed}{red!60!black}
\colorlet{DarkGreen}{green!50!black}
\colorlet{DarkBlue}{blue!60!black}
\declaretheorem[numberwithin=section]{theorem}
\declaretheorem[numberlike=theorem]{lemma}
\declaretheorem[numberlike=theorem]{corollary}
\renewcommand{\epsilon}{\varepsilon}
\newcommand{\myurl}[1]{{\footnotesize \url{#1}}}
\newcommand{\donotshow}[1]{}
\newcommand{\ignore}[1]{}
\newcommand{\assign}{\mathbin{\raisebox{0.05ex}{\mbox{\rm :}}\!\!=}}
\newcommand{\argmin}{\mathop {\rm argmin}}
\providecommand{\R}{\mathbb{R}}
\newcommand{\N}{\mathbb{N}}
\newcommand{\Z}{\mathbb{Z}}
\newcommand {\abs}[1] {| #1 |}
\newcommand{\XX}{\mathcal{X}}
\providecommand{\Copt}{C_{\star}}
\providecommand{\sset}[1]{\{\, #1 \,\}}
\providecommand{\sset}[1]{\{\, #1 \,\}}
\providecommand{\smSet}[2]{ \{\, #1 \mbox{ {\rm :} } #2 \,\}  }
\newcommand{\set}[2]{ \left\{\, #1 \mbox{ {\rm :} } #2 \,\right\}  }
\providecommand{\set}[2]{ \left\{\, #1 \,\, \mbox{{\rm :}}\,\, #2 \, \right\}  }
\newcommand{\norm}[1]{\vert\!\vert #1 \vert\!\vert}
\newcommand{\onenorm}[1]{{\norm{#1}_{1}}}
\newcommand{\inftynorm}[1]{\norm{#1}_{\infty}}
\newcommand{\bfzero}{{\bf 0}}
\newcommand{\diag}{\mathrm{diag}}
\newcommand{\qone}{q^{(1)}}  \newcommand{\qtwo}{q^{(2)}}
\newcommand{\dist}{\mathrm{dist}}
\newcommand{\cmax}{c_{\max}}
\newcommand{\cmin}{c_{\min}}
\newcommand{\D}{D}
\newcommand{\infnorm}[1]{\norm{#1}_\infty}
\newcommand{\opt}{{\mathrm{opt}}}
\newcommand{\OPT}{\mathcal{S}_{\opt}}
\newcommand{\Eopt}{E_{\opt}}
\newcommand{\X}{X}
\newcommand{\cost}{{\mathrm{cost}}}
\newcommand{\E}{E}
\DeclareMathOperator{\rank}{{\mathrm{rank}}}
\DeclareMathOperator{\supp}{{\mathrm{supp}}}
\renewcommand{\paragraph}{%
 \@startsection{paragraph}{4}%
 {\z@}{1.2ex \@plus .5ex \@minus .1ex}{-.7em}%
 {\normalfont\normalsize\bfseries}%
}
\newcommand{\Xdom}{\X_{\mathrm{dom}}}
\newcommand{\rot}{\mathrm{T}}
\newcommand{\source}{s_0}
\newcommand{\sink}{s_1}
\newcommand{\lref}[1]{(\ref{#1})}
\renewcommand{\epsilon}{\varepsilon}
\newcommand{\Y}{Y}
\newcommand{\Xstar}{X_{\star}}
\newcommand{\PExt}{P^{(\mathrm{ext})}}
\newcommand{\PSI}[1]{\Psi^{(#1)}}
\newcommand{\UPSILON}[1]{\Upsilon^{(#1)}}
\newcommand{\p}[1]{p^{(#1)}}
\newcommand{\q}[1]{q^{(#1)}}
\newcommand{\x}[1]{x^{(#1)}}
\newcommand{\xl}{\x{\ell}}
\newcommand{\xlplus}{\x{\ell + 1}}
\newcommand{\y}[1]{y^{(#1)}}
\newcommand{\z}[1]{z^{(#1)}}
\newcommand{\pl}{\p{\ell}}
\newcommand{\ql}{\q{\ell}}
\newcommand{\Ll}{L^{(\ell)}}
\newcommand{\Rl}{R^{(\ell)}}
\newcommand{\Xl}{X^{(\ell)}}
\newcommand{\xprm}{x^{\prime}}
\newcommand{\h}[1]{h^{(#1)}}
\newcommand{\hl}{\h{\ell}}
\newcommand{\hprm}{h_{0}}
\newcommand{\ah}[1]{\alpha^{(#1)}}
\newcommand{\ahl}{\ah{\ell}}
\newcommand{\ahlplus}{\ah{\ell + 1}}
\newcommand{\ahz}{\alpha^{(0)}}
\newcommand{\vZeros}{\mathbf{0}}
\newcommand{\rhoA}{\rho_{A}}
\newcommand{\rk}{r^{(k)}} 
\newcommand{\rkp}{r^{(k+1)}}
\newcommand{\xkp}{x^{(k+1)}}
\newcommand{\qk}{q^{(k)}} 
\newcommand{\ex}{\left( \begin{array}{c} x \\ x' \end{array}\right)}
\newcommand{\xkm}{x^{(k-1)}} 
\newcommand{\skm}{s^{(k-1)}}
\newcommand{\sk}{s^{(k)}}
\newcommand{\xk}{x^{(k)}}
\newcommand{\xz}{x^{(0)}}
\newcommand{\fk}{f^{(k)}}
\newcommand{\fstr}{f^{\star}}
\newcommand{\epsStr}{\epsilon^{\star}}
\begin{document}

\title{Two Results on Slime Mold Computations}

\author[1]{Ruben Becker}
\author[2]{Vincenzo Bonifaci}
\author[1]{Andreas Karrenbauer}
\author[1]{\\Pavel Kolev\thanks{This work has been funded by the Cluster of Excellence ``Multimodal Computing and Interaction'' within the Excellence Initiative of the German Federal Government.}}
\author[1]{Kurt Mehlhorn}

\affil[1]{Max Planck Institute for Informatics, Saarland Informatics Campus, Saarbr{\"u}cken, Germany \{ruben,karrenba,pkolev,mehlhorn\}@mpi-inf.mpg.de}
\affil[2]{Institute for the Analysis of Systems and Informatics,
	National Research Council of Italy (IASI-CNR), Rome, Italy,
	vincenzo.bonifaci@iasi.cnr.it}

\date{}
\maketitle

\begin{abstract}
	We present two results on slime mold computations. In wet-lab experiments (Nature'00) by Nakagaki et al. the slime mold Physarum polycephalum demonstrated its ability to solve shortest path problems. Biologists proposed a mathematical model, a system of differential equations, for the slime's adaption process (J. Theoretical Biology'07). It was shown that the process convergences to the shortest path (J. Theoretical Biology'12) for all graphs. We show that the dynamics actually converges for a much wider class of problems, namely undirected linear programs with a non-negative cost vector. 

	Combinatorial optimization researchers took the dynamics describing slime behavior as an inspiration	for an optimization method and showed that its discretization can $\varepsilon$-approximately solve linear programs with positive cost vector (ITCS'16). Their analysis requires a feasible starting point, a step size depending linearly on $\varepsilon$, and a number of steps with quartic dependence on $\mathrm{opt}/(\varepsilon\Phi)$, where $\Phi$ is the difference between the smallest cost of a non-optimal basic feasible solution and the optimal cost ($\mathrm{opt}$).

	We give a refined analysis showing that the dynamics initialized with any strongly dominating point converges to the set of optimal solutions. Moreover, we strengthen the convergence rate bounds and prove that the step size is independent of $\varepsilon$, and the number of steps depends logarithmically on $1/\varepsilon$ and quadratically on $\mathrm{opt}/\Phi$.
\end{abstract}

\newpage
\tableofcontents
\newpage


\section{Introduction}

We present two results on slime mold computations, one on the biologically-grounded model and one on the biologically-inspired model. The first model was introduced by biologists to capture the slime's apparent ability to compute shortest paths. We show that the dynamics can actually do more. It can solve a wide class of linear programs with nonnegative cost vectors. The latter model was designed as an optimization technique inspired by the former model. We present an improved convergence result for its discretization. The two models are introduced and our results are stated in Sections~\ref{biologically-grounded} and~\ref{biologically-inspired} respectively. The results on the former model are shown in Sections~\ref{simple instances} and~\ref{general instances}, the results on the latter model are shown in Section~\ref{discretization}. 

\subsection{The Biologically-Grounded Model	}\label{biologically-grounded}

\begin{figure}[t]
\begin{center}
\includegraphics[width=0.42\textwidth]{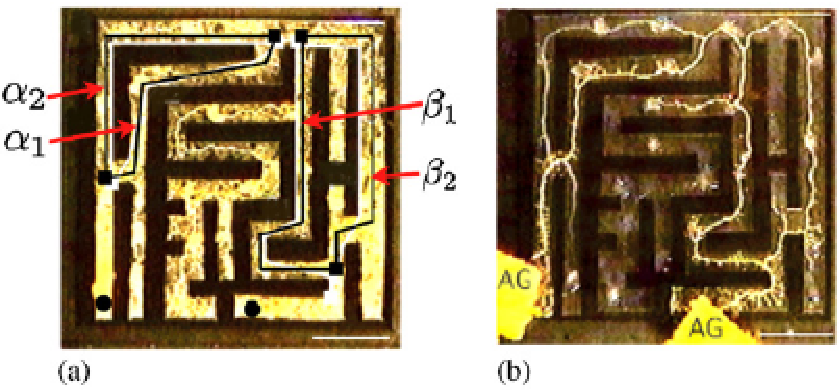}
\hspace{0.3cm}
\includegraphics[width=0.40\textwidth]{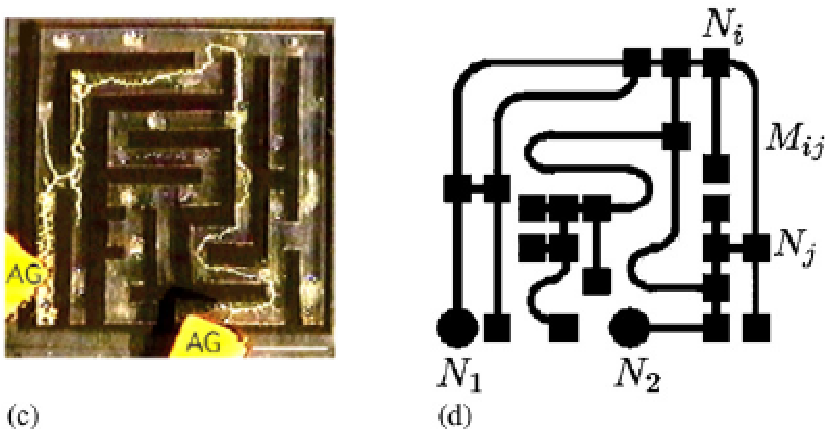}
\end{center}
\caption{\label{fig:maze} The experiment in~\cite{Nakagaki-Yamada-Toth}
(reprinted from there): (a) shows the maze uniformly covered by Physarum;
yellow color indicates presence of Physarum. Food (oatmeal) is provided at the
locations labeled AG. After a while the mold retracts to the shortest path
connecting the food sources as shown in (b) and (c). (d) shows the underlying
abstract graph. The video~\cite{Physarum-Video} shows the
experiment.
}
\end{figure}

\emph{Physarum polycephalum} is a slime mold that apparently is able to solve
shortest path problems. Nakagaki, Yamada, and T\'{o}th~\cite{Nakagaki-Yamada-Toth} report about the following experiment; see Figure~\ref{fig:maze}. They built a maze, covered it by pieces of Physarum (the slime can be cut into pieces which will reunite if brought into vicinity), and then fed the slime with oatmeal at two locations. After a few hours the slime retracted to a path that follows the shortest path in the maze connecting the food sources. The authors report that they repeated the experiment with different mazes; in all experiments, Physarum retracted to the shortest path.

The paper~\cite{Tero-Kobayashi-Nakagaki} proposes a mathematical model for the behavior of the slime and argues extensively that the model is adequate. Physarum is modeled as an electrical network with time varying resistors. We have a simple \emph{undirected} graph $G = (N,E)$ with distinguished nodes $\source$ and $\sink$ modeling the food sources. Each edge $e \in E$ has a positive length $c_e$ and a positive capacity $x_e(t)$; $c_e$ is fixed, but $x_e(t)$ is a function of time. The resistance $r_e(t)$ of $e$ is $r_e(t) = c_e/x_e(t)$. In the electrical network defined by these resistances, a current of value 1 is forced from $\source$ to $\sink$. For an (arbitrarily oriented) edge $e = (u,v)$, let $q_e(t)$ be the resulting current over $e$. Then, the capacity of $e$ evolves according to the differential equation
\begin{equation}\label{PD}
\dot{x}_e(t) = | q_e(t) | - x_e(t),
\end{equation}
where $\dot{x}_e$ is the derivative of $x_e$ with respect to time. In equilibrium ($\dot{x}_e = 0$ for all $e$), the flow through any edge is equal to its capacity. In non-equilibrium, the capacity grows (shrinks) if the absolute value of the flow is larger (smaller) than the capacity. In the sequel, we will mostly drop the argument $t$ as is customary in the treatment of dynamical systems. We will also write $q$ for the vector with components $q_e$. It is well-known that the electrical flow $q$ is the feasible flow minimizing energy dissipation $\sum_e r_e q_e^2$ (Thomson's principle). 

We refer to the dynamics above as \emph{biologically-grounded}, as it was introduced by biologists to model the behavior of a biological system. Miyaji and Ohnishi were the first to analyze convergence for special graphs (parallel links and planar graphs with source and sink on the same face) in~\cite{Miyaji-Ohnishi}. In~\cite{Physarum} convergence was proven for \emph{all} graphs. We state the result from~\cite{Physarum} for the special case that the shortest path is unique.

\begin{theorem}[\cite{Physarum}] 
	Assume $x_e(0) > 0$ and $c_e > 0$ for all $e$, 
	and that the undirected shortest path $P^*$ from $\source$ to $\sink$ w.r.t.~the 
	cost vector $c$ is unique. Then $x(t)$ in~\lref{PD} converges to $P^*$. 
	Namely, $x_e(t) \rightarrow 1$ for $e \in P^*$ and $x_e \rightarrow 0$ 
	for $e \not\in P^*$ as $t \rightarrow \infty$.
\end{theorem}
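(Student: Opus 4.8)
The plan is to exhibit a Lyapunov function built from the electrical energy and the cost, use an invariance‑principle argument to pin the $\omega$-limit set down to equilibria, and then a short growth estimate to force the limiting energy to equal the length of $P^*$; convergence of the capacities follows easily.

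\emph{Step 1: well-posedness and a priori bounds.} The right-hand side of \lref{PD} is smooth on the positive orthant and, near a face $x_e=0$, equals $|q_e|\ge 0$, so the solution exists, is unique, and stays strictly positive for all $t\ge 0$. The crucial a priori bound is that a unit $\source$--$\sink$ electrical flow carries at most $1$ in absolute value on any edge: all node potentials lie between $\phi_{\sink}$ and $\phi_{\source}$ (maximum principle for the harmonic potential), so for an edge $e=(u,v)$ with $\phi_u\ge\phi_v$ the cut consisting of the nodes with potential exceeding $\phi_v$ separates $\source$ from $\sink$ and has all of its edges oriented out of it, hence carries total value $1$, giving $|q_e|\le 1$. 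Therefore $\dot x_e\le 1-x_e$, so the whole trajectory is bounded; moreover $\frac{d}{dt}\sum_e c_ex_e\ge C-\sum_e c_ex_e$, where $C:=\sum_{e\in P^*}c_e$ is the length of $P^*$, so $V(x(t)):=\sum_e c_ex_e$ stays bounded away from $0$.

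\emph{Step 2: the Lyapunov function.} Let $\mathcal{E}(x):=\min\big\{\sum_e c_ef_e^2/x_e:\ f\text{ a unit }\source\text{--}\sink\text{ flow}\big\}=\sum_e c_eq_e^2/x_e$ be the energy dissipated by the electrical flow (Thomson's principle). Cauchy--Schwarz gives $\big(\sum_e c_e|q_e|\big)^2\le \mathcal{E}(x)V(x)$, and flow decomposition gives $\sum_e c_e|q_e|\ge C$, so $\mathcal{E}(x)V(x)\ge C^2$ for all admissible $x$. I would then show $t\mapsto \mathcal{E}(x(t))V(x(t))$ is non-increasing. By the envelope theorem $\partial_{x_e}\mathcal{E}=-c_eq_e^2/x_e^2$, hence $\dot{\mathcal{E}}=\mathcal{E}-\sum_e c_e|q_e|^3/x_e^2$ and $\dot V=\sum_e c_e|q_e|-V$, so
\[
\frac{d}{dt}\big(\mathcal{E}(x(t))V(x(t))\big)=\mathcal{E}\sum_e c_e|q_e|-V\sum_e c_e\frac{|q_e|^3}{x_e^2}.
\]
Writing $w_e:=c_ex_e\ge 0$ and $g_e:=|q_e|/x_e\ge 0$, the right-hand side equals $\big(\sum_e w_eg_e^2\big)\big(\sum_e w_eg_e\big)-\big(\sum_e w_e\big)\big(\sum_e w_eg_e^3\big)$, which is $\le 0$ by Chebyshev's sum inequality (applied to $g_e$ and $g_e^2$ with weights $w_e$), with equality precisely when all the $g_e$ are equal. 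Hence $\mathcal{E}(x(t))V(x(t))$ decreases monotonically to a limit $L^2\ge C^2$.

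\emph{Step 3: the $\omega$-limit set and the crux.} Since the trajectory is bounded, its $\omega$-limit set $\Omega$ is nonempty, compact and connected, and $\mathcal{E}V\equiv L^2$ on it; combining the equality case above with invariance shows every $\bar x\in\Omega$ is an equilibrium ($|q_e(\bar x)|=\bar x_e$ for all $e$): on an invariant piece with $g_e\equiv\gamma$ the dynamics reduces to $\dot x_e=(\gamma-1)x_e$, a uniform rescaling that leaves the electrical flow unchanged, consistent with $|q_e|=\gamma x_e$ only if $\gamma=1$. At such an equilibrium $\mathcal{E}(\bar x)=V(\bar x)=L$, and by continuity of the effective resistance on a neighbourhood of $\Omega$ we get $\mathcal{E}(x(t))\to L$. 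To force $L=C$, consider $\Lambda(t):=\sum_{e\in P^*}c_e\ln x_e(t)$, bounded above for all $t$ by Step 1. Since the time-$t$ potentials satisfy $|\phi_u-\phi_v|=c_eg_e$ and $\phi_{\source}-\phi_{\sink}=\mathcal{E}(x(t))$, the triangle inequality along $P^*$ gives $\mathcal{E}(x(t))\le\sum_{e\in P^*}c_eg_e$, so $\dot\Lambda=\sum_{e\in P^*}c_e(g_e-1)\ge\mathcal{E}(x(t))-C$. If $L>C$, then $\mathcal{E}(x(t))-C$ is bounded below by a positive constant for large $t$, forcing $\Lambda(t)\to+\infty$, a contradiction. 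Hence $L=C$ and $\mathcal{E}(x(t))V(x(t))\to C^2$.

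\emph{Step 4: conclusion, and the main obstacle.} From $C\le\sum_e c_e|q_e(t)|\le\sqrt{\mathcal{E}(x(t))V(x(t))}\to C$ we get $\sum_e c_e|q_e(t)|\to C$, so $q(t)$ converges to the set of minimum-cost unit $\source$--$\sink$ flows; since $P^*$ is the \emph{unique} shortest path, that set is a single point, the unit flow on $P^*$, so $q_e(t)\to 1$ for $e\in P^*$ and $q_e(t)\to 0$ otherwise. Plugging this into $\dot x_e=|q_e|-x_e$ (a stable linear equation with convergent forcing) yields $x_e(t)\to 1$ for $e\in P^*$ and $x_e(t)\to 0$ for $e\notin P^*$. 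I expect the hard part to be Step 3: the Lyapunov function only gives convergence to the set of equilibria, and the trajectory genuinely approaches the boundary of the positive orthant (the target has $x_e=0$ off $P^*$), where $\mathcal{E}$, $q$ and the node potentials lose smoothness and, at vertices that become disconnected, need not converge; making the invariance-principle step and the ``$\mathcal E(x(t))\to L$'' step rigorous requires carefully extending the dynamics to the relevant part of the boundary, which in practice is done by first proving the claim on ``simple'' instances, where the ratios $g_e$ can be tracked by hand, and then reducing a general graph to that case.
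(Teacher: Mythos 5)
Your overall architecture (Lyapunov function $\to$ convergence to equilibria $\to$ a logarithmic potential along the optimal path to exclude non-optimal limits) matches the paper's, and your Step~3/Step~4 exclusion argument via $\Lambda=\sum_{e\in P^*}c_e\ln x_e$ with $\dot\Lambda\ge \mathcal E-C$ is essentially identical to the paper's use of $W=\sum_e x_e^*c_e\ln x_e$. Your Lyapunov function, however, is genuinely different: the paper cannot use $\cost(x)=\sum_e c_ex_e$ for an arbitrary positive start (it is only monotone once $x$ dominates a feasible flow), so it works with the normalized quantity $\cost(x)/\Copt$, where $\Copt=\min_{d}d^Tx$ is the min-cut value coming from LP duality, and separately proves that the dominating set is attracting. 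You instead take the product $\mathcal E(x)\cdot\cost(x)$ of energy and cost, compute its derivative via the envelope theorem, and get monotonicity from Chebyshev's sum inequality with equality iff all ratios $|q_e|/x_e$ coincide. That computation is correct and is an attractive, normalization-free alternative; the lower bound $\mathcal E\cdot V\ge C^2$ and the identification of the limit with $C^2$ both check out at the formal level.

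The genuine gap is the invariance step, and it is not a removable technicality. The $\omega$-limit set of the trajectory necessarily contains boundary points ($x_e\to 0$ off $P^*$ is the conclusion), and there LaSalle's principle cannot be invoked as stated: the vector field $x\mapsto|q(x)|-x$, the potentials, and $\mathcal E$ need not extend continuously (the network can disconnect), the ratios $g_e=|q_e|/x_e$ are undefined, and the Chebyshev equality condition only constrains edges with $w_e=c_ex_e>0$. Your ``uniform rescaling, hence $\gamma=1$'' argument also needs the trajectory through an $\omega$-limit point to exist and stay bounded away from $0$ and $\infty$, which again presupposes control at the boundary. The paper closes exactly this hole with three ingredients you do not supply: (i) a proof that every source--sink cut capacity converges to $1$ and that $x_e$ stays bounded below on the support of some feasible flow, which keeps $q$, $p$ and $\mathcal E$ well behaved; (ii) explicit Cramer/Cauchy--Binet formulas showing $x\mapsto q$ is locally Lipschitz; and (iii) a Barbalat-type lemma (nonnegative, integrable, Lipschitz $\Rightarrow$ tends to $0$) applied to the dissipation term, used \emph{in place of} LaSalle. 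Your closing suggestion --- prove the simple case and ``reduce a general graph to that case'' --- is not how the paper (or its reference) proceeds, and no such reduction is given; the general case is handled by a separate, more technical argument. As written, Steps~1, 2 and 4 are sound, but Step~3's passage from ``$\mathcal E V$ is nonincreasing'' to ``$\mathcal E(x(t))\to L$ and every limit point is an equilibrium'' is unproved and is where the real work lies.
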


\cite{Physarum} also proves an analogous result for the undirected transportation problem;~\cite{Bonifaci-Physarum} simplified the argument under additional assumptions. The paper~\cite{Bonifaci-RefinedModel} studies a more general dynamics and proves convergence for parallel links.

In this paper, we extend this result to \emph{non-negative undirected linear programs}
\begin{equation}\label{ULP}
    \min\{c^{\rot} x : Af = b,\ \abs{f} \leq x\},
\end{equation}
where $A \in \R^{n \times m}$, $b \in \R^n$, $x \in \R^m$, $c \in \R_{\ge 0}^m$, and
the absolute values are taken componentwise. 
Undirected LPs can model a wide range of problems, e.g., optimization problems
such as shortest path and min-cost flow in undirected graphs, and the Basis Pursuit problem
in signal processing~\cite{CDS98}.

We use $n$ for the number of rows of $A$ and $m$ 
for the number of columns, since this notation is appropriate when $A$ is the node-edge-incidence 
matrix of a graph. 
A vector $f\in\R^{m}$ is \emph{feasible} if $A f = b$.
We assume that the system $Af = b$ has a feasible solution and every nonzero vector $f$ 
in the kernel~\footnote{ 
	The kernel of a matrix $A$ consists of all solutions to the system $Ax = 0$.} 
of $A$ has positive cost $\sum_e c_e \abs{f_e} > 0$.
The vector $q\in\R^{m}$ in~\eqref{PD} is now the \emph{minimum energy feasible solution}
\begin{equation}\label{Defq}
q(t) = \argmin_{f \in \R^m} \left\{\sum_{e: x_e \neq 0} \frac{c_e}{x_e(t)} f_e^2:Af = b \wedge f_e = 0 \text{ whenever } x_e = 0\right\}.
\end{equation}
We remark that $q$ is unique; see Subsection~\ref{minimum energy solution}. 
If $A$ is the incidence matrix of a graph (the column corresponding to an edge $e$ has one entry $+1$, one entry $-1$ and all other entries are equal to zero), \eqref{ULP} is a transshipment problem with flow sources and sinks encoded by a demand vector $b$. The condition that there is no solution in the kernel
of $A$ with $c_ef_e = 0$ for all $e$ states that every cycle contains at least one edge of positive cost. In that setting, $q(t)$ as defined by \eqref{Defq} coincides with the electrical flow induced by resistors of value $c_e/x_e(t)$. We can now state our first main result.

\begin{theorem}\label{intro_thm}
Let $c \ge 0$ satisfy $c^{\rot} \abs{f} > 0$ for every non-zero $f$ in the kernel of $A$. Let $x^*$ be an optimum solution of~\eqref{ULP} and let $\Xstar$ be the set of optimum solutions. Assume $x(0) > 0$. The following holds for the dynamics~\lref{PD} with $q$ as in~\lref{Defq}:
\begin{compactenum}[\mbox{}\hspace{\parindent}(i)]
\item The solution $x(t)$ exists for all $t \ge 0$. 
\item The cost $c^{\rot} x(t)$ converges to $c^{\rot} x^*$ as $t$ goes to infinity.
\item The vector $x(t)$ converges to $\Xstar$, i.e., 
$\lim_{t \rightarrow \infty} \inf \set{\norm{x(t) - x'}}{x' \in \Xstar} \rightarrow 0$.
\item For all $e$ with $c_e>0$, $x_e(t) - \abs{q_e(t)}$ converges to zero as $t$ goes to infinity.~\footnote{
	We conjecture that this also holds for the indices $e$ with $c_e = 0$.} If $x^*$ is unique, $x(t)$ and $q(t)$ converge to $x^*$ as $t$ goes to infinity.
\end{compactenum}
\end{theorem}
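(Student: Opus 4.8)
The plan is to treat~\eqref{PD} with $q$ as in~\eqref{Defq} as an autonomous dynamical system on the open orthant $\R^m_{>0}$, prove convergence via a single Lyapunov function, and read off (i)--(iv) from its decay. Write $E_+:=\{e:c_e>0\}$ and $\mathcal E(x):=\sum_{e\in E_+}c_e\,q_e(x)^2/x_e$ for the energy of the minimum‑energy flow. For each $x\in\R^m_{>0}$ the quadratic form $f\mapsto\sum_{e\in E_+}c_ef_e^2/x_e$ is positive definite on $\ker A$ --- this is exactly the hypothesis $c^T\abs{f}>0$ for nonzero $f\in\ker A$ --- so $q(x)$ is the unique minimizer of a strictly convex quadratic over $\{f:Af=b\}$ and depends smoothly on $x\in\R^m_{>0}$; Picard--Lindel\"of gives a unique local solution. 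Since $\dot x_e=\abs{q_e}-x_e\ge-x_e$, any solution obeys $x_e(t)\ge x_e(0)e^{-t}>0$, so it stays in $\R^m_{>0}$; on any $[0,T]$ it is also bounded below by $x(0)e^{-T}\vOnes$, on which set $\abs{q_e}\le\sqrt{x_e\,\mathcal E(x)/c_e}$ ($e\in E_+$) with $\mathcal E(x)$ at most the bounded energy of a fixed feasible flow, the remaining coordinates of $q$ are recovered linearly from $Aq=b$, and a Gr\"onwall estimate bounds $\|x(t)\|$ on $[0,T]$. Hence the solution extends to all $t\ge0$, which is~(i).

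For the Lyapunov function, fix an optimal solution $x^*$ of~\eqref{ULP} with $x^*_e=\abs{q^*_e}$ on $E_+$ for the associated optimal flow $q^*$, and set $V(x):=\sum_{e\in E_+}c_e\bigl(x_e-x^*_e\ln x_e\bigr)$. The heart of the argument is $\dot V(x(t))\le0$. Differentiating along~\eqref{PD} gives $\dot V=c^T\abs{q}-c^Tx+\mathrm{opt}-\sum_{e\in E_+}c_ex^*_e\abs{q_e}/x_e$, and three facts finish it: (a) $q$ is the $\|\cdot\|_{c/x}$‑projection of $0$ onto $\{Af=b\}$ and $q^*-q\in\ker A$, giving the orthogonality $\sum_{e\in E_+}\tfrac{c_e}{x_e}q_eq^*_e=\mathcal E(x)$, so by the triangle inequality $\sum_{e\in E_+}c_ex^*_e\abs{q_e}/x_e\ge\mathcal E(x)$; (b) termwise $\tfrac{c_e\abs{q_e}}{x_e}(x_e-\abs{q_e})\le c_e(x_e-\abs{q_e})$ because the difference equals $\tfrac{c_e}{x_e}(x_e-\abs{q_e})^2\ge0$, hence $c^T\abs{q}-\mathcal E(x)\le c^Tx-c^T\abs{q}$; (c) $q$ solves $Af=b$, so $x=\abs{q}$ is feasible for~\eqref{ULP} and $c^T\abs{q}\ge\mathrm{opt}$. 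Combining, $\dot V\le\bigl(c^T\abs{q}-\mathcal E(x)\bigr)-\bigl(c^Tx-\mathrm{opt}\bigr)\le(c^Tx-\mathrm{opt})-(c^Tx-\mathrm{opt})=0$, with equality forcing $\abs{q_e}=x_e$ for all $e\in E_+$ and $c^T\abs{q}=\mathrm{opt}$, i.e.\ $x\in\Xstar$ (the coordinates with $c_e=0$ being pinned down through invariance of the limit set). As $V$ is bounded below and is $+\infty$ on the part of $\partial\R^m_{\ge0}$ where some $x_e\to0$ with $x^*_e>0$, the trajectory is bounded and stays away from that part of the boundary --- it may approach $x_e=0$ only where $c_e>0,\ x^*_e=0$, but there the vanishing‑capacity edge turns into a cut and $q$ extends continuously --- hence precompact. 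LaSalle's invariance principle then places the $\omega$‑limit set inside $\Xstar$, which is~(iii), and continuity of $c^Tx$ gives $c^Tx(t)\to\mathrm{opt}$, which is~(ii).

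For~(iv), one upgrades this to $\mathcal E(x(t))\to\mathrm{opt}$ and $c^T\abs{q(t)}\to\mathrm{opt}$: the lower bounds follow from Cauchy--Schwarz, $(c^T\abs{q})^2\le\mathcal E(x)c^Tx$, together with $c^T\abs{q}\ge\mathrm{opt}$ and $c^Tx\to\mathrm{opt}$, and the matching upper bounds from $\mathcal E(x)\le\sum_{e\in E_+}c_e(q^*_e)^2/x_e\to\mathrm{opt}$ as $x$ approaches the optimal face. Then from $\sum_{e\in E_+}\tfrac{c_e}{x_e}(x_e-\abs{q_e})^2=\mathcal E(x)-2c^T\abs{q}+c^Tx\to0$, with $x_e$ bounded above, one reads off $x_e(t)-\abs{q_e(t)}\to0$ for every $e$ with $c_e>0$. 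When $\Xstar=\{x^*\}$ is a single point, $x(t)\to x^*$; since $\abs{q_e}=x_e\to x^*_e$ on $E_+$ and $q_e\to0=x^*_e$ on the edges that become cuts, also $q(t)\to x^*$.

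I expect the main obstacle to be two coupled degeneracies. First, $\Xstar$ can be a positive‑dimensional face rather than a point, so $V$ is only ``partially strictly convex'' and pinning the $\omega$‑limit set inside $\Xstar$ needs connectedness of $\omega$‑limit sets together with the exact behaviour of $V$ on the optimal face. Second, the cost can be blind to some coordinates ($c_e=0$): this is what complicates a clean a~priori bound and a textbook application of LaSalle, and is exactly the source of the footnote's open conjecture for the indices with $c_e=0$. I would therefore prove the result first for ``simple'' instances --- say $c>0$ and $A$ of full row rank, where $q$, the boundary behaviour and the optimal face are all under control --- and then reduce the general case to this one by a perturbation/limiting argument in $(A,b,c)$.
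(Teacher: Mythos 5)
Your central computation is correct and genuinely different from the paper's route: the single function $V(x)=\sum_{e:c_e>0}c_e\bigl(x_e-x^*_e\ln x_e\bigr)$ combines into one object what the paper keeps separate, namely the (scaled) energy $c^Tx/\Copt$ used to prove convergence to equilibria and the potential $W=\sum_e x^*_ec_e\ln x_e$ used to exclude non-optimal limits. Your dissipation identity
$\dot V\le-\sum_{e:c_e>0}\tfrac{c_e}{x_e}(x_e-\abs{q_e})^2-(c^T\abs{q}-\mathrm{opt})-\bigl(\sum_{e:c_e>0}\tfrac{c_ex^*_e\abs{q_e}}{x_e}-E(q)\bigr)$
checks out (the orthogonality in (a) is the first-order condition for $q$ against $q^*-q\in\ker A$, and (b), (c) are as you state), and it would let you bypass the paper's case distinction between dominating and non-dominating starting points and its $\max_{d\in\Y_A}$ Lyapunov function. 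This is a real simplification of the heart of the argument.

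However, two steps you rely on are genuine gaps, and they are exactly the places where the paper deploys machinery you have skipped. First, LaSalle's invariance principle is not applicable as you invoke it: $V$ only confines the trajectory away from the faces $\{x_e=0\}$ with $c_ex^*_e>0$, so the $\omega$-limit set may meet the boundary on coordinates with $x^*_e=0$ or $c_e=0$, where continuity of $x\mapsto q$ and invariance of the limit set under the flow are not free (your parenthetical ``$q$ extends continuously'' presupposes that the support of the limit point still carries a feasible solution, which is part of what must be proved). The paper avoids this entirely by replacing LaSalle with a Barbalat-type argument (Lemma~\ref{Lipschitz}: a nonnegative, integrable, Lipschitz function tends to zero), and the Lipschitz estimate is precisely what Lemma~\ref{Locally Lipschitz, General Case} (Cramer plus Cauchy--Binet, with the Schur-complement formula of Lemma~\ref{formula for $q$, general case} to handle $c_e=0$) is for. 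Second, your equality analysis at $\dot V=0$ yields only $x_e=\abs{q_e}$ for $c_e>0$ and $c^T\abs{q}=\mathrm{opt}$; to place the limit point in $\Xstar$ you must show it \emph{dominates} a feasible solution, i.e.\ control $x_e$ versus $\abs{q_e}$ on the zero-cost coordinates. ``Invariance of the limit set'' does not deliver this by itself --- this is the content of the paper's convergence-to-dominance result (Lemma~\ref{dominating states}), proved via the LP dual $\min\{\abs{y^TA}x:b^Ty=-1\}$ and the Gr\"onwall bound $\dot Y\ge1-Y$ on each cut functional $Y=d^Tx$, and you need some substitute for it. Finally, the closing suggestion to reduce $c\ge0$ to $c>0$ by perturbation is not carried out and is delicate: perturbing $c$ changes both $\Xstar$ and the vector field, and the resulting estimates are not uniform in time.
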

Item (i) was previously shown in~\cite{SV-IRLS} for the case of a strictly positive cost vector. 
The result in~\cite{SV-IRLS} is actually stated only for the all-ones cost vector $c = \bf{1}$.
The case of a general positive cost vector reduces to this special case by rescaling the solution vector $x$. Item $(i)$ for the more general cost vector and items $(ii)$ to $(iv)$ are new. We stress that the dynamics~\lref{PD} is biologically-grounded. It was proposed to model a biological system and not as an optimization method. Nevertheless, it can solve a large class of non-negative LPs. Table~\ref{undirected results} summarizes our first main result and puts it into context. 

\begin{table}[t]
	\centering
	\begin{tabular}{|c|c|c|c|c|}
		\hline
		\textbf{Reference} & \begin{tabular}[c]{@{}c@{}}\textbf{Problem}\\ (Undirected Case) \end{tabular}       & \begin{tabular}[c]{@{}c@{}}\textbf{Existence}\\ \textbf{of Solution}\end{tabular} & \begin{tabular}[c]{@{}c@{}}\textbf{Convergence}\\ \textbf{ to} $\mathrm{OPT}$\end{tabular} & \textbf{Comments}                                                                                                            \\ \hline
		\cite{Miyaji-Ohnishi}      & {Shortest Path} & Yes                   & Yes                           & parallel edges, planar graphs\\ \hline
		\cite{Physarum}     & {Shortest Path} & Yes              & Yes                     & all graphs                                                                                                        \\ \hline
		\cite{SV-IRLS}     & {Positive LP} & Yes           &  No                    & $c>0$                                                                                                               \\ \hline
		\begin{tabular}[c]{@{}c@{}}\textbf{Our}\\ \textbf{Result}\end{tabular} & {Nonnegative LP}   & Yes                   & {Yes}                           & \begin{tabular}[c]{@{}c@{}}{1) $c\geq0$}\\ {2) $\forall v\in\mathrm{ker}(A)\,:\, c^{\rot}|v|>0$}\end{tabular}\\ \hline
	\end{tabular}
	\smallskip
	\caption{Convergence results for the continuous undirected Physarum dynamics~\lref{PD}.}
	\label{undirected results}
\end{table}

Sections~\ref{Simple Instances} and~\ref{General Case} are devoted to the proof of our first main theorem. For ease of exposition, we present the proof in two steps. In Section~\ref{Simple Instances}, we give a proof under the following simplifying assumptions. 
\begin{compactenum}[\hspace{\parindent}(A)]
	\item $c > 0$,\label{positive c}
	\item The basic feasible solutions of~\eqref{ULP} have distinct cost,\label{distinctCost}
	\item We start with a positive vector $x(0) \in \Xdom \assign \set{x \in \R^n}{\text{there is a feasible $f$ with $\abs{f} \le x$}}$\label{x0 is dominating}. 
\end{compactenum}
Section~\ref{Simple Instances} generalizes~\cite{Bonifaci-Physarum}. For the undirected shortest path problem, condition \eqref{distinctCost} states that all simple undirected source-sink paths have distinct cost and condition \eqref{x0 is dominating} states that all source-sink cuts have a capacity of at least one at time zero (and hence at all times). The existence of a solution with domain $[0,\infty)$ was already shown in~\cite{SV-IRLS}. 
We will show that $\Xdom$ is an invariant set, i.e., the solution stays in $\Xdom$ for all times, and that $E(x) = \sum_e r_e x_e^2 = \sum_e c_e x_e$ is 
a Lyapunov function~\footnote{
	Lyapunov functions are a main tool for proving convergence of dynamical systems. It is a function $L(t)$ mapping the state $x(t)$ of the system to a non-negative real such that $\dot{L} \le 0$ and $\dot{L} = 0$ iff $\dot{x} = 0$. It is an ``art'' to find a Lyapunov function for a concrete dynamical system.}~\cite{LaSalle,Teschl12} for the dynamics~\lref{PD}, i.e., $\dot{E} \le 0$ and $\dot{E} = 0$ if and only if $\dot{x} = 0$. It follows from general theorems about dynamical systems that the dynamics converges to a fixed point of~\lref{PD}. The fixed points are precisely the vectors $\abs{f}$ where $f$ is a feasible solution of~\lref{ULP}. A final argument establishes that the dynamics converges to a fixed point of minimum cost.

In Section~\ref{General Case}, we prove the general case of the first main theorem. We assume 
\begin{compactenum}[\hspace{\parindent}(A)]\setcounter{enumi}{3}
	\item $c \ge 0$, \label{c is nonnegative}
	\item $\cost(z)=c^{\rot}|z| > 0$ for every non-zero vector $z$ in the kernel of $A$, \label{elements in kernel have positive cost}
	\item We start with a positive vector $x(0) > 0$.\label{positive start vector}
\end{compactenum}
Section~\ref{General Case} generalizes~\cite{Physarum} in two directions. First, we treat general undirected LPs and not just the undirected shortest path problem, respectively, the transshipment problem. Second, we replace the condition $c > 0$ by the requirement $c \ge 0$ and every non-zero vector in the kernel of $A$ has positive cost. For the undirected shortest path problem, the latter condition states that the underlying undirected graph has no zero-cost cycle. Section~\ref{General Case} is technically considerably more difficult than Section~\ref{Simple Instances}. We first establish the existence of a solution with domain $[0,\infty)$. To this end, we derive a closed formula for the minimum energy feasible solution and prove that the mapping $x \mapsto q$ is locally Lipschitz. Existence of a solution with domain $[0,\infty)$ follows by standard arguments. We then show that $\Xdom$ is an attractor, i.e., the solution $x(t)$ converges to $\Xdom$. We next characterize equilibrium points and exhibit a Lyapunov function. The Lyapunov function is a normalized version of $E(x)$. The normalization factor is equal to the optimal value of the linear program $\max \set{\alpha}{Af = \alpha b,\ \abs{f} \le x}$ in the variables $f$ and $\alpha$. Convergence to an equilibrium point follows from the existence of a Lyapunov function.  A final argument establishes that the dynamics converges to a fixed point of minimum cost.

\subsection{The Biologically-Inspired Model}\label{biologically-inspired}

Ito et al.~\cite{Ito-Convergence-Physarum} initiated the study of the dynamics
\begin{equation}\label{directed dynamics}  
	\dot{x}(t) = q(t) - x(t). 
\end{equation}
We refer to this dynamics as the directed dynamics in contrast to the undirected dynamics~\lref{PD}.
The directed dynamics is \emph{biologically-inspired} -- the similarity to~\lref{PD} is the inspiration. It was never claimed to model the behavior of a biological system. Rather, it was introduced as a biologically-inspired optimization method. The work in~\cite{Ito-Convergence-Physarum} shows convergence of this directed dynamics~\lref{directed dynamics} for the directed shortest path problem and~\cite{Johannson-Zou,SV-LP,Bonifaci-LP} show convergence for general \emph{positive linear programs}, i.e., linear programs with positive cost vector $c > 0$ of the form
\begin{equation}\label{OLP}
    \min\{c^{\rot} x : Ax = b,\ x \ge 0\}.
\end{equation}

The \emph{discrete versions} of both dynamics define sequences $\x{t}$, $t = 0,1,2,\ldots$ through
\begin{align}
x^{(t+1)} & = (1 - h^{(t)})\x{t} + h^{(t)} q^{(t)}        &\text{discrete directed dynamics};\label{eq:DdirD} \\
x^{(t+1)} & = (1 - h^{(t)})\x{t} + h^{(t)} |q^{(t)}|   &\text{discrete undirected dynamics}\label{eq:DUndirD},
\end{align}
where $h^{(t)}$ is the step size and $q^{(t)}$ is the minimum energy feasible solution as in~\eqref{Defq}. 
For the discrete dynamics, we can ask complexity questions. This is particularly relevant for 
the discrete directed dynamics as it was designed as an biologically-inspired optimization method.

For completeness, we review the state-of-the-art results for the discrete undirected dynamics. For the undirected shortest path problem, the convergence of the discrete undirected dynamics~\eqref{eq:DUndirD} was shown in~\cite{Physarum-Complexity-Bounds}. The convergence proof gives an upper bound on the step size and on the number of steps required until an $\epsilon$-approximation of the optimum is obtained. \cite{SV-Flow} extends the result to the transshipment problem and \cite{SV-IRLS} further generalizes the result to the case of positive LPs.
The paper~\cite{SV-Flow} is related to our first result.
It shows convergence of the discretized undirected dynamics~\eqref{eq:DUndirD}, we show convergence of the continuous undirected dynamics~\eqref{PD} for a more general cost vector.
\bigskip

We come to the discrete directed Physarum-inspired dynamics~\lref{eq:DdirD}. Similarly to the undirected setting, 
Becchetti et al.~\cite{Physarum-Complexity-Bounds} showed the convergence of \lref{eq:DdirD} 
for the shortest path problem.
Straszak and Vishnoi extended the analysis
to the transshipment problem~\cite{SV-Flow} and positive LPs~\cite{SV-LP}.
\begin{theorem}\cite[Theorem 1.3]{SV-LP}\label{thm:Prev}
	Let $A\in\Z^{n\times m}$ have full row rank $(n\leq m)$, $b\in\Z^{n}$,
	$c\in\Z_{>0}^{m}$, and let 
	$\D_S := \max\{|\det(M)|\,:\, M\text{ is a square sub-matrix of }A\}$.\footnote{
		Using Lemma~\ref{lem:fsDg}, the dependence on $D_S$ can be improved 
		to a scale-independent determinant $D$, defined in \eqref{eq:defD}.
		For further details, we refer the reader to Subsection~\ref{subsec:subsecUsefulLemmas}.}
	Suppose the Physarum-inspired dynamics \eqref{eq:DdirD} is initialized with
	a feasible point $\xz $ of~\lref{OLP} such that $M^{-1}\le \xz \le M$
	and $c^{\rot}\xz \le M\cdot \opt $ for some $M\ge1$, where $\opt$ denotes 
	the optimum cost of \eqref{OLP}.
	Then, for any $\epsilon>0$
	and step size $h\leq\epsilon/(\sqrt{6}\onenorm{c}\D_{S})^{2}$,
	after $k=O((\epsilon h)^{-2}\ln M)$ steps, 
	$\x{k}$ is a feasible solution with $c^{\rot}\x{k}\le(1+\epsilon)\opt $.
\end{theorem}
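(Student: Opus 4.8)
The plan is to run a Lyapunov-type analysis of the affine iteration~\eqref{eq:DdirD}, tracking the objective $\Phi_t := c^{T}\x{t}$ and carrying out three blocks: (I) the iterates stay feasible and confined to a bounded positive box; (II) $\Phi_t$ decreases at every step by an amount comparable to the current optimality gap $\Phi_t-\mathrm{opt}$, up to a second-order error controlled by $h$; and (III) telescoping this decrease from the initial gap yields the step count.

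Block (I), invariants. Feasibility is immediate by induction, since $A\x{t}=b$ forces $A\q{t}=b$ and hence $A\x{t+1}=(1-h)b+hb=b$. Keeping $\x{t}$ positive and bounded is where $\D_S$ and the step-size restriction enter. Writing $\q{t}=DA^{T}(ADA^{T})^{-1}b$ with $D=\mathrm{diag}(\x{t}_{e}/c_{e})$ (the normal equations of~\eqref{Defq}) and applying Cramer's rule and cofactor bounds, each coordinate $\abs{\q{t}_{e}}$ is bounded by a fixed polynomial in $\D_S$, $\onenorm{b}$ and the current weights. Combining this with the upper bound on $\Phi_t$ from Block (II) and with $M^{-1}\le\x{0}\le M$, one shows inductively that $\x{t}$ remains in a box $[\ell,u]$ with $\ell,u$ depending only on $M$, $\onenorm{c}$ and $\D_S$, provided $h\le\epsilon/(\sqrt{6}\onenorm{c}\D_S)^{2}$; the step size is made this small precisely so that the contraction $(1-h)\x{t}$ toward the origin plus the increment $h\q{t}$ cannot push a coordinate to zero (loss of feasibility) or overshoot in a single step. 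This is also why $h$ inherits a linear dependence on $\epsilon$ in this analysis.

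Block (II), per-step decrease. Since $\x{t}$ is itself feasible, it is admissible in~\eqref{Defq}, so $\sum_{e}c_{e}(\q{t}_{e})^{2}/\x{t}_{e}\le\sum_{e}c_{e}\x{t}_{e}=\Phi_t$, and Cauchy--Schwarz gives $c^{T}\abs{\q{t}}\le\sqrt{\Phi_t\sum_{e}c_{e}(\q{t}_{e})^{2}/\x{t}_{e}}\le\Phi_t$; hence $\Phi_{t+1}=(1-h)\Phi_t+h\,c^{T}\q{t}\le\Phi_t$, and since $\Phi_t\ge\mathrm{opt}$ the sequence converges. The quantitative heart is a Polyak--{\L}ojasiewicz-type inequality $\Phi_t-c^{T}\q{t}\ge\kappa\cdot(\Phi_t-\mathrm{opt})$, with $\kappa$ an inverse polynomial in $\onenorm{c}$, $\D_S$ and the cost separation $\Phi$ between the optimum and the next-best basic feasible solution (an integrality/Cramer argument gives $\Phi\ge1/\mathrm{poly}(\D_S)$). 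I would prove it in two regimes: while $\Phi_t-\mathrm{opt}\ge\Phi$, interpolating an optimal vertex $x^{\star}$ with $\x{t}$ as a competitor in~\eqref{Defq} and using the box bounds of Block (I) to control the cross term shows the gap forces a nontrivial slack; once $\Phi_t-\mathrm{opt}<\Phi$, the iterate is confined near the optimal face and a local argument -- the discrete analogue of the one ruling out convergence to a non-optimal fixed point in the continuous case -- closes the remaining gap. If a corrected potential is needed to make $\kappa$ come out cleanly, its per-step overshoot is $O(h^{2}\cdot\mathrm{poly})$, which the stated $h$ renders negligible against $\epsilon\cdot\mathrm{opt}$.

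Block (III) and the obstacle. Combining the two blocks yields a per-step bound of the form $\Phi_{t+1}-\mathrm{opt}\le(1-\Omega((\epsilon h)^{2}))(\Phi_t-\mathrm{opt})$ modulo lower-order error; unrolling from $\Phi_0-\mathrm{opt}\le(M-1)\mathrm{opt}$ drives the gap below $\epsilon\cdot\mathrm{opt}$ within $k=O((\epsilon h)^{-2}\ln M)$ steps, at which point $\x{k}$ is -- by Block (I) -- a genuine feasible solution of~\eqref{OLP} with $c^{T}\x{k}\le(1+\epsilon)\mathrm{opt}$. The main obstacle is the rate estimate in Block (II): making $\kappa$ scale correctly with $\D_S$, $\onenorm{c}$ and $\Phi$, and in particular controlling the regime where $\x{t}$ hugs the optimal face, where the naive bound degrades. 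Preventing the lower bound $\ell$ from collapsing over the $\approx(\epsilon h)^{-2}\ln M$ iterations is the other delicate point -- and is exactly the ingredient the present paper revisits when it replaces a feasible starting point by a strongly dominating one.
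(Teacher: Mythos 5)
This theorem is quoted from \cite{SV-LP}; the paper you are reading does not reprove it, so your proposal has to be measured against the known argument of Straszak--Vishnoi and against the machinery this paper builds in Section~\ref{discretization} to improve it. Your Blocks (I) and (III) contain the right surface ingredients: feasibility is indeed preserved exactly ($A\x{t+1}=(1-h)b+hb=b$), positivity and boundedness do follow from Cramer-rule bounds on $q$ and on $\|A^Tp\|_\infty$ (this is exactly Lemma~\ref{lem_ATLA_UB}, Lemma~\ref{lem_ATp_UB} and Lemma~\ref{lem:upper bound for q} here), and the monotonicity $c^T\x{t+1}\le c^T\x{t}$ via energy minimality plus Cauchy--Schwarz is correct.

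The genuine gap is the Polyak--{\L}ojasiewicz inequality at the heart of your Block (II): the claim $c^T\x{t}-c^T\q{t}\ \ge\ \kappa\,(c^T\x{t}-\mathrm{opt})$ with $\kappa$ an inverse polynomial in $\onenorm{c},\D_S,\Phi$ is \emph{false}. Any non-optimal equilibrium is a counterexample: if $\x{t}=|g|$ for a non-optimal basic feasible solution $g$ (e.g.\ all capacity concentrated on a non-shortest path), then $\q{t}=\x{t}$, so the left-hand side is $0$ while the right-hand side is at least $\kappa\Phi>0$; by continuity the inequality also fails in a neighbourhood of every such point, and the dynamics can pass arbitrarily close to them. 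This is precisely why neither \cite{SV-LP} nor this paper analyses $c^Tx$ alone. The actual mechanism is a second, entropy-like potential $\mathcal{B}_g^{(\ell)}=\sum_e g_ec_e\ln x_e^{(\ell)}$: one shows $\mathcal{B}_{\fstr}^{(\ell+1)}-\mathcal{B}_{\fstr}^{(\ell)}\ge h(b^Tp^{(\ell)}-\mathrm{opt})-O(h^2\cdot\mathrm{poly})$ for an optimal $\fstr$ and the reverse inequality for every non-optimal basic $g$ (this is Lemma~\ref{lem_PotArg} here, and the analogous lemma in \cite{SV-LP}); since $\mathcal{B}_{\fstr}$ is bounded above by $\mathrm{opt}\cdot\ln M$-type quantities, the dual value $b^Tp^{(\ell)}$ cannot stay bounded away from $\mathrm{opt}$ for long, and a case split on whether $c^Tx-b^Tp$ or $b^Tp-\mathrm{opt}$ carries the gap yields the additive (not multiplicative) step count $O((\epsilon h)^{-2}\ln M)$. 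Your ``local argument near the optimal face'' is exactly this missing potential comparison, and without it the proof does not close; note also that a true contraction $(1-\Omega((\epsilon h)^2))$ per step would give the $\ln(1/\epsilon)$ dependence of this paper's Theorem~\ref{thm_main}, which is strictly stronger than what Theorem~\ref{thm:Prev} asserts and requires the additional apparatus of Subsections~\ref{subsection:CloseNonNengKernelFreeVector}--\ref{subsec:epsCloseToOPT}.
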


Theorem~\ref{thm:Prev} gives an algorithm that computes a $(1+\epsilon)$-approximation 
to the optimal cost of \eqref{OLP}. In comparison to~\cite{Physarum-Complexity-Bounds,SV-Flow}, it has
several shortcomings. First, it requires a feasible starting point. Second, the step size depends linearly
on $\epsilon$. Third, the number of steps required to reach an $\epsilon$-approximation has a quartic dependence on $\opt/(\epsilon\Phi)$. In contrast, the analysis
in \cite{Physarum-Complexity-Bounds,SV-Flow} yields a step size independent of $\epsilon$ and 
a number of steps that depends only logarithmically on $1/\epsilon$, see Table~\ref{tableDiscrete}.

We overcome these shortcomings. Before we can state our result, we need some notation.
Let $\Xstar$ be the set of optimal solutions to \eqref{OLP}. 
The distance of a capacity vector $x$ to $\Xstar$ is defined as 
\[
	\dist(x,\Xstar):=\inf\{\inftynorm{x-\xprm}\,:\,\xprm\in\Xstar\}.
\] 
Let  $\gamma_{A}:=\gcd(\{A_{ij}:A_{ij}\neq0\})\in\Z_{>0}$ and
\begin{equation}\label{eq:defD}
\D := \max\left\{ \left|\det\left(M\right)\right|\, :\, 
M\text{ is a square submatrix of }A/\gamma_{A}
\text{ with dimension }n-1\text{ or }n\right\}.
\end{equation}
Let $\mathcal{N}$ be the set of non-optimal basic feasible solution of \eqref{OLP} and
\begin{equation}\label{eq:defPHI}
\Phi:=\min_{g\in\mathcal{N}}c^{\rot}g-\opt\geq1/(\D \gamma_{A})^{2},
\end{equation}
where the inequality is well known~\cite[Lemma 8.6]{PapadimitriouSteiglitz82}. For completeness, we present a proof in Subsection~\ref{subsec:epsCloseToOPT}.
Informally, our second main result proves the following properties of the Physarum-inspired dynamics \eqref{eq:DdirD}:
\begin{compactenum}[\hspace{\parindent}(i)]
	\item For any $\epsilon>0$ and any strongly dominating starting point\footnote{
		We postpone the definition of strongly dominating capacity vector to Section~\ref{subsec:SDCV}. Every scaled feasible solution is strongly dominating. In the shortest path problem, a capacity vector $x$ is strongly dominating if every source-sink cut $(S,\overline{S})$ has positive directed capacity, i.e., $\sum_{a\in E(S,\overline{S})}x_a-\sum_{a\in E(\overline{S},S) } x_a > 0$.}
 $\xz$, 
		there is a fixed step size $h(\xz)$ such that the Physarum-inspired dynamics \eqref{eq:DdirD} 
		initialized with $\xz$ and $h(\xz)$ converges to $\Xstar$, 
		i.e., $\dist(\x{k},\Xstar)<\epsilon/(\D\gamma_{A})$ for large enough $k$.
	\item The step size can be chosen \textit{independently} of $\epsilon$.
	\item The number of steps $k$ depends \textit{logarithmically} on $1/\epsilon$ and 
	\textit{quadratically} on $\opt/\Phi$.
	\item The efficiency bounds depend on a \emph{scale-invariant} determinant\footnote{
	Note that $(\gamma_{A})^{n-1}\D\leq\D_{S}\leq(\gamma_{A})^n\D$,
	and thus $\D$ yields an exponential improvement over $\D_{S}$, whenever $\gamma_{A}\geq2$.} 
	$D$. 
\end{compactenum}
\vskip 5pt

In Section~\ref{SimpleLowerBound}, we establish a corresponding lower bound. 
We show that for the Physarum-inspired dynamics \eqref{eq:DdirD} 
to compute a point $\x{k}$ such that $\dist(\x{k},\Xstar)<\epsilon$, 
the number of steps required for computing an $\epsilon$-approximation 
has to grow linearly in $\opt/(h\Phi)$ and $\ln(1/\epsilon)$, 
i.e. $k\geq\Omega( \opt\cdot(h\Phi)^{-1}\cdot\ln(1/\epsilon))$.
Table~\ref{tableDiscrete} puts our results into context.

We state now our second main result for the special case of a feasible starting point,
and we provide the full version in Theorem~\ref{thm_main_full} which applies for arbitrary 
strongly dominating starting point, see Section~\ref{discretization}.
We use the following constants in the statement of the bounds. 
\begin{enumerate}[\mbox{}\hspace{\parindent}(i)]
	\item $h_0 := c_{\min}/(4\D \onenorm{c})$, where $c_{\min}:=\min_{i}\{c_i\}$;
	\item $\PSI{0} := \max\{m\D^{2}\|b/\gamma_{A}\|_{1},\|\x{0} \|_{\infty}\}$;
	\item $C_{1} := \D\|b/\gamma_{A}\|_{1}\onenorm{c}$,
	$\,C_2 := 8^{2}m^{2}n\D^{5}\gamma_{A}^{2}\infnorm{A}\lVert b\rVert_{1}\,$
	and $\,C_3:=\D^3\gamma_{A}\onenorm{b}\onenorm{c}$.
\end{enumerate}

\begin{theorem}\label{thm_main}
	Suppose $A\in\Z^{n\times m}$ has full row rank $(n\leq m)$, $b\in\Z^{n}$, 
	$c\in\Z_{>0}^{m}$ and $\epsilon\in(0,1)$. Given a feasible starting point $\xz>0$ 
	the Physarum-inspired dynamics \eqref{eq:DdirD} with step size $h \leq (\Phi/\opt)\cdot h_{0}^{2}/2$ 
	outputs for any
	$k\geq4C_{1}/(h\Phi)\cdot\ln(C_{2}\PSI{0}/(\epsilon\cdot\min\{1,x_{\min}^{(0)}\}))$
	a feasible $x^{(k)}>0$ such that 
	$\dist(x^{(k)},\Xstar)<\epsilon/(\D \gamma_{A})$. 
\end{theorem}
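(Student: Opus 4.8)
The plan is to establish convergence of the discrete directed dynamics~\eqref{eq:DdirD} by exhibiting a potential (Lyapunov-type) function that decreases geometrically along the trajectory, and then translating the decay rate into the claimed step-count bound. Since the theorem is a special case ($\xz$ feasible) of the full Theorem~\ref{thm_main_full}, I would first reduce to a clean invariant: show that if $\xz$ is feasible and $h$ is small enough, then every iterate $\x{t}$ stays feasible (because $Aq^{(t)}=b$ and $A\x{t}=b$ imply $A\x{t+1}=b$ by convexity) and stays strictly positive. The positivity is where the bound $h\le(\Phi/\opt)\cdot h_0^2/2$ first gets used: one needs $h$ small compared to $h_0=c_{\min}/(4\D\onenorm{c})$ to keep the update $(1-h)\x{t}+h q^{(t)}$ from driving a coordinate to zero, using a lower bound on the entries of the minimum-energy solution $q^{(t)}$ in terms of the current $\x{t}$ (such a bound should follow from the closed formula for $q$ and the determinant bound $\D$, analogous to the $M^{-1}\le\xz$ hypothesis in Theorem~\ref{thm:Prev}).

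Next I would set up the potential. The natural candidate, given the discussion of the continuous case, is a normalized energy $\Phi_h(x)=\cost(x)/\alpha(x)$ (or simply $\cost(x)=c^Tx$ when $x$ is feasible, in which case $\alpha\equiv1$), together with a complementary quantity controlling the distance to $\Xstar$. For feasible iterates, $c^T\x{t+1}=(1-h)c^T\x{t}+h\,c^Tq^{(t)}$, so I would bound $c^Tq^{(t)}$ against $\opt$ plus a term that is small when $\x{t}$ is already close to optimal; the key inequality should be of the form $c^Tq^{(t)}-\opt\le (1-\delta)(c^T\x{t}-\opt)$ for some $\delta=\Theta(h\Phi/\opt)$, obtained by combining Thomson's principle (energy minimality of $q$) with the separation $\Phi$ between optimal and non-optimal basic-feasible cost. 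Iterating gives $c^T\x{k}-\opt\le(1-\Omega(h\Phi/\opt))^k(c^T\xz-\opt)$, and since $c^T\xz\le$ something like $\PSI{0}\onenorm c$, after $k=\Theta\big((h\Phi)^{-1}\opt\cdot\ln(\cdots/\epsilon)\big)$ steps we get $c^T\x{k}-\opt<\epsilon'$ for a suitable $\epsilon'$.

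The last step is to convert a near-optimal \emph{cost} into near-optimal \emph{distance}: $c^T\x{k}\le(1+\epsilon')\opt$ must imply $\dist(\x{k},\Xstar)<\epsilon/(\D\gamma_A)$. This is where $\Phi$ and the determinant constants re-enter: a feasible point whose cost exceeds $\opt$ by less than $\Phi$ lies in the ``optimal face,'' and a standard LP-sensitivity / vertex-rounding argument (essentially Lemma~8.6 of~\cite{PapadimitriouSteiglitz82}, already invoked for~\eqref{eq:defPHI}) bounds its $\infty$-distance to $\Xstar$ by a polynomial in $\D,\gamma_A$ times the cost gap. Choosing $\epsilon'$ to absorb that polynomial factor, and tracking constants, should yield exactly the stated threshold with $C_1,C_2,C_3$ as defined.

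The main obstacle I anticipate is the geometric-decay inequality $c^Tq^{(t)}-\opt\le(1-\Omega(h\Phi/\opt))(c^T\x{t}-\opt)$: controlling the energy-minimizing flow $q^{(t)}$ well enough to extract a \emph{multiplicative} contraction (rather than a merely additive decrease) is the crux, and it is precisely here that one must use both the minimum-energy characterization~\eqref{Defq} and the combinatorial gap $\Phi$ in tandem, while keeping all error terms scale-invariant in $\D$ rather than $\D_S$. A secondary technical point is maintaining strict positivity of the iterates uniformly, i.e. a time-independent lower bound $x_{\min}^{(t)}\ge x_{\min}^{(0)}/\mathrm{poly}$, which feeds back into the $\ln(1/\min\{1,x_{\min}^{(0)}\})$ term in the step count.
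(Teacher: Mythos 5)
The central step of your plan --- a per-step multiplicative contraction $c^Tq^{(t)}-\opt\le(1-\Omega(\Phi/\opt))(c^Tx^{(t)}-\opt)$, hence geometric decay of the cost gap at rate $h\Phi/\opt$ --- is not merely the ``main obstacle'' you flag: it is false. Take a feasible starting point that is a small positive perturbation of a non-optimal basic feasible solution $g$ (all coordinates outside $\supp(g)$ equal to a tiny $x_{\min}^{(0)}$). The resistances $c_e/x_e$ outside $\supp(g)$ are then enormous, so the minimum-energy solution satisfies $q^{(t)}\approx g$ and $c^Tq^{(t)}\approx c^Tg\approx c^Tx^{(t)}$; no multiplicative progress on the cost gap is made for $\Omega(h^{-1}\ln(1/x_{\min}^{(0)}))$ steps. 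This is exactly why the paper (following \cite{Ito-Convergence-Physarum,Physarum-Complexity-Bounds,SV-Flow,SV-LP}) does not use $c^Tx$ as the potential. Instead, Lemma~\ref{lem_PotArg} works with the logarithmic potential $\mathcal{B}_g^{(\ell)}=\sum_e g_ec_e\ln x_e^{(\ell)}$ and shows, via $\ln(1+z)\ge z-z^2$ and the bound $\|A^Tp^{(\ell)}\|_\infty\le \D\onenorm{c}/\ahl$ of Lemma~\ref{lem_ATp_UB}, that $\mathcal{B}_{\fstr}$ gains at least $h\Phi/2$ per step over $\mathcal{B}_g$ for every non-optimal basic feasible solution $g$; since $\mathcal{B}_g\le c^Tg\ln\PSI{0}$ by Lemma~\ref{lem_UBxk}, after the stated number of steps some coordinate $e\in\supp(g)$ must satisfy $x_e^{(k)}<\epsStr$.

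The second half of your plan (cost-closeness $\Rightarrow$ distance-closeness via LP sensitivity) also does not match what is actually needed, because the potential argument never certifies $c^Tx^{(k)}\le(1+\epsilon')\opt$; it certifies the coordinate-wise ``blocking'' property above. The paper converts this into a distance bound in two further steps that your outline is missing: Lemma~\ref{lem_ApproxNonNegKernelFree} shows $x^{(k)}$ is close to a \emph{non-negative feasible kernel-free} vector $f$ (using the averaged flow $\overline{q}^{(t,k)}$, the identity $x^{(t+k)}=(1-h)^kx^{(t)}+(1-(1-h)^k)\overline{q}^{(t,k)}$, and the rounding Lemmas~\ref{lem_RoundingFlows} and~\ref{lem_FeasibleKernelFree}); and Lemma~\ref{lem_optCrit} shows that such an $f$, having a small coordinate in the support of every non-optimal basic feasible solution, is within $\epsilon/(\D\gamma_A)$ of $\Xstar$, via the sign-compatible convex decomposition of Lemma~\ref{sign-compatible representation}. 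A minor additional point: strict positivity of the iterates is maintained not by lower-bounding the entries of $q^{(t)}$ --- these can be negative --- but again by $\|A^Tp^{(\ell)}\|_\infty\le\D\onenorm{c}/\ahl$, which gives $x_e^{(\ell+1)}\ge\tfrac14x_e^{(\ell)}$ as in Lemma~\ref{lem:OneStep}.
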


\begin{table}[t]
	\centering
	\begin{tabular}{|c|c|c|c|c|}
		\hline
		\textbf{Reference} & \begin{tabular}[c]{@{}c@{}}\textbf{Problem}\\ (Directed Case) \end{tabular}
		& \textbf{step size} $h$ &  \textbf{number of steps} $k$ & \textbf{Guarantee}  \\ \hline\hline
		\cite{Physarum-Complexity-Bounds}    & Shortest Path  & indep. of $\epsilon$   & \begin{tabular}[c]{@{}c@{}}$\mathrm{poly}(m,n,\onenorm{c},\onenorm{\xz })$\\ $\cdot \ln(1/\epsilon)$\end{tabular} & $\dist(\x{k},\Xstar)<\epsilon$ \\ \hline
		\cite{SV-Flow}     & Transshipment                                                               & indep. of $\epsilon$             & \begin{tabular}[c]{@{}c@{}}$\mathrm{poly}(m,n,\onenorm{c},\onenorm{b},\onenorm{\xz })$\\ $\cdot \ln(1/\epsilon)$\end{tabular}                                          & $\dist(\x{k},\Xstar)<\epsilon$              \\ \hline \hline
		\cite{SV-LP}     & Positive LP        & depends on $\epsilon$ & \begin{tabular}[c]{@{}c@{}} $\mathrm{poly}(\onenorm{c},\D_S,\ln \onenorm{\xz })$\\ $\cdot$ $1/(\Phi\epsilon)^4$ \end{tabular}                                              & 
		\begin{tabular}[c]{@{}c@{}} 
			$c^{\rot}\x{k}\leq(1+\epsilon)\opt $\\
			$c^{\rot}\x{k}<\min_{g\in\mathcal{N}}c^{\rot}g$
		\end{tabular} \\ \hline
		\begin{tabular}[c]{@{}c@{}}\textbf{Our}\\ \textbf{Result}\end{tabular} & Positive LP        & indep. of $\epsilon$       & \begin{tabular}[c]{@{}c@{}}$\mathrm{poly}(\onenorm{c}	,\onenorm{b},\D,\gamma_{A},\ln \onenorm{\xz })$\\ $\cdot\Phi^{-2}\ln(1/\epsilon)$  \end{tabular}                                       & $\dist(\x{k},\Xstar)<\frac{\epsilon}{\D\gamma_{A}}$              \\ \hline\hline
	\begin{tabular}[c]{@{}c@{}}\textbf{Lower}\\ \textbf{Bound}\end{tabular} & Positive LP       &   indep. of $\epsilon$  &  $\Omega( \opt\cdot(h\Phi)^{-1}\ln(1/\epsilon))$ & 
	$\dist(\x{k},\Xstar)<\epsilon$          \\ \hline
	\end{tabular}
	\smallskip
	\caption{Convergence results for the discrete directed Physarum-inspired dynamics \eqref{eq:DdirD}.}
	\label{tableDiscrete}
\end{table}

We stated the bounds on $h$ in terms of the unknown quantities $\Phi$ and $\opt$. 
However, $\Phi/\opt \ge 1/C_3$ by Lemma~\ref{lem:fsDg} and hence replacing $\Phi/\opt$ 
by $1/C_3$ yields constructive bounds for $h$. Note that the upper bound on the step size 
does not depend on $\epsilon$ and that the bound on the number of iterations depends
\textit{logarithmically} on $1/\epsilon$ and \textit{quadratically} on $\opt/\Phi$.

What can be done if the initial point is not strongly dominating? 
For the transshipment problem it suffices to add an edge of high capacity 
and high cost from every source node to every sink node~\cite{Physarum-Complexity-Bounds,SV-Flow}. 
This will make the instance strongly dominating and will not affect the optimal solution. 
We generalize this observation to positive linear programs. We add an additional column 
equal to $b$ and give it sufficiently high capacity and cost. This guarantees that the 
resulting instance is strongly dominating and the optimal solution remains
unaffected. Moreover, our approach generalizes and improves upon~\cite[Theorem 1.2]{SV-Flow},
see Section~\ref{sec:Preconditioning}.

\paragraph*{Proof Techniques:}

The crux of the analysis in \cite{Ito-Convergence-Physarum,Physarum-Complexity-Bounds,SV-Flow} 
is to show that for large enough $k$, $\xk$ is close to a \emph{non-negative} flow
$\fk$ and then to argue that $\fk$ is close to an optimal flow $\fstr$. This line of 
arguments yields a convergence of $\xk$ to $\Xstar$ with a step size $h$ chosen independently of
$\epsilon$.

In Section~\ref{discretization}, we extend the preceding approach to positive linear programs, 
by generalizing the concept of non-negative cycle-free flows to 
non-negative \emph{feasible kernel-free} vectors (Subsection~\ref{subsection:CloseNonNengKernelFreeVector}).
Although, we use the same high level ideas as in~\cite{Physarum-Complexity-Bounds,SV-Flow}, 
we stress that our analysis generalizes all relevant lemmas 
in~\cite{Physarum-Complexity-Bounds,SV-Flow}
and it uses arguments from linear algebra and linear programming duality, 
instead of combinatorial arguments. 
Further, our core efficiency bounds (Subsection~\ref{subsec:subsecUsefulLemmas}) 
extend~\cite{SV-LP} and yield a~\emph{scale-invariant} determinant dependence 
of the step size and are applicable for any strongly dominating point (Subsection~\ref{subsec:SDCV}).

\newpage

\section{Convergence of the Physarum Dynamics: Simple Instances}\label{Simple Instances}\label{simple instances}

In this section, we prove Theorem~\ref{intro_thm} under the simplifying assumptions 
\eqref{positive c} to \eqref{x0 is dominating}, defined in page~\pageref{positive c}.

\subsection{Preliminaries}\label{preliminaries}

Note that we may assume that $A$ has full row-rank since any equation that is linearly dependent on other equations can be deleted without changing the feasible set. We continue to use $n$ and $m$ for the dimension of $A$. Thus, $A$ has rank $n$. We continue by fixing some terms and notation.
A \emph{basic feasible solution} of \eqref{ULP} is a pair of vectors $x$ and $f = (f_B, f_N)$, where $f_B = A_B^{-1} b$ and $A_B$ is a square $n \times n$ non-singular sub-matrix of $A$ and $f_N = 0$ is the vector indexed by the coordinates not in $B$, and $x=|f|$. Since $f$ uniquely determines $x$, we may drop the latter for the sake of brevity and call $f$ a basic feasible solution of \eqref{ULP}. A feasible solution $f$ is \emph{kernel-free} or \emph{non-circulatory}  if it is contained in the convex hull of the basic feasible solutions.\footnote{
	For the undirected shortest path problem, we drop the equation corresponding to the sink. Then $b$ becomes the negative indicator vector corresponding to the source node. Note that $n$ is one less than the number of nodes of the graph.
The basic feasible solutions are the simple undirected source-sink paths. A circulatory solution contains a cycle on which there is flow.}
We say that a vector $f'$ is \emph{sign-compatible} with a vector $f$ (of the same dimension) or $f$-sign-compatible if $f'_e \not= 0$ implies $f'_e f_e > 0$. In particular, $\supp(f') \subseteq \supp(f)$.
For a given capacity vector $x$ and a vector $f \in \R^m$ with $\supp(f) \subseteq \supp(x)$, we use $E(f) = \sum_e (c_e/x_e) f_e^2$ to denote the \emph{energy} of $f$. The energy of $f$ is infinite, if $\supp(f) \not\subseteq \supp(x)$. We use $\cost(f)= \sum_e c_e \abs{f_e} = c^{\rot} \abs{f}$ to denote the \emph{cost} of $f$.
Note that $E(x) = \sum_e (c_e/x_e) x_e^2 = \sum_e c_e x_e = \cost(x)$. 
We define the constants
$\cmax = \|c\|_\infty$ and $\cmin = \min_{e:c_e > 0} c_e$.

We use the following corollary of the finite basis theorem for polyhedra.

\begin{lemma}\label{sign-compatible representation}
	Let $f$ be a feasible solution of \eqref{ULP}. Then $f$ is the sum of a convex combination of at most $m$ basic feasible solutions plus a vector in the kernel of $A$. Moreover, all elements in this representation are sign-compatible with $f$.
\end{lemma}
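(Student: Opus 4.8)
The plan is to reduce the statement to the standard finite basis (Minkowski–Weyl) theorem for polyhedra by moving to the orthant determined by the signs of $f$. First I would fix a feasible solution $f$ of~\eqref{ULP} and record its sign pattern: let $\sigma_e = \operatorname{sign}(f_e) \in \{-1,0,+1\}$. Consider the polyhedron
\[
Q = \set{g \in \R^m}{Ag = b,\ g_e = 0 \text{ if } \sigma_e = 0,\ \sigma_e g_e \ge 0 \text{ for all } e},
\]
i.e. the feasible points of $Ag=b$ lying in the (closed) orthant spanned by $f$ and supported on $\supp(f)$. By construction $f \in Q$, so $Q \neq \emptyset$. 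By the finite basis theorem, $Q = \operatorname{conv}(v_1,\dots,v_k) + \operatorname{cone}(w_1,\dots,w_\ell)$ for finitely many vertices $v_i$ and extreme rays $w_j$ of $Q$. Hence $f = \sum_i \lambda_i v_i + \sum_j \mu_j w_j$ with $\lambda_i \ge 0$, $\sum_i \lambda_i = 1$, $\mu_j \ge 0$. Every $v_i$ and every $w_j$ lies in $Q$, so each is supported on $\supp(f)$ and is $f$-sign-compatible; since sign-compatibility is preserved under nonnegative combinations, the whole decomposition is $f$-sign-compatible, and $w := \sum_j \mu_j w_j$ is $f$-sign-compatible as well.

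Next I would identify the pieces with the objects named in the statement. Each extreme ray $w_j$ of $Q$ satisfies $A w_j = 0$, so $w := \sum_j \mu_j w_j$ lies in the kernel of $A$; this is the ``vector in the kernel'' summand. For the vertices: a vertex $v_i$ of $Q$ is determined by $m$ linearly independent tight constraints among $Ag=b$ (that is $n$ equations, assuming $A$ has full row rank $n$ as noted in the Preliminaries) and the bound constraints $g_e = 0$ or $\sigma_e g_e \ge 0$; the tight bound constraints force all but at most $n$ coordinates of $v_i$ to be zero, and the corresponding $n$ columns of $A$ must be linearly independent. Thus $v_i$ has the form $f = (f_B, f_N)$ with $f_B = A_B^{-1} b$ and $f_N = 0$ for a nonsingular $n \times n$ submatrix $A_B$ — precisely a basic feasible solution of~\eqref{ULP} in the sense defined above. (One should check a vertex of $Q$ is genuinely a basic feasible solution and not just a point where some coordinates are pinned to $0$ by the orthant constraint rather than by basis membership; but any coordinate pinned to $0$ can be absorbed into the non-basic set $N$, and the remaining free coordinates, being at most $n$ with independent columns, extend to a basis — so $v_i$ is a basic feasible solution.) Finally, to get ``at most $m$'' basic feasible solutions in the convex combination, apply Carathéodory's theorem to the point $f - w$, which lies in the polytope $\operatorname{conv}(v_1,\dots,v_k) \subseteq \R^m$: it is a convex combination of at most $m+1$ of the $v_i$, and in fact at most $m$ since it lies in the affine subspace $\{g : Ag = b\}$ of dimension $m-n \le m-1$, so Carathéodory on that affine hull gives $m-n+1 \le m$ points.

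The main obstacle I anticipate is the bookkeeping in the middle step: verifying cleanly that a vertex of the orthant-restricted polyhedron $Q$ is a basic feasible solution of~\eqref{ULP} in the precise matrix form $(f_B, f_N)$ with $f_B = A_B^{-1}b$, rather than merely an ``extreme point.'' This requires arguing that at a vertex the active constraints determine the point uniquely, counting that at least $m - n$ of the coordinate constraints $g_e = 0$ (or $\sigma_e g_e \ge 0$ active, i.e. $g_e = 0$) must be tight, and that the $\le n$ remaining columns of $A$ are linearly independent so they can be completed to an invertible $A_B$. Everything else — the Minkowski–Weyl decomposition, the kernel membership of rays, the sign-compatibility propagation through nonnegative combinations, and the Carathéodory count — is routine once $Q$ is set up in the right orthant.
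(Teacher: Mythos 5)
Your proposal is correct and follows essentially the same route as the paper: the paper flips the signs of the columns of $A$ so that $f\ge 0$ and then invokes the finite basis theorem for the polyhedron $\{g: Ag=b,\ g\ge 0\}$, which is just a change of coordinates away from your orthant-restricted polyhedron $Q$. You additionally spell out the vertex-to-basic-feasible-solution identification and the Carath\'eodory count that the paper delegates to the cited theorem, but the underlying argument is the same.
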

\begin{proof} We may assume $f \ge 0$. Otherwise, we flip the sign of the appropriate columns of $A$. Thus, the system $Af =b,\ f \ge 0$ is feasible and $f$ is the sum of a convex combination of at most $m$ basic feasible solutions plus a vector in the kernel of $A$ by the finite basis theorem~\cite[Corollary 7.1b]{Schrijver99}. By definition, the elements in this representation are non-negative vectors and hence sign-compatible with $f$.
\end{proof}

\begin{lemma}[Gr\"onwall's Lemma]\label{Gronwalls Lemma} 
	Let $A, B, \alpha, \beta \in \R$, $\alpha \not= 0$, $\beta \not= 0$, and let $g$ be a continuous differentiable function on $[0,\infty)$. If $A + \alpha g(t) \le \dot{g}(t) \le B + \beta g(t)$ for all $t \ge 0$, then $- A/\alpha + (g(0) + A/\alpha) e^{\alpha t} \le g(t) \le  - B/\beta + (g(0) + B/\beta) e^{\beta t}$ for all $t \ge 0$. 
\end{lemma}
\begin{proof} 
We show the upper bound. Assume first that $B = 0$. Then 
\[  
	\frac{d}{dt} \frac{g}{e^{\beta t}} = 
	\frac{\dot{g} e^{\beta t} - \beta g e^{\beta t}  }{e^{2 \beta t}} 
	\le 0 \quad\text{implies}\quad  \frac{g(t)}{e^{\beta t}} 
	\le \frac{g(0)}{e^{\beta 0}} = g(0).
\]
If $B \not= 0$, define $h(t) = g(t) + B/\beta$. Then 
\[     
	\dot{h} = \dot{g} \le B + \beta g = B + \beta(h - B/\beta) 
	= \beta h
\]
and hence $h(t) \le h(0) e^{\beta t}$. 
Therefore $g(t) \le - B/\beta + (g(0) + B/\beta) e^{\beta t}$. 
\qedhere
\end{proof}

An immediate consequence of Gr\"onwall's Lemma is that the undirected Physarum dynamics~\eqref{PD}
initialized with any positive starting vector $x(0)$, generates a trajectory  
$\{x(t)\}_{t\geq0}$ such that each time state $x(t)$ is a positive vector.
Indeed, since $\dot{x}_e = | q_e | - x_e \ge - x_e$,
we have $x_e(t) \ge x_e(0)\cdot\exp\{-t\}$ for every index $e$ with $x_e(0)>0$ and every time $t$. 
Further, by \eqref{PD} and \eqref{Defq}, it holds for indices $e$ with $x_e(0) = 0$
that $x_e(t)=0$ for every time $t$.
Hence, the trajectory $\{x(t)\}_{t\geq0}$ has a time-invariant support.

\begin{lemma}[\cite{Johannson-Zou}]\label{Formula for q} 
	Let $R = \diag(c_e/x_e)$. Then $q = R^{-1} A^{\rot} p$, where $p = (A R^{-1} A^{\rot})^{-1} b$. 
\end{lemma}
\begin{proof} 
	$q$ minimizes $\sum_e r_e q_e^2$ subject to $A q = b$. 
	The Karush-Kuhn-Tucker (KKT) optimality conditions for 
	constrained optimization~\cite{Boyd04} imply the existence of 
	a vector $p$ such that $Rq = A^{\rot}p$. 
	Substituting into $A q = b$ yields $p = (A R^{-1} A^{\rot})^{-1} b$.
\end{proof}

\begin{lemma}\label{x stays dominating} 
	$\Xdom$ is an invariant set, i.e., if $x(0) \in \Xdom$ then $x(t) \in \Xdom$ for all $t$.
\end{lemma}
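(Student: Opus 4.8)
The plan is to show that the set $\Xdom$ is forward-invariant under the flow of~\eqref{PD} by exhibiting, for any fixed time $t_0$, a feasible $f$ with $\abs{f} \le x(t_0)$, assuming such an $f$ exists at time $0$. The natural candidate to ``carry along'' is the minimum energy feasible solution: since $x(0) \in \Xdom$ means there is a feasible flow dominated by $x(0)$, and since $\supp(x(t))$ is time-invariant (as observed just after Gr\"onwall's Lemma), the minimum energy solution $q(t)$ from~\eqref{Defq} is well-defined along the whole trajectory. The key point I would try to prove is that $\abs{q_e(t)} \le x_e(t)$ holds for all $e$ and all $t$ once it holds at $t = 0$; this immediately gives $x(t) \in \Xdom$ because $q(t)$ is feasible.

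I would establish this by a differential-inequality argument applied coordinatewise, in the spirit of Gr\"onwall's Lemma (Fact~\ref{Gronwalls Lemma}). Consider an index $e$ and the quantity $x_e(t) - \abs{q_e(t)}$. Using $\dot{x}_e = \abs{q_e} - x_e$, we get that whenever $x_e > \abs{q_e}$, the derivative $\dot{x}_e$ is positive, so $x_e$ is increasing; the danger is only that $\abs{q_e}$ could grow even faster. The clean way around differentiating $\abs{q_e}$ (which need not be smooth when $q_e$ changes sign) is to argue at the level of the whole vector: I would show that if $\abs{q(t)} \le x(t)$ fails to persist, then at the first time $\tau$ where some coordinate touches, $x_e(\tau) = \abs{q_e(\tau)}$ while $\frac{d}{dt}\big(x_e - \abs{q_e}\big)\big|_{\tau^-} < 0$, and derive a contradiction from an energy/optimality property of $q$. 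Concretely, at such a touching time one can compare $q(\tau)$ with the feasible solution obtained by a small sign-compatible perturbation toward a basic feasible solution dominated by $x(\tau)$ (available since $x(0)\in\Xdom$ and using Lemma~\ref{sign-compatible representation} together with invariance already accumulated), contradicting minimality of the energy $E(q)$.

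An alternative, possibly cleaner route: fix a feasible $g^{(0)}$ with $\abs{g^{(0)}} \le x(0)$, and track not $q$ but the fixed vector $g^{(0)}$ against $x(t)$. Since $\dot{x}_e = \abs{q_e} - x_e \ge \abs{q_e} - x_e$ and by the minimum-energy property $\sum_e (c_e/x_e) q_e^2 \le \sum_e (c_e/x_e)(g^{(0)}_e)^2$, one should be able to show $\abs{q_e}$ cannot be too small on the coordinates where $x_e$ is about to drop below $\abs{g^{(0)}_e}$. Here too the decisive inequality is that $q$, being energy-minimal, cannot have $\abs{q_e} < \abs{g^{(0)}_e}$ on \emph{all} coordinates that are binding; quantifying this is the heart of the matter.

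The main obstacle I anticipate is the non-smoothness of $t \mapsto \abs{q_e(t)}$ at sign changes of $q_e$, and, more substantively, converting the global optimality of $q$ into the \emph{local} coordinatewise statement $\abs{q_e(t)} \le x_e(t)$. The existence of \emph{some} feasible $f \le x(t)$ is a statement about a polytope being nonempty, whereas I need the \emph{specific} flow $q$ (or at least the trajectory) to respect the bound; bridging this gap — presumably via Thomson's principle and a sign-compatible exchange argument showing that decreasing any $x_e$ below $\abs{q_e}$ would let one strictly lower the energy using a dominated basic feasible solution — is where the real work lies. I expect the rest (the Gr\"onwall-style bookkeeping and the first-touching-time setup) to be routine once that exchange inequality is in hand.
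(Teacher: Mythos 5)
There is a genuine gap, and in fact the central claim of your first route is false. You propose to show that the minimum energy solution itself satisfies $\abs{q_e(t)} \le x_e(t)$ for all $e$ and $t$; but $x \in \Xdom$ does not imply $\abs{q} \le x$ even at a single instant. Take the undirected shortest path problem on two parallel source--sink edges with $c = (1,100)$ and $x = (1/2,1/2)$: then $x$ dominates the feasible flow $(1/2,1/2)$, yet the energy-minimizing flow puts $q_1 = (x_1/c_1)/(x_1/c_1 + x_2/c_2) = 100/101 > 1/2 = x_1$. So no first-touching-time or exchange argument can rescue this route --- the invariant you want to propagate need not hold at time zero. Your second route (tracking a \emph{fixed} feasible $g^{(0)}$ against $x(t)$) also cannot work as stated, because $x(t)$ genuinely can drop below $\abs{g^{(0)}}$ on some coordinates; the witness of domination must be allowed to change over time. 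You correctly sense that this is ``where the real work lies,'' but neither sketch supplies the missing idea.

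The missing idea in the paper's proof is to let the witness evolve by an auxiliary ODE: define $f(t)$ by $\dot f = q - f$ with $f(0)$ feasible and $\abs{f(0)} \le x(0)$. Feasibility is preserved because $\frac{d}{dt}(Af - b) = A(q-f) = -(Af - b)$, so $Af(t) - b \equiv 0$; and domination is preserved by applying Gr\"onwall's Lemma (Fact~\ref{Gronwalls Lemma}) coordinatewise to the two smooth quantities $f - x$ and $f + x$, using $\frac{d}{dt}(f-x) = q - \abs{q} - (f-x) \le -(f-x)$ and $\frac{d}{dt}(f+x) = q + \abs{q} - (f+x) \ge -(f+x)$. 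This sidesteps entirely the non-smoothness of $\abs{q_e}$ and any need to bound $q$ coordinatewise by $x$; the only property of $q$ used is $Aq = b$ and the trivial inequalities $q - \abs{q} \le 0 \le q + \abs{q}$. Without this construction (or an equivalent one), the proposal does not constitute a proof.
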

\begin{proof} Let $q(t)$ be the minimum energy feasible solution with respect to $R(t) = \diag(c_e/x_e(t))$, and let $f(t)$ be such that $f(0)$ is feasible, $\abs{f(0)} \le x(0)$, and $\dot{f}(t) = q(t) - f(t)$. Then
	$\frac{d}{dt} (Af - b) = A(q - f) = b - Af$ and hence $Af(t) - b = (Af(0) - b) e^{-t} = 0$. Thus $f(t)$ is feasible for all $t$. Moreover,
	\[ \frac{d}{dt} (f - x) = \dot{f} - \dot{x} = q - f - (\abs{q} - x) = q - \abs{q} - (f - x) \le - (f - x).\]
	Thus $f(t) - x(t) \le (f(0) - x(0)) e^{-t} \le 0$  by Gr\"onwall's Lemma applied with $g(t) = f(t) - x(t)$ and $\beta = -1$, and hence $f(t) \le x(t)$ for all $t$. Similarly,
	\[ \frac{d}{dt} (f + x) = \dot{f} + \dot{x} = q - f + (\abs{q} - x) = q + \abs{q} - (f + x) \ge - (f + x).\]
	Thus $f(t) + x(t) \ge (f(0) + x(0)) e^{-t} \ge 0$ by Gr\"onwall's Lemma applied with $g(t) = f(t) + x(t)$ and $\alpha = -1$ and $A = 0$, and hence $f(t) \ge - x(t)$ for all $t$.

	We conclude that $\abs{f(t)} \le x(t)$ for all $t$. Thus, $x(t) \in \Xdom$ for all $t$.
\end{proof}  

\subsection{The Convergence Proof}

We will first characterize the equilibrium points. They are precisely the points $\abs{f}$, where $f$ is a basic feasible solution; the proof uses assumption \eqref{distinctCost} in page~\pageref{positive c}. We then show that $E(x)$ is a Lyapunov function for~\eqref{PD}, in particular, $\dot{E} \le 0$ and $\dot{E} = 0$ if and only if $x$ is an equilibrium point. For this argument, we need that the energy of $q$ is at most the energy of $x$ with equality if and only if $x$ is an equilibrium point. This proof uses assumptions (\ref{positive c}) and (\ref{x0 is dominating}) in page~\pageref{positive c}.
It follows from the general theory of dynamical systems that $x(t)$ approaches an equilibrium point. Finally, we show that convergence to a non-optimal equilibrium is impossible.

\begin{lemma}[Generalization of Lemma 2.3 in~\cite{Bonifaci-Physarum}]\label{Lemma 2.3}
	Assume~\eqref{positive c}  to \eqref{x0 is dominating}.
	If $f$ is a basic feasible solution of~\eqref{ULP}, then $x = \abs{f}$ is an equilibrium point. Conversely, if $x$ is an equilibrium point, then $x = \abs{f}$ for some basic feasible solution $f$.
\end{lemma}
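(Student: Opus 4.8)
The plan is to prove the two implications separately. For the easy direction, let $f=(f_B,f_N)$ be a basic feasible solution with basis $B$, and set $x=\abs{f}$, $S=\supp(x)=\supp(f)\subseteq B$. By \eqref{Defq} the minimum energy solution $q$ for the capacity vector $x$ minimizes $\sum_{e\in S}(c_e/x_e)g_e^2$ over $\{g:Ag=b,\ \supp(g)\subseteq S\}$, and the point is that this feasible set is the singleton $\{f\}$: any such $g$, viewed as a vector indexed by $B$ (its coordinates outside $S\subseteq B$ vanish), satisfies $A_Bg_B=Ag=b$, so nonsingularity of $A_B$ forces $g_B=A_B^{-1}b=f_B$, i.e.\ $g=f$. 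Hence $q=f$, so $\abs{q}=\abs{f}=x$ and $\dot x=\abs{q}-x=0$; thus $x$ is an equilibrium. (This uses only that $A_B$ is invertible, not full row rank of all of $A$.)

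For the converse, suppose $x$ is an equilibrium, so the minimum energy solution $q$ for $x$ satisfies $\abs{q}=x$; put $S=\supp(q)=\supp(x)$. Since $q$ minimizes the strictly convex quadratic $g\mapsto\sum_{e\in S}(c_e/\abs{q_e})g_e^2$ over the affine set $\{g:Ag=b,\ \supp(g)\subseteq S\}$ (strict convexity uses $c>0$), I would read off its first-order optimality condition — the KKT computation underlying Fact~\ref{Formula for q}, carried out on the columns of $A$ indexed by $S$ — to obtain a vector $p\in\R^n$ with $(A^Tp)_e=c_e\,\mathrm{sign}(q_e)$ for all $e\in S$. The next step is to recognize $p$ as a dual certificate for the $\ell_1$-minimization problem on the support: for every feasible $g$ with $\supp(g)\subseteq S$ one has $c_e\abs{g_e}\ge c_e\,\mathrm{sign}(q_e)g_e=(A^Tp)_eg_e$ on $S$, hence $\cost(g)\ge\sum_e g_e(A^Tp)_e=(Ag)^Tp=b^Tp$, with equality exactly when $g$ is sign-compatible with $q$; and $\cost(q)=b^Tp$ because equality holds for $g=q$ itself.

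Then I would apply Lemma~\ref{sign-compatible representation} to $q$, writing $q=\sum_i\lambda_if^{(i)}+z$ with $f^{(i)}$ basic feasible solutions, $\lambda_i\ge0$, $\sum_i\lambda_i=1$, $z\in\ker A$, and every term sign-compatible with $q$ (hence supported in $S$). Each $f^{(i)}$ is feasible, supported in $S$, and sign-compatible with $q$, so $\cost(f^{(i)})=b^Tp=\cost(q)$; by the distinct-cost hypothesis~(\ref{distinctCost}) all the $f^{(i)}$ coincide, so $q=f^{(1)}+z$. To eliminate $z$: if $z\neq0$ then $q+z$ is feasible, supported in $S$, and sign-compatible with $q$, so $\cost(q+z)=b^Tp=\cost(q)$; but sign-compatibility forces $\abs{q_e+z_e}=\abs{q_e}+\abs{z_e}$ on $S$, whence $\cost(q+z)=\cost(q)+\cost(z)>\cost(q)$ since $c>0$ and $z\neq0$ — a contradiction. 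Hence $z=0$, $q=f^{(1)}$ is a basic feasible solution, and $x=\abs{q}=\abs{f^{(1)}}$.

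I expect the only conceptual hurdle to be the reinterpretation of the minimum-energy stationarity condition as complementary slackness for a cost-minimizing LP on $\supp(q)$; with that identification in hand, assumption~(\ref{distinctCost}) is exactly what forces the sign-compatible decomposition of $q$ to contain a single basic feasible solution, and $c>0$ removes the kernel term. The remainder is routine bookkeeping with supports: justifying the singleton claim in the first part without invoking full row rank of all of $A$, and extracting $p$ in the second part when $x$ has zero coordinates, so that Fact~\ref{Formula for q} is applied to the appropriate submatrix of $A$ rather than to $A$ itself.
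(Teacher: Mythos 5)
Your proof is correct and follows essentially the same route as the paper's: the forward direction is the identical singleton argument on the support contained in the basis $B$, and the converse uses the same three ingredients (the KKT multiplier $p$ with $A_e^Tp = c_e\,\mathrm{sign}(q_e)$ on the support, the sign-compatible decomposition of Lemma~\ref{sign-compatible representation}, and assumption~(\ref{distinctCost}) to collapse the convex combination to a single basic feasible solution). The only deviation is cosmetic: the paper discards the kernel component $z$ by the energy argument (a sign-compatible kernel part could be removed to strictly decrease energy), whereas you discard it via the dual certificate ($\cost(q+z)=b^Tp=\cost(q)$ yet $\cost(q+z)=\cost(q)+\cost(z)$, forcing $\cost(z)=0$ and hence $z=0$ since $c>0$); both are valid, and your explicit handling of the restricted KKT system when $x$ has zero coordinates is a welcome tightening of a point the paper glosses over.
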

\begin{proof}
	Let $f$ be a basic feasible solution, let $x = \abs{f}$, and let $q$ be the minimum energy feasible solution with respect to the resistances $c_e/x_e$. We have $Aq = b$ and $\supp(q) \subseteq \supp(x)$ by definition of $q$. Since $f$ is a basic feasible solution there is a subset $B$ of size $n$ of the columns of $A$ such that $A_B$ is non-singular and $f = (A_B^{-1} b, 0)$. Since $\supp(q) \subseteq \supp(x) \subseteq B$, we have $q = (q_B,0)$ for some vector $q_B$. Thus, $b = Aq = A_B q_B$ and hence $q_B = f_B$. Therefore $\dot{x} = \abs{q} - x = 0$ and $x$ is an equilibrium point.

	Conversely, if $x$ is an equilibrium point, $\abs{q_e} = x_e$ for every $e$. By changing the signs of some columns of $A$, we may assume $q \ge 0$. Then $q = x$. Since $q_e = x_e/c_e A_e^{\rot} p$ where $A_e$ is the $e$-th column of $A$ by Lemma~\ref{Formula for q}, we have $c_e = A_e^{\rot} p$, whenever $x_e > 0$. By Lemma~\ref{sign-compatible representation}, $q$ is a convex combination of basic feasible solutions and a vector in the kernel of $A$ that are sign-compatible with $q$. The vector in the kernel must be zero as $q$ is a minimum energy feasible solution. For any basic feasible solution $z$ contributing to $q$, we have $\supp(z) \subseteq \supp(x)$.  Summing over the $e \in \supp(z)$, we obtain $\cost(z) = \sum_{e \in \supp(z)} c_e z_e = \sum_{e \in \supp(z)} z_e A_e^{\rot} p = b^{\rot} p$. Thus, the convex combination involves only a single basic feasible solution by assumption~\eqref{distinctCost} and hence $x$ is a basic feasible solution.
\end{proof}

The vector $x(t)$ dominates a feasible solution at all times. Since $q(t)$ is the minimum energy feasible solution at time $t$, this implies $E(q(t)) \le E(x(t))$ at all times. A further argument shows that we have equality if and only if $x = \abs{q}$.

\begin{lemma}[Generalization of Lemma 3.1 in ~\cite{Bonifaci-Physarum}]\label{energy less than cost}
	Assume~\eqref{positive c}  to \eqref{x0 is dominating}. At all times, $E(q) \le E(x)$. If $E(q) = E(x)$, then $x = \abs{q}$.
\end{lemma}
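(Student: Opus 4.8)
The plan is to combine the domination property from Lemma~\ref{x stays dominating} with the variational characterization of $q$. Fix a time $t$ and write $x = x(t)$, $q = q(t)$. By Lemma~\ref{x stays dominating} (this is where assumption~\eqref{x0 is dominating} enters), $x \in \Xdom$, so there is a feasible vector $f$ with $\abs{f} \le x$. In particular $\supp(f) \subseteq \supp(x)$, so $f$ is admissible in the minimization~\eqref{Defq} defining $q$, whence $E(q) \le E(f)$.

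Next I would compare $E(f)$ with $E(x)$ termwise:
\[
E(f) = \sum_{e : x_e \neq 0} \frac{c_e}{x_e} f_e^2 \;\le\; \sum_{e : x_e \neq 0} \frac{c_e}{x_e} x_e^2 = \sum_e c_e x_e = E(x),
\]
using $f_e^2 \le x_e^2$ coordinatewise, together with the fact that $\abs{f} \le x$ forces $f_e = 0$ wherever $x_e = 0$, so such coordinates contribute $0$ on both sides. Chaining this with $E(q) \le E(f)$ gives $E(q) \le E(x)$, the first claim.

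For the equality case, assume $E(q) = E(x)$. Then both inequalities in the sandwich $E(q)\le E(f)\le E(x)$ are equalities; in particular $E(f) = E(x)$. Since $c > 0$ by~\eqref{positive c}, each coefficient $c_e/x_e$ with $x_e \neq 0$ is strictly positive, so equality in the termwise estimate forces $f_e^2 = x_e^2$ for every such $e$, and hence (adding the coordinates with $x_e=0$) $\abs{f} = x$. Furthermore $E(q) = E(f)$ and $f$ is feasible with $\supp(f) \subseteq \supp(x)$, so $f$ attains the minimum in~\eqref{Defq}; by uniqueness of the minimum energy feasible solution (Subsection~\ref{minimum energy solution}) we conclude $f = q$, and therefore $x = \abs{f} = \abs{q}$.

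I do not anticipate a genuine obstacle here: the argument is a two-line sandwich plus a rigidity statement for the middle inequality. The only points needing care are the coordinates with $x_e = 0$ (handled uniformly by $\abs{f}\le x$) and the explicit appeal to uniqueness of $q$ to upgrade $E(q) = E(f)$ to $f = q$. Note that assumption~\eqref{distinctCost} is not used in this lemma; only~\eqref{positive c} and~\eqref{x0 is dominating} are.
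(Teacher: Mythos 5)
Your proof is correct and follows essentially the same route as the paper: the sandwich $E(q)\le E(f)\le E(x)$ via a dominated feasible $f$, then in the equality case using $c>0$ to force $\abs{f}=x$ and uniqueness of the minimum energy feasible solution to force $f=q$. The extra care you take with coordinates where $x_e=0$ is harmless but unnecessary here, since under assumption~(C) the trajectory stays strictly positive.
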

\begin{proof} Recall that $x(t) \in \Xdom$ for all $t$. Thus, at all times, there is a feasible $f$ such that $\abs{f} \le x$. Since $q$ is a minimum energy feasible solution, we have
\[  
E(q) \le E(f) \le E(x).
\]
If $E(q) = E(x)$ then $E(q) = E(f)$ and hence $q = f$ since the minimum energy feasible solution is unique. Also, $\abs{f} = x$ since $\abs{f} \le x$ and $\abs{f_e} < x_e$ for some $e$ implies $E(f) < E(x)$. The last conclusion uses $c > 0$.
\end{proof}

Lyapunov functions are the main tool for proving convergence of dynamical systems. We show that $E(x)$ is a Lyapunov function for~\eqref{PD}.

\begin{lemma}[Generalization of Lemma 3.2 in~\cite{Bonifaci-Physarum}]\label{Lemma 3.2}
	Assume~\eqref{positive c}  to \eqref{x0 is dominating}. $E(x)$ is a Lyapunov function for \eqref{PD}, i.e., it is continuous as a function of $x$, $E(x) \ge 0$, $\dot{E}(x) \le 0$ and $\dot{E}(x) = 0$ if and only if $\dot{x} = 0$.
\end{lemma}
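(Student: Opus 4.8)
The plan is to compute $\dot E(x)$ explicitly along trajectories of~\eqref{PD} and to recognize the result as (twice) the difference of the energy of $x$ and the energy of $q$, up to harmless terms; this difference is non-negative by Lemma~\ref{energy less than cost}, which gives $\dot E \le 0$, and the equality case of that same lemma pins down exactly when $\dot E = 0$. Continuity of $E(x) = \sum_e c_e x_e$ is immediate, and $E(x) \ge 0$ follows from $c > 0$ and $x(t) > 0$ (which holds by the Gr\"onwall argument in the Preliminaries). So the substance is the sign of $\dot E$.

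First I would differentiate: since $E(x) = \sum_e c_e x_e$, we get $\dot E = \sum_e c_e \dot x_e = \sum_e c_e(\abs{q_e} - x_e)$. The natural move is to relate $\sum_e c_e \abs{q_e}$ back to the energy inner product. Using $R = \diag(c_e/x_e)$, note $E(x) = \sum_e r_e x_e^2$ and $E(q) = \sum_e r_e q_e^2$, and the ``mixed'' quantity is $\sum_e r_e \abs{q_e} x_e = \sum_e (c_e/x_e)\abs{q_e} x_e = \sum_e c_e \abs{q_e}$. Hence $\dot E = \sum_e r_e \abs{q_e} x_e - \sum_e r_e x_e^2$. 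Now I apply Cauchy--Schwarz in the $R$-inner product to the vectors $\abs{q}$ and $x$: $\sum_e r_e \abs{q_e} x_e \le \big(\sum_e r_e q_e^2\big)^{1/2}\big(\sum_e r_e x_e^2\big)^{1/2} = \sqrt{E(q)}\,\sqrt{E(x)}$. Combining with $E(q) \le E(x)$ from Lemma~\ref{energy less than cost}, we get $\sum_e r_e \abs{q_e} x_e \le \sqrt{E(q)E(x)} \le E(x)$, so $\dot E \le 0$.

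For the equality characterization: $\dot E = 0$ forces both inequalities above to be equalities. Equality in Cauchy--Schwarz means $\abs{q}$ and $x$ are parallel in the $R$-inner product, i.e., $\abs{q} = \lambda x$ for some $\lambda \ge 0$ (on the common support, which is all of $\supp(x)$ since $x>0$); and equality $\sqrt{E(q)E(x)} = E(x)$ together with $E(q)\le E(x)$ forces $E(q) = E(x)$, whence $\lambda = 1$, i.e., $x = \abs{q}$, which is exactly $\dot x = \abs{q} - x = 0$. Conversely $\dot x = 0$ trivially gives $\dot E = 0$. Alternatively, and perhaps cleaner for the write-up, once $E(q)=E(x)$ is derived one invokes the second sentence of Lemma~\ref{energy less than cost} directly to conclude $x = \abs{q}$, avoiding a separate discussion of the Cauchy--Schwarz equality case.

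The main obstacle is bookkeeping rather than depth: one must be careful that all sums range over $\supp(x) = \supp(q)$ (guaranteed by $x(t) > 0$ and $\supp(q)\subseteq\supp(x)$), that the energies are finite so Cauchy--Schwarz applies, and that the equality analysis correctly yields $\lambda = 1$ rather than merely proportionality. I would also double-check the differentiability of $t \mapsto E(x(t))$ along the trajectory, which follows since $x(t)$ is differentiable (it solves the ODE) and $E$ is linear in $x$. No step here looks genuinely hard once Lemma~\ref{energy less than cost} is in hand.
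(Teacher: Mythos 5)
Your proposal is correct and follows essentially the same route as the paper: differentiate $E(x)=c^Tx$ along the flow, rewrite $c^T\abs{q}$ as the $R$-inner product $x^TR\abs{q}$, apply Cauchy--Schwarz together with $E(q)\le E(x)$ from Lemma~\ref{energy less than cost}, and resolve the equality case via that same lemma. No gaps.
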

\begin{proof}
	$E$ is clearly continuous and non-negative. Recall that $E(x) = \cost(x)$. Let $R$ be the diagonal matrix with entries $c_e/x_e$. Then
	\begin{align*}
		\frac{d}{dt}{\cost(x)} & = c^{\rot} (\abs{q} - x)                           & \text{by~\eqref{PD}}                        \\
		                       & = x^{\rot} R \abs{q} - x^{\rot} R x                     & \text{since $c = Rx$}                       \\
		& = x^{\rot} R^{1/2} R^{1/2} \abs {q} - x^{\rot} R x \\
		                       & \le (q^{\rot} R q)^{1/2} (x^{\rot} R x)^{1/2} - x^{\rot} R x & \text{by Cauchy-Schwarz}                    \\
		                       & \le (x^{\rot} R x)^{1/2} (x^{\rot} R x)^{1/2} - x^{\rot} R x & \text{by Lemma~\ref{energy less than cost}} \\
		&= 0.
	\end{align*}
	Observe that $\frac{d}{dt}{\cost(x)} = 0$ implies that both inequalities above are equalities. This is only possible if the vectors $\abs{q}$ and $x$ are parallel and $E(q) = E(x)$. Thus, $x = \abs{q}$ by Lemma~\ref{energy less than cost}.
\end{proof}

It follows now from the general theory of dynamical systems that $x(t)$ converges to an equilibrium point. 

\begin{corollary}[Generalization of Corollary 3.3. in~\cite{Bonifaci-Physarum}.]
	Assume~\eqref{positive c}  to \eqref{x0 is dominating}. As $t \rightarrow \infty$, $x(t)$ approaches an equilibrium point and $c^{\rot} x(t)$ approaches the cost of the corresponding basic feasible solution.
\end{corollary}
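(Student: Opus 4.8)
The plan is to conclude via LaSalle's invariance principle, feeding it the two facts already established: the characterization of equilibria (Lemma~\ref{Lemma 2.3}) and the strict Lyapunov property of $E$ (Lemma~\ref{Lemma 3.2}).

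First I would check that the forward orbit $\{x(t):t\ge 0\}$ is bounded. The solution exists on all of $[0,\infty)$ (\cite{SV-IRLS}) and stays strictly positive by the Gr\"onwall consequence recorded after Fact~\ref{Gronwalls Lemma}. By Lemma~\ref{Lemma 3.2} the function $E(x(t))=c^{T}x(t)$ is nonincreasing, so $c^{T}x(t)\le c^{T}x(0)$ for all $t$; since $c>0$ by~\eqref{positive c}, this pins every coordinate to $0<x_e(t)\le c^{T}x(0)/\cmin$. Hence the orbit lies in a compact box, its $\omega$-limit set $\omega$ is nonempty, compact, connected and invariant under the flow, and $E^{\star}:=\lim_{t\to\infty}E(x(t))$ exists.

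Next I would run the usual LaSalle argument: $E$ is continuous and equals $E^{\star}$ on $\omega$, so by invariance $E$ is constant along every trajectory lying in $\omega$, whence $\dot E\equiv 0$ on $\omega$, and the ``only if'' half of Lemma~\ref{Lemma 3.2} then forces $\dot x\equiv 0$ on $\omega$. By Lemma~\ref{Lemma 2.3}, every point of $\omega$ is $\abs{f}$ for a basic feasible solution $f$ of~\eqref{ULP}. Since there are only finitely many choices of basis $B$, there are only finitely many such equilibria, and a finite set containing the connected set $\omega$ must be a single point $\abs{f^{\star}}$. Therefore $x(t)\to\abs{f^{\star}}$ as $t\to\infty$, and $c^{T}x(t)\to c^{T}\abs{f^{\star}}=\cost(f^{\star})$, the cost of the corresponding basic feasible solution.

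The part I expect to be delicate is the invariance step, because the vector field $x\mapsto\abs{q(x)}-x$ of~\eqref{PD} is a priori only defined on the open positive orthant while the points of $\omega$ (being of the form $\abs{f}$ for a basic feasible solution) may lie on its boundary $\{x_e=0\}$. To make LaSalle literally applicable one must either verify that $x\mapsto q(x)$ extends continuously enough near $\omega$ --- using the closed form $q=R^{-1}A^{T}(AR^{-1}A^{T})^{-1}b$ of Fact~\ref{Formula for q} and controlling what happens as coordinates of $x$ vanish --- or, alternatively, re-derive the $\omega$-limit/invariance bookkeeping directly on the closed orthant, relying only on continuity of $E=c^{T}x$ there together with Lemmas~\ref{Lemma 2.3} and~\ref{Lemma 3.2}. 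Everything else (boundedness, finiteness of the equilibrium set, and the passage from ``connected subset of a finite set'' to ``singleton'') is routine.
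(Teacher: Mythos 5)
Your proof is correct and follows essentially the same route as the paper's: LaSalle's invariance principle applied to the Lyapunov function $E$ of Lemma~\ref{Lemma 3.2}, followed by the observation that the equilibria characterized in Lemma~\ref{Lemma 2.3} form a finite set, so the connected $\omega$-limit set must be a single point. The boundary-continuity subtlety you flag (the vector field at points with $x_e=0$) is genuine but is glossed over in the paper as well, which simply invokes \cite[Corollary 2.6.5]{LaSalle}.
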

\begin{proof}
	The proof in~\cite{Bonifaci-Physarum} carries over. We include it for completeness. The existence of a Lyapunov function $E$ implies by \cite[Corollary 2.6.5]{LaSalle} that $x(t)$  approaches the set $\set{x \in \R_{\ge 0}^m}{\dot{E} = 0}$, which by Lemma~\ref{Lemma 3.2} is the same as the set $\set{x \in \R_{\ge 0}^m}{\dot{x} = 0}$. Since this set consists of isolated points (Lemma~\ref{Lemma 2.3}), $x(t)$ must approach one of those points, say the point $x_0$. When $x = x_0$ , one has $E(q) = E(x) = \cost(x) = c^{\rot} x$.
\end{proof}

It remains to exclude that $x(t)$ converges to a nonoptimal equilibrium point.

\begin{theorem}[Generalization of Theorem 3.4 in~\cite{Bonifaci-Physarum}]
	Assume~\eqref{positive c}  to \eqref{x0 is dominating}. 
	As $t \rightarrow \infty$, $c^{\rot} x(t)$ converges to the cost of the optimal solution and $x(t)$ converges to the optimal solution.
\end{theorem}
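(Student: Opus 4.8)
The plan is to argue by contradiction, exploiting the fact that a non-optimal equilibrium is unstable in the direction of some non-basic coordinate. By the preceding corollary, $x(t)$ converges to an equilibrium point $x_0=\abs{f_0}$ for some basic feasible solution $f_0$, and $c^{T}x(t)\to\cost(f_0)$; moreover $\cost(x(t))=E(x(t))$ is non-increasing, so $\cost(x(t))\ge\cost(f_0)$ for all $t$. Assume, towards a contradiction, that $f_0$ is not optimal, i.e.\ $\cost(f_0)>\opt$.

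First I would set up the asymptotics of the electrical potential. Since $x(0)>0$, the support of $x(t)$ equals $\{1,\dots,m\}$ for all $t$ (support invariance, established in the preliminaries), so $x_e(t)>0$ everywhere and, by Fact~\ref{Formula for q}, $q_e(t)=\tfrac{x_e(t)}{c_e}A_e^{T}p(t)$ with $p(t)=(AR(t)^{-1}A^{T})^{-1}b$ and $R(t)=\diag(c_e/x_e(t))$. As $t\to\infty$, $AR(t)^{-1}A^{T}=\sum_e\tfrac{x_e(t)}{c_e}A_eA_e^{T}\to M_0:=\sum_{e\in\supp(x_0)}\tfrac{(x_0)_e}{c_e}A_eA_e^{T}$. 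In the non-degenerate case, $\supp(x_0)$ indexes $n$ linearly independent columns of $A$, so $M_0$ is invertible, $p(t)\to p_0:=M_0^{-1}b$, and consequently $q(t)=R(t)^{-1}A^{T}p(t)$ converges; its limit is a feasible vector supported within a basis $B$ with $A_B$ invertible, hence equals $f_0$.

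Next I would identify $p_0$ with the dual solution belonging to $f_0$ and invoke weak duality. For $e\in\supp(x_0)$, passing to the limit in $q_e(t)=\tfrac{x_e(t)}{c_e}A_e^{T}p(t)$ and using $q(t)\to f_0$ yields $A_e^{T}p_0=c_e\,\mathrm{sgn}((f_0)_e)$, hence $b^{T}p_0=(Af_0)^{T}p_0=\sum_{e\in\supp(x_0)}(f_0)_e A_e^{T}p_0=\sum_e c_e\abs{(f_0)_e}=\cost(f_0)$. A short computation (splitting $f=f^{+}-f^{-}$) shows the dual of~\eqref{ULP} is $\max\{b^{T}p:\abs{A_e^{T}p}\le c_e\text{ for all }e\}$. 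Thus if $p_0$ were dual feasible, weak duality would give $\cost(f_0)=b^{T}p_0\le\opt$, contradicting $\cost(f_0)>\opt$. So there is an index $e^{*}$ with $\abs{A_{e^{*}}^{T}p_0}>c_{e^{*}}$; since $\abs{A_e^{T}p_0}=c_e$ for $e\in\supp(x_0)$, necessarily $e^{*}\notin\supp(x_0)$, whence $(x_0)_{e^{*}}=0$ and $x_{e^{*}}(t)\to0$.

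The contradiction then comes from the scalar dynamics of this coordinate: $\dot x_{e^{*}}=\abs{q_{e^{*}}}-x_{e^{*}}=x_{e^{*}}\bigl(\abs{A_{e^{*}}^{T}p(t)}/c_{e^{*}}-1\bigr)$, and since $p(t)\to p_0$ with $\abs{A_{e^{*}}^{T}p_0}/c_{e^{*}}>1$, there are $\delta>0$ and $T$ with $\dot x_{e^{*}}(t)\ge\delta x_{e^{*}}(t)$ for $t\ge T$; Grönwall's Lemma then gives $x_{e^{*}}(t)\ge x_{e^{*}}(T)e^{\delta(t-T)}\to\infty$ (using $x_{e^{*}}(T)>0$), contradicting $x_{e^{*}}(t)\to0$. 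Hence $f_0$ is optimal, $c^{T}x(t)\to\opt$, and $x(t)\to x_0\in\Xstar$. I expect the main obstacle to be the degenerate case, where $\supp(x_0)$ indexes fewer than $n$ columns and $M_0$ is singular: one then has to show that $p(t)$ (equivalently $q(t)$) still converges, or work with $\liminf_t\abs{A_{e^{*}}^{T}p(t)}$ along a subsequence, using $b\in\operatorname{span}\{A_e:e\in\supp(x_0)\}$ and the precise manner in which $AR(t)^{-1}A^{T}$ approaches $M_0$; everything else is routine.
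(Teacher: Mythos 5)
Your route is genuinely different from the paper's --- an instability argument at a non-optimal equilibrium via LP duality, rather than a global Lyapunov/barrier argument --- and the non-degenerate half of it is sound. But there is a real gap exactly where you suspect it. Your argument hinges on $p(t)=(AR(t)^{-1}A^T)^{-1}b$ converging to a single vector $p_0$, so that duality produces one \emph{fixed} index $e^{*}$ with $\abs{A_{e^{*}}^{T}p_0}>c_{e^{*}}$ whose coordinate then blows up. When the limiting basic feasible solution $f_0$ is degenerate, $\abs{\supp(x_0)}<n$, the matrix $M_0=\sum_{e\in\supp(x_0)}\tfrac{(x_0)_e}{c_e}A_eA_e^{T}$ is singular, and neither convergence nor even boundedness of $p(t)$ is established: the component of $p(t)$ in $\ker(M_0)$ is governed by the relative rates at which the coordinates $x_e(t)$, $e\notin\supp(x_0)$, tend to zero, which are not controlled. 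The limit set of $p(t)$ could be a continuum; every limit point is dual-infeasible, but possibly at \emph{different} indices, so no single coordinate is guaranteed to satisfy $\dot x_{e^{*}}\ge\delta x_{e^{*}}$ on an entire tail $[T,\infty)$, and the Gr\"onwall blow-up does not go through. Assumptions (A)--(C) do not exclude degeneracy (distinct costs of basic feasible solutions is a condition on values, not on supports), so this case must be handled.

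The paper's proof sidesteps this entirely and shows how to repair yours: instead of isolating one violating coordinate, it aggregates over the optimal solution. Orienting $A$ so that the optimal $x^{*}$ is itself feasible ($Ax^{*}=b$), it sets $W=\sum_e x^{*}_e c_e\ln x_e$ and computes $\dot W=\sum_e x^{*}_e\abs{A_e^{T}p}-\cost(x^{*})\ge b^{T}p-\cost(x^{*})=E(q)-\cost(x^{*})\ge\delta$ for large $t$, where $\delta=(\cost(z)-\cost(x^{*}))/2$; since $x(t)$ is bounded, $W$ is bounded above, a contradiction. The only analytic input is the lower bound $E(q(t))=b^{T}p(t)\ge\cost(z)-\delta$, which requires no convergence of $p(t)$ whatsoever --- only of the scalar $b^{T}p(t)$. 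Your duality computation $b^{T}p_0=\cost(f_0)>\opt$ is essentially the same inequality read at a limit point; the weighted log-barrier $W$ is precisely the device that turns it into a trajectory-wise contradiction without resolving the behaviour of $p(t)$ itself. To salvage your version you would either need to prove convergence of $p(t)$ in the degenerate case (not available from the established lemmas) or switch to an aggregated functional of this kind.
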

\begin{proof}
	By the corollary, it suffices to prove the second part of the claim.
	For the second part, assume that $x(t)$ converges to a non-optimal solution $z$. Let $x^*$ be the optimal solution and
	let $W = \sum_e x^*_e c_e \ln x_e$. Let $\delta = (\cost(z) - \cost(x^*))/2$. 
	Note that for all sufficiently large $t$, we have 
	$E(q(t)) \ge \cost(z) - \delta \ge \cost(x^*) + \delta$. 
	Further, by definition $q_e = (x_e/c_e) A_e^{\rot} p$ and thus
	\[
		\dot{W} = \sum_e x^*_e c_e \frac{\abs{q_e} - x_e}{x_e} 
		= \sum_e x^*_e \abs{A_e^{\rot} p} - \cost(x^*)
		\geq \sum_e x^*_e A_e^{\rot} p - \cost(x^*) \geq \delta,
	\]
	where the last inequality follows by 
	$\sum_e x^*_e A_e^{\rot} p = b^{\rot} p = E(q)\geq \cost(x^*) + \delta$.
	Hence $W \rightarrow \infty$, a contradiction to the fact that $x$ is bounded.
\end{proof}

\newpage

\section{Convergence of the Physarum Dynamics: General Instances}\label{convergence undirected}\label{General Case}\label{general instances}

In this section, we prove Theorem~\ref{intro_thm} under the more general assumptions
	\eqref{c is nonnegative} to \eqref{positive start vector}, 
	defined in page~\pageref{c is nonnegative}.

\subsection{Existence of a Solution with Domain $[0,\infty)$}

In this subsection we show that a solution $x(t)$ to~\eqref{PD} has domain $[0,\infty)$. We first derive an explicit formula for the minimum energy feasible solution $q$ and then show that the mapping 
$x \mapsto q$ is Lipschitz continuous; this implies existence of a solution with domain $[0,\infty)$ by standard arguments.

\subsubsection{The Minimum Energy Solution}\label{minimum energy solution}

Recall that for $\gamma_{A}=\gcd(\{A_{ij}:A_{ij}\neq0\})\in\Z_{>0}$, we defined by
\[
\D = \max\left\{ \left|\det\left(M\right)\right|\, :\, 
M\text{ is a square submatrix of }A/\gamma_{A}
\text{ with dimension }n-1\text{ or }n\right\}.
\]

We derive now properties of the minimum energy solution. In particular, if every non-zero vector in the kernel of $A$ has positive cost, 
\begin{compactenum}[\mbox{}\hspace{\parindent}(i)]
\item the minimum energy feasible solution is kernel-free and unique (Lemma~\ref{q is kernel-free}),
\item $\abs{q_e} \le \D\onenorm{b/\gamma_{A}}$ for every $e \in [m]$ (Lemma~\ref{upper bound on q}), 
\item $q$ is defined by~\lref{explicit pq} (Lemma~\ref{formula for $q$, general case}), and
\item $E(q) = b^{\rot} p$, where $p$ is defined by~\lref{explicit pq} (Lemma~\ref{E(q)=bTp general case}).
\end{compactenum}
We note that for positive cost vector $c > 0$, these results are known.

We proceed by establishing some useful properties on basic feasible solutions.

\begin{lemma}\label{lem:fsDg}
	Suppose $A\in\mathbb{Z}^{n\times m}$ is an integral matrix, and $b \in \Z^n$ is an integral vector.
	Then, for any basic feasible solutions $f$ with $Af=b$ and $f\ge 0$, it holds that 
	$\infnorm{f}\leq \D \onenorm{b/\gamma_{A}}$ and 
	$f_{j}\neq0$ implies $|f_{j}|\geq1/(\D \gamma_{A})$.
\end{lemma}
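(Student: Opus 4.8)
Lemma~\ref{lem:fsDg} is a pure linear-algebra statement about basic feasible solutions, and the natural route is Cramer's rule combined with integrality of determinants, after normalizing out the gcd $\gamma_A$. The plan is as follows. Write $f = (f_B, f_N)$ with $f_N = 0$ and $f_B = A_B^{-1} b$, where $A_B$ is a nonsingular $n\times n$ submatrix of $A$. Set $A' := A/\gamma_A$; by the definition of $\gamma_A$ this matrix is \emph{integral}, and $A_B = \gamma_A A'_B$, so $f_B = \gamma_A^{-1} (A'_B)^{-1} b$. Applying Cramer's rule to $(A'_B)^{-1} b$, for each $j \in B$ we get
\[
  f_j \;=\; \frac{1}{\gamma_A}\cdot\frac{\det(\widetilde A_j)}{\det(A'_B)},
\]
where $\widetilde A_j$ is $A'_B$ with the column indexed by $j$ replaced by $b$. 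Both $\det(\widetilde A_j)$ and $\det(A'_B)$ are integers because $A'$ and $b$ are integral.

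For the denominator, $A'_B$ is nonsingular, so $\det(A'_B)$ is a nonzero integer, hence $1 \le |\det(A'_B)| \le \D$, the upper bound holding because $A'_B$ is an $n\times n$ submatrix of $A'$. For the \textbf{lower bound} on $|f_j|$: if $f_j \neq 0$ then $\det(\widetilde A_j)$ is a nonzero integer, so $|\det(\widetilde A_j)| \ge 1$, while $|\det(A'_B)| \le \D$; therefore $|f_j| \ge 1/(\D\gamma_A)$. For the \textbf{upper bound}: expand $\det(\widetilde A_j)$ by cofactors along the column equal to $b$. Each resulting $(n-1)\times(n-1)$ minor is obtained by deleting that column (so it involves only columns coming from $A'_B$, i.e. from $A'$) together with one row, hence is the determinant of an $(n-1)\times(n-1)$ submatrix of $A'$ and is bounded in absolute value by $\D$. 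Thus $|\det(\widetilde A_j)| \le \sum_i |b_i|\cdot\D = \D\onenorm{b}$, and combined with $|\det(A'_B)| \ge 1$ we get $|f_j| \le \D\onenorm{b}/\gamma_A = \D\onenorm{b/\gamma_A}$.

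This also explains why $\D$ is defined using submatrices of dimension $n-1$ \emph{and} $n$: the $n$-dimensional determinant controls the denominator in Cramer's rule, while the $(n-1)$-dimensional minors appear after the Laplace expansion of the numerator along the $b$-column. Note that the argument needs only $Af=b$ and $f$ basic (nonsingular $A_B$); the hypothesis $f \ge 0$ is not actually used for these two bounds, though it is harmless to keep it since that is the regime in which the lemma will be invoked.

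There is no real obstacle here — the only point requiring a little care is the bookkeeping with $\gamma_A$ (pulling $\gamma_A$ out of $n-1$ columns in the numerator and $n$ columns in the denominator leaves exactly one factor $\gamma_A^{-1}$ in front) and making sure the cofactor expansion is taken along the column that has been replaced by $b$, so that the minors genuinely are submatrices of $A'$ rather than of the perturbed matrix. Once that is fixed, both inequalities drop out immediately from integrality and the determinant bound.
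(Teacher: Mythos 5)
Your proof is correct and follows essentially the same route as the paper's: Cramer's rule applied after normalizing $A_B$ by $\gamma_A$, integrality of $\det(\gamma_A^{-1}A_B)$ and of the $(n-1)\times(n-1)$ minors arising from the Laplace expansion of the numerator along the $b$-column, and the bounds $1\le|\det(\gamma_A^{-1}A_B)|\le \D$. The only cosmetic difference is that you normalize the matrix up front while the paper factors the powers of $\gamma_A$ out of the two determinants afterwards; the resulting estimates are identical.
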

\begin{proof}
	Since $f$ is a basic feasible solution, it has the form $f=\left(f_{B},0\right)$
	such that $A_{B}\cdot f_{B}=b$ where $A_{B}\in\mathbb{Z}^{n\times n}$ is an 
	invertible submatrix of $A$.
	We write $M_{-i,-j}$ to denote the matrix $M$ with deleted $i$-th row 
	and $j$-th column. Let $Q_j$ be the matrix formed by replacing the $j$-th column 
	of $A_{B}$ by the column vector $b$.
	Then, using the fact that $\det(tA)=t^{n}\det(A)$ for every $A\in\mathbb{R}^{n\times n}$ 
	and $t\in\mathbb{Z}$, Cramer's rule yields
	\[
	\left|f_{B}(j)\right|
	=
	\left|\frac{\det\left(Q_{j}\right)}{\det\left(A_{B}\right)}\right|
	=
	\frac{1}{\gamma_{A}}\left|\sum_{k=1}^{n}\frac{\left(-1\right)^{j+k}\cdot b_{k}\cdot\det\left(\gamma_{A}^{-1}[A_{B}]_{-k,-j}\right)}
	{\det\left(\gamma_{A}^{-1}A_{B}\right)}\right|
	\]
	By the choice of $\gamma_{A}$, the values $\det(\gamma_{A}^{-1}A_{B})$ and 
	$\det(\gamma_{A}^{-1}[A_{B}]_{-k,-j})$ are integral for all $k$, it follows that
	\[
	\left|f_{B}(j)\right|\leq \D\onenorm{b/\gamma_{A}} \qquad \text{and} \qquad f_B(j) \ne 0
	\quad\Longrightarrow\quad \frac{1}{\D \gamma_{A}}\leq\left|f_{B}(j)\right|.\hfill\qedhere
	\]
\end{proof}

\begin{lemma}\label{q is kernel-free}
	If every non-zero vector in the kernel of $A$ has positive cost, the minimum energy feasible solution is kernel-free and unique.
\end{lemma}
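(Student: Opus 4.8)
The claim has two parts: (1) uniqueness of the minimum energy feasible solution $q$ defined by \eqref{Defq}, and (2) kernel-freeness, i.e. $q$ lies in the convex hull of the basic feasible solutions.

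For uniqueness, the plan is to argue by strict convexity on the relevant support. Fix the support $S = \supp(x)$ (recall the trajectory has time-invariant support). The minimization in \eqref{Defq} is over the affine subspace $\{f : Af = b,\ f_e = 0 \text{ for } e \notin S\}$, and the objective $f \mapsto \sum_{e \in S} (c_e/x_e) f_e^2$ is convex but only strictly convex in the coordinates $e$ with $c_e > 0$. So two minimizers $q, q'$ could differ only in coordinates where $c_e = 0$; then $q - q'$ is a nonzero vector in the kernel of $A$ supported on zero-cost coordinates, contradicting assumption \eqref{elements in kernel have positive cost} (since $\cost(q - q') = 0$). Hence $q$ is unique. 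This also shows the minimizer exists and is well-defined, using that the feasible affine subspace is nonempty (we assumed $Af = b$ is solvable and the support stays invariant — one must check feasibility is retained on support $S$, which follows from the dynamics keeping $x$ dominating or from Gr\"onwall as in Lemma~\ref{x stays dominating}).

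For kernel-freeness, apply Lemma~\ref{sign-compatible representation} to write $q = \sum_i \lambda_i z_i + w$, a convex combination of basic feasible solutions $z_i$ plus a kernel vector $w$, all sign-compatible with $q$. I want to show $w = 0$. Since $w$ is $q$-sign-compatible, $\supp(w) \subseteq \supp(q)$, so replacing $q$ by $q - tw$ for small $t > 0$ keeps the support inside $\supp(x)$ and keeps feasibility ($A(q - tw) = b$). Compare energies: $E(q - tw) - E(q) = -2t \sum_e (c_e/x_e) q_e w_e + t^2 \sum_e (c_e/x_e) w_e^2$. The key point is that the first-order term $\sum_e (c_e/x_e) q_e w_e$ must be zero, because $q$ is the minimizer and $q \pm tw$ are both feasible for small $t$ (so the directional derivative along $w$ vanishes). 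But then for small $t > 0$ the second-order term gives $E(q - tw) < E(q)$ unless $\sum_e (c_e/x_e) w_e^2 = 0$, i.e. $w_e = 0$ whenever $c_e > 0$. Combined with $w \in \ker A$ and assumption \eqref{elements in kernel have positive cost} ($\cost(w) = \sum_e c_e|w_e| = 0 \Rightarrow w = 0$), we conclude $w = 0$, so $q = \sum_i \lambda_i z_i$ is kernel-free.

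The main obstacle is the same subtlety in both parts: the energy quadratic form is only positive semidefinite, not positive definite, because of the $c_e = 0$ coordinates — so neither uniqueness nor the descent argument closes purely from convexity. The essential extra ingredient that makes it go through is assumption \eqref{elements in kernel have positive cost}: any kernel direction that the quadratic form "doesn't see" is killed by the cost functional, hence is zero. I would make sure to state explicitly that $\supp(w)\subseteq\supp(q)\subseteq\supp(x)$ so that the perturbation $q - tw$ stays in the feasible set of \eqref{Defq} (same zero pattern off $\supp(x)$), since that is what licenses the first-order optimality argument.
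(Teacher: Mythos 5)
Your uniqueness argument is correct and is essentially the paper's: the paper averages two putative minimizers and uses strict convexity of $h\mapsto h^2$ on the coordinates with $c_e>0$, then kills the remaining zero-cost difference with the kernel assumption; you phrase the same thing via the kernel of the quadratic form.

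The kernel-freeness part, however, contains a genuine error. Once you establish that the first-order term $\sum_e (c_e/x_e)\, q_e w_e$ vanishes, your own identity gives $E(q-tw)-E(q)=t^2\sum_e (c_e/x_e)\, w_e^2\ge 0$: the second-order term is \emph{nonnegative}, so $E(q-tw)\ge E(q)$ and there is no contradiction with minimality. The claim ``$E(q-tw)<E(q)$ unless $\sum_e (c_e/x_e)w_e^2=0$'' has the sign backwards, and without it you never conclude that $w$ vanishes on the positive-cost coordinates. The missing ingredient is the sign condition from Lemma~\ref{sign-compatible representation}, which you invoke only for the support statement: since $q_e=(q-w)_e+w_e$ with both summands of the sign of $q_e$, one has $q_e w_e\ge w_e^2$ pointwise. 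With this you can repair the proof in either of two ways. (a) The paper's route: forget first-order optimality and compute directly $E(q-w)-E(q)=-2\sum_e r_e q_e w_e+\sum_e r_e w_e^2\le-\sum_e r_e w_e^2<0$ whenever $w\ne 0$, the strict inequality coming from assumption~\eqref{elements in kernel have positive cost}, which supplies an index $e$ with $c_e>0$ and $w_e\ne 0$; this contradicts minimality of $q$. (b) Keep your first-order identity $\sum_e r_e q_e w_e=0$ and note that each term satisfies $r_e q_e w_e\ge r_e w_e^2\ge 0$, hence each term is zero, hence $c_e=0$ on $\supp(w)$, hence $\cost(w)=0$ and $w=0$ by assumption~\eqref{elements in kernel have positive cost}. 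Either repair is short, but some use of the sign condition beyond the support inclusion is indispensable.
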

\begin{proof} Let $q$ be a minimum energy feasible solution. Since $q$ is feasible, it can be written as $q_n + q_r$, where $q_n$ is a convex combination of basic feasible solutions and $q_r$ lies in the kernel of $A$. Moreover, all elements in this representation are sign-compatible with $q$ by Lemma~\ref{sign-compatible representation}. If $q_r \not= 0$, the vector
	$q - q_r$ is feasible and has smaller energy, a contradiction. Thus $q_r = 0$.

	We next prove uniqueness. Assume for the sake of a contradiction that there are two distinct minimum energy feasible solutions $\qone$ and $\qtwo$. We show that the solution $(\qone + \qtwo)/2$ uses less energy than $\qone$ and $\qtwo$. Since $h \mapsto h^2$ is a strictly convex function from $\R$ to $\R$, the average of the two solutions will be better than either solution if there is 
	an index $e$ with $r_e > 0$ and $\qone_e \not= \qtwo_e$. The difference
	$z = \qone - \qtwo$ lies in the kernel of $A$ and hence $\cost(z) = \sum_e c_e \abs{z_e} > 0$. Thus there is an $e$ with $c_e > 0$ and $z_e \not= 0$. We have now shown uniqueness.
\end{proof}

\begin{lemma}\label{upper bound on q} Assume that every non-zero vector in the kernel of $A$ has positive cost. Let $q$ be the minimum energy feasible solution. Then $\abs{q_e}  \le \D\onenorm{b/\gamma_{A}}$ for every $e$.
\end{lemma}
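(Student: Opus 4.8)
The plan is to reduce the bound on $q$ to the bound on individual basic feasible solutions (Lemma~\ref{lem:fsDg}) using the kernel-freeness of $q$ (Lemma~\ref{q is kernel-free}) together with convexity. Throughout, recall that we are in the integral setting where $\gamma_{A}$ and $\D$ are defined, and that we may assume $A$ has full row rank, so basic feasible solutions have the form $f=(A_B^{-1}b,0)$ with $A_B$ invertible.

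First I would apply Lemma~\ref{sign-compatible representation} to the feasible vector $q$, writing $q=\sum_{i}\lambda_i z_i + z_r$ with the $z_i$ basic feasible solutions, $\lambda_i\ge 0$, $\sum_i\lambda_i=1$, $z_r$ in the kernel of $A$, and every term sign-compatible with $q$. Since by hypothesis every nonzero kernel vector has positive cost, the argument from the proof of Lemma~\ref{q is kernel-free} forces $z_r=0$: otherwise $q-z_r$ would be feasible (as $z_r\in\ker A$) and, being sign-compatible, would have strictly smaller energy than $q$. Hence $q$ is a genuine convex combination of at most $m$ basic feasible solutions, each sign-compatible with $q$.

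Next I would carry out a sign normalization. Flip the signs of the columns of $A$ in the coordinates $e$ where $q_e<0$; call the resulting matrix $A'$. Column sign flips change neither $\gcd$ of the entries nor the absolute values of square subdeterminants, so $\gamma_{A}$ and $\D$ are unchanged, and $A'\abs{q}=b$. By sign-compatibility each $z_i$ becomes a \emph{nonnegative} basic feasible solution of $A'f=b$. Applying Lemma~\ref{lem:fsDg} to each $z_i$ therefore gives $\infnorm{z_i}\le \D\onenorm{b/\gamma_{A}}$. Finally, by the triangle inequality and convexity, for every coordinate $e$,
\[
\abs{q_e} = \Bigl|\sum_i \lambda_i (z_i)_e\Bigr| \le \sum_i \lambda_i \abs{(z_i)_e} \le \sum_i \lambda_i\,\D\onenorm{b/\gamma_{A}} = \D\onenorm{b/\gamma_{A}},
\]
which is exactly the claimed bound.

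There is no substantive obstacle here; the proof is short once Lemmas~\ref{sign-compatible representation}, \ref{q is kernel-free}, and~\ref{lem:fsDg} are in hand. The only points requiring care are the bookkeeping around the column sign flips — verifying that $\D$, $\gamma_{A}$, the identity $A'\abs{q}=b$, and the notion of ``basic feasible solution'' are all preserved — and the observation that it is precisely the sign-compatibility of the $z_i$ with $q$ that makes all of them nonnegative \emph{simultaneously} after a single flip, so that Lemma~\ref{lem:fsDg} applies uniformly to every term in the convex combination.
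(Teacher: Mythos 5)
Your proof is correct and follows essentially the same route as the paper: use kernel-freeness (Lemma~\ref{q is kernel-free}) to write $q$ as a convex combination of basic feasible solutions and then bound each component via Lemma~\ref{lem:fsDg}. The extra care you take with column sign flips is harmless but not strictly needed, since the Cramer's-rule bound in Lemma~\ref{lem:fsDg} does not actually depend on the sign of $f$.
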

\begin{proof} 
	Since $q$ is a convex combination of basic feasible solutions, 
	$\abs{q_e} \le \max_z \abs{z_e}$ where $z$ ranges over basic feasible solutions
	of the form $(z_B,0)$, where $z_B = A_B^{-1} b$ and 
	$A_B\in\mathbb{R}^{n \times n}$ is a non-singular submatrix of $A$.
	Thus, by Lemma~\ref{lem:fsDg} every component of $z$ is bounded by $\D\onenorm{b/\gamma_{A}}$.
\end{proof}

In~\cite{SV-LP}, the bound $|q_e|\le \D^2 m \onenorm{b}$ was shown.
We will now derive explicit formulae for the minimum energy solution $q$. We will express $q$ in terms of a vector $p \in \R^n$, which we refer to as the \emph{potential}, by analogy with the network setting, in which $p$  can be interpreted as the electric potential of the nodes. 
The energy of the minimum energy solution is equal to $b^{\rot} p$.
We show that the mapping $x\mapsto q$ is locally Lipschitz.
Note that for $c> 0$ these facts are well-known.
Let us split the column indices $[m]$ of $A$ into
\begin{equation}\label{eq:idxSetPZ}
P := \set{e\in[m]}{c_e > 0}\quad\text{ and }\quad Z := \set{e\in[m]}{c_e = 0}.
\end{equation}

\begin{lemma}\label{formula for $q$, general case}
	Assume that every non-zero vector in the kernel of $A$ has positive cost. Let $r_e = c_e/x_e$ and let $R$ denote the corresponding diagonal matrix. Let us split $A$ into $A_P$ and $A_Z$, and $q$ into $q_P$ and $q_Z$.  Since $A_Z$ has linearly independent columns, we may assume that the first $\abs{Z}$ rows of $A_Z$ form a square non-singular matrix. We can thus write $A=\big[\begin{smallmatrix}A'_P & A'_Z \\ A''_P & A''_Z \end{smallmatrix}\big]$ with invertible $A'_Z$. Then the minimum energy solution satisfies 
	\begin{equation}\label{both blocks}
		\begin{bmatrix} A'_P & A'_Z  \\ A''_P & A''_Z \end{bmatrix}
		\begin{bmatrix} q_P \\ q_{Z} \end{bmatrix}
		= \begin{bmatrix} b' \\ b'' \end{bmatrix} 
		\quad\text{ and }\quad
		\begin{bmatrix} R_P &  0 \\ 0 & 0 \end{bmatrix}
		\begin{bmatrix} q_{P} \\ q_{Z} \end{bmatrix}  =
		\begin{bmatrix} {A'_P}^{\rot} & {A''_P}^{\rot}\\ {A'_Z}^{\rot} & {A''_Z}^{\rot} \end{bmatrix}  \begin{bmatrix} p' \\ p'' \end{bmatrix}
\end{equation}
for some vector $p = \big[ \begin{smallmatrix} p' \\ p'' \end{smallmatrix} \big]$; here $p'$ has dimension $\abs{Z}$. The equation system~\eqref{both blocks} has a unique solution given by 
	\begin{align}\label{explicit pq}
		\begin{bmatrix} q_Z                         \\q_P\end{bmatrix}
		=\begin{bmatrix}[A'_Z]^{-1} (b' - A'_P q_P) \\ R_P^{-1} A^{\rot}_P p\end{bmatrix}
		\quad \text{ and }\quad
		\begin{bmatrix} p'                          \\p''\end{bmatrix}
		=\begin{bmatrix}- [[A'_Z]^{\rot}]^{-1} [A''_Z]^{\rot} p'' \\MR^{-1}M^{\rot} (b'' - A''_Z [A'_Z]^{-1} b')\end{bmatrix},
	\end{align}
	where $M= A''_P  - A''_Z [A'_Z]^{-1} A'_P$ is the Schur complement of the block $A'_Z$ of the matrix $A$.
\end{lemma}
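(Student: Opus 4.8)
The statement of Lemma~\ref{formula for $q$, general case} to be proved asserts three things: (1) the minimum energy solution $q$, together with some potential vector $p$, satisfies the KKT-type system~\eqref{both blocks}; (2) this system has a unique solution; (3) that solution is given by the explicit formulae~\eqref{explicit pq}. The plan is to prove these in order, leaning on the characterization of $q$ as the (unique, by Lemma~\ref{q is kernel-free}) minimizer of $E(f)=\sum_{e}r_{e}f_{e}^{2}$ subject to $Af=b$ and $f_{e}=0$ whenever $x_{e}=0$.

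\medskip

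\textbf{Step 1: Derive the system~\eqref{both blocks}.} First I would note that, since every nonzero kernel vector has positive cost, no index in $Z$ can be set to zero by a feasibility constraint that kills $q$: more precisely, the constraint ``$f_e=0$ whenever $x_e=0$'' only matters for $e\in P$ (where $c_e>0$, so $r_e=c_e/x_e$ is genuinely a positive weight or $+\infty$). On $Z$, the energy is flat ($r_e=0$), so the $q_Z$-block is determined purely by feasibility $Aq=b$, which is the first equation of~\eqref{both blocks}. For the second equation, I would write down the Lagrangian/KKT stationarity condition for the equality-constrained quadratic program: there is a multiplier vector $p\in\mathbb R^{n}$ with $\nabla_f\big(\tfrac12 f^{T}Rf\big)=A^{T}p$ at $f=q$, i.e.\ $Rq=A^{T}p$. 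Splitting $R$ and $A$ into the $P$- and $Z$-blocks gives exactly the second matrix equation, with the bottom block reading $0\cdot q_Z = [{A'_Z}^{T}\ {A''_Z}^{T}]\,p$ — an identity automatically satisfied once we solve for $p'$ in terms of $p''$ as in~\eqref{explicit pq}. One subtlety to handle: indices $e\in P$ with $x_e=0$; there $q_e=0$ is forced, $r_e=+\infty$ formally, and one should restrict the stationarity condition to the support. I would absorb this by working on $\supp(x)$ throughout, exactly as the preliminaries set up the energy function.

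\medskip

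\textbf{Step 2: Solve the system explicitly.} Here I would do the block elimination. From the first block-row of the feasibility equation, $A'_P q_P + A'_Z q_Z = b'$, and since $A'_Z$ is invertible, $q_Z = [A'_Z]^{-1}(b' - A'_P q_P)$, which is the first line of~\eqref{explicit pq}. From the $P$-block of stationarity, $R_P q_P = {A'_P}^{T}p' + {A''_P}^{T}p''$, i.e.\ $q_P = R_P^{-1}A_P^{T}p$ (second line). The $Z$-block of stationarity, $0 = {A'_Z}^{T}p' + {A''_Z}^{T}p''$, gives $p' = -[[A'_Z]^{T}]^{-1}[A''_Z]^{T}p''$ (third line). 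Finally, substitute $q_Z$ and $q_P$ into the second block-row of feasibility, $A''_P q_P + A''_Z q_Z = b''$: this yields $A''_P R_P^{-1}A_P^{T}p + A''_Z[A'_Z]^{-1}(b' - A'_P R_P^{-1}A_P^{T}p) = b''$. Using $p' = -[[A'_Z]^{T}]^{-1}[A''_Z]^{T}p''$ to rewrite $A_P^{T}p = {A'_P}^{T}p' + {A''_P}^{T}p'' = ({A''_P}^{T} - {A'_P}^{T}[[A'_Z]^{T}]^{-1}[A''_Z]^{T})p'' = M^{T}p''$, the equation collapses to $M R_P^{-1} M^{T} p'' = b'' - A''_Z[A'_Z]^{-1}b'$, giving the fourth line. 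This is where the Schur complement $M = A''_P - A''_Z[A'_Z]^{-1}A'_P$ enters naturally.

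\medskip

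\textbf{Step 3: Uniqueness / invertibility of $MR_P^{-1}M^{T}$.} The remaining obstacle — and I expect this to be the main one — is to argue that $MR_P^{-1}M^{T}$ is invertible, so that~\eqref{explicit pq} is well-defined and the solution is unique. The matrix $R_P^{-1}$ is diagonal with positive entries on $\supp(x)\cap P$ (and we restrict to that support), so $MR_P^{-1}M^{T}$ is positive semidefinite; it is positive definite iff $M^{T}$ has trivial kernel on the relevant coordinates, i.e.\ iff the rows of $M$ are linearly independent. I would deduce this from the hypothesis that every nonzero kernel vector of $A$ has positive cost: if $v^{T}M = 0$ then the vector $w$ with $w_P = 0$ and $w_Z = -[A'_Z]^{-1}A'_P\cdot(\text{something})$ … more cleanly, a kernel element of $M^{T}$ lifts (via $A'_Z$ invertible and $w_P=0$) to a nonzero element $w$ of $\ker A$ supported entirely on $Z$, hence with $\cost(w)=0$, contradicting~\eqref{elements in kernel have positive cost} — unless $w=0$. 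Combined with Lemma~\ref{q is kernel-free} (uniqueness of $q$), this pins down both $q$ and the potential pair $(p',p'')$ uniquely, completing the proof. I would also remark that when $c>0$ we have $Z=\emptyset$, the blocks with $Z$ disappear, and~\eqref{explicit pq} reduces to the classical $p = (AR^{-1}A^{T})^{-1}b$, $q = R^{-1}A^{T}p$ of Fact~\ref{Formula for q}, as a sanity check.
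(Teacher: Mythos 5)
Your Steps 1 and 2 match the paper's proof: the KKT stationarity condition $Rq=A^Tp$ gives~\eqref{both blocks}, and the block elimination producing $q_Z$, $q_P$, $p'$ and the Schur equation $MR_P^{-1}M^Tp''=b''-A''_Z[A'_Z]^{-1}b'$ is exactly the paper's computation. The problem is Step 3, which is where you yourself locate the main obstacle, and your proposed argument there does not work.

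You need the \emph{rows} of $M$ to be linearly independent, i.e.\ $v^TM=0\Rightarrow v=0$. You propose to deduce this from assumption~(\ref{elements in kernel have positive cost}) by ``lifting a kernel element of $M^T$ to a nonzero element of $\ker A$ supported on $Z$.'' This confuses the left and right kernels. A vector $v$ with $v^TM=0$ lives in $\R^{n-\abs{Z}}$, i.e.\ in the row space side; it extends to the row vector $u^T=(-v^TA''_Z[A'_Z]^{-1},\,v^T)$, which satisfies $u^TA=0$, so what kills it is the \emph{full row rank of $A$} (the standing assumption from the preliminaries), not the cost hypothesis. Conversely, the objects that do lift to $\ker A$ are vectors $w_P$ with $Mw_P=0$: setting $w_Z=-[A'_Z]^{-1}A'_Pw_P$ gives $A(w_P,w_Z)=0$, but this kernel vector is supported on $P$ as well as $Z$ and has cost $\sum_{e\in P}c_e\abs{w_{P,e}}>0$, so assumption~(\ref{elements in kernel have positive cost}) yields no contradiction --- and indeed $M$ genuinely has a nontrivial column kernel whenever $\abs{P}>n-\abs{Z}$. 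The cost hypothesis enters only to guarantee that $A_Z$ has independent columns (so the block structure with invertible $A'_Z$ exists) and that $q$ is unique; it cannot give row independence of $M$, which is a property of $A$ alone. The paper's argument is the correct one: the row operations turning $A$ into $\big[\begin{smallmatrix}[A'_Z]^{-1}A'_P & I\\ M & 0\end{smallmatrix}\big]$ preserve rank, so $\mathrm{rank}(A)=n$ forces $M$ to have independent rows, whence $z^TMR_P^{-1}M^Tz=\lVert R_P^{-1/2}M^Tz\rVert^2>0$ for $z\neq0$. Note also that uniqueness of $q$ from Lemma~\ref{q is kernel-free} does not by itself rescue you: it pins down $q$, but the well-definedness of the formula for $p''$ requires the invertibility of $MR_P^{-1}M^T$, which is exactly the point at issue.
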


\begin{proof}  
	$q$ minimizes $E(f) = f^{\rot} R f$ among all solutions of $Af = b$. 
	The KKT	conditions state that $q$ must satisfy $R q = A^{\rot} p$ for some $p$. Note that $2Rf$ is the gradient of the energy $E(f)$ with respect to $f$ and that the $-A^{\rot}p$ is the gradient of $p^{\rot}(b - A f)$ with respect to $f$. We may absorb the factor $-2$ in $p$. Thus $q$ satisfies \eqref{both blocks}. 

	We show next that the linear system \eqref{both blocks} has a unique solution. The top $\abs{Z}$ rows of the left system in~\eqref{both blocks} give
	\begin{equation}\label{expression for qZ}
		q_Z = [A'_Z]^{-1} (b' - A'_P q_P).
	\end{equation}
	Substituting this expression for $q_Z$ into the bottom $n - \abs{Z}$ rows of the left system in~\eqref{both blocks} yields
	\[ M  q_P = b'' - A''_Z [A'_Z]^{-1} b'. \]
	From the top $\abs{P}$ rows of the right system in~\eqref{both blocks} we infer
	$q_P = R_P^{-1} A^{\rot}_P \cdot p$.
	Thus
	\begin{equation}\label{equation for p}
		M R_P^{-1} A^{\rot}_P \cdot p = b'' - A''_Z [A'_Z]^{-1} \cdot b'.
	\end{equation}
	The bottom $n - \abs{Z}$ rows of the right system in~\eqref{both blocks} yield 
	$0 = A_Z^{\rot} p = [A'_Z]^{\rot} p' + [A''_Z]^{\rot} p''$ and hence
	\begin{equation}\label{expression for pprime}
		p' = - [[A'_Z]^{\rot}]^{-1} [A''_Z]^{\rot} p''.
	\end{equation}
	Substituting~(\ref{expression for pprime}) into~\eqref{equation for p} yields
	\begin{align}
		b'' - A''_Z [A'_Z]^{-1} b' & =  MR_{P}^{-1}\left([A'_{P}]^{\rot}p'+[A''_{P}]^{\rot}p''\right) \notag
		\\ \nonumber
		                           & = MR_{P}^{-1}\left([A''_{P}]^{\rot}-
		                           [A'_{P}]^{\rot}[[A'_{Z}]^{\rot}]^{-1}[A''_{Z}]^{\rot}\right) p'' \\
		                           & = MR_P^{-1}M^{\rot} p''. \label{expression for pprimeprime}
	\end{align}
	It remains to show that the matrix $MR_P^{-1}M^{\rot}$ is non-singular. We first observe that the rows of $M$ are linearly independent. Consider the left system in \eqref{both blocks}. Multiplying the first $\abs{Z}$ rows by $[A'_Z]^{-1}$ and then subtracting $A''_Z$ times the resulting rows from the last $n - \abs{Z}$ rows turns $A$ into the matrix
	$Q=\big[\begin{smallmatrix}
		[A'_Z]^{-1} A'_P & I  \\ M & 0
	\end{smallmatrix}\big].$
	By assumption, $A$ has independent rows. Moreover, the preceding operations guarantee that $\rank(A)=\rank(Q)$. Therefore, $M$ has independent rows. Since $R_P^{-1}$ is a positive diagonal matrix, $R_P^{-1/2}$ exists and is a positive diagonal matrix. Let $z$ be an arbitrary non-zero vector of dimension $\abs{P}$. Then $z^{\rot} MR_P^{-1}M^{\rot}z = (R_P^{-1/2} M^{\rot} z)^{\rot} (R_P^{-1/2} M^{\rot} z) > 0$ and hence  $MR_P^{-1}M^{\rot}$ is non-singular. It is even positive semi-definite.

	There is a shorter proof that the system~\eqref{both blocks} has a unique solution. However, the argument does not give an explicit expression for the solution. In  the case of a convex objective function and affine constraints, the KKT conditions are sufficient for being a global minimum. Thus any solution to~\eqref{both blocks} is a global optimum. We have already shown in Lemma~\ref{q is kernel-free} that the global minimum is unique.
\end{proof}

We next observe that the energy of $q$ can be expressed in terms of the potential. 
\begin{lemma}\label{E(q)=bTp general case} 
	Let $q$ be the minimum energy feasible solution and let $f$ be any feasible solution. 
	Then $E(q) = b^{\rot} p = f^{\rot} A^{\rot} p$. 
\end{lemma}
\begin{proof}
	As in the proof of Lemma~\ref{formula for $q$, general case}, we split $q$ into $q_P$ and $q_Z$, $R$ into $R_P$ and $R_Z$, and $A$ into $A_P$ and $A_Z$. Then
	\begin{align*}
		E(q) & = q_P^{\rot} R_P q_P     & \text{by the definition of $E(q)$ and since $R_Z = 0$}                \\
		     & = p^{\rot} A_P q_p       & \text{by  the right system in~\eqref{both blocks}}  \\
		     & = p^{\rot} (b - A_Z q_Z) & \text{by the left system in~\eqref{both blocks}}                                  \\
		     & = b^{\rot} p             & \text{by the right system in~\eqref{both blocks}}.
	\end{align*}
	For any feasible solution $f$, we have $f^{\rot} A^{\rot} p = b^{\rot} p$.
\end{proof}

\subsubsection{The Mapping $x \mapsto q$ is Locally Lipschitz} 

We show that the mapping $x \mapsto q$ is locally Lipschitz continuous; this implies existence of a solution $x(t)$ with domain $[0,\infty)$ by standard arguments. 
Our analysis builds upon Cramer's rule and the Cauchy-Binet formula. The Cauchy-Binet formula extends Kirchhoff's spanning tree theorem which was used in ~\cite{Physarum} for the analysis of the undirected shortest path problem.

\begin{lemma}[Local Lipschitz Condition]\label{Locally Lipschitz, General Case}
	Assume $c \ge 0$, no non-zero vector in the kernel of $A$ has cost zero, and that $A$, $b$, and $c$ are integral. Let $\alpha,\beta > 0$. For any two vectors $x$ and $\tilde{x}$ in $\R^m$ with 
	$\alpha \le x_e,\tilde{x}_e \le \beta$ for all $e$, 
	define $\gamma := 2 m^n (\beta/\alpha)^n \cmax^n \D^2 \onenorm{b/\gamma_{A}}$. 
	Then $\big|\abs{q_e(x)} - \abs{q_e(\tilde{x})}\big| \le \gamma \infnorm{x - \tilde{x}}$ 
	for every $e \in [m]$. 
\end{lemma}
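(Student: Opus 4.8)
The goal is a local Lipschitz bound for $x \mapsto \abs{q_e(x)}$. Since $\big|\abs{a}-\abs{b}\big| \le \abs{a-b}$, it suffices to bound $\abs{q_e(x) - q_e(\tilde{x})}$. The plan is to use the explicit formula~\eqref{explicit pq} for $q$ in terms of the potential $p$ and the data $A, b$, and to track how each ingredient depends on $x$ through the diagonal matrix $R_P = \diag(c_e/x_e)_{e \in P}$. The two nontrivial pieces are $p'' = (M R_P^{-1} M^T)^{-1}(b'' - A''_Z [A'_Z]^{-1} b')$ and then $q_P = R_P^{-1} A_P^T p$ (the block $q_Z = [A'_Z]^{-1}(b' - A'_P q_P)$ is an affine image of $q_P$, so once $q_P$ is controlled, $q_Z$ is controlled with an extra factor coming from $\|[A'_Z]^{-1} A'_P\|$, which is bounded in terms of $\D$ by Cramer's rule). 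So the crux is to show $R_P^{-1} A_P^T (M R_P^{-1} M^T)^{-1}$, as a matrix-valued function of the $r_e = c_e/x_e$, is Lipschitz on the relevant domain, with the stated constant.

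First I would record the elementary per-entry bounds: for $\alpha \bfone \le x \le \beta \bfone$ we have $r_e \in [c_{\min}/\beta,\ \cmax/\alpha]$ for $e \in P$, so $R_P^{-1}$ has entries in a fixed positive range, and $\big|(R_P^{-1})_{ee} - (\tilde R_P^{-1})_{ee}\big| = c_e \big|\tfrac{1}{c_e}(x_e) - \dots\big|$ — more precisely $\abs{x_e/c_e - \tilde x_e/c_e} \le \abs{x_e - \tilde x_e}/c_{\min} \le \infnorm{x-\tilde x}/c_{\min}$, so $R_P^{-1}$ is Lipschitz in $x$ (operator norm) with constant $1/c_{\min}$. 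Then I would expand $M R_P^{-1} M^T = \sum_{e \in P} r_e^{-1} M_{\cdot e} M_{\cdot e}^T$; by the Cauchy--Binet formula its determinant is $\sum_{S} (\prod_{e \in S} r_e^{-1}) \det(M_{\cdot S})^2$ over $(n-\abs{Z})$-subsets $S$, each $\det(M_{\cdot S})$ being (up to the already-accounted Schur-complement manipulation and Cramer's rule on $[A'_Z]$) an integer combination bounded by $\D$ in absolute value — this is exactly where the Cauchy--Binet/Kirchhoff generalization the paragraph advertises enters, and it gives a \emph{uniform lower bound} on $\det(M R_P^{-1} M^T)$ away from zero on the box $\alpha \bfone \le x \le \beta \bfone$. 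Combined with a uniform upper bound on the entries (again Cauchy--Binet, or just $\|M\|$ and $\max r_e^{-1}$), the adjugate formula gives that $(M R_P^{-1} M^T)^{-1}$ is bounded and Lipschitz in $R_P^{-1}$, hence in $x$; and the Lipschitz constant of a product/inverse of bounded-and-Lipschitz matrix functions is controlled by the usual product rule. Propagating these through $p''$, then $p'$ via~\eqref{expression for pprime}, then $q_P = R_P^{-1} A_P^T p$ (product of two Lipschitz bounded factors), then $q_Z$, and finally collecting the powers of $\beta/\alpha$, $\cmax$, $m$, and $\D$ that accumulate (each inverse/adjugate step contributes a factor $\D^2$ and a dimension factor $m^{O(1)}$, each $R_P^{-1}$ a factor $\cmax/\alpha$, etc.) should yield the claimed bound $\gamma = 2 m^n (\beta/\alpha)^n \cmax^n \D^2 \onenorm{b/\gamma_A}$, perhaps after bookkeeping the constants generously.

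The main obstacle is the bookkeeping: getting the \emph{uniform} strictly-positive lower bound on $\det(M R_P^{-1} M^T)$ on the whole box (not just pointwise nonsingularity, which Lemma~\ref{formula for $q$, general case} already gives) and then chasing the Lipschitz constant through the chain of matrix operations without the constant blowing up past $m^n (\beta/\alpha)^n \cmax^n \D^2$. I expect the cleanest route is not to differentiate everything but to write $q(x) - q(\tilde x)$ as a telescoping sum where in each term only one $r_e^{-1}$ is changed, use the identity $N^{-1} - \tilde N^{-1} = N^{-1}(\tilde N - N)\tilde N^{-1}$ for the inverse of $M R_P^{-1} M^T$, and bound each factor by its uniform bound on the box — thereby reducing the whole estimate to (a) the entrywise Lipschitz bound $1/c_{\min} \le 1$ on $R_P^{-1}$ (using $c$ integral so $c_{\min}\ge 1$), (b) uniform operator-norm bounds on $A_P^T$, $A_Z$-blocks, and $[A'_Z]^{-1}$ via Cramer ($\le \D$-type bounds), and (c) the Cauchy--Binet lower bound on the determinant. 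Steps (b) and (c) are where all the $\D$'s and the $\onenorm{b/\gamma_A}$ come from, and assembling them is routine but tedious; I would present it as a sequence of sub-bounds (norm of $q_P$, norm of $q_Z$, Lipschitz constant of $p''$, Lipschitz constant of $q_P$) each proved in one or two lines, then multiplied out.
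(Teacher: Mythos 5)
Your proposal is correct in substance and rests on the same two pillars as the paper's proof: Cramer's rule combined with the Cauchy--Binet formula, which writes the relevant determinants as nonnegative sums of $\prod_{e\in S} x_e/c_e$ times squared subdeterminants (whence the uniform positive lower bound on the denominator over the box $\alpha\bfone \le x,\tilde x \le \beta\bfone$), and a coordinate-by-coordinate telescoping of $q_e(x)-q_e(\tilde x)$. Where you genuinely differ is in how the final estimate is extracted. The paper substitutes everything into a single explicit scalar rational formula for $q_e$ (equation \eqref{explicit q}), observes that numerator and denominator are affine in each individual $x_i$, and bounds $\partial q_e/\partial x_i$ directly; you instead chain operator-norm Lipschitz bounds through the matrix pipeline $R_P^{-1} \mapsto (MR_P^{-1}M^T)^{-1} \mapsto p'' \mapsto p' \mapsto q_P \mapsto q_Z$ via the resolvent identity $N^{-1}-\tilde N^{-1}=N^{-1}(\tilde N-N)\tilde N^{-1}$. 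Your route is more modular and confronts the $c\ge 0$ block structure of \eqref{explicit pq} head-on, whereas the paper carries out the detailed computation only for $c>0$ and dispatches the general case in one sentence; the paper's route reaches the stated constant $\gamma$ with less bookkeeping. Two caveats for your write-up: in the general case the subdeterminants $\det(M_{\cdot S})$ of the Schur complement are not integers, so the lower bound on the nonzero terms $\det(M_{\cdot S})^2$ must be traced back through $\det(A'_Z)$ (they are quotients of subdeterminants of $A$, hence bounded below by quantities of order $1/(\D\gamma_A)^2$ rather than by $1$); and your chained estimate will likely yield a constant larger than the stated $\gamma$, which is harmless for every downstream use of the lemma but means you should either tighten the bookkeeping or restate the constant.
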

\begin{proof}
	First assume that $c>0$. By Cramer's rule
	\[
		(A R^{-1} A^{\rot})^{-1} = \frac{1}{\det (A R^{-1} A^{\rot})} ( (-1)^{i+j} \det (M_{-j,-i}))_{ij},
	\]
	where $M_{-i,-j}$ is obtained from $A R^{-1} A^{\rot}$ by deleting the $i$-th row and the $j$-th column. For a subset $S$ of $[m]$ and an index $i \in [n]$, let $A_S$ be the $n \times \abs{S}$ matrix consisting of the columns selected by $S$ and let $A_{-i,S}$ be the matrix obtained from $A_S$ by deleting row $i$. If $D$ is a diagonal matrix of size $m$, then $(A D)_S = A_S D_S$. The Cauchy-Binet theorem expresses the determinant of a product of two matrices (not necessarily square) as a sum of determinants of square matrices. It yields
	\begin{align*}
		\det (A R^{-1} A^{\rot})
		= \sum_{S \subseteq [m];\ \abs{S} = n} (\det( (AR^{-1/2})_S))^2\\
		= \sum_{S \subseteq [m];\ \abs{S} = n} (\prod_{e \in S} x_e/c_e) \cdot  (\det A_S)^2.
	\end{align*}
	Similarly,
	\begin{align*}
		\det (A R^{-1} A^{\rot})_{-i,-j}
		  & = \sum_{S \subseteq [m];\ \abs{S} = n-1} (\prod_{e \in S} x_e/c_e) \cdot  (\det A_{-i,S} \cdot \det A_{-j,S}).
	\end{align*}
Using $p = (A R^{-1} A^{\rot})^{-1} b$, we obtain
	\begin{equation}\label{explicit p}
		p_i = \frac{\sum_{j \in [n]} (-1)^{i + j}\sum_{S \subseteq [m];\ \abs{S} = n-1} (\prod_{e \in S} x_e/c_e) \cdot  (\det A_{-i,S} \cdot \det A_{-j,S}) b_j }{\sum_{S \subseteq [m];\ \abs{S} = n} (\prod_{e \in S} x_e/c_e) \cdot  (\det A_S)^2}.
	\end{equation}
Substituting into $q = R^{-1} A^{\rot} p$ yields
	\begin{align}\label{explicit q}
		q_e
		  & = \frac{x_e}{c_e} A_e^{\rot} p                                                                                                                                                                                                                                                                                               \notag \\
		  & = \frac{x_e}{c_e} \sum_i A_{i,e} \cdot \frac{\sum_{j \in [n]} (-1)^{i + j +2n}\sum_{S \subseteq [m];\ \abs{S} = n-1} (\prod_{e' \in S} x_{e'}/c_{e'}) \cdot  (\det A_{-i,S} \cdot \det A_{-j,S}) b_j }{\sum_{S \subseteq [m];\ \abs{S} = n} (\prod_{e' \in S} x_{e'}/c_{e'}) \cdot  (\det A_S)^2} \notag                \\
		  & = \frac{\sum_{S\subseteq[m];\ \abs{S}=n-1}(\prod_{e'\in S\cup e}x_{e'}/c_{e'})\cdot\sum_{i\in[n]}(-1)^{i+n}A_{i,e}\det A_{-i,S}\cdot\sum_{j\in[n]}(-1)^{j+n}b_{j}\det A_{-j,S}}{\sum_{S\subseteq[m];\ \abs{S}=n}(\prod_{e'\in S}x_{e'}/c_{e'})\cdot(\det A_{S})^{2}} \notag \\
		  & = \frac{\sum_{S \subseteq [m];\ \abs{S} = n-1} (\prod_{e' \in S\cup e} x_{e'}/c_{e'}) \cdot  \det (A_S | A_e) \cdot \det (A_S | b) }{\sum_{S \subseteq [m];\ \abs{S} = n} (\prod_{e' \in S} x_{e'}/c_{e'}) \cdot  (\det A_S)^2},
	\end{align}
	where $(A_S| A_e)$, respectively $(A_S | b)$, denotes the $n \times n$ matrix whose columns are selected from $A$ by $S$ and whose last column is equal to $A_e$, respectively $b$. 

	We are now ready to estimate the derivative $\partial q_e/\partial x_i$. Assume first that $e \not= i$. By the above,
	$q_e = \frac{x_e}{c_e} \frac{F + G x_i/c_i}{H + I x_i/c_i}$,
	where $F$, $G$, $H$ and $I$ are given implicitly by \eqref{explicit q}. Then
	\[    
		\left|\frac{\partial q_{e}}{\partial x_{i}}\right|=
		\left|\frac{x_{e}}{c_{e}}\cdot\frac{FI/c_{i}-GH/c_{i}}{(H+Ix_{i}/c_{i})^{2}}\right|
		\le\frac{2\cdot{\binom{m}{n-1}}\beta^{n}\D^{2}\onenorm{b/\gamma_{A}}}{(\alpha/\cmax)^{n}}
		\le\gamma.
	\]
	For $e = i$, we have
	$q_e = \frac{G x_e/c_e}{H + I x_e/c_e}$,
	where $G$, $H$, and $I$ are given implicitly by \eqref{explicit q}. Then
	\[    
		\left|\frac{\partial q_{e}}{\partial x_{e}}\right|=
		\left|\frac{GH/c_{e}}{(H+Ix_{e}/c_{e})^{2}}\right|
		\le\frac{{\binom{m}{n-1}}\beta^{n}\D^{2}\onenorm{b/\gamma_{A}}}{(\alpha/\cmax)^{n}}
		\le\gamma.
	\]
	Finally, consider $x$ and $\tilde{x}$ with $\alpha \le x_e, \tilde{x}_e \le \beta$ for all $e$. 
	Let $\bar{x}_\ell = (\tilde{x}_1, \ldots, \tilde{x}_\ell, x_{\ell + 1}, \ldots, x_m)$. Then
	\[ \abs{\abs{q_e(x)} - \abs{q_e(\tilde{x})}} \le \abs{q_e(x) - q_e(\tilde{x})} \le \sum_{0 \le \ell < m} \abs{q_e(\bar{x}_\ell) - q_e(\bar{x}_{\ell + 1})} \le \gamma \onenorm{x - \tilde{x}}.
	\]
	In the general case where $c\ge 0$, we first derive an expression for $p''$ similar to~\eqref{explicit p}.
	Then the equations for $p'$ in~\eqref{explicit pq} yield $p'$, the equations for $q_P$ in~\eqref{explicit pq} yield $q_P$, and finally the equations for $q_Z$ in~\eqref{explicit pq} yield $q_Z$.
\end{proof}

We are now ready to establish the existence of a solution with domain $[0,\infty)$. 

\begin{lemma}\label{x is bounded}\label{existence}
	The solution to the undirected dynamics in~\eqref{PD} has domain $[0,\infty)$. 
	Moreover, for every $t\geq0$ and $e\in[m]$, we have
	\[
		x_{e}(0)\cdot\exp\{-t\}\le x_{e}(t)\le
		\D\onenorm{b/\gamma_{A}}+\max(0,x_{e}(0)-\D\onenorm{b/\gamma_{A}})\cdot\exp\{-t\}.
	\]
\end{lemma}
\begin{proof}
	Consider any $x_0 > 0$ and any $t_0 \ge 0$. We first show that there is a positive $\delta'$ (depending on $x_0$) such that a unique solution $x(t)$ with $x(t_0) = x_0$ exists for $t \in (t_0 - \delta',t_0 + \delta')$. By the Picard-Lindel\"{o}f Theorem~\cite[Theorem 2.2]{Teschl12},
	this holds true if the mapping $x \mapsto \abs{q} - x$ is continuous and satisfies a Lipschitz condition in a neighborhood of $x_0$. Continuity clearly holds. Let $\epsilon = \min_i (x_0)_i/2$ and let $U = \set{x}{\infnorm{x - x_0} < \epsilon}$. Then for every $x, \tilde{x} \in U$ and every $e$
	\[
		\big|\abs{q_e(x)} - \abs{q_e(\tilde{x})}\big| \le \gamma \onenorm{x - \tilde{x}},
	\]
	where $\gamma$ is as in Lemma~\ref{Locally Lipschitz, General Case}.
	Local existence implies the existence of a solution which cannot be extended. Since $q$ is bounded (Lemma~\ref{upper bound on q}), $x$ is bounded at all finite times, and hence the solution exists for all $t$.
	The lower bound $x_e(t)  \ge x_e(0)\cdot \exp\{-t\} > 0$ for all $e$, 
	holds by Lemma~\ref{Gronwalls Lemma} 
	with $A = 0$ and $\alpha = -1$.
	Since $\abs{q_e} \le \D\onenorm{b/\gamma_{A}}$, 
	$\dot{x}_e = \abs{q_e} - x_e \le \D\onenorm{b/\gamma_{A}} - x_e$, 
	we have $x_e(t) \le \D\onenorm{b/\gamma_{A}} + \max(0, x_e(0) - 
	\D\onenorm{b/\gamma_{A}})\cdot \exp\{-t\}$ 
	by Lemma~\ref{Gronwalls Lemma} with $B =  \D\onenorm{b/\gamma_{A}}$ and $\beta = -1$.
\end{proof}

\subsection{LP Duality}\label{subsec:maxflowmincut}

The energy $E(x)$ is no longer a Lyapunov function, e.g., if $x(0) \approx \bfzero$, $x(t)$ and hence $E(x(t))$ will grow initially. We will show that energy suitably scaled is a Lyapunov function. What is the appropriate scaling factor? In the case of the undirected shortest path problem, \cite{Physarum} used the minimum capacity of any source-sink cut as a scaling factor. The proper generalization to our situation is to consider the linear program $\max\{\alpha : Af = \alpha b, |f| \le x\}$, where $x$ is a fixed positive vector. Linear programming duality yields the corresponding minimization problem which generalizes the minimum cut problem to our situation. 

\begin{lemma}\label{dual}
    Let $x \in \R_{>0}^{m}$ and $b \not= 0$. The linear programs 
	\begin{align}\label{LP-Duality}
		\max\{\alpha : Af = \alpha b, |f| \le x\}
		\qquad\text{and}\qquad \min \{|y^{\rot} A| x: b^{\rot}y = -1\}
	\end{align}
are feasible and have the same objective value. 
	Moreover, there is a finite set $\Y_A = \sset{d^1,\ldots,d^K}$ of vectors $d^i \in \R_{\ge 0}^m$ that are independent of $x$ such that the minimum above is equal to $\Copt = \min_{d \in \Y_{A}} d^{\rot}x$. There is a feasible $f$ with $\abs{f} \le x/\Copt$.
	\footnote{
		In the undirected shortest path problem, the $d$'s are the incidence vectors of the undirected source-sink cuts. Let $S$ be any set of vertices containing $s_0$ but not $s_1$, and let $\mathbf{1}^S$ be its associated indicator vector. The cut corresponding to $S$ contains the edges having exactly one endpoint in $S$. Its  indicator vector is  $d^S=\abs{A^{\rot} \mathbf{1}^S}$. Then $d^S_e = 1$ iff $\abs{S \cap \sset{u,v}} = 1$, where $e = (u,v)$ or $e = (v,u)$, and $d^S_e=0$ otherwise. For a vector $x \ge 0$, $(d^S)^{\rot} x$ is the capacity of the source-sink cut $(S,V\backslash S)$. In this setting, $\Copt$ is the value of a minimum cut.}
\end{lemma}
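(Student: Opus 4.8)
The plan is to recognize the second program as the LP dual of the first (after splitting the sign-constrained rows), and then to read off the finite family $\Y_A$ from the vertices of the dual polyhedron. First I would rewrite the first program as the linear program $\max\{\alpha : Af - \alpha b = 0,\ f \le x,\ -f \le x\}$ in free variables $f \in \R^m$ and $\alpha \in \R$. It is feasible (take $f = 0$, $\alpha = 0$) and bounded above, since any feasible $f$ lies in the box $[-x,x]$, so $\alpha b = Af$ ranges over a bounded set and $b \neq 0$ then bounds $\alpha$. Introducing a free multiplier $y \in \R^n$ for the equation and nonnegative multipliers $u, v$ for $f \le x$ and $-f \le x$, its LP dual is $\min\{x^T(u+v) : A^Ty + u - v = 0,\ b^Ty = -1,\ u,v \ge 0\}$, which is feasible since $b \neq 0$ admits a $y_0$ with $b^Ty_0 = -1$ (and one may take $u_0,v_0$ to be the positive and negative parts of $-A^Ty_0$). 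By strong LP duality both programs are solvable with equal optimal value. For a fixed $y$, minimizing $x^T(u+v)$ subject to $u - v = -A^Ty$ and $u,v \ge 0$ splits coordinatewise, and since $x > 0$ the minimum of $x_e(u_e+v_e)$ equals $x_e \abs{(y^TA)_e}$; summing shows the dual coincides with $\min\{\abs{y^TA}\,x : b^Ty = -1\}$. This proves the feasibility and equal-value claims.

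Next I would extract $\Y_A$. Let $D = \{(y,u,v) : A^Ty + u - v = 0,\ b^Ty = -1,\ u,v \ge 0\}$ be the dual polyhedron. I claim $D$ is pointed: along any line in $D$ the constraints $u \ge 0$, $v \ge 0$ force the line direction to have zero $u$- and $v$-components, hence $A^Td_y = 0$ for its $y$-component $d_y$, and since $A$ may be assumed to have full row rank this gives $d_y = 0$. Thus $D$ is a nonempty pointed polyhedron, and since $x^T(u+v) \ge 0$ is bounded below on it, the dual optimum is attained at a vertex of $D$. Letting $(y^i,u^i,v^i)$, $i = 1,\dots,K$, be the vertices of $D$ and $d^i := u^i + v^i \in \R^m_{\ge 0}$, the set $\Y_A := \sset{d^1,\dots,d^K}$ is finite and does not depend on $x$, and for every $x > 0$ one has $\min\{\abs{y^TA}\,x : b^Ty = -1\} = \min_{(y,u,v) \in D} x^T(u+v) = \min_{d \in \Y_A} d^Tx = \Copt$; in particular this minimum is attained.

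For the last assertion I would first note $\Copt > 0$: the system $Af = b$ is assumed feasible, with a solution $f_0$, and $f_0 \neq 0$ since $b \neq 0$, so the scaled pair $(\alpha f_0, \alpha)$ with $\alpha := \min_{e:\, f_{0,e} \neq 0} x_e/\abs{f_{0,e}} > 0$ is feasible for the first program, whence $\Copt \ge \alpha > 0$. Taking an optimal $(f^*, \alpha^*)$ of the first program, we have $\alpha^* = \Copt$, $Af^* = \Copt\, b$ and $\abs{f^*} \le x$, so $f := f^*/\Copt$ satisfies $Af = b$ and $\abs{f} \le x/\Copt$.

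The only step beyond bookkeeping is the second one: passing to the $(y,u,v)$-polyhedron is what simultaneously produces a \emph{finite}, \emph{$x$-independent} family $\Y_A$ and repairs the fact that the bare minimization over $y$ (whose feasible region is a whole hyperplane) need not be attained; the pointedness of $D$, where the full-row-rank assumption on $A$ enters, is the crux.
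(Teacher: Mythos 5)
Your proposal is correct and follows essentially the same route as the paper: rewrite the primal with split sign constraints, dualize to $\min\{x^T(u+v)\}$, eliminate the slacks coordinatewise using $x>0$ to obtain $\abs{y^TA}x$, show the dual polyhedron is pointed via the full-row-rank of $A$, and read off $\Y_A$ from its vertices. The only (harmless) difference is in the last step, where you establish $\Copt>0$ directly from the standing feasibility assumption and rescale an optimal primal pair, while the paper rescales $x$ to $x/\Copt$ and re-invokes the duality statement.
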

\begin{proof}
	\providecommand{\zn}{z^{-}}
	\providecommand{\zp}{z^{+}}
	The pair $(\alpha, f) = (0,0)$ is a feasible solution for the maximization problem. Since $b\neq 0$, there exists $y$ with $ b^{\rot}y = -1$ and thus both problems are feasible. The dual of $\max\{\alpha:Af - \alpha b = 0,f \le x, -f \le x\}$
	has unconstrained variables $y \in \R^n$ and non-negative variables $\zp,\zn\in\R^m$ and reads
	\begin{equation} \label{dual LP}
		\min\{x^{\rot} (\zp + \zn) : - b^{\rot} y = 1, A^{\rot}y + \zp - \zn = 0,\  \zp,\zn\ge 0\}.
	\end{equation}
	From $\zn = A^{\rot} y + \zp$, $\zp \ge 0$, $\zn \ge 0$ and $x > 0$, we conclude $\min(\zp,\zn) = 0$ in an optimal solution. Thus $\zn= \max(0, A^{\rot}y)$  and $\zp= \max(0, -A^{\rot} y)$ and hence  $\zp + \zn= \abs{A^{\rot} y}$ in an optimal dual solution. Therefore, \eqref{dual LP} and the right LP in \eqref{LP-Duality} have the same objective value.

	We next show that the dual attains its minimum at a vertex of the feasible set. For this it suffices to show that its feasible set contains no line. Assume it does. Then there are vectors  $d = (y_1,\zp_1,\zn_1)$, $d$ non-zero, and $p = (y_0,\zp_0,\zn_0)$ such that $(y, \zp,\zn) = p + \lambda d = (y_0 + \lambda y_1, \zp_0 + \lambda \zp_1, \zn_0 + \lambda \zn_1)$ is feasible for all $\lambda \in \R$. Thus $\zp_1 = \zn_1 = 0$. Note that if either $\zp_1$ or $\zn_1$ were non-zero then either  $\zp_0 + \lambda \zp_1$ or $\zn_0 + \lambda \zn_1$ would have a negative component for some $\lambda$. Then $A^{\rot}y + \zp + \zn = 0$ implies $A^{\rot}y_1 = 0$. Since $A$ has full row rank, $y_1 = 0$. Thus the dual contains no line and the minimum is attained at a vertex of its feasible region. The feasible region of the dual does not depend on $x$. 

	Let $(y^1,\zp_1,\zn_1)$ to $(y^K,\zp_K,\zn_K)$ be the vertices of~\eqref{dual LP}, and let $\Y_A = \sset{\abs{A^{\rot} y^1},\ldots,\abs{A^{\rot} y^K}}$. Then
\begin{align*}
\min_{d \in \Y_{A}} d^{\rot} x &= \min\{x^{\rot} (\zp + \zn) : - b^{\rot} y = 1, A^{\rot}y + \zp - \zn = 0,\  \zp,\zn\ge 0\} \\
&= \min \{\abs{y^{\rot} A} x: b^{\rot}y = -1\}.
\end{align*}

We finally show that there is a feasible $f$ with $\abs{f} \le x/\Copt$. Let $x' := x/\Copt$.
Then $x' > 0$ and $\min_{d \in \Y_{A}} d^{\rot} x' = \min_{d \in \Y_{A}} d^{\rot} x/\Copt =
 \Copt/\Copt = 1$ and thus the right LP with $x = x'$ \eqref{LP-Duality} has objective value 1. Hence, the left LP has objective value 1 and there is a feasible $f$ with $\abs{f} \le x'$.
\end{proof}

\subsection{Convergence to Dominance}
In the network setting, an important role is played by the set of edge capacity vectors that support a feasible flow. In the LP setting, we generalize this notion to the set of \emph{dominating states}, which is defined as 
\[
    \Xdom := \{x \in \R^m: 
    \exists \text{ feasible } f:\abs{f} \le x\}. 
\]
An alternative characterization, using the set $\Y_{A}$ from Lemma \ref{dual}, is
\[
    \XX_1 := \{x \in \R_{\ge 0}^m:d^{\rot} x \ge 1 \text{ for all } d \in \Y_{A}\}.
\]
We now prove that $\Xdom=\XX_1$ and that the set $\XX_1$ is an attractor in the following sense.
\begin{lemma}\label{dominating states}
	The following statements hold:
	\begin{enumerate}
		\item $\Xdom=\XX_1$. Moreover, $\lim_{t \rightarrow \infty} \dist(x(t),\XX_1) = 0$, where $\dist(x,\XX_1)$ is the Euclidean distance between $x$ and $\XX_1$.
		\item If $x(t_0) \in \XX_1$, then $x(t) \in \XX_1$ for all $t \ge t_0$. For all sufficiently large $t$, $x(t) \in \XX_{1/2} := \{x \in \R_{\ge 0}^n:d^{\rot} x \ge 1/2 \text{ for all } d \in \Y_{A}\}$, and if $x \in \XX_{1/2}$ then there is a feasible $f$ with $\abs{f} \le 2x$.
	\end{enumerate}
\end{lemma}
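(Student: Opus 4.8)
The plan is to reduce both parts to Lemma~\ref{dual}, one elementary inequality about the cut vectors in $\Y_A$, and an integrating-factor estimate, together with the boundedness of $x(t)$ from Lemma~\ref{x is bounded}. By (the proof of) Lemma~\ref{dual} I may write $\Y_A=\sset{d^1,\dots,d^K}$ with $d^i=\abs{A^Ty^i}$ (componentwise) and $b^Ty^i=-1$ for each $i$, and I set $\Copt(x):=\min_{d\in\Y_A}d^Tx$. The one inequality used throughout is: for every $d=\abs{A^Ty}\in\Y_A$ and every feasible vector $g$ (i.e.\ $Ag=b$),
\[
d^T\abs{g}=\sum_e\abs{A_e^Ty}\,\abs{g_e}\ \ge\ \Bigl|\sum_e (A_e^Ty)\,g_e\Bigr|=\bigl|y^T Ag\bigr|=\bigl|y^T b\bigr|=1 ,
\]
where $A_e$ denotes the $e$-th column of $A$.

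First I would prove $\Xdom=\XX_1$. If $x\in\Xdom$, pick a feasible $f$ with $\abs{f}\le x$; then $x\ge\vZeros$, and the displayed inequality gives $d^Tx\ge d^T\abs{f}\ge1$ for every $d\in\Y_A$, so $x\in\XX_1$. Conversely, let $x\in\XX_1$. For $\epsilon>0$ the vector $x+\epsilon\vOnes$ is strictly positive and satisfies $d^T(x+\epsilon\vOnes)\ge d^Tx\ge1$ for all $d\in\Y_A$ (since $d\ge\vZeros$), so $\Copt(x+\epsilon\vOnes)\ge1$, and Lemma~\ref{dual} yields a feasible $f_\epsilon$ with $\abs{f_\epsilon}\le (x+\epsilon\vOnes)/\Copt(x+\epsilon\vOnes)\le x+\epsilon\vOnes$; letting $\epsilon\downarrow0$ and extracting a convergent subsequence of the bounded family $\set{f_\epsilon}{\epsilon>0}$ gives a feasible $f$ with $\abs{f}\le x$, i.e.\ $x\in\Xdom$.

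Next I would run the inequality along the dynamics. Since $x(t)$ is differentiable and $Aq(t)=b$, for each $d\in\Y_A$,
\[
\frac{d}{dt}\bigl(e^{t}\,d^Tx(t)\bigr)=e^{t}\,d^T\bigl(\abs{q(t)}-x(t)\bigr)+e^{t}\,d^Tx(t)=e^{t}\,d^T\abs{q(t)}\ \ge\ e^{t},
\]
and integration from $t_0$ to $t$ yields $d^Tx(t)\ge 1+e^{t_0-t}\bigl(d^Tx(t_0)-1\bigr)$ for all $t\ge t_0\ge0$ (a special case of Gr\"onwall's Lemma, Fact~\ref{Gronwalls Lemma}). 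Everything else now drops out. If $x(t_0)\in\XX_1$, then $d^Tx(t_0)\ge1$ for all $d$, hence $d^Tx(t)\ge1$ for all $t\ge t_0$, and since $x(t)\ge\vZeros$ this means $x(t)\in\XX_1$: the invariance in part~(2). With $t_0=0$ set $\delta(t):=e^{-t}\max_{d\in\Y_A}\max\{0,\,1-d^Tx(0)\}$; then $d^Tx(t)\ge 1-\delta(t)$ for every $d\in\Y_A$, i.e.\ $x(t)\in\XX_{1-\delta(t)}$, with $\delta(t)\to0$; since $\Y_A$ is finite, $\delta(t)\le 1/2$ for large $t$, so $x(t)\in\XX_{1/2}$ eventually. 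For the convergence in part~(1): as soon as $\delta(t)<1$, the vector $x(t)/(1-\delta(t))$ is nonnegative and has $d^T$-value $\ge1$ for every $d\in\Y_A$, hence lies in $\XX_1$, and therefore
\[
\dist\bigl(x(t),\XX_1\bigr)\le\|x(t)-x(t)/(1-\delta(t))\|=\frac{\delta(t)}{1-\delta(t)}\,\|x(t)\| ,
\]
which tends to $0$ as $t\to\infty$ because $\|x(t)\|$ is bounded by Lemma~\ref{x is bounded} and $\delta(t)\to0$. Finally, if $x\in\XX_{1/2}$ then $2x\ge\vZeros$ and $d^T(2x)\ge1$ for all $d\in\Y_A$, so $2x\in\XX_1=\Xdom$ and there is a feasible $f$ with $\abs{f}\le 2x$.

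The only step that is not purely mechanical is passing from the per-cut exponential estimates $d^Tx(t)\ge1-\delta(t)$ to $\dist(x(t),\XX_1)\to0$: I expect the uniform rescaling $x(t)\mapsto x(t)/(1-\delta(t))$, combined with the a priori bound on $\|x(t)\|$ from Lemma~\ref{x is bounded}, to be exactly the right device. A small bookkeeping point is the boundary case (points $x$ with zero coordinates) in $\Xdom=\XX_1$, which is handled by the perturbation $x\mapsto x+\epsilon\vOnes$ and a compactness argument as above.
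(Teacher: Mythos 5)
Your proof is correct and follows essentially the same route as the paper: the inequality $d^T\abs{g}\ge\abs{y^TAg}=1$ for feasible $g$, Gr\"onwall's estimate $d^Tx(t)\ge 1+(d^Tx(t_0)-1)e^{-(t-t_0)}$ for each $d\in\Y_A$, and the finiteness of $\Y_A$. You additionally fill in two details the paper leaves implicit — the perturbation $x\mapsto x+\epsilon\vOnes$ to apply Lemma~\ref{dual} at boundary points of $\XX_1$, and the explicit rescaling $x(t)/(1-\delta(t))$ with the a priori bound on $\|x(t)\|$ to convert the per-cut estimates into $\dist(x(t),\XX_1)\to0$ — both of which are sound.
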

\begin{proof} 
	\begin{enumerate}
		\item If $x \in \XX_1$, then $d^{\rot} x \ge 1$ for all $d \in \Y_{A}$ and hence Lemma~\ref{dual} implies the existence of a feasible solution $f$ with $\abs{f} \le x$. Conversely, if $x \in \Xdom$, then there is a feasible $f$ with $\abs{f} \le x$. Thus $d^{\rot} x \ge 1$ for all $d \in \Y_{A}$ and hence $x \in \XX_1$.
 		By the proof of Lemma~\ref{dual}, for any $d \in \Y_{A}$, there is a $y$ such that $d = \abs{A^{\rot} y}$ and $b^{\rot}y = -1$.
	Let $Y(t) = d^{\rot} x$. Then
	\[      \dot{Y} = \abs{y^{\rot} A} \dot{x} = \abs{y^{\rot} A}(\abs{q} - x) \ge \abs{y^{\rot} A q} - Y = \abs{y^{\rot} b} - Y = 1 - Y.\]
	Thus for any $t_0$ and $t \ge t_0$, $Y(t) \ge 1 + (Y(t_0) - 1)e^{-(t - t_0)}$ 
	by Lemma~\ref{Gronwalls Lemma} applied with $A = 1$ and $\alpha = -1$. 
	In particular,  $\lim \inf_{t \rightarrow \infty} Y(t) \ge 1$. Thus $\lim \inf_{t \rightarrow \infty} \min_{d \in \Y_{A}} d^{\rot} x \ge 1$ and hence $\lim_{t \rightarrow \infty} \dist(x(t),\XX_1) = 0$. 
	\item Moreover, if $Y(t_0) \ge 1$, then $Y(t) \ge 1$ for all $t \ge t_0$. Hence $x(t_0) \in \XX_1$ implies $x(t) \in \XX_1$ for all $t \ge t_0$. Since $x(t)$ converges to $\XX_1$, $x(t) \in \XX_{1/2}$ for all sufficiently large $t$.
	If $x \in \XX_{1/2}$ there is $f$ such that $Af = \tfrac{1}{2} b$ and $\abs{f} \le x$. Thus $2f$ is feasible and $\abs{2f} \le 2x$.\qedhere
\end{enumerate}
\end{proof}

The next lemma summarizes simple bounds on the values of resistors $r$, 
potentials $p$ and states $x$ that hold for sufficiently large $t$. 
Recall that	$P = \set{e\in[m]}{c_e > 0}$ and $Z=\set{e\in[m]}{c_e = 0}$, see~\eqref{eq:idxSetPZ}.

\begin{lemma}\label{Simple Facts}
	The following statements hold:
    \begin{enumerate}
		\item For sufficiently large $t$, it holds that
		      $r_e \ge c_e/ (2\D\onenorm{b/\gamma_{A}})$,
                $b^{\rot} p \le  8 \D\onenorm{b/\gamma_{A}} \onenorm{c}$
		      	and $\abs{A_e^{\rot} p} \le 8 \D^2\onenorm{b} \onenorm{c}\text{ for all e}$.
        \item
		      For all $e$, it holds that $\dot{x}_e / x_e \ge -1$ and for all $e\in P$,
              it holds that $\dot{x}_e / x_e \le 8\D^2\onenorm{b}\onenorm{c}/\cmin$.
		\item There is a positive constant $C$ such that for all $t \ge t_0$, there is a
                feasible $f$ (depending on $t$) such that  $x_e(t) \ge C$ for all indices
                $e$ in the support of $f$.
	\end{enumerate}
\end{lemma}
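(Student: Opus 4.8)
All three items follow by combining the boundedness estimates (Lemma~\ref{x is bounded}, Lemma~\ref{upper bound on q}), the closed form for the minimum-energy solution and the energy identity $E(q)=b^{T}p$ (equations~\eqref{both blocks}--\eqref{explicit q} and Lemma~\ref{$E(q) = b^T p$, general case}), the attraction-to-dominance results of Lemma~\ref{dual} and Lemma~\ref{dominating states}, and the vertex estimates of Lemma~\ref{lem:fsDg} and Lemma~\ref{sign-compatible representation}. I would prove item~3 first, because the ``good'' basic feasible solution it produces is exactly what rescues the last bound of item~1.

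\textbf{Item 3.} For all large $t$, Lemma~\ref{dominating states} puts $x\in\XX_{1/2}$, so there is a feasible $f$ with $\abs{f}\le 2x$. I would decompose $f=\sum_{i}\lambda_{i}z^{i}+w$ as in Lemma~\ref{sign-compatible representation}: $\lambda_{i}\ge0$, $\sum_{i}\lambda_{i}=1$, at most $m$ terms, the $z^{i}$ basic feasible solutions, $w$ in the kernel, and every summand sign-compatible with $f$, so $\abs{f_{e}}=\sum_{i}\lambda_{i}\abs{z^{i}_{e}}+\abs{w_{e}}$ with no cancellation. Picking $i_{0}$ with $\lambda_{i_{0}}$ maximal gives $\lambda_{i_{0}}\ge1/m$, hence $\abs{z^{i_{0}}_{e}}\le m\abs{f_{e}}\le 2m\,x_{e}$ for all $e$, and for $e\in\supp(z^{i_{0}})$ Lemma~\ref{lem:fsDg} gives $\abs{z^{i_{0}}_{e}}\ge 1/(\D\gamma_{A})$, so $x_{e}\ge 1/(2m\D\gamma_{A})$. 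Thus $C:=1/(2m\D\gamma_{A})$ works, with $z^{i_{0}}$ itself serving as the feasible solution required by item~3 (it depends on $t$ through $x(t)$); the case $b=0$ is trivial.

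\textbf{Item 1.} From Lemma~\ref{x is bounded}, $x(t)\le \D\onenorm{b/\gamma_{A}}\bfone+\max(\bfzero,\,x(0)-\D\onenorm{b/\gamma_{A}}\bfone)e^{-t}$; the transient term vanishes, so $x_{e}\le 2\D\onenorm{b/\gamma_{A}}$ for all $e$ and all large $t$, i.e.\ $r_{e}=c_{e}/x_{e}\ge c_{e}/(2\D\onenorm{b/\gamma_{A}})$. For the middle bound I would use $b^{T}p=E(q)$ (Lemma~\ref{$E(q) = b^T p$, general case}) together with the feasible $f$ with $\abs{f}\le 2x$: since $q$ minimises energy, $b^{T}p=E(q)\le E(f)=\sum_{e}\tfrac{c_{e}}{x_{e}}f_{e}^{2}\le 4\sum_{e}c_{e}x_{e}=4c^{T}x\le 4\onenorm{c}\infnorm{x}\le 8\D\onenorm{b/\gamma_{A}}\onenorm{c}$. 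For the last bound: if $e\in Z$ the $Z$-block of the right system in~\eqref{both blocks} gives $A_{Z}^{T}p=0$, hence $A_{e}^{T}p=0$; if $e\in P$ then $A_{e}^{T}p=r_{e}q_{e}$, and the crude estimate $(c_{e}/x_{e})\abs{q_{e}}$ is useless because $x_{e}$ can tend to $0$. Instead I would use the closed form~\eqref{explicit q} (and~\eqref{explicit pq} when $c\ge0$): $A_{e}^{T}p$ is a ratio of multilinear forms in $t_{e'}:=x_{e'}/c_{e'}$, with numerator $\sum_{\abs{S}=n-1}(\prod_{e'\in S}t_{e'})\det(A_{S}|A_{e})\det(A_{S}|b)$ and denominator $\sum_{\abs{S}=n}(\prod_{e'\in S}t_{e'})(\det A_{S})^{2}$. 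Only $S$ with $\rank A_{S}=n-1$ and $b\notin\operatorname{span}A_{S}$ contribute; since $b$ lies in the span of $\{A_{e'}:e'\in\supp(z^{i_{0}})\}$, each such $S$ admits a completion $e^{\ast}\in\supp(z^{i_{0}})$ with $\det A_{S\cup\{e^{\ast}\}}\ne0$, and there $t_{e^{\ast}}\ge C/\cmax$ by item~3. Matching the $S$-monomial of the numerator with the $(S\cup\{e^{\ast}\})$-monomial of the denominator (a nonzero integer squared, hence $\ge1$) bounds their quotient by a constant, and summing over $S$ (each size-$n$ set re-used at most $n$ times) bounds $\abs{A_{e}^{T}p}$ by a constant of the stated shape $8\D^{2}\onenorm{b}\onenorm{c}$.

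\textbf{Item 2, and the main obstacle.} Since $\dot{x}_{e}=\abs{q_{e}}-x_{e}$, we have $\dot{x}_{e}/x_{e}=\abs{q_{e}}/x_{e}-1\ge-1$ for every $e$ and $t$; and for $e\in P$, $\abs{q_{e}}/x_{e}=\abs{r_{e}q_{e}}/c_{e}=\abs{A_{e}^{T}p}/c_{e}\le 8\D^{2}\onenorm{b}\onenorm{c}/\cmin$ by item~1 and $c_{e}\ge\cmin$, so $\dot{x}_{e}/x_{e}\le 8\D^{2}\onenorm{b}\onenorm{c}/\cmin$. The one genuinely delicate point is the $\abs{A_{e}^{T}p}$ bound: the monomial-matching argument certainly gives \emph{a} constant bound, but extracting the clean stated constant (rather than a bound polynomial in $m,n$ and, a priori, exponential in $n$ in $\D,\gamma_{A}$) will likely require the finer Cauchy--Binet bookkeeping already set up for Lemma~\ref{Locally Lipschitz, General Case}, or the natural generalisation to undirected LPs of the fact -- used for shortest path in~\cite{Physarum} -- that the electric potentials are monotone along the electric flow, which at once yields $\abs{A_{e}^{T}p}\le b^{T}p$ in the single-commodity case.
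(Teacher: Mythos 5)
Items 2 and 3, and the first two bounds of item 1, are correct and essentially follow the paper's own argument (your item 3 loses a factor $m$ in the constant by selecting the basic feasible solution of maximal weight in the decomposition, which is harmless since the lemma only asserts \emph{some} positive $C$; the paper's bookkeeping gives $C=1/(2\D\gamma_{A})$). The genuine gap is the bound $\abs{A_e^Tp}\le 8\D^2\onenorm{b}\onenorm{c}$ — precisely the constant that is then used in item 2 and again in Lemma~\ref{Lipschitz}. Your primary route through the Cauchy--Binet expansion of~\eqref{explicit q} does yield \emph{a} uniform bound, but not the stated one: matching the numerator monomial indexed by $S$ with the denominator monomial indexed by $S\cup\{e^{*}\}$ costs a factor $1/t_{e^{*}}\le\cmax/C$, and with your $C=1/(2m\D\gamma_{A})$ together with the multiplicity $n$ of the matching you land at something of the order $mn\,\cmax\,\D^{3}\gamma_{A}\onenorm{b}$, not $8\D^2\onenorm{b}\onenorm{c}$. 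You flag this yourself, but flagging the difficulty does not close it, so as written the lemma with its stated constants is not proved.

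The fix is exactly the generalisation of the ``monotone potentials'' fact that you mention only as an aside, and it is short. Reorient the columns of $A$ so that $q\ge 0$. For $e\in Z$ the right system in~\eqref{both blocks} gives $A_e^Tp=0$; for $e\in P$ we have $A_e^Tp=(c_e/x_e)q_e\ge0$. Hence $A_e^Tp\ge0$ for \emph{all} $e$. Now fix $e'$ with $q_{e'}>0$ (the remaining cases are trivial). Since $q$ is a convex combination of sign-compatible basic feasible solutions (Lemmas~\ref{sign-compatible representation} and~\ref{q is kernel-free}), one of them, say $f\ge0$, has $f_{e'}>0$ and therefore $f_{e'}\ge1/(\D\gamma_{A})$ by Lemma~\ref{lem:fsDg}. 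Because every term of $\sum_e f_eA_e^Tp$ is non-negative, a single coordinate can be extracted: $f_{e'}A_{e'}^Tp\le\sum_ef_eA_e^Tp=p^TAf=b^Tp\le8\D\onenorm{b/\gamma_{A}}\onenorm{c}$, whence $A_{e'}^Tp\le\D\gamma_{A}\cdot8\D\onenorm{b/\gamma_{A}}\onenorm{c}=8\D^2\onenorm{b}\onenorm{c}$. This replaces the entire Cauchy--Binet bookkeeping and delivers the exact constant; note it reuses the same two ingredients you already deployed in item 3 (the sign-compatible decomposition and the $1/(\D\gamma_{A})$ lower bound on nonzero BFS coordinates), just applied to $q$ rather than to a dominated feasible solution.
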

\begin{proof}
	\begin{enumerate}
		\item By Lemma~\ref{existence}, $x_e(t) \le 2\D\onenorm{b/\gamma_{A}}$ for
		 all sufficiently large $t$. It follows that
		 $r_e =c_e/x_e \ge c_e/ (2\D\onenorm{b/\gamma_{A}})$. 
		 Due to Lemma~\ref{dominating states}, for large enough $t$, 
		 there is a feasible flow with $|f| \le 2x$. 
		 Together with $x_e(t) \le 2\D\onenorm{b/\gamma_{A}}$, it follows that 
		 \[
		 b^{\rot} p = E(q) \le E(2x) = 4 c^{\rot} x \le 8 \D\onenorm{b/\gamma_{A}} \onenorm{c}.
		 \]
		Now, orient $A$ according to $q$ and consider any index $e'$. Recall that for all 
		indices $e$, we have $A_e^{\rot} p = 0$ if $e\in Z$, and 
		$q_e = (x_e/c_e) \cdot A_e^{\rot} p $ if $e\in P$. 
		Thus $A_e^{\rot} p \ge 0$ for all $e$.
		If $e' \in Z$ or $e' \in P$ and $q_{e'} = 0$, the claim is obvious. So assume $e' \in P$ and $q_{e'} > 0$. Since $q$ is a convex combination of $q$-sign-compatible basic feasible solutions, there is a basic feasible solution $f$ with $f \ge 0$ and $f_{e'} > 0$. By Lemma~\ref{lem:fsDg},
		$f_{e'} \ge 1/(\D\gamma_{A})$. Therefore
		\[ 
		f_{e'} A_{e'}^{\rot} p \le \sum_e f_e A_e^{\rot} p = b^{\rot} p \le  8 \D\onenorm{b/\gamma_{A}} \onenorm{c}
		\]
		for all sufficiently large $t$. The inequality follows from $f_e \ge 0$ and $A_e^{\rot} p \ge 0$ for all $e$. Thus $A_{e'}^{\rot} p \le 8 \D^2\onenorm{b} \onenorm{c}$ for all sufficiently large $t$.
		\item We have $\dot{x}_e/x_e = (\abs{q_e} - x_e)/x_e \ge - 1$ for all $e$. For $e$ with $c_e > 0$
		\[
			\frac{\dot{x}_e}{x_e} = \frac{\abs{q_e} - x_e}{x_e} \le 
			\frac{\abs{A_e^{\rot} p}}{c_e} \le 8\D^2\onenorm{b}\onenorm{c}/\cmin.
		\]
		\item Let $t_0$ be such that $d^{\rot} x(t) \ge 1/2$ for all $d \in \Y_{A}$ and $t \ge t_0$. Then for all $t \ge t_0$, there is $f$ such that $Af = \frac{1}{2} b$ and $\abs{f} \le x(t)$; $f$ may depend on $t$. By Lemma~\ref{sign-compatible representation}, we can write $2f$ as convex combination of $f$-sign-compatible basic feasible solutions (at most $m$ of them) and a $f$-sign-compatible solution in the kernel of $A$. Dropping the solution in the kernel of $A$ leaves us with a solution which is still dominated by $x$.

		It holds that for every $e\in E$ with $f_e\neq 0$, there is a basic feasible solution $g$ 
		used in the convex decomposition such that $2\abs{f_e} \ge \abs{g_e}>0$. 
		By Lemma~\ref{lem:fsDg}, every non-zero component of $g$ is at least $1/(\D\gamma_{A})$. 
		We conclude that $x_e \ge 1/(2\D\gamma_{A})$, for every $e$ in the support of $g$.\qedhere
	\end{enumerate}
\end{proof}

\subsection{The Equilibrium Points}

We next characterize the equilibrium points 
\begin{equation}\label{eq:defF}
F = \set{x \in \R_{\ge 0}}{\abs{q} = x}.
\end{equation}
Let us first elaborate on the special case of the undirected shortest path problem. Here the equilibria are the flows of value one from source to sink in a network formed by undirected source-sink paths of the same length. This can be seen as follows. Consider any $x \ge 0$ and assume $\supp(x)$ is a network of undirected source-sink paths of the same length. Call this network $\mathcal{N}$. Assign to each node $u$, a potential $p_u$ equal to the length of the shortest undirected path from the sink $s_1$ to $u$. These potentials are well-defined as all paths from $s_1$ to $u$ in $\mathcal{N}$ must have the same length. For an edge $e= (u,v)$ in $\mathcal{N}$, we have $q_e = x_e/c_e(p_u - p_v) = x_e/c_e \cdot c_e = x_e$, i.e., $q = x$ is the electrical flow with respect to the resistances $c_e/x_e$.
Conversely, if $x$ is an equilibrium point and the network is oriented such that $q \ge 0$, we have $x_e = q_e = x_e/c_e (p_u - p_v)$ for all edges $e = (u,v) \in \supp(x)$. Thus $c_e = p_u - p_v$ and this is only possible if for every node $u$, all paths from $u$ to the sink have the same length. Thus $\supp(x)$ must be a network of undirected source-sink paths of the same length.
We next generalize this reasoning.

\begin{theorem}\label{equilibrium}
	If $x = \abs{q}$ is an equilibrium point and the columns of $A$ are oriented such that $q \ge 0$, then all feasible solutions $f$ with $\supp(f) \subseteq \supp(x)$ satisfy $c^{\rot} f = c^{\rot} x$. Conversely, if $x = \abs{q}$ for a feasible $q$, $A$ is oriented such that $q \ge 0$, and all feasible solutions $f$ with
	$\supp(f) \subseteq \supp(x)$ satisfy $c^{\rot} f = c^{\rot} x$, then $x$ is an equilibrium point.
\end{theorem}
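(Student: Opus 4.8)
My plan is to route the entire argument through a single \emph{potential} vector $p\in\R^{n}$ characterized by $A_e^{T}p=c_e$ for every $e\in\supp(x)$. I assume throughout that the columns of $A$ have been oriented so that $q\ge 0$, whence $x=|q|=q$; note that in the forward statement $q$ is the minimum energy feasible solution of the state $x$, whereas in the converse $q$ is merely \emph{some} feasible vector with $|q|=x$, and the task is to deduce that the minimum energy feasible solution of the state $x$ also has absolute value $x$. Every feasible vector $f$ that occurs satisfies $\supp(f)\subseteq\supp(x)$, and the minimum energy feasible solution of $x$ is supported in $\supp(x)$ by~\eqref{Defq}, so I keep all supports inside $\supp(x)$ without further comment. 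I will show that $x$ being an equilibrium implies the existence of such a $p$, which in turn forces the cost constancy --- this is the forward half --- and that cost constancy implies the existence of such a $p$, which in turn forces $x$ to be an equilibrium --- this is the converse.

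For the forward direction, since $x=q$ is the minimum energy feasible solution of the state $x$, it satisfies the KKT system~\eqref{both blocks}; its stationarity part (the right-hand system) provides $p$ with $r_eq_e=A_e^{T}p$ for $e\in P$ and $A_e^{T}p=0$ for $e\in Z$, where $r_e=c_e/x_e$ and $P,Z$ are the index sets of~\eqref{eq:idxSetPZ}. For $e\in P\cap\supp(x)$ we have $q_e=x_e$, so $r_eq_e=c_e$ and hence $A_e^{T}p=c_e$; for $e\in Z\cap\supp(x)$ the relation $A_e^{T}p=0=c_e$ already holds; thus $A_e^{T}p=c_e$ for all $e\in\supp(x)$. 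Now for any feasible $f$ with $\supp(f)\subseteq\supp(x)$, using $c_e=A_e^{T}p$ on $\supp(f)$ and $f_e=0$ off $\supp(f)$, one gets $c^{T}f=\sum_{e}(A_e^{T}p)f_e=p^{T}(Af)=p^{T}b$, a value independent of $f$; taking $f=x$ identifies this common value as $c^{T}x$, which is precisely the forward assertion.

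For the converse, the hypothesis says $c^{T}f$ equals $c^{T}x$ for every $f$ in the affine set $S:=\{f:Af=b,\ f_e=0\text{ for }e\notin\supp(x)\}$; since $x\in S$ this is equivalent to $c^{T}v=0$ for every $v$ in the direction space of $S$, i.e.\ every $v\in\ker A$ with $\supp(v)\subseteq\supp(x)$. Read on the coordinates of $\supp(x)$, this says the restriction of $c$ to $\supp(x)$ is orthogonal to the kernel of the column submatrix $A|_{\supp(x)}$, hence lies in its row space via $(\ker M)^{\perp}=\operatorname{im}(M^{T})$; so there is $p\in\R^{n}$ with $A_e^{T}p=c_e$ for all $e\in\supp(x)$. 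With this $p$ I then check that $x$ solves the KKT system~\eqref{both blocks}: $Ax=b$ because $x=q$ is feasible; for $e\in P\cap\supp(x)$ one has $r_ex_e=(c_e/x_e)x_e=c_e=A_e^{T}p$ (the top of the right-hand system); and for $e\in Z\cap\supp(x)$ one has $A_e^{T}p=c_e=0$ (the bottom). Since the energy is convex and the constraints affine, a solution of the KKT system is a global minimizer of the energy among feasible vectors supported in $\supp(x)$, and by uniqueness of the minimum energy feasible solution (Lemma~\ref{q is kernel-free}) this minimizer must be $x$. Hence the minimum energy feasible solution of the state $x$ is $x$ itself, its absolute value equals $x$, and $x$ is an equilibrium point.

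The real content is the single observation that the equilibrium condition, the optimality (KKT) condition for the minimum energy flow, and the cost-constancy condition are all captured by the existence of the same potential $p$ with $A^{T}p=c$ on $\supp(x)$; once this is seen, each implication is a one-line computation. The step I expect to be least obvious is the linear-algebra translation in the converse --- reading ``$c^{T}f$ is constant on the support-restricted feasible region'' as ``the restricted $c$ lies in the row space of the restricted $A$'' --- together with the care needed to keep separate, in the converse, the given feasible $q$ with $|q|=x$ and the \emph{actual} minimum energy feasible solution of the state $x$; the convexity plus uniqueness argument (Lemma~\ref{q is kernel-free}) is exactly what identifies the two. The only routine bookkeeping is carrying the $P/Z$ block split and keeping every support inside $\supp(x)$.
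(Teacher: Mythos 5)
Your proof is correct and follows essentially the same route as the paper's: both directions hinge on producing a potential $p$ with $A_e^Tp=c_e$ for all $e\in\supp(x)$ and relating it to the KKT system~\eqref{both blocks} for the support-restricted energy minimization. The only cosmetic difference is in the converse, where the paper constructs $p=[A_B^{-1}]^Tc_B$ explicitly from a column basis of the restricted matrix while you obtain it abstractly via $(\ker M)^{\perp}=\operatorname{im}(M^{T})$; both arguments then conclude by identifying $x$ as the unique minimum energy feasible solution.
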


\begin{proof}
	If $x$ is an equilibrium point, $\abs{q_e} = x_e$ for every $e$. 
	By changing the signs of some columns of $A$, we may assume $q \ge 0$, i.e., $q = x$. 
	Let $p$ be the potential with respect to $x$. For every index $e\in P$ in the support of $x$,
	since $c_e>0$ we have $q_e = \tfrac{x_e}{c_e} A_e^{\rot} p$ and hence $c_e = A_e^{\rot} p$. 
	Further, for the indices $e\in Z$ in the support of $x$, we have $c_e = 0 = A_e^{\rot} p$ 
	due to the second block of equations on the right hand side 
	in~\eqref{both blocks}. 
	Let $f$ be any feasible solution whose support is contained in the support of $x$. 
	Then the first part follows by
	\begin{equation*}
		\sum_{e \in \supp(f)} c_e f_e = \sum_{e \in \supp(f)} f_e A_e^{\rot} p = b^{\rot} p = E(q) = E(x) = \cost(x).
	\end{equation*}
	For the second part, we misuse notation and use $A$ to also denote the submatrix of the constraint matrix indexed by the columns in the support of $x$. We may assume that the rows of $A$ are independent. Otherwise, we simply drop redundant constraints. We may assume $q \ge 0$; otherwise we simply change the sign of some columns of $A$. Then $x$ is feasible. Let $A_B$ be a square non-singular submatrix of $A$ and let $A_N$ consist of the remaining columns of $A$. The feasible solutions $f$ with $\supp(f) \subseteq \supp(x)$ satisfy $A_B f_B + A_N f_N = b$ and hence $f_B = A_B^{-1}(b - A_N f_N)$.
	Then
    \[
	c^{\rot} f = c_B^{\rot} f_B + c_N^{\rot} f_N = c_B A_B^{-1} b + (c_N^{\rot} - c_B^{\rot} A_B^{-1} A_N) f_N.
    \]
	
Since, by assumption, $c^{\rot} f$ is constant for all feasible solutions whose support is contained in the support of $x$, we must have
	$c_N = A_N^{\rot} [A_B^{-1}]^{\rot} c_B$. Let $p = [A_B^{-1}]^{\rot} c_B$. Then it follows that
	$A^{\rot} p = \big[ \begin{smallmatrix} A_B^{\rot}  \\ A_N^{\rot} \end{smallmatrix} \big] 
	[A_B^{-1}]^{\rot} c_B = \big[ \begin{smallmatrix} c_B  \\ c_N  \end{smallmatrix} \big]$
	and hence $R x = A^{\rot} p$. Thus the pair $(x,p)$ satisfies the right hand side of~\eqref{both blocks}. Since $x$ is feasible, it also satisfies the left hand side of~\eqref{both blocks}. Therefore, $x$ is the minimum energy solution with respect to $x$.
\end{proof}

\begin{corollary}
	Let $g$ be a basic feasible solution. Then $\abs{g}$ is an equilibrium point.
\end{corollary}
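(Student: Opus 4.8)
The plan is to deduce this directly from Theorem~\ref{equilibrium}; in fact it already follows from the definition of the minimum energy feasible solution, exactly as in the first half of the proof of Lemma~\ref{Lemma 2.3} (which never used assumption~(B)). Write $g = (g_B, 0)$, where $A_B$ is a non-singular $n\times n$ submatrix of $A$ and $g_B = A_B^{-1} b$, and set $x = \abs{g}$, so that $\supp(x) = \supp(g) \subseteq B$.

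First I would reorient the columns of $A$, flipping the sign of column $e$ whenever $g_e < 0$; this changes neither $\abs{g}$ nor the dynamics, and afterwards $g \ge 0$, i.e.\ $g = x$. Now let $q$ be the minimum energy feasible solution with respect to the resistances $r_e = c_e/x_e$, which is well-defined and unique by Lemma~\ref{q is kernel-free} under the standing assumption that every nonzero vector in the kernel of $A$ has positive cost. By definition $q$ is feasible and satisfies $\supp(q) \subseteq \supp(x) \subseteq B$, hence $q = (q_B,0)$ and $b = Aq = A_B q_B$, so $q_B = A_B^{-1} b = g_B$ and therefore $q = g = x$. In particular $\abs{q} = x$, which is precisely the defining condition~\eqref{eq:defF} for $x$ to be an equilibrium point.

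As an alternative one can simply invoke Theorem~\ref{equilibrium}: after the reorientation, $x = g$ is itself a feasible solution with $q = x \ge 0$, and since $A_B$ is invertible it is the \emph{unique} feasible solution whose support is contained in $\supp(x)$; thus the hypothesis ``$c^T f = c^T x$ for all feasible $f$ with $\supp(f) \subseteq \supp(x)$'' holds trivially, and the converse direction of Theorem~\ref{equilibrium} gives the claim. There is no real obstacle here; the only thing to be careful about is that the minimum energy feasible solution invoked above is genuinely well-defined, which is guaranteed by Lemma~\ref{q is kernel-free}.
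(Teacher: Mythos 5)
Your proposal is correct. Your ``alternative'' argument is exactly the paper's proof: orient $A$ so that $g \ge 0$, note that invertibility of $A_B$ forces any feasible $f$ with $\supp(f)\subseteq\supp(g)$ to equal $g$, and invoke the converse direction of Theorem~\ref{equilibrium}. Your primary, direct argument rests on the same key observation but packages it differently: since $g$ is the only feasible solution supported in $\supp(x)$, the minimizer in \eqref{Defq} is $q=g$, hence $\abs{q}=x$ and $x$ lies in $F$ as defined in \eqref{eq:defF}; this bypasses Theorem~\ref{equilibrium} entirely and, as you note, reuses the first half of Lemma~\ref{Lemma 2.3}, which indeed does not need assumption~(B). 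Both routes are sound; the direct one is marginally more self-contained, while the paper's phrasing makes the corollary an immediate instance of the equilibrium characterization it has just established.
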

\begin{proof}
	Let $g$ be a basic feasible solution. Orient $A$ such that $g \ge 0$. Since $g$ is basic, there is a $B \subseteq [m]$ such that $g = (g_B,g_N) = (A_B^{-1}b,0)$. Consider any feasible solution $f$ with $\supp(f) \subseteq \supp(g)$. Then $f = (f_B,0)$ and hence $b = Af = A_B f_B$. Therefore, $f_B = g_B$ and hence $c^{\rot} f = c^{\rot} g$. Thus $x = \abs{g}$ is an equilibrium point.
\end{proof}
This characterization of equilibria has an interesting consequence.
\begin{lemma}
	The set $L:=\{c^{\rot} x:x\in F\}$ of costs of equilibria is finite.
\end{lemma}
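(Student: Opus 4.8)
The plan is to show that the cost of every equilibrium point coincides with the cost of some basic feasible solution of~\eqref{ULP}. Since there are only finitely many basic feasible solutions---at most one per choice of basis $B$, hence at most $\binom{m}{n}$ of them, each with a fixed absolute value $\abs{f}$ and thus a fixed cost $c^T\abs{f}$---this immediately yields that $L$ is finite.

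First I would fix an equilibrium point $x\in F$ and let $q$ be the corresponding minimum energy feasible solution, so that $\abs{q}=x$ by definition of $F$. By Lemma~\ref{q is kernel-free} the vector $q$ is kernel-free, so Lemma~\ref{sign-compatible representation} lets me write $q=\sum_i\lambda_i z_i$ as a convex combination of basic feasible solutions $z_i$ that are sign-compatible with $q$; in particular $\supp(z_i)\subseteq\supp(q)=\supp(x)$ for every $i$ with $\lambda_i>0$.

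Next I would reorient the columns of $A$ so that $q\ge 0$. Flipping the sign of a column of $A$ together with the corresponding coordinate of $f$ leaves both the constraint $Af=b$ and the energy $\sum_e(c_e/x_e)f_e^2$ unchanged, so after reorientation $q$ is still the minimum energy feasible solution with respect to the resistances $c_e/x_e$, and hence $x=\abs{q}=q$ is still an equilibrium point. Moreover each $z_i$ with $\lambda_i>0$, being sign-compatible with $q$, becomes a nonnegative feasible solution of the reoriented system whose support is contained in $\supp(x)$. The forward direction of Theorem~\ref{equilibrium} then applies and gives $c^Tz_i=c^Tx$; since $z_i\ge 0$ after reorientation, this reads $c^T\abs{z_i}=c^Tx$, so $c^Tx$ is exactly the cost of the basic feasible solution $z_i$ of~\eqref{ULP}. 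It follows that $L\subseteq\{\,c^T\abs{z} : z\text{ a basic feasible solution of }\eqref{ULP}\,\}$, a finite set, which proves the lemma.

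I do not expect a genuine obstacle here: the argument is short once Theorem~\ref{equilibrium} is in hand. The only point that needs care is the reorientation step---verifying that passing to the orientation in which $q\ge 0$ preserves the hypotheses of Theorem~\ref{equilibrium} (that $x$ remains an equilibrium and that the $z_i$ become nonnegative feasible solutions supported in $\supp(x)$); the rest is bookkeeping on top of results already established in this section.
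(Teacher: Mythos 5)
Your proof is correct and follows essentially the same route as the paper: orient $A$ so that $q\ge 0$ and invoke the forward direction of Theorem~\ref{equilibrium} to conclude that $c^Tx$ equals the cost of a basic feasible solution supported in $\supp(x)$. Your explicit decomposition of $q$ via Lemmas~\ref{q is kernel-free} and~\ref{sign-compatible representation} is a nice touch that makes explicit the existence of such a basic feasible solution, a point the paper's proof leaves implicit.
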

\begin{proof}
	If $x$ is an equilibrium, $x = \abs{q}$, where $q$ is the minimum energy solution with respect to $x$. Orient $A$ such that $q \ge 0$. Then by Theorem~\ref{equilibrium}, $c^{\rot} f = c^{\rot} x$ for all feasible solutions $f$ with $\supp(f) \subseteq \supp(x)$. In particular, this holds true for all such basic feasible solutions $f$. Thus $L$ is a subset of the set of costs of all basic feasible solutions, which is a finite set.
\end{proof}
We conclude this part by showing that the optimal solutions
of the undirected linear program~\eqref{ULP} are equilibria.

\begin{theorem}\label{optima are equilibria}
	Let $x$ be an optimal solution to~\eqref{ULP}. Then $x$ is an equilibrium.
\end{theorem}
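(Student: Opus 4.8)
The plan is to verify the hypotheses of the converse direction of Theorem~\ref{equilibrium} for the optimal solution $x$. That direction requires two things: (a) that $x=\abs{f}$ for some feasible flow $f$, so that after re-orienting the columns of $A$ to make $f\ge 0$ the vector $x$ itself satisfies $Ax=b$; and (b) that every feasible solution $g$ with $\supp(g)\subseteq\supp(x)$ has $c^{T}g=c^{T}x$. Once (a) and (b) are in hand, Theorem~\ref{equilibrium} (applied with its ``$q$'' taken to be this $f$) gives $x\in F$, i.e.\ $x$ is an equilibrium, and we are done.

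To get (a), I would start from $x\in\Xdom$ (which holds since $x$ is feasible for~\eqref{ULP}): there is a feasible $f$ with $\abs{f}\le x$, and then $\opt=c^{T}x\ge c^{T}\abs{f}=\cost(f)\ge\opt$, so $c^{T}\abs{f}=c^{T}x$; since $c\ge 0$ and $x-\abs{f}\ge 0$ this forces $x_e=\abs{f_e}$ on every coordinate with $c_e>0$. On the zero-cost coordinates optimality does not pin $x$ down, so here one must argue separately that $x=\abs{f}$ may be assumed (this is automatic in the classical case $c>0$). After re-orienting $A$ so that $f\ge 0$ we then have $x=f$ and $Ax=b$.

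For (b), I would fix a feasible $g$ with $\supp(g)\subseteq\supp(x)$ and set $z:=g-f\in\ker A$, noting $\supp(z)\subseteq\supp(x)$. On a positive-cost coordinate $e$ we have either $x_e=0$ (forcing $f_e=z_e=0$) or $f_e=x_e>0$; hence for all small $\lambda>0$ the feasible flows $f\pm\lambda z$ have $\abs{f_e\pm\lambda z_e}=f_e\pm\lambda z_e$ whenever $c_e>0$, whence
\[
\cost(f\pm\lambda z)=\sum_{e\,:\,c_e>0}c_e\,(f_e\pm\lambda z_e)=c^{T}f\pm\lambda\,c^{T}z=\opt\pm\lambda\,c^{T}z .
\]
As $\abs{f\pm\lambda z}$ is feasible for~\eqref{ULP}, both sides are $\ge\opt$, which forces $c^{T}z=0$, i.e.\ $c^{T}g=c^{T}f=c^{T}x$, so (b) holds.

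The hard part will be the bookkeeping in (b): one has to choose $\lambda$ small enough (uniformly over the finitely many relevant coordinates) that $f\pm\lambda z$ stays sign-compatible with $f$ on every positive-cost coordinate, and one must keep using that $\cost$ depends only on those coordinates. A cleaner alternative for (b) would be to invoke LP duality for~\eqref{ULP}: complementary slackness at an optimal pair $(x,f)$ produces a potential $p$ with $A_e^{T}p=c_e$ on the positive-cost part of $\supp(x)$ and $A_e^{T}p=0$ on every zero-cost coordinate, and then $c^{T}g=p^{T}Ag=p^{T}b=c^{T}x$ for all feasible $g$ supported in $\supp(x)$. The other subtle point is (a): showing that a genuinely general optimal solution is a tight dominating state, where the zero-cost coordinates are what make the argument nontrivial.
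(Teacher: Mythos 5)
Your proposal is correct and takes essentially the same route as the paper: show that (after reorienting columns) $x$ is itself a feasible point, prove via a small perturbation along $g-x$ (your $f\pm\lambda z$ is the paper's $y=x+\lambda(g-x)$) that every feasible $g$ supported in $\supp(x)$ has $c^Tg=c^Tx$, and then invoke the converse direction of Theorem~\ref{equilibrium}. The one point you flag as unresolved --- that $x=\abs{f}$ also on the zero-cost coordinates --- is disposed of in the paper by definition (``By definition, there is a feasible $f$ with $\abs{f}=x$''): an optimal solution of \eqref{ULP} is taken to be tight, and this convention is in fact necessary for the theorem to hold, since inflating a zero-cost coordinate of an optimal $x$ preserves optimality of the cost but destroys the equilibrium property.
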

\begin{proof} By definition, there is a feasible $f$ with $\abs{f} = x$. Let us reorient the columns of $A$ such that $f \ge 0$ and let us delete all columns $e$ of $A$ with $f_e = 0$. Consider any feasible $g$ with $\supp(g) \subseteq \supp(x)$. We claim that $c^{\rot} x = c^{\rot} g$. Assume otherwise and consider the point $y = x + \lambda(g - x)$. If $\abs{\lambda}$ is sufficiently small, $y \ge 0$. Furthermore, $y$ is feasible and $c^{\rot} y = c^{\rot} x + \lambda (c^{\rot} g - c^{\rot} x)$. If $c^{\rot} g \not= c^{\rot} x$, $x$ is not an optimal solution to~\eqref{ULP}. The claim now follows from Theorem~\ref{equilibrium}.
\end{proof}

\subsection{Convergence}

In order to show convergence, we construct a Lyapunov function. The following functions play a crucial role in our analysis.
Let $C_d = d^{\rot} x$ for $d \in \Y_{A}$, and recall that $\Copt = \min_{d \in \Y_{A}}d^{\rot} x$ denotes the optimum. Moreover, we define
\begin{align*}
	h(t) :=   \sum_e r_e \abs{q_e} \frac{x_e}{\Copt} - E\left(\frac{x}{\Copt}\right)
	\quad \text{ and }\quad
	V_d := \frac{c^{\rot} x}{C_d} \text{ for every $d \in \Y_{A}$}
	.
\end{align*}

\begin{theorem} \label{Lyapunov functions}
	\begin{compactenum}[\quad(1)]
		\item For every $d \in \Y_{A}$, $\dot{C}_d\ge 1-C_d$. Thus, if $C_d  < 1$ then $\dot{C_d} > 0$.
		\item If $x(t) \in \XX_1$, then $\frac{d}{dt} \cost(x(t)) \le 0$ with equality if and only if $x = \abs{q}$.
		\item Let $d \in \Y_{A}$ be such that $\Copt = d^{\rot} x$ at time $t$. Then it holds that $\dot{V}_d \le h(t)$.
		\item It holds that $h(t) \le 0$ with equality if and only if  $\abs{q} = \tfrac{x}{\Copt}$.
	\end{compactenum}
\end{theorem}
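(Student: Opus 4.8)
The plan is to prove the four items by direct differentiation, leaning on two recurring ingredients. The first is the triangle‑inequality bound $d^{T}\abs{q}\ge 1$ for every $d\in\Y_{A}$: since $d=\abs{A^{T}y}$ with $b^{T}y=-1$ by Lemma~\ref{dual}, we have $d^{T}\abs{q}=\sum_{e}\abs{A_{e}^{T}y}\,\abs{q_{e}}\ge\abs{\sum_{e}(A_{e}^{T}y)q_{e}}=\abs{y^{T}Aq}=\abs{b^{T}y}=1$. The second is a Cauchy--Schwarz estimate on the positive‑cost coordinates $P$ combined with the energy comparison $E(q)\le E(\hat x)$, which holds whenever $\hat x$ dominates a feasible solution. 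Throughout I use that $r_{e}x_{e}=c_{e}$ for all $e$ (with $r_{e}=0$ on the zero‑cost set $Z$), so that $E(x)=c^{T}x=\cost(x)$.

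For item (1) I would fix $d$ and compute $\dot C_{d}=d^{T}\dot x=d^{T}\abs{q}-d^{T}x\ge 1-C_{d}$ using the first ingredient; this is exactly the computation already performed inside the proof of Lemma~\ref{dominating states}, and the consequence "$C_{d}<1\Rightarrow\dot C_{d}>0$" is then immediate. For item (3) I would first rewrite $h$: from $r_{e}x_{e}=c_{e}$ one gets $\sum_{e}r_{e}\abs{q_{e}}x_{e}/\Copt=c^{T}\abs{q}/\Copt$ and $E(x/\Copt)=c^{T}x/\Copt^{2}$, so $h(t)=c^{T}\abs{q}/\Copt-c^{T}x/\Copt^{2}$. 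Then I would differentiate $V_{d}=c^{T}x/C_{d}$ for fixed $d$, substitute $\dot x=\abs{q}-x$ and the hypothesis $C_{d}=\Copt$ (which cancels the two $c^{T}x/\Copt$ terms), obtaining $\dot V_{d}=c^{T}\abs{q}/\Copt-(c^{T}x)(d^{T}\abs{q})/\Copt^{2}$. Subtracting gives $h(t)-\dot V_{d}=(c^{T}x)(d^{T}\abs{q}-1)/\Copt^{2}\ge 0$, since $c\ge 0$, $x>0$, $\Copt>0$, and $d^{T}\abs{q}\ge 1$.

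For item (2) I would compute $\tfrac{d}{dt}\cost(x)=c^{T}(\abs{q}-x)=\sum_{e\in P}r_{e}x_{e}\abs{q_{e}}-\sum_{e\in P}r_{e}x_{e}^{2}$, bound the first sum by $\sqrt{E(x)}\,\sqrt{E(q)}$ via Cauchy--Schwarz, and invoke $x(t)\in\XX_{1}=\Xdom$ (Lemma~\ref{dominating states}) to obtain a feasible $f$ with $\abs{f}\le x$, hence $E(q)\le E(f)\le E(x)$; this yields $\tfrac{d}{dt}\cost(x)\le 0$. Item (4) is the same argument with $x$ replaced by $x':=x/\Copt$, now using Lemma~\ref{dual} (which supplies a feasible $f$ with $\abs{f}\le x/\Copt$) for $E(q)\le E(x')$, giving $h(t)=\sum_{e\in P}r_{e}x'_{e}\abs{q_{e}}-E(x')\le 0$. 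In both cases the forward direction of the equality characterization is trivial: $x=\abs{q}$ forces $\dot x=0$. For the converse, equality in the Cauchy--Schwarz step makes $\abs{q}$ a nonnegative multiple of $x$ (resp.\ $x'$) on $P$, and equality in the energy chain forces $\abs{f}=\hat x$ on $P$ and then $q=f$ by uniqueness of the minimum‑energy solution (Lemma~\ref{q is kernel-free}), which pins the multiple to $1$; it then remains to extend $\abs{q}=\hat x$ to the zero‑cost coordinates, which I would handle using the $Z$‑block of the optimality system~\eqref{both blocks} from Lemma~\ref{formula for $q$, general case}.

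I expect the only genuine obstacle to be this last point. Because $\cost$ and $h$ are insensitive to the zero‑cost coordinates, the "only if" parts of items (2) and (4) do not follow from the inequality chain alone, and one really has to use the uniqueness and explicit block form of $q$ to conclude that $\abs{q}$ coincides with the dominating vector on $Z$ as well. Everything else — the differentiation in (1) and (3), and the Cauchy--Schwarz/energy estimate in (2) and (4) — is routine and merely reuses the two ingredients isolated above, so items (1) and (3) in particular should be essentially immediate given Lemmas~\ref{dual} and~\ref{dominating states}.
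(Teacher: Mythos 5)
Your proposal is correct and follows essentially the same route as the paper: item (1) and (3) via $d=|A^Ty|$, $d^T|q|\ge |y^TAq|=1$ and the quotient-rule computation, and items (2) and (4) via Cauchy--Schwarz on the positive-cost coordinates combined with $E(q)\le E(\hat x)$ for a dominating $\hat x$ (using Lemma~\ref{dominating states} for (2) and Lemma~\ref{dual} for (4)). The one obstacle you flag --- extending $|q|=\hat x$ from the positive-cost coordinates to the zero-cost ones in the equality case --- is not resolved in the paper either: its proof simply asserts that equality in Cauchy--Schwarz makes $|q|$ and $x$ parallel and then fixes the proportionality constant to $1$ via $E(q)=\lambda^2E(x)$, which strictly speaking only constrains the coordinates with $c_e>0$, so your proposal is at least as complete as the paper's argument.
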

\begin{proof}
	\begin{compactenum}
		\item Recall that for $d \in \Y_{A}$, there is a $y$ such that $b^{\rot} y = -1$ and $d = \abs{A^{\rot} y}$. Thus
		$\dot{C_d} = d^{\rot}  (\abs{q} - x) \ge \abs{y^{\rot} A q} - C_d = 1 - C_d$ and hence $\dot{C_d} > 0$, whenever $C_d < 1$.
		\item\label{part2} Remember that $E(x) = \cost(x)$ and that $x(t) \in \XX_1$ implies that there is a feasible $f$ with $\abs{f} = x$. Thus
		$E(q) \le E(f) \le E(x)$. Let $R$ be the diagonal matrix of entries $c_e/x_e$. Then
		\begin{align*}
			\frac{d}{dt}{\cost(x)} & = c^{\rot} (\abs{q} - x)                           & \text{by~\eqref{PD}}                               \\
			& = x^{\rot} R^{1/2} R^{1/2} \abs {q} - x^{\rot} R x & \text{since $c = Rx$}\\
			                       & \le (q^{\rot} R q)^{1/2} (x^{\rot} R x)^{1/2} - x^{\rot} R x & \text{by Cauchy-Schwarz}                           \\
			                       & \le 0 & \text{since $E(q) \le E(x)$.}
		\end{align*}
		If the derivative is zero, both inequalities above have to be equalities. This is only possible if the vectors $\abs{q}$ and $x$ are parallel and $E(q) = E(x)$. Let $\lambda$ be such that $\abs{q} = \lambda x$. Then
			$E(q) = \sum_e \tfrac{c_e}{x_e} q_e^2 = \lambda^2 \sum_e c_e x_e = \lambda^2 E(x).$
		Since $E(x) > 0$, this implies $\lambda = 1$.
		\item By definition of $d$, $\Copt = C_d$. By the first two items, we have
		$\dot{\Copt} = d^{\rot} \abs{q} - \Copt$  and $\frac{d}{dt} \cost(x) = c^{\rot} \abs{q} - \cost(x)$.
		Thus
		\begin{align*}
			\frac{d}{dt} \frac{\cost(x)}{\Copt} 
			& = \frac{ \Copt \frac{d}{dt}{\cost(x)} - \dot{\Copt} \cost(x) }{\Copt^2}            
			=  \frac{ \Copt (c^{\rot} \abs{q} - \cost(x)) - (d^{\rot} \abs{q} - \Copt)\cost(x)}{\Copt^2} \\
			& = \frac{\Copt \cdot c^{\rot} \abs{q} - d^{\rot} \abs{q} \cdot c^{\rot} x}{\Copt^2}          
			\le \sum_e r_e \abs{q_e} \frac{x_e}{\Copt} - \sum_e r_e \Big(\frac{x_e}{\Copt}\Big)^2 
			= h(t),
		\end{align*}
		where we used $r_e = c_e/x_e$ and hence $c^{\rot} \abs{q} = \sum_e r_e x_e \abs{q_e}$, $c^{\rot}x = E(x)$, and $d^{\rot} \abs{q} \geq \abs{y^{\rot} A q} = 1$ since $d = \abs{y^{\rot} A}$ for some $y$ with $b^{\rot} y = -1$.
		\item We have
		\[ 
			\sum_e r_e \tfrac{x_e}{\Copt}  \abs{q_e} 
			= \sum_e r_e^{1/2} \tfrac{x_e}{\Copt}  r_e^{1/2} \abs{q_e} 
			\le \Big( \sum_e r_e (\tfrac{x_e}{\Copt})^2 \Big)^{1/2}
			\Big( \sum_e r_e q_e^2 \Big)^{1/2} 
			=\E\big(\tfrac{x}{\Copt}\big)^{1/2} \E(q)^{1/2}
		\]
		by Cauchy-Schwarz. Since $h(t)=\sum_e r_e \abs{q_e} \tfrac{x_e}{\Copt} - E(\tfrac{x}{\Copt})$ by definition, it follows that
		\[  
			h(t) 
			\le \E\big(\tfrac{x}{\Copt}\big)^{1/2} \cdot \E(q)^{1/2} - \E\big(\tfrac{x}{\Copt}\big)
			= E\big(\tfrac{x}{\Copt}\big)^{1/2}\cdot \Big( \E(q)^{1/2} - \E\big(\tfrac{x}{\Copt}\big)^{1/2}\Big) \le 0
		\]
		since $x/\Copt$ dominates a feasible solution and hence $\E(q) \le \E(x/\Copt)$.
		If $h(t) = 0$, we must have equality in the application of Cauchy-Schwarz, i.e., the vectors $x/\Copt$ and $\abs{q}$ must be parallel, and we must have $E(q) = E(x/\Copt)$ as in the proof of part~\ref{part2}. \qedhere
	\end{compactenum}
\end{proof}

We show now convergence against the set of equilibrium points.
We need the following technical Lemma from~\cite{Physarum}.

\begin{lemma}[Lemma 9 in~\cite{Physarum}]\label{f=fd}
	Let $f(t) = \max_{d \in \Y_{A}} f_d(t)$, where each $f_d$ is continuous and differentiable. If $\dot{f}(t)$ exists, then there is a $d \in \Y_{A}$ such that $f(t) = f_d(t)$ and $\dot{f}(t) = \dot{f}_d(t)$.
\end{lemma}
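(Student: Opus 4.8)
The plan is to prove this via the standard argument for differentiating a pointwise maximum of \emph{finitely many} differentiable functions (the finite-index special case of Danskin's theorem). The only inputs are the finiteness of $\Y_A$ and the continuity/differentiability hypotheses on the $f_d$; I will phrase everything in terms of one-sided difference quotients so that the argument is insensitive to whether the time $t$ lies in the interior of the domain (e.g.\ it also covers the endpoint $t=0$). First I would fix the time $t$ and set $S:=\{\,d\in\Y_A : f_d(t)=f(t)\,\}$, the set of indices active at $t$; since $f(t)=\max_{d\in\Y_A}f_d(t)$ and $\Y_A$ is finite and nonempty, $S\neq\emptyset$. The goal is to exhibit a single $d\in S$ with $\dot f(t)=\dot f_d(t)$.

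Next I would choose a sequence $s_k\downarrow t$ and, for each $k$, pick $d_k\in\Y_A$ attaining the maximum, so that $f(s_k)=f_{d_k}(s_k)$ (the maximum over the finite set $\Y_A$ is attained). Because $\Y_A$ is finite, by pigeonhole I may pass to a subsequence along which $d_k$ equals a fixed index $d^\star$. Then $f(s_k)=f_{d^\star}(s_k)\to f_{d^\star}(t)$ by continuity of $f_{d^\star}$, while $f(s_k)\to f(t)$ by continuity of $f$ (a maximum of finitely many continuous functions), so $f_{d^\star}(t)=f(t)$, i.e.\ $d^\star\in S$. Since $\dot f(t)$ exists, it equals the right-hand derivative, hence the limit of the difference quotients along $s_k$; evaluating along this subsequence and using $f(s_k)=f_{d^\star}(s_k)$ and $f(t)=f_{d^\star}(t)$ gives
\[
\dot f(t)=\lim_{k\to\infty}\frac{f(s_k)-f(t)}{s_k-t}
=\lim_{k\to\infty}\frac{f_{d^\star}(s_k)-f_{d^\star}(t)}{s_k-t}
=\dot f_{d^\star}(t),
\]
where the last equality is the definition of $\dot f_{d^\star}(t)$ (which exists by hypothesis, so the subsequential limit is forced to equal it). Together with $f(t)=f_{d^\star}(t)$ this is precisely the claim with $d=d^\star$.

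There is no genuine obstacle here; the only point requiring care is that the maximizing index can vary with $s$, which is exactly why I extract a constant subsequence using the finiteness of $\Y_A$, and why I work with the right-hand difference quotient — this way the mere existence of $\dot f(t)$ is the only regularity input. One could alternatively observe that for every $d\in S$ the function $f-f_d$ is everywhere $\ge 0$, vanishes at $t$, and is differentiable at $t$, so Fermat's rule at an interior minimum forces $\dot f(t)=\dot f_d(t)$ for \emph{all} $d\in S$; but the subsequence argument is slightly cleaner and uniformly handles a boundary time such as $t=0$, where Fermat's rule only yields a one-sided inequality.
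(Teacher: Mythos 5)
Your proof is correct: the paper gives no proof of this lemma (it is quoted as Lemma~9 of the cited Physarum convergence paper), and the argument there is essentially the one you give --- pass to a sequence $s_k\downarrow t$, use finiteness of $\Y_A$ to extract a constant maximizing index $d^\star$, use continuity to conclude $d^\star$ is active at $t$, and identify the limits of the (equal) difference quotients of $f$ and $f_{d^\star}$ with $\dot f(t)$ and $\dot f_{d^\star}(t)$ respectively. Your handling of the one-sided quotient and the remark about Fermat's rule are fine but not needed beyond what you already wrote.
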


\begin{theorem}
	All trajectories converge to the set $F$ of equilibrium points.
\end{theorem}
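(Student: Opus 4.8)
The plan is to use $V(t):=\max_{d\in\Y_A}V_d(t)$ as a non-smooth Lyapunov function and then to analyse the set on which it becomes stationary. First I would fix a time $T$ so large that $x(t)\in\XX_{1/2}$ for all $t\ge T$ (Lemma~\ref{dominating states}); then $\Copt(t)\ge 1/2$ and, since $c\ge 0$ and $x(t)>0$, $V(t)=\cost(x(t))/\Copt(t)$, a finite maximum of the $C^1$ functions $V_d$, hence locally Lipschitz and differentiable for a.e.\ $t\ge T$. Wherever $\dot V(t)$ exists, Lemma~\ref{f=fd} produces a $d$ with $V_d(t)=V(t)$, i.e.\ $C_d(t)=\Copt(t)$, and $\dot V(t)=\dot V_d(t)$; by parts (3) and (4) of Theorem~\ref{Lyapunov functions}, $\dot V(t)=\dot V_d(t)\le h(t)\le 0$. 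Integrating, $V$ is non-increasing on $[T,\infty)$, and since $V\ge 0$ it decreases to a limit $V^\star\ge 0$.

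Next I would pass to the $\omega$-limit set. By Lemma~\ref{x is bounded} the trajectory is bounded, so $\Omega:=\omega(x(\cdot))$ is nonempty, compact, invariant, contained in $\XX_{1/2}$, and $\dist(x(t),\Omega)\to 0$. By continuity of $V$ and monotone convergence $V\equiv V^\star$ on $\Omega$; hence along any solution lying in $\Omega$ the function $V$ is constant, so $\dot V=0$ a.e., and the inequality chain above forces $h(t)=0$ a.e.; since $h$ is continuous this gives $h\equiv 0$ on $\Omega$, and the equality case of Theorem~\ref{Lyapunov functions}(4) yields $\abs{q}=x/\Copt$ at every point of $\Omega$.

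It then remains to improve $\abs{q}=x/\Copt$ to $\abs{q}=x$, i.e.\ to show $\Copt\equiv 1$ on $\Omega$. Along a solution inside $\Omega$ one has $\dot x=\abs q-x=(\tfrac1{\Copt}-1)\,x$, so every $C_d=d^Tx$ obeys $\dot C_d=(\tfrac1{\Copt}-1)\,C_d$ with a common logarithmic derivative; hence the argmin in $\Copt=\min_d C_d$ is constant in time, $\Copt$ is $C^1$, and $\dot\Copt=(\tfrac1{\Copt}-1)\Copt=1-\Copt$, so $\Copt(t)=1+(\Copt(0)-1)e^{-t}$. Since $\Omega$ is invariant (the orbit through each of its points stays in $\Omega$ for all $t\in\R$) and $\Copt$ is bounded on the compact set $\Omega$, running time backward forces $\Copt(0)=1$; thus $\Copt\equiv 1$ on $\Omega$, $\abs{q}=x$ on $\Omega$, i.e.\ $\Omega\subseteq F$, and $\dist(x(t),F)\le\dist(x(t),\Omega)\to 0$, which is the claim.

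The main obstacle I anticipate is reconciling the non-smoothness of $V$ with the invariance-principle machinery: one has to justify, via Lemma~\ref{f=fd}, that $V$ behaves like a genuine Lyapunov function, and one has to check that the $\omega$-limit argument is valid although the vector field $x\mapsto\abs{q(x)}-x$ is only locally Lipschitz on the open positive orthant and $\Omega$ may touch its boundary. This last point is handled by observing that $q_e(x)=0$ whenever $x_e=0$, so the zero coordinates of a state are preserved and the flow stays on a fixed face on which $x\mapsto q$ is continuous and (locally) Lipschitz. The other delicate step is the backward-in-time argument giving $\Copt\equiv 1$ on $\Omega$, which uses essentially the two-sided invariance of $\omega$-limit sets rather than the explicit Lyapunov estimates.
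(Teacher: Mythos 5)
Your first half is sound and matches the paper's ingredients: $V=\max_{d\in\Y_{A}}V_d=\cost(x)/\Copt$ is locally Lipschitz, Lemma~\ref{f=fd} plus Theorem~\ref{Lyapunov functions}(3)--(4) give $\dot V\le h(t)\le 0$ a.e., and stationarity forces $\abs{q}=x/\Copt$. But from there your route diverges from the paper's, and the divergence is where the problem lies. The paper does not use $\max_d V_d$; it splits on whether the trajectory ever enters $\XX_1$ (where plain $\cost(x)$ works by Theorem~\ref{Lyapunov functions}(2)) and otherwise uses the \emph{augmented} function $V=\max_{d\in\Y_{A}}(V_d+1-C_d)$. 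The extra summand has derivative $-\dot C_d\le-(1-C_d)$ by Theorem~\ref{Lyapunov functions}(1), so $\dot V\le h(t)-(1-C_d)\le 0$ and $\dot V=0$ directly forces both $h=0$ and $C_d=1$, i.e.\ $\abs{q}=x$. No $\omega$-limit set, no backward-time reasoning. Your unaugmented $V$ only yields $\abs{q}=x/\Copt$ at stationary points, and you must recover $\Copt\equiv1$ on $\Omega$ by running the flow backward inside $\Omega$.

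That backward step is a genuine gap. Two-sided invariance of $\Omega$ -- the existence of a complete orbit through each of its points along which your identity $\dot\Copt=1-\Copt$ holds -- requires the vector field $x\mapsto\abs{q(x)}-x$ to generate a continuous local flow in a neighbourhood of $\Omega$. But $\Omega$ typically contains points with $x_e=0$ (non-optimal coordinates decay), and there $q$ is defined by the \emph{restricted} problem \eqref{Defq} with the constraint $f_e=0$ whenever $x_e=0$. For $e\in Z$ the resistance $r_e=c_e/x_e=0$ is independent of $x_e$, so $q_e$ need not vanish as $x_e\to0^+$ from the interior while the restricted problem at $x_e=0$ forces $q_e=0$: the field is genuinely discontinuous on such faces, continuous dependence on initial data fails, and the standard invariance theorem for $\omega$-limit sets does not apply. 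Your remark that zero coordinates are preserved concerns the forward flow started \emph{at} a boundary point; it gives neither invariance of $\Omega$ nor existence/uniqueness of backward orbits there (backward uniqueness is exactly what your blow-up argument $\Copt(t)=1+(\Copt(0)-1)e^{-t}$ needs). Moreover, the only quantitative Lipschitz bound available, Lemma~\ref{Locally Lipschitz, General Case}, requires $x\ge\alpha\mathbf{1}$ with $\alpha>0$ and degenerates as $\alpha\to0$, so it cannot be used near the faces. Note also that a forward-only repair is impossible within your setup: on $\Omega$ the forward flow satisfies $\dot V=0$ identically even when $\Copt(0)\ne1$, so nothing short of the backward direction (or the paper's extra term $1-C_d$, which is the clean fix) distinguishes $\Copt=1$ from $\Copt\ne1$.
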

\begin{proof}
	We distinguish cases according to whether the trajectory ever enters $\XX_1$ or not.
	If the trajectory enters $\XX_1$, say $x(t_0) \in \XX_1$, then $\frac{d}{dt} \cost(x) \le 0$ for all $t \ge t_0$ with equality only if $x = \abs{q}$. Thus the trajectory converges to the set of fix points.
	If the trajectory never enters $\XX_1$, consider $V = \max_{d \in \Y_{A}} (V_d + 1 - C_d)$. We show that $\dot{V}$ exists for almost all $t$. Moreover, if $\dot{V}(t)$ exists, then $\dot{V}(t) \le 0$ with equality if and only if $\abs{q_e} = x_e$ for all $e$. It holds that $V$ is Lipschitz continuous as the maximum of a finite number of continuously differentiable functions. Since $V$ is Lipschitz continuous, the set of $t$'s where $\dot{V}(t)$ does not exist has zero Lebesgue measure (see for example~\cite[Ch.~3]{Clarke:1998}).
	If $\dot{V}(t)$ exists, we have $\dot{V}(t) = \dot{V}_d(t) - \dot{C}_d(t)$ for some $d \in \Y_{A}$ according to Lemma~\ref{f=fd}. Then, it holds that  $\dot{V}(t) \le h(t) - (1 - C_d) \le 0$.
	Thus $x(t)$ converges to the set
	\begin{align*}
		\set{x \in \R_{\ge 0}}{ \dot{V} = 0} & = \set{x \in \R_{\ge 0}}{\text{$\abs{q} = x/C$ and $C = 1$}}
		= \set{x \in \R_{\ge 0}}{\abs{q} = x}.\qedhere
	\end{align*}
\end{proof}

At this point, we know that all trajectories $x(t)$ converge to $F$. 
Our next goal is to show that $c^{\rot} x(t)$ converges to the cost of 
an optimum solution of~\eqref{ULP} and that $\abs{q} - x$ converges to zero. 
We are only able to show the latter for all indices $e\in P$, i.e. with $c_e>0$.

\subsection{Details of the Convergence Process}

In the argument to follow, we will encounter the following situation several times. We have a non-negative function $f(t) \ge 0$ and we know that $\int_0^\infty f(t)dt$ is finite. We want to conclude that $f(t)$ converges to zero for $t \rightarrow \infty$. This holds true if $f$ is Lipschitz continuous. Note that the proof of the following lemma is very similar to the proof in~\cite[Lemma 11]{Physarum}. However, in our case we apply the local Lipschitz condition that we showed in Lemma~\ref{Locally Lipschitz, General Case}.

\begin{lemma}\label{Lipschitz}
	Let $f(t) \ge 0$ for all $t$. If $\int_0^\infty f(t) d(t)$ is finite and $f(t)$ is locally Lipschitz continuous, i.e., for every $\varepsilon > 0$, there is a $\delta >0$ such that $\abs{f(t') - f(t)} \le \varepsilon$ for all $t' \in [t,t+\delta]$, then $f(t)$ converges to zero as $t$ goes to infinity. The functions $t \mapsto x^{\rot} R \abs{q} - x^{\rot} R x = c^{\rot} \abs{q} - c^{\rot} x$ and $t \mapsto h(t)$ are Lipschitz continuous.
\end{lemma}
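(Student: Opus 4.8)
The plan is to prove the two assertions separately. For the first assertion, suppose $f \ge 0$, $\int_0^\infty f\,dt < \infty$, and $f$ is uniformly Lipschitz continuous in the stated sense. I argue by contradiction: if $f$ does not converge to $0$, then there is some $\varepsilon_0 > 0$ and a sequence of times $t_1 < t_2 < \cdots \to \infty$ with $f(t_k) \ge \varepsilon_0$ for all $k$, and (passing to a subsequence) with $t_{k+1} - t_k$ bounded below so the intervals $[t_k, t_k + \delta]$ are eventually disjoint. Apply the Lipschitz hypothesis with $\varepsilon = \varepsilon_0/2$ to get the corresponding $\delta > 0$; then on each interval $[t_k, t_k + \delta]$ we have $f(t) \ge f(t_k) - \varepsilon_0/2 \ge \varepsilon_0/2$. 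Hence $\int_0^\infty f\,dt \ge \sum_k \int_{t_k}^{t_k+\delta} f\,dt \ge \sum_k \delta \varepsilon_0/2 = \infty$, contradicting finiteness of the integral. This is the standard Barbalat-type argument and is routine.

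For the second assertion, I need to show that $g(t) := c^T\abs{q(t)} - c^T x(t) = x^T R\abs{q} - x^T R x$ and $h(t)$ are Lipschitz continuous on $[0,\infty)$ (in fact it suffices to show this for all sufficiently large $t$, since the behavior on any bounded interval is automatic from continuity of $x$, $q$). The key inputs are: (a) by Lemma~\ref{x is bounded}, for $t$ large enough $x(t)$ lies in a fixed compact box $\alpha\bfone \le x(t) \le \beta\bfone$ with $\alpha,\beta > 0$; (b) on that box the map $x \mapsto q$ is Lipschitz by Lemma~\ref{Locally Lipschitz, General Case}, hence so is $x \mapsto \abs{q}$; (c) the map $x \mapsto r_e = c_e/x_e$ is Lipschitz on the box (derivative bounded by $c_e/\alpha^2$); (d) $q$ is bounded by Lemma~\ref{upper bound on q} and $x$ is bounded; (e) $\Copt(x) = \min_{d\in\Y_A} d^T x$ is the minimum of finitely many linear functions, hence Lipschitz, and by Lemma~\ref{dominating states} it is bounded away from zero for large $t$ (eventually $\Copt \ge 1/2$), so $1/\Copt$ is also Lipschitz there. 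Then $g$ and $h$ are built from these pieces by products and sums of bounded Lipschitz functions of $x$, and a product of bounded Lipschitz functions is Lipschitz; so $x \mapsto g$ and $x \mapsto h$ are Lipschitz functions of $x$ on the box. Finally, $\dot{x} = \abs{q} - x$ is bounded for large $t$ (again by Lemmas~\ref{upper bound on q} and~\ref{x is bounded}), so $x(\cdot)$ itself is Lipschitz in $t$; composing, $t \mapsto g(t)$ and $t \mapsto h(t)$ are Lipschitz in $t$. The Lipschitz-in-$t$ property on $[0,T_0]$ for the initial stretch follows because $x$, $q$ are continuous on a compact interval and $x$ stays positive (hence $q$ stays in a region where the formula from Lemma~\ref{Locally Lipschitz, General Case} applies), giving a — possibly worse — Lipschitz constant there; taking the max of the two constants finishes it.

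The main obstacle is purely bookkeeping: assembling the various bounds (on $x$, $q$, $r$, $\Copt$, $1/\Copt$ and their Lipschitz constants) and verifying that the specific algebraic combination defining $h(t) = \sum_e r_e\abs{q_e}\,x_e/\Copt - E(x/\Copt)$ — which mixes $r_e$, $x_e$, $\abs{q_e}$ and $1/\Copt$ multiplicatively — inherits the Lipschitz property. No single step is deep; the care needed is to ensure every factor appearing in a product is simultaneously bounded and Lipschitz on the eventual invariant box, and that $\Copt$ stays bounded below so that dividing by it is harmless. I would state the bounded-Lipschitz-times-bounded-Lipschitz-is-Lipschitz fact once and then apply it mechanically.
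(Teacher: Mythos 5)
Your first paragraph (the Barbalat-type argument) is correct and is essentially identical to the paper's proof of the first assertion. The problem is in the second assertion. Your step (a) --- ``by Lemma~\ref{x is bounded}, for $t$ large enough $x(t)$ lies in a fixed compact box $\alpha\bfone \le x(t) \le \beta\bfone$ with $\alpha,\beta>0$'' --- is false. Lemma~\ref{x is bounded} gives only the \emph{decaying} lower bound $x(t)\ge x(0)e^{-t}$, and in fact the coordinates $x_e$ with $e$ outside the support of every optimal solution do converge to $0$ (this is part of what the paper ultimately proves). Since the Lipschitz constant $\gamma$ of Lemma~\ref{Locally Lipschitz, General Case} scales like $(\beta/\alpha)^n$, there is no single box on which you can invoke that lemma uniformly for all large $t$; the ``assemble bounded Lipschitz pieces on the eventual invariant box'' step therefore collapses, and with it the Lipschitz continuity of $t\mapsto q(t)$, which is the only nontrivial ingredient. (The factors $r_e=c_e/x_e$ you list in (c) are likewise unbounded as $t\to\infty$, though this particular point is cosmetic because $r_e$ only ever appears multiplied by $x_e$ or $x_e^2$ in the functions at hand.)

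The repair --- and this is exactly what the paper does --- is to use that Lipschitz continuity in $t$ only requires control over a short time window of fixed length. By Lemma~\ref{Simple Facts}, $-1\le \dot x_e/x_e\le K$ for $e\in P$ with $K=8\D^2\onenorm{b}\onenorm{c}/\cmin$, so on $[t,t+\varepsilon]$ with $\varepsilon\le 1/(4K)$ every coordinate changes only by a multiplicative factor in $[1-2K\varepsilon,\,1+2K\varepsilon]$. One then applies Lemma~\ref{Locally Lipschitz, General Case} with $\alpha,\beta$ chosen \emph{relative to $x(t)$ itself}, so that the ratio $\beta/\alpha$ --- the only way the lower bound enters the constant --- is bounded by an absolute constant, while $\onenorm{x(t+\varepsilon)-x(t)}=O(\varepsilon)$ by the uniform upper bound on $x$. (The paper also reduces to $e\in P$ first, since $q_Z$ is an affine function of $q_P$, and disposes of the divisions by $\Copt$ in $h(t)$ via $\Copt\ge 1/2$ for large $t$, as you do.) Without some argument of this ``bounded relative variation'' type, your proof of the second assertion does not go through.
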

\begin{proof}
	If $f(t)$ does not converge to zero, there is $\varepsilon> 0$ and an infinite unbounded sequence $t_1$, $t_2$, \ldots  such that $f(t_i) \ge \varepsilon$ for all $i$. Since $f$ is Lipschitz continuous there is $\delta > 0$ such that $f(t_i') \ge \varepsilon/2$ for $t'_i \in [t_i,t_i + \delta]$ and all $i$. Hence, the integral $\int_0^\infty f(t) dt$ is unbounded.

	Since $\dot{x}_{e}$ is continuous and bounded (by Lemma~\ref{x is bounded}), $x_e$ is
	Lipschitz continuous. Thus, it is enough to show that $q_e$ is
	Lipschitz continuous for all $e$.  Since $q_Z$ (recall that $Z = \set{e}{c_e = 0}$ and $P = [m] \setminus Z$) is an affine function of $q_P$, it suffices to establish the claim for $e \in P$. 
	So let $e\in P$ be such that $c_e>0$.
	First, we claim that $x_e(t+\varepsilon) \le (1+2K \varepsilon) x_e$ for all
	$\varepsilon \le K/4$, where $K=8\D^2\onenorm{b}\onenorm{c}/\cmin$.
	Assume that this is not the case. Let $$\varepsilon=\inf \{ \delta \le 1/4K : x_e(t+\delta) > (1+2K\delta) x_e(t) \},$$
	then $\varepsilon > 0$ (since $\dot{x}_{e}(t) \le K x_e(t)$ by Lemma
	\ref{Simple Facts}) and, by continuity, $x_e(t+\varepsilon) \ge
	(1+2K\varepsilon) x_e(t)$.
	There must be $t' \in [t,t+\varepsilon]$ such that $\dot{x}_{e}(t') = 2K x_e(t)$. On the other hand,
	\begin{align*}
		\dot{x}_{e}(t') \le K x_e(t') \le K (1+2K \varepsilon) x_e(t)  \le K (1+2K/4K) x_e(t) < 2 K x_e(t),
	\end{align*}
	which is a contradiction. Thus, $x_e(t+\varepsilon) \le (1+2K \varepsilon) x_e$ for all $\varepsilon \le 1/4K$. Similarly, $x_e(t+\varepsilon) \ge (1-2K \varepsilon) x_e$.
	Now, let $\alpha = (1 - 2K \varepsilon) x_e$ and $\beta = (1 + 2K \varepsilon) x_e$. Then
	\[ 
	\abs{\abs{q_e(t+\delta)} - \abs{q_e(t)}} \le M \onenorm{x(t + \delta) - x(t)} 
	\le M m (4K \varepsilon) x_e \le 8\varepsilon M m K \D \onenorm{b/\gamma_{A}}, 
	\]
	since $x_e \le 2\D\onenorm{b/\gamma_{A}}$ for sufficiently large $t$ and where $M$ is as in Lemma~\ref{Locally Lipschitz, General Case}.
	Since $\Copt$ is at least $1/2$ for all sufficiently large $t$, the division by $\Copt$ and $\Copt^2$ in the definition of $h(t)$ does not affect the claim.
\end{proof}

\begin{lemma}\label{xgoestoq}
	For all $e \in [m]$ of positive cost, it holds that $\abs{x_e - \abs{q_e}} \rightarrow 0$ 
	as $t$ goes to infinity.
\end{lemma}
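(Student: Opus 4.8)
The plan is to mirror the two-case split from the proof that all trajectories converge to $F$, and in each case to exhibit a non-negative, non-increasing potential $\Phi(t)$ whose total decrease $\int_0^\infty(-\dot\Phi)\,dt$ is finite and bounds $\tfrac12\sum_e r_e\,\delta_e(t)^2$ from above, where $\delta_e$ is a suitable normalization of $x_e-\abs{q_e}$. Since $r_e=c_e/x_e\ge c_e/(2\D\onenorm{b/\gamma_A})$ for all large $t$ by Lemma~\ref{Simple Facts}, this gives $\int_0^\infty\delta_e(t)^2\,dt<\infty$ for each $e\in P$; and since $x_e$ is Lipschitz and bounded (Lemma~\ref{existence}) and $q_e$ is Lipschitz (Lemma~\ref{Lipschitz}) and bounded (Lemma~\ref{upper bound on q}) for $e\in P$, the function $\delta_e^2$ is Lipschitz, so Lemma~\ref{Lipschitz} forces $\delta_e(t)\to0$. (If $P=\emptyset$ the statement is vacuous, so I may assume $\cost(x(t))=c^Tx(t)>0$ for all $t$.)

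In the first case, where the trajectory enters $\XX_1$ at some time $t_0$, I would take $\Phi(t)=\cost(x(t))$. By Lemma~\ref{dominating states}, $x(t)\in\XX_1$ for all $t\ge t_0$, so $\cost(x(t))$ is non-increasing (Theorem~\ref{Lyapunov functions}(2)) and bounded below by $0$, whence $\int_{t_0}^\infty\big(-\tfrac{d}{dt}\cost(x)\big)\,dt<\infty$. Writing $R=\diag(c_e/x_e)$ and using $c=Rx$ gives $-\tfrac{d}{dt}\cost(x)=E(x)-x^TR\abs{q}$; combining this with the identity $\sum_e r_e(x_e-\abs{q_e})^2=E(x)-2x^TR\abs{q}+E(q)$ and the inequality $E(q)\le E(x)$ (valid since $x\in\XX_1$ dominates a feasible solution) yields $-\tfrac{d}{dt}\cost(x)\ge\tfrac12\sum_e r_e(x_e-\abs{q_e})^2$. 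Taking $\delta_e=x_e-\abs{q_e}$ then finishes this case via the scheme above.

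In the second case, where the trajectory never enters $\XX_1$, so $\Copt(t)<1$ for all $t$, I would take $\Phi(t)=V(t)=\max_{d\in\Y_A}(V_d+1-C_d)$, which is Lipschitz and bounded, and which satisfies $\dot V(t)\le h(t)\le0$ wherever it is differentiable (at such $t$ the maximum is attained at a $d$ realizing $\min_{d'}C_{d'}(t)=\Copt(t)$, and then $\dot V=\dot V_d-\dot C_d\le h(t)-(1-\Copt)\le h(t)$ by Theorem~\ref{Lyapunov functions}(1),(3),(4) and $\Copt<1$); hence $\int_0^\infty(-h(t))\,dt\le\int_0^\infty(-\dot V(t))\,dt<\infty$. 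The Cauchy--Schwarz step in the proof of Theorem~\ref{Lyapunov functions}(4) gives $\sum_e r_e\tfrac{x_e}{\Copt}\abs{q_e}\le E(x/\Copt)^{1/2}E(q)^{1/2}$, and since $x/\Copt$ dominates a feasible solution (Lemma~\ref{dual}) one has $E(q)\le E(x/\Copt)$; plugging these into $\sum_e r_e(x_e/\Copt-\abs{q_e})^2=E(x/\Copt)-2\sum_e r_e\tfrac{x_e}{\Copt}\abs{q_e}+E(q)$ gives $-h(t)\ge\tfrac12\sum_e r_e(x_e/\Copt-\abs{q_e})^2$. With $\delta_e=x_e/\Copt-\abs{q_e}$ the scheme yields $\delta_e(t)\to0$ for $e\in P$ (here $\Copt$ is Lipschitz and, by Lemma~\ref{dominating states}, bounded away from $0$, so $x_e/\Copt$ is Lipschitz as required). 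Finally $\Copt(t)\to1$, since $\Copt(t)<1$ by hypothesis while $\liminf_t\Copt(t)\ge1$ by Lemma~\ref{dominating states}; as $\abs{q_e}$ is bounded, $x_e-\abs{q_e}=\Copt\cdot\delta_e+(\Copt-1)\abs{q_e}\to0$, the desired conclusion.

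The substantive points are the two energy inequalities $E(q)\le E(x)$ and $E(q)\le E(x/\Copt)$, which both reduce to the domination facts $x,x/\Copt\in\Xdom$ together with the minimality of $q$, and the verification of the Lipschitz and integrability hypotheses of Lemma~\ref{Lipschitz} for $\delta_e^2$ — in Case~2 this requires $\Copt$ to be Lipschitz and uniformly bounded below, which holds because $\Copt$ is the minimum of the finitely many maps $d^Tx$ and each $C_d(t)\ge\min\{1,C_d(0)\}>0$. I expect no deeper obstacle; the only genuine limitation is structural: the estimates carry no information about indices $e\in Z$, where $r_e=0$ makes all the quadratic terms vanish, which is exactly why the statement is restricted to coordinates of positive cost.
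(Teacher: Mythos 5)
Your proof is correct and takes essentially the same route as the paper's: the same two-case split on whether the trajectory enters $\XX_1$, the same energy inequalities $E(q)\le E(x)$ resp.\ $E(q)\le E(x/\Copt)$ feeding the expansion of $\sum_e r_e\delta_e^2$, the same lower bound on $r_e$ for $e\in P$, and the closing step $\Copt\to 1$. The only (immaterial) difference is where Lemma~\ref{Lipschitz} is invoked — the paper applies it to the aggregate quantity $c^Tx - c^T\abs{q}$ (resp.\ $-h(t)$) and then bounds the quadratic sum by it, whereas you integrate first and apply it to $\delta_e^2$ coordinatewise.
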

\begin{proof}
	For a trajectory ultimately running in $\XX_1$, we showed $\frac{d}{dt} \cost(x) \le x^{\rot} R \abs{q} - x^{\rot} R x \le 0$ with equality if and only if $x = \abs{q}$. Also, $E(q) \le E(x)$, since $x$ dominates a feasible solution. Furthermore, $x^{\rot} R \abs{q} - x^{\rot} R x$ goes to zero using Lemma~\ref{Lipschitz}. Thus
	\[
		\sum_e r_e (x_e - \abs{q_e})^2 
		= \sum_e r_e x_e^2 + \sum_e r_e q_e^2 - 2 \sum_e r_e x_e \abs{q_e} 
		\le 2\Big(\sum_e r_e x_e^2 - \sum_e r_e x_e \abs{q_e}\Big)
	\]
	goes to zero.
	Next observe that there is a constant $C$ such that $x_e(t) \le C$ for all $e$ and $t$ as a result of Lemma~\ref{x is bounded}.
	Also $\cmin > 0$ and hence $r_e \ge \cmin/C$. Thus $\sum_e r_e (x_e - \abs{q_e})^2 \le \tfrac{C}{\cmin} \cdot \sum_e  (x_e - \abs{q_e})^2$ and hence $\abs{x_e - \abs{q_e}} \rightarrow 0$ for every $e$ with positive cost.
	For trajectories outside $\XX_1$, we argue about $|\abs{q_e} - \frac{x}{\Copt}|$ and use $\Copt \rightarrow 1$, namely
	\begin{align*}
		\sum_e r_e (\tfrac{x_e}{\Copt} - \abs{q_e})^2
		  \le 2 \Big(\sum_e r_e (\tfrac{x_e}{\Copt})^2 - \sum_e r_e \tfrac{x_e}{\Copt} \abs{q_e}\Big)  \rightarrow 0.&\qedhere
	\end{align*}
\end{proof}

Note that the above does not say anything about the indices $e\in Z$ (with $c_e=0$). Recall that $A_P q_P + A_Z q_Z = b$ and that the columns of $A_Z$ are independent. Thus, $q_Z$ is uniquely determined by $q_P$.
For the undirected shortest path problem, the potential difference $p^{\rot} b$ between source and sink converges to the length of  a shortest source-sink path. If an edge with positive cost is used by some shortest undirected path, then no shortest undirected path uses it with the opposite direction. We prove the natural generalizations.

Let $\OPT$ be the set of optimal solutions to \eqref{ULP} and let 
$\Eopt = \cup_{x \in \OPT} \supp(x)$ be the set of columns used in some optimal solution. 
The columns of positive cost in $\Eopt$ can be consistently oriented as the following Lemma shows.

\begin{lemma}
	Let $x_1^*$ and $x_2^*$ be optimal solutions to \eqref{ULP} and let $f$ and $g$ be feasible solutions with $\abs{f} = x^*_1$ and $\abs{g} = x^*_2$. Then there is no $e$ such that $f_e g_e < 0$ and $c_e > 0$.
\end{lemma}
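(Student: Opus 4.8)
The plan is to argue by contradiction with a simple averaging (convexity) argument; in fact the statement needs only $c \ge 0$ together with feasibility of $f$ and $g$, not the full set of assumptions of this section. Suppose there were an index $e$ with $f_e g_e < 0$ and $c_e > 0$. In particular $f_e$ and $g_e$ are both nonzero and of opposite sign. Set $h := \tfrac{1}{2}(f+g)$. The first step is to note that $h$ is feasible: $Ah = \tfrac{1}{2}(Af + Ag) = b$, so $(h,\abs{h})$ is a feasible point of~\eqref{ULP} and therefore $c^T\abs{h} \ge \opt$.

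The second step is to compare $c^T\abs{h}$ with the costs of $f$ and $g$. Componentwise the triangle inequality gives $\abs{h_{e'}} \le \tfrac{1}{2}(\abs{f_{e'}} + \abs{g_{e'}})$ for every $e'$, and at the index $e$ this inequality is \emph{strict}, since $f_e$ and $g_e$ are nonzero with opposite signs, so that $\abs{f_e + g_e} < \abs{f_e} + \abs{g_e}$. Multiplying by the nonnegative weights $c_{e'}$ and summing over $e'$, the strict gain at $e$ survives precisely because $c_e > 0$; hence $c^T\abs{h} < \tfrac{1}{2}\bigl(c^T\abs{f} + c^T\abs{g}\bigr)$.

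The third step closes the loop: since $x_1^* = \abs{f}$ and $x_2^* = \abs{g}$ are optimal solutions of~\eqref{ULP}, we have $c^T\abs{f} = c^T\abs{g} = \opt$, so the previous inequality reads $c^T\abs{h} < \opt$, contradicting $c^T\abs{h} \ge \opt$. Therefore no such index $e$ exists, which is the claim. (The analogous conclusion for indices with $c_e = 0$ is genuinely false in general, which is why the hypothesis $c_e > 0$ appears.)

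I do not expect a real obstacle here; the only points that need a little care are (i) verifying that improving the single coordinate $e$ strictly lowers the \emph{weighted} sum, which is exactly where $c_e > 0$ is used, and (ii) recording that the optimal value $\opt$ equals both $c^T\abs{f}$ and $c^T\abs{g}$, which follows from $\abs{f} = x_1^*$, $\abs{g} = x_2^*$ and the optimality of $x_1^*$ and $x_2^*$. Everything else is the componentwise triangle inequality and linearity of $A$.
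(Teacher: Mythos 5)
Your proof is correct and follows essentially the same route as the paper's: both form a convex combination of $f$ and $g$, invoke the strict triangle inequality at the index $e$ where $f_e g_e<0$, use $c_e>0$ to propagate that strictness into the cost, and contradict optimality. The only (immaterial) difference is the choice of combination — you take the midpoint $\tfrac12(f+g)$, while the paper uses the weights $g_e/(g_e-f_e)$ and $-f_e/(g_e-f_e)$ so that the combined vector vanishes at coordinate $e$.
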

\begin{proof}
	Assume otherwise. Then $\abs{g_e - f_e} = \abs{g_e} + \abs{f_e} > 0$. Consider $h = (g_e f  - f_e g)/(g_e - f_e)$. Then $Ah = (g_e Af - f_e Ag)/(g_e - f_e) = b$ and $h$ is feasible.
	Also, $h_e = \tfrac{g_e f_e - f_e g_e}{g_e - f_e} = 0$ and for every index $e'$, it holds that $\abs{h_{e'}} =\tfrac{\abs{g_e f_{e'} - f_e g_{e'}}}{\abs{g_e - f_e)}} \le \tfrac{\abs{g_e} \abs{f_{e'}} + \abs{f_e}\abs{ g_{e'}}}{\abs{g_e} + \abs{f_e}}$
	and hence
	\[ \cost(h) < \cost(f) + \cost(g) = \frac{|g_e|}{|g_e|+|f_e|}\cost(x_1^*) + \frac{|f_e|}{|g_e|+|f_e|}\cost(x_2^*) = \cost(x_1^*), \]
	a contradiction to the optimality of $x_1^*$ and $x_2^*$.
\end{proof}
By the preceding lemma, we can orient $A$ such that $f_e \ge 0$ whenever $\abs{f}$ is an optimal solution to~\eqref{ULP} and $c_e > 0$. We then call $A$ \emph{positively oriented}. 
\begin{lemma}\label{ptb->ctx*}
	It holds that $p^{\rot} b$ converges to the cost of an optimum solution of~\eqref{ULP}. If $A$ is positively oriented, then $\liminf_{t \rightarrow \infty} A_e^{\rot} p \ge 0$ for all $e$.
\end{lemma}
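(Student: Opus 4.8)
The plan is to combine three facts already established: (i) all trajectories converge to the equilibrium set $F$; (ii) the cost-like functions $c^T\abs{q}-c^Tx$ (when running in $\XX_1$) and $h(t)$ (otherwise) are integrable and Lipschitz, hence go to zero by Lemma~\ref{Lipschitz}; and (iii) $E(q)=b^Tp$ for the potential $p$ associated with the current $x$ (Lemma~\ref{$E(q) = b^T p$, general case}). Since $x(t)$ converges to $F$ and $F$ is contained in the set of points $\abs{g}$ over basic feasible solutions $g$, which has finitely many cost values, $c^Tx(t)$ converges to the cost $c^Tg$ of some equilibrium. The core of the argument is to rule out that this limiting cost exceeds $\opt$.

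First I would handle the easy direction: $p^Tb=E(q)$ is, up to the scaling by $\Copt$ in the non-dominating case, exactly the quantity whose ``energy gap'' to $E(x)$ (resp.\ $E(x/\Copt)$) we have shown tends to zero. Concretely, $E(q)\le E(x)$ always, and from $\sum_e r_e(x_e-\abs{q_e})^2\to0$ (proved inside Lemma~\ref{xgoestoq}) together with $E(q)=\sum_e r_e x_e\abs{q_e}$ one gets $E(q)-E(x)\to0$, hence $p^Tb-c^Tx(t)\to0$; since $\Copt\to1$ the same holds for trajectories outside $\XX_1$ after dividing by $\Copt$. Therefore $p^Tb$ converges to the same limit as $c^Tx(t)$, namely $c^Tg$ for some equilibrium $g$. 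To see this limit equals $\opt$, I would adapt the Lyapunov argument from Section~\ref{simple instances}'s final theorem: pick an optimal $x^*$ with feasible $f^*$, $\abs{f^*}=x^*$, set $W(t)=\sum_e x_e^* c_e\ln x_e$, and compute $\dot W=\sum_e x_e^*(\abs{A_e^Tp}-c_e)\ge\sum_e x_e^*A_e^Tp-c^Tx^*=f^{*T}A^Tp-c^Tx^*=b^Tp-\opt$ using $\abs{f^*}=x^*$ and $A^Tp$-sign considerations. If the limit of $b^Tp$ were $\opt+2\delta$ with $\delta>0$, then $\dot W\ge\delta$ eventually, forcing $W\to\infty$, contradicting boundedness of $x$ (Lemma~\ref{x is bounded}). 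Hence $p^Tb\to\opt$, and the limiting equilibrium cost is $\opt$.

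For the second claim, suppose $A$ is positively oriented and some $e$ has $\liminf_t A_e^Tp(t)<0$. For $e\in Z$ this is impossible since $A_e^Tp=0$ identically (the bottom block of~\eqref{both blocks}), so assume $c_e>0$. Using $q_e=(x_e/c_e)A_e^Tp$ and $x_e>0$, the sign of $q_e$ matches that of $A_e^Tp$, so along a subsequence $q_e<0$ and bounded away from $0$ (using the lower bound on $x_e$ from Lemma~\ref{Simple Facts}(3) restricted to the support of an optimal $f^*$, or more carefully a uniform-in-$t$ lower bound). By Lemma~\ref{xgoestoq}, $x_e-\abs{q_e}\to0$, so $x_e$ is bounded below along that subsequence; but then $x(t)$ along that subsequence has a limit point $\bar x\in F$ with $\bar x_e>0$ and the associated $q$ satisfying $q_e<0$. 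By Theorem~\ref{equilibrium} applied at $\bar x$ (orienting $A$ so that the equilibrium flow is nonnegative), every feasible $f$ with $\supp(f)\subseteq\supp(\bar x)$ is optimal; in particular $e\in\Eopt$ with the equilibrium orientation disagreeing with the positive orientation on $e$ — contradicting the previous lemma, which shows no two optimal solutions disagree in sign on a positive-cost edge.

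The main obstacle I anticipate is the bookkeeping around \emph{orientations}: ``positively oriented'' fixes signs by optimal solutions, while the equilibrium characterization and the formula $q_e=(x_e/c_e)A_e^Tp$ naturally want $A$ oriented so that the \emph{current} $q$ is nonnegative, and these need not coincide off the optimal support. The clean way through is to argue entirely in terms of the sign of $A_e^Tp$ (an orientation-independent statement once we fix the column signs once and for all by positive orientation), use $q_e=(x_e/c_e)A_e^Tp$ to transfer sign information to $q_e$, then invoke $x_e-\abs{q_e}\to0$ and compactness to land on an equilibrium limit point, and finally derive the contradiction from Theorem~\ref{equilibrium} plus the sign-consistency lemma for optimal solutions. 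A secondary subtlety is justifying a uniform positive lower bound on $x_e$ along the bad subsequence; Lemma~\ref{Simple Facts}(3) gives this for indices in the support of \emph{some} feasible $f$, and one must check that the relevant $e$ lies in such a support — which follows because $\liminf A_e^Tp<0$ together with $b^Tp\to\opt$ and $\sum_{e'} f_{e'}A_{e'}^Tp=b^Tp$ forces $e$ to carry flow in an optimal (hence dominating) solution.
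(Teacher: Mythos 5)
Your proof of the first claim is essentially the paper's: the decisive step in both is the potential $W=\sum_e x_e^*c_e\ln x_e$ with $\dot W\ge b^Tp-\opt$, combined with boundedness of $x$ and finiteness of the set of possible limit costs. Your route to ``$b^Tp$ converges to an element of a finite set'' (via $E(x)-E(q)\to 0$ extracted from Lemma~\ref{xgoestoq} plus convergence of $x(t)$ to $F$) differs cosmetically from the paper's (which compares $b^Tp$ to $c^Tf$ for a basic feasible solution whose support $x$ dominates), but both are sound.

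The second claim is where you diverge, and there is a gap. The paper stays with $W$: writing $\dot W=\sum_e x_e^*(\abs{A_e^Tp}-A_e^Tp)+(b^Tp-\cost(x^*))$, the second summand tends to zero by the first claim, the first summand is nonnegative, and $W$ is bounded above; hence the nonnegative summand has finite integral, and Lipschitz continuity of $t\mapsto A_e^Tp$ forces $\abs{A_e^Tp}-A_e^Tp\to0$ for every $e$ with $x_e^*>0$. Your argument instead needs $q_e$ bounded away from zero along a subsequence where $A_e^Tp<-\varepsilon$, which via $q_e=(x_e/c_e)A_e^Tp$ requires $x_e(t)$ bounded below along that subsequence. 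This is exactly what you cannot secure: Lemma~\ref{Simple Facts}(3) only bounds $x_{e'}(t)$ below for $e'$ in the support of some feasible $f$ \emph{depending on $t$}, and your fallback --- that $\liminf A_e^Tp<0$ together with $\sum_{e'}f_{e'}A_{e'}^Tp=b^Tp$ ``forces $e$ to carry flow in an optimal solution'' --- does not follow; nothing about a negative $A_e^Tp$ prevents $x_e(t)\to0$, in which case $q_e\to0$, the relation $x_e-\abs{q_e}\to0$ holds trivially, no equilibrium limit point with $\bar x_e>0$ exists, and your contradiction evaporates. There are two secondary soft spots in the same paragraph: passing from $x(t_k)\to\bar x$ to ``the minimum energy solution at $\bar x$ satisfies $\bar q_e<0$'' needs continuity of $x\mapsto q$ at $\bar x$, which Lemma~\ref{Locally Lipschitz, General Case} provides only away from the boundary, while other coordinates of $\bar x$ may vanish; and Theorem~\ref{equilibrium} yields equal cost on $\supp(\bar x)$, not optimality --- you must invoke the first claim to conclude $c^T\bar x=\opt$ before applying the sign-consistency lemma. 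The clean repair is to drop the limit-point route and run the paper's integrated-potential argument.
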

\begin{proof}
	Let $x^*$ be an optimal solution of~\eqref{ULP}. We first show convergence to a point in $L$ and then convergence to $c^{\rot} x^*$.
	Let $\varepsilon > 0$ be arbitrary. Consider any time $t \ge t_0$, where $t_0$ and $C$ as in Lemma~\ref{Simple Facts} and moreover $\abs{\abs{q_e} - x_e} \le \tfrac{C\varepsilon}{\cmax}$ for every $e\in P$. Then $x_e \ge C$ for all indices $e$ in the support of some basic feasible solution $f$. 
	For every $e\in P$, we have $q_e = \tfrac{x_e}{c_e} A_e^{\rot} p$. We also assume $q \ge 0$ by possibly reorienting columns of $A$. Hence
	\[ 
		\left|c_{e}-A_{e}^{\rot}p\right|=\left|1-\frac{q_{e}}{x_{e}}\right|\cdot\big|c_{e}=\left|\frac{x_{e}-q_{e}}{x_{e}}\right|\cdot c_{e}\le\frac{\cmax}{C}\left|q_{e}-x_{e}\right|\le\varepsilon.
	\]
	For indices $e\in Z$, we have $A_e^{\rot}p = 0 = c_e$.
	Since $\infnorm{f}\leq \D \onenorm{b/\gamma_{A}}$ (Lemma~\ref{lem:fsDg}), we conclude
	\[   
	c^{\rot} f - p^{\rot} b = \sum_{e \in \supp(f)} (c_e - p^{\rot} A_e) f_e \le \epsilon \sum_e \abs{f_e} 
	\le \epsilon \cdot m \D \onenorm{b/\gamma_{A}}.
	\]
	Since the set $L$ is finite, we can let $\epsilon > 0$ be smaller than half the minimal distance between elements in $L$. By the preceding paragraph, there is for all sufficiently large $t$, a basic feasible solution $f$ such that $\abs{c^{\rot} f - b^{\rot} p} \le \epsilon$. Since $b^{\rot} p$ is a continuous function of time, $c^{\rot} f$ must become constant. We have now shown that $b^{\rot} p$ converges to an element in $L$.
	We will next show that $b^{\rot}p$ converges to the optimum cost. Let $x^*$ be an optimum solution to~\eqref{ULP} and let $W = \sum_e x_e^* c_e \ln x_e$. Since $x(t)$ is bounded, $W$ is bounded. We assume that $A$ is positively oriented, thus there is a feasible $f^*$ with $\abs{f^*} = x^*$ and $f_e^* \ge 0$ whenever $c_e > 0$. By reorienting zero cost columns, we may assume $f^*_e \ge 0$ for all $e$. Then $A x^* = b$. We have
	\begin{align*}
		\dot{W} &= \sum_e x^*_e c_e \frac{\abs{q_e} - x_e}{x_e} \\
		  & = \sum_{e;\ c_e > 0} x^*_e  \left|A_{e}^{\rot}p\right| - \cost(x^*) & \text{since $q_e = \tfrac{x_e}{c_e} A_e^{\rot} p$ whenever $c_e > 0$} \\
		  & = \sum_e x^*_e \left|A_{e}^{\rot}p\right| - \cost(x^*)             & \text{since $A_e^{\rot} p = 0$ whenever $c_e = 0$}             \\
		&= \sum_e x^*_e \Big(\left|A_{e}^{\rot}p\right| - A_e^{\rot} p\Big) + b^{\rot} p - \cost(x^*)
	\end{align*}
	and hence $b^{\rot} p - \cost(x^*)$ must converge to zero; note that $b^{\rot}p$ is Lipschitz continuous in $t$.

	Similarly,  $\abs{A_e^{\rot} p} - A_e^{\rot} p$ must converge to zero whenever $x_e^* > 0$. This implies $\liminf A_e^{\rot} p \ge 0$. Assume otherwise, i.e., for every $\varepsilon > 0$, we have $A_e^{\rot}p < -\varepsilon$ for arbitrarily large $t$. Since $p$ is Lipschitz continuous in $t$, there is a $\delta > 0$ such that $A_e^{\rot}p <  -\varepsilon/2$ for infinitely many disjoint intervals of length $\delta$. In these intervals,  $\abs{A_e^{\rot} p} - A_e^{\rot} p \ge \varepsilon$ and hence
	$W$ must grow beyond any bound, a contradiction.
\end{proof}

\begin{corollary}
	$E(x)$ and $\cost(x)$ converge to $c^{\rot} x^*$, whereas $x$ and $|q|$ converge to $\OPT$.
    If the optimum solution is unique, $x$ and $|q|$ converge to it. Moreover,
    if $e\notin \Eopt$, $x_e$ and $\abs{q_e}$ converge to zero.
\end{corollary}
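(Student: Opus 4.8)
The plan is to read off the corollary from three facts already in hand: that every trajectory converges to the equilibrium set $F$; that $E(q)=b^Tp$ converges to the optimum cost $c^Tx^*$ (Lemma~\ref{$E(q) = b^T p$, general case} together with Lemma~\ref{ptb->ctx*}); and that $x_e-\abs{q_e}\to0$ for every $e\in P$ (Lemma~\ref{xgoestoq}). I will also use the boundedness of $x(t)$ (Lemma~\ref{x is bounded}), the bound on $\abs{A_e^Tp}$ for large $t$ (Lemma~\ref{Simple Facts}), and the finiteness of the set $L$ of equilibrium costs.

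\emph{Cost.} Since $E(x)=\cost(x)=c^Tx$ identically, it suffices to prove $c^Tx(t)\to c^Tx^*$. Using $q_e=(x_e/c_e)A_e^Tp$ on $e\in P$ and $r_e=c_e/x_e=0$ on $e\in Z$, a short computation yields $c^T\abs{q}-E(q)=\sum_{e\in P}\abs{A_e^Tp}\,(x_e-\abs{q_e})$. Here each $\abs{A_e^Tp}$ is bounded for large $t$ by Lemma~\ref{Simple Facts} and each $x_e-\abs{q_e}\to0$ by Lemma~\ref{xgoestoq}, so $c^T\abs{q}-E(q)\to0$; likewise $c^Tx-c^T\abs{q}=\sum_{e\in P}c_e(x_e-\abs{q_e})\to0$. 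Hence $c^Tx(t)-E(q(t))\to0$, and since $E(q)=b^Tp\to c^Tx^*$, we get $c^Tx(t)\to c^Tx^*$; the same limit then holds for $E(x)$, $\cost(x)$ and $c^T\abs{q}$.

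\emph{Convergence to $\OPT$.} First, $\OPT=F\cap\{x:c^Tx=c^Tx^*\}$: an element of the right-hand side is an equilibrium $\abs q$ with $q$ feasible and $c^T\abs q=c^Tx^*$, hence optimal, and conversely every optimum is an equilibrium by Theorem~\ref{optima are equilibria} (so in particular $c^Tx^*\in L$). Since $\dist(x(t),F)\to0$ and $c^Tx(t)\to c^Tx^*$ while the finitely many costs in $L\setminus\{c^Tx^*\}$ stay bounded away from $c^Tx^*$, for every $\varepsilon>0$ and large $t$ one can pick $y\in F$ with $\infnorm{x(t)-y}<\varepsilon$ and $c^Ty=c^Tx^*$, i.e. $y\in\OPT$; hence $\dist(x(t),\OPT)\to0$, and $x(t)\to x^*$ if the optimum is unique, while $x_e(t)\to0$ whenever $e\notin\Eopt$ (as then $x^*_e=0$ for all $x^*\in\OPT$). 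For $\abs q$, fix a subsequence with $x(t_k)\to x^*\in\OPT$ and orient $A$ positively, so the feasible $f^*$ with $\abs{f^*}=x^*$ has $f^*_P=x^*_P$. Then $q(t_k)\to f^*$: on $e\in P$ this uses $q_e=(x_e/c_e)A_e^Tp$ together with $x_e(t_k)\to x^*_e$, $\abs{q_e}-x_e\to0$, the boundedness of $A_e^Tp$, and the one-sided bound $\liminf_tA_e^Tp\ge0$ of Lemma~\ref{ptb->ctx*} to fix the sign when $x^*_e>0$; on $e\in Z$, $q_Z$ is a continuous affine function of $q_P$ through $A_Pq_P+A_Zq_Z=b$ with $A_Z$ of full column rank, so $q_Z(t_k)\to f^*_Z$. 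Since $x(t)$ is bounded and every convergent subsequence produces such a limit in $\OPT$, $\dist(\abs{q(t)},\OPT)\to0$, with $\abs{q(t)}\to x^*$ when the optimum is unique and $\abs{q_e(t)}\to0$ for $e\notin\Eopt$.

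The step I expect to be the main obstacle is the last one: while $c^Tx$ and $\abs q$ on $P$ are controlled directly, getting convergence of $q$ itself --- in particular on the zero-cost block $Z$, where $q_Z$ is only pinned down through $q_P$ --- needs the compactness/subsequence argument and the sign bookkeeping that upgrades $\abs{q_e}\to x^*_e$ to $q_e\to f^*_e$ via $\liminf_tA_e^Tp\ge0$. I deliberately route the ``distance to $\OPT$'' argument through finiteness of $L$ rather than through a closedness property of the support-stratified set $F$, to sidestep having to control limits of equilibria on the boundary where supports collapse.
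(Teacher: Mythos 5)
Your proof is correct and follows the same route as the paper: combine convergence of the trajectory to the equilibrium set $F$, the limit $b^Tp\to c^Tx^*$, the identification of $\OPT$ with the equilibria of optimal cost (via Theorem~\ref{optima are equilibria} and the finiteness of $L$), and $x_e-\abs{q_e}\to 0$ on $P$. The paper's own proof is essentially a two-line assertion of these facts ("follows from $E(x)=\cost(x)=b^Tp$ and the preceding Lemma"), so your write-up supplies details the paper leaves implicit --- notably the estimate $c^T\abs{q}-E(q)=\sum_{e\in P}\abs{A_e^Tp}(x_e-\abs{q_e})\to 0$ that makes $\cost(x)\approx b^Tp$ rigorous, and the subsequence-plus-sign argument that upgrades the convergence of $\abs{q}$ on $P$ to convergence of $q$ (including the zero-cost block $Z$) --- all of which are consistent with the lemmas the paper establishes.
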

\begin{proof}
	The first part follows from $E(x) = \cost(x) = b^{\rot}p$ and the preceding Lemma. 
	Thus $x$ and $q$ converge to the set $F$ of equilibrium points, see \eqref{eq:defF}, 
	that are optimum solutions to $\eqref{ULP}$. 
	Since every optimum solution is an equilibrium point by 
	Theorem~\ref{optima are equilibria}, $x$ and $q$ converge to $\OPT$.
	For $e \not\in \Eopt$, $f_e = 0$ for every $f \in F\cap\OPT$.
	Since $x$ and $\abs{q}$ converge to $F\cap\OPT$,
	$x_e$ and $\abs{q_e}$ converge to zero for every $e \in \Eopt$.
\end{proof}

\newpage

\section{Improved Convergence Results: Physarum-inspired dynamics}\label{directed discrete}\label{discretization}

In this section, we present in its full generality our main result on the 
Physarum-inspired dynamics \eqref{eq:DdirD}.

\subsection{Overview}

Inspired by the max-flow min-cut theorem, we consider the following primal-dual pair of linear programs:
the primal LP is given by 
$\max \set{t}{Af = t\cdot b;\ 0 \le f \le x}$
in variables $f \in \R^m$ and $t \in \R$, and its dual LP reads
$\min \set{ x^{\rot}z}{z \ge 0;\ z \ge A^{\rot}y;\ b^{\rot}y = 1}$
in variables $z \in \R^m$ and $y \in \R^n$. 
Since the dual feasible region does not contain a line and the minimum is bounded, 
the optimum is attained at a vertex, and in an optimum solution we have $z = \max\{0,A^{\rot}y\}$. 
Let $V$ be the set of vertices of the dual feasible region, and let 
$Y := \set{y}{(z,y) \in V}$ be the set of their projections on $y$-space.
Then, the dual optimum is given by
$\min\smSet{\max\{0,y^{\rot}A\}\cdot x}{y \in Y}$.
The set of \textit{strongly dominating} capacity vectors $x$ is defined as
\[
X := \set{x \in \R_{> 0}^m}{y^{\rot} A x > 0 \text{ for all } y \in Y}.\footnote{\label{fnExmSPP} 
	In the shortest path problem (recall that $b = e_1 - e_n$) 
	the set $\Y$ consists of all $y \in \sset{-1,+1}^n$ such that $y_1 = 1 = -y_n$, i.e., $y$ encodes a cut with $S = \set{i}{y_i = -1}$ and $\overline{S} = \set{i}{y_i = +1}$. 
	The condition $y^{\rot} A x > 0$ translates into 
	$\sum_{a\in E(S,\overline{S})}x_a-\sum_{a\in E(\overline{S},S) } x_a > 0$, i.e., every source-sink cut must have positive directed capacity. }
\]
Note that $\X$ contains the set of all scaled feasible solutions $\{x=tf\,:\,Af=b,\,f \ge 0,\,t>0\}$.

We next discuss the choice of step size. 
For $y \in Y$ and capacity vector $x$, let $\alpha(y,x) := y^{\rot} A x$. 
Further, let $\alpha(x) := \min \set{\alpha(y,x)}{y \in Y}$ and $\ahl:=\alpha(\xl)$. 
Then, for any $x\in X$ there is a feasible $f$ such that $0 \le f \le x/\alpha(x)$, 
see Lemma~\ref{lem:SDSgivesXCapF}. In particular, if $x$ is feasible then $\alpha(x)=1$,
since $\alpha(y,x) = 1$ for all $y \in Y$.
We partition the Physarum-inspired dynamics \eqref{eq:DdirD} into the following five regimes and 
define for each regime a fixed step size, see Subsection~\ref{subsec:SDCV}.

\begin{corollary}\label{cor:ahlConv}
	The Physarum-inspired dynamics \eqref{eq:DdirD} initialized with $\xz\in X$ and a step size $h$ satisfies:
	\begin{enumerate}
		\item If $\ahz = 1$, we work with $h \le \hprm $ and have $\ahl = 1$ for all $\ell$. 
		
		\item If $1/2 \leq \ahz < 1$, we work with $h \le \hprm /2$ and have $1-\delta\leq\ahl<1$ for 
		$\ell \ge h^{-1}\log(1/2\delta)$ and $\delta > 0$.
		
		\item If $1 < \ahz \leq 1/\hprm  $, we work with $h \le \hprm $ and have 
		$1<\ahl\leq1+\delta$ for $\ell \ge h^{-1}\cdot\log(1/\delta h_{0})$ and $\delta > 0$.
		
		\item If $0< \ahz < 1/2$, we work with $h \leq \ahz \hprm $ and have $1/2\leq\ahl <1$ for 
		$\ell \ge 1/h$.
		
		\item If $1/\hprm  < \ahz$, we work with $h \leq 1/4$ and have $1 < \ahl \leq 1/\hprm $
		for $\ell = \lfloor \log_{1/(1-h)}h_{0}(\alpha^{(0)}-1)/(1-h_{0}) \rfloor$.
	\end{enumerate}
	In each regime, we have $1 - \alpha^{(\ell + 1)} = (1 - h)(1 - \ahl)$.
\end{corollary}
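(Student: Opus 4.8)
The plan is to reduce all five bullets, together with the concluding displayed identity, to a single linear recurrence for $\alpha^{(\ell)}$, which I would establish first. Fix $\ell$, write $h$ for the (regime‑dependent but constant) step size, and take an arbitrary $y\in Y$. Substituting the update $x^{(\ell+1)} = (1-h)x^{(\ell)} + hq^{(\ell)}$, using $Aq^{(\ell)}=b$ (feasibility of the minimum‑energy solution) together with the dual constraint $b^Ty=1$ that holds for every $y\in Y$, one gets
\[
\alpha(y,x^{(\ell+1)}) = y^TA\bigl((1-h)x^{(\ell)}+hq^{(\ell)}\bigr) = (1-h)\,y^TAx^{(\ell)} + h\,y^Tb = (1-h)\,\alpha(y,x^{(\ell)}) + h ,
\]
i.e.\ $1-\alpha(y,x^{(\ell+1)}) = (1-h)\bigl(1-\alpha(y,x^{(\ell)})\bigr)$. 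The crucial step is to pass from these per‑$y$ identities to one for $\alpha^{(\ell)}=\min_{y\in Y}\alpha(y,x^{(\ell)})$: since $h<1$ in every regime (in fact $h\le 1/4$), the affine map $t\mapsto(1-h)t+h$ is strictly increasing and therefore commutes with the minimum over the finite set $Y$, which yields $\alpha^{(\ell+1)}=(1-h)\alpha^{(\ell)}+h$, equivalently $1-\alpha^{(\ell+1)}=(1-h)(1-\alpha^{(\ell)})$ — the concluding identity of the corollary. Unrolling gives the closed form $\alpha^{(\ell)}=1-(1-h)^{\ell}(1-\alpha^{(0)})$, which already shows that $\alpha^{(\ell)}$ is monotone, lies between $\alpha^{(0)}$ and $1$, remains $\ge\min(\alpha^{(0)},1)>0$, and hence that every $x^{(\ell)}$ stays strongly dominating.

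I would then instantiate this closed form in each of the five regimes; each is a short computation. If $\alpha^{(0)}=1$ then $\alpha^{(\ell)}\equiv1$. In the two ``near $1$'' regimes $1/2\le\alpha^{(0)}<1$ and $1<\alpha^{(0)}\le 1/h_0$ the gap $\lvert 1-\alpha^{(\ell)}\rvert=(1-h)^{\ell}\lvert 1-\alpha^{(0)}\rvert$ decays geometrically; combining $\lvert 1-\alpha^{(0)}\rvert\le 1/2$ (resp.\ $\alpha^{(0)}-1<1/h_0$) with the elementary estimate $(1-h)^{\ell}\le e^{-h\ell}$ (equivalently $\log\frac{1}{1-h}\ge h$) converts the target $\lvert 1-\alpha^{(\ell)}\rvert\le\delta$ into the stated counts $\ell\ge h^{-1}\log(1/2\delta)$ and $\ell\ge h^{-1}\log(1/\delta h_0)$, while the strict one‑sided bound $\alpha^{(\ell)}<1$ (resp.\ $\alpha^{(\ell)}>1$) holds for all $\ell$ by monotonicity. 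For $0<\alpha^{(0)}<1/2$ one has $1-\alpha^{(0)}\in(1/2,1)$, so for $\ell\ge 1/h$ the estimate $(1-h)^{\ell}\le e^{-1}<1/2$ gives $\alpha^{(\ell)}=1-(1-h)^{\ell}(1-\alpha^{(0)})\in(1/2,1)$. Finally, for $1/h_0<\alpha^{(0)}$ the bound $\alpha^{(\ell)}>1$ is immediate since both factors of $(1-h)^{\ell}(\alpha^{(0)}-1)$ are positive, and solving $(1-h)^{\ell}(\alpha^{(0)}-1)\le 1/h_0-1$ for $\ell$ gives $\ell\ge\log_{1/(1-h)}\bigl(h_0(\alpha^{(0)}-1)/(1-h_0)\bigr)$, i.e.\ the value stated in the corollary up to the rounding implicit in the floor.

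I do not foresee a genuine obstacle. The only step that requires care is the passage from the per‑$y$ identity to the recurrence for $\alpha^{(\ell)}$ itself, which is precisely where finiteness of $Y$ (the vertex set of the dual feasible region, a polyhedron that contains no line) and $h<1$ (so that the update is order‑preserving) are used; without both one would obtain only an inequality. Everything after that is bookkeeping with the geometric sequence $(1-h)^{\ell}$, the only mildly fiddly points being the rounding of the thresholds to integer step counts and the observation that the iteration bounds become vacuous (right‑hand side $\le 0$) once $\delta\ge1/2$, resp.\ $\delta\ge 1/h_0$, so no separate argument is needed there.
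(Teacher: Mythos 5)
Your derivation of the per-$y$ identity $\alpha(y,x^{(\ell+1)})=(1-h)\alpha(y,x^{(\ell)})+h$, the passage to the minimum over the finite vertex set $Y$, and the regime-by-regime bookkeeping with $(1-h)^{\ell}$ all match what the paper does (it routes the same computation through $r^{(\ell)}=b-Ax^{(\ell)}$ and Lemma~\ref{convergence to feasibility}). However, there is a genuine gap: you conclude that ``every $x^{(\ell)}$ stays strongly dominating'' from $\alpha^{(\ell)}>0$ alone, but membership in $X$ also requires $x^{(\ell)}>0$ \emph{componentwise}, and this is not automatic. In the directed dynamics $q^{(\ell)}$ is the minimum-energy feasible solution and may have negative entries, so $x^{(\ell+1)}=(1-h)x^{(\ell)}+hq^{(\ell)}$ can acquire a nonpositive coordinate if $h$ is too large — at which point $R^{(\ell+1)}$ is no longer a positive diagonal matrix, $L^{(\ell+1)}$ need not be invertible, and the recurrence you rely on (which presupposes $Aq^{(\ell)}=b$ with $q^{(\ell)}$ well defined at every step) breaks down.

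This is exactly what the regime-specific step-size bounds are for, and in your write-up they play no role beyond $h<1$. The paper's Lemma~\ref{lem:OneStep} shows, via Lemma~\ref{lem_ATp_UB} and Lemma~\ref{lem:SDSgivesXCapF}, that $\|A^{T}p^{(\ell)}\|_{\infty}\le \D\onenorm{c}/\ahl$, whence $h\le\min\{1/4,\ahl\hprm\}$ with $\hprm=c_{\min}/(4\D\onenorm{c})$ gives $x^{(\ell+1)}_e\ge \tfrac14 x^{(\ell)}_e>0$. The five step-size conditions in the corollary are precisely the uniform choices that guarantee $h\le\ahl\hprm$ throughout each regime, using the monotonicity of $\ahl$ that your closed form provides (e.g.\ $h\le\ahz\hprm$ when $\ahz<1/2$ because $\ahl\ge\ahz$ there; $h\le\hprm/2$ when $\ahl\ge1/2$; $h\le1/4$ when $\ahl>1/\hprm$). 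You need to add this positivity-preservation argument — an induction interleaving ``$x^{(\ell)}\in X$ implies the recurrence holds and $x^{(\ell+1)}>0$'' — before the closed form and the subsequent bookkeeping are justified.
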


We give now the full version of Theorem~\ref{thm_main} which applies 
for any strongly dominating starting point.

\begin{theorem}\label{thm_main_full}
	Suppose $A\in\Z^{n\times m}$ has full row rank $(n\leq m)$, $b\in\Z^{n}$, 
	$c\in\Z_{>0}^{m}$ and $\epsilon\in(0,1)$.
	Given $\xz \in X$ and its corresponding $\ah{0}$, 
	the Physarum-inspired dynamics \eqref{eq:DdirD} initialized with 
	$\xz$ runs in two regimes:
	\begin{enumerate}[\mbox{}\hspace{\parindent}(i)]
		\item The first regime is executed when $\ahz \not\in[1/2,1/\hprm]$ and it computes 
		a point $\x{t}\in X$ such that $\ah{t} \in[1/2,1/\hprm]$. In particular, if $\ahz < 1/2$ then 
		$h \leq (\Phi/\opt)\cdot(\ah{0}\hprm)^2$ and $t = 1/h$.
		Otherwise, if $\ahz > 1/\hprm$ then $h\leq\Phi/\opt$ and 
		$t = \lfloor \log_{1/(1-h)}[h_{0}(\alpha^{(0)}-1)/(1-h_{0})] \rfloor$.
		
		\item The second regime starts from a point $\x{t}\in X$ with $\ah{t} \in[1/2,1/\hprm]$, 
		it has a step size $h \leq (\Phi/\opt)\cdot h_{0}^{2}/2$ and outputs for any
		$k\geq4C_{1}/(h\Phi)\cdot\ln(C_{2}\PSI{0}/(\epsilon\cdot\min\{1,x_{\min}^{(0)}\}))$
		a vector $x^{(t+k)}\in X$ such that 
		$\dist(x^{(t+k)},\Xstar)<\epsilon/(\D \gamma_{A})$.
	\end{enumerate}
\end{theorem}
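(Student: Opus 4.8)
The plan is to follow the two-regime split of the statement; the feasible case (Theorem~\ref{thm_main}) is the special case $\ahz=1$, for which the first regime is vacuous ($t=0$) and the second regime is exactly Theorem~\ref{thm_main}. The backbone of both regimes is the affine recurrence $\ah{\ell+1}=h+(1-h)\ahl$, equivalently $1-\ah{\ell+1}=(1-h)(1-\ahl)$, recorded in Corollary~\ref{cor:ahlConv}: since $Aq^{(\ell)}=b$ and $b^Ty=1$ for every $y\in\Y$, we get $y^TA\x{\ell+1}=(1-h)\,y^TA\xl+h$ for each $y$, and taking the minimum over $y\in\Y$ (which is min-preserving because $1-h>0$) gives the identity. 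Two consequences follow immediately: $\ahl>0$ for every $\ell$, so each iterate is strongly dominating as soon as it is coordinatewise positive; and $\ahl\to1$ geometrically at rate $1-h$ from whichever side it starts. In particular the interval $[\tfrac12,1/\hprm]$ is \emph{invariant} under the recurrence, because $\ahl\in[\tfrac12,1]$ gives $\ah{\ell+1}\in[\tfrac12+\tfrac{h}{2},1]$ and $\ahl\in[1,1/\hprm]$ gives $\ah{\ell+1}\le h+(1-h)/\hprm\le1/\hprm$ (using $\hprm\le\tfrac14\le1$).

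\emph{First regime (preconditioning).} If $\ahz\notin[\tfrac12,1/\hprm]$ we invoke case~4 or case~5 of Corollary~\ref{cor:ahlConv}. One checks the step sizes of the statement are compatible with (i.e.\ at least as restrictive as, up to also imposing the mild bounds $h\le\ahz\hprm$ resp.\ $h\le\tfrac14$) the hypotheses of those cases, so after $t=1/h$ steps (when $\ahz<\tfrac12$), respectively $t=\lfloor\log_{1/(1-h)}[\hprm(\ahz-1)/(1-\hprm)]\rfloor$ steps (when $\ahz>1/\hprm$), we reach $\x t\in X$ with $\ah t\in[\tfrac12,1/\hprm]$. It then remains to verify that $\xl$ stays coordinatewise positive and that $\infnorm{\xl}$ and $\cost(\xl)$ stay bounded over these $t$ steps: the uniform bound $|q_e^{(\ell)}|\le\D\onenorm{b/\gamma_A}$, the energy comparison $E(q^{(\ell)})\le E(\xl/\ahl)$ supplied by Lemma~\ref{lem:SDSgivesXCapF}, and Cauchy--Schwarz give $\cost(\x{\ell+1})\le(1+h(1/\ahl-1))\cost(\xl)$ together with a matching coordinatewise lower bound. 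Consequently the (quadratically) small step size keeps the run strictly inside $\R_{>0}^m$ and keeps $\PSI0$ and $\min\{1,x_{\min}^{(0)}\}$ valid bounds when the second regime starts.

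\emph{Second regime (convergence).} We may now assume $\x t\in X$ with $\ah t\in[\tfrac12,1/\hprm]$; by invariance $\ahl\in[\tfrac12,1/\hprm]$ for all $\ell\ge t$, and by Lemma~\ref{lem:SDSgivesXCapF} there is for each $\ell$ a feasible $f$ with $0\le f\le2\xl$. From here the argument generalises the template of~\cite{Physarum-Complexity-Bounds,SV-Flow} to the positive LP~\eqref{OLP}: (a) the step size $h\le(\Phi/\opt)\hprm^2/2$ keeps the iterates strictly positive and bounded, exactly as in Theorem~\ref{thm_main}; (b) one shows that for every $\ell\ge t$ there is a \emph{non-negative feasible kernel-free} vector $\fk$ — a convex combination of basic feasible solutions of~\eqref{OLP}, via Lemma~\ref{sign-compatible representation} — with $\infnorm{\xl-\fk}\le(1-\Omega(h\Phi/\opt))^{\ell-t}\cdot\mathrm{poly}(C_2,\PSI0,1/\min\{1,x_{\min}^{(0)}\})$, the geometric decay coming from a discrete Lyapunov estimate on $\cost(\xl)$ normalised by $\ahl$, in which the combinatorial cycle-cancelling of~\cite{Physarum-Complexity-Bounds,SV-Flow} is replaced by the basis/kernel decomposition and by LP duality; (c) once $\cost(\fk)$ lies within a suitably small multiple of $\Phi$ above $\opt$, the definition of $\Phi$ in~\eqref{eq:defPHI} together with the separation bound $|f_j|\ge1/(\D\gamma_A)$ of Lemma~\ref{lem:fsDg} forces $\fk$ to be (essentially) supported on optimal basic feasible solutions, so that $\infnorm{\xl-\fk}<\epsilon/(\D\gamma_A)$ upgrades to $\dist(\xl,\Xstar)<\epsilon/(\D\gamma_A)$. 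Solving the estimate in (b) for the first $k$ with error below $\epsilon/(\D\gamma_A)$, and using $1/h\ge2\opt/(\Phi\hprm^2)$ to make the dependence explicit, yields $k\ge 4C_1/(h\Phi)\cdot\ln\!\bigl(C_2\PSI0/(\epsilon\cdot\min\{1,x_{\min}^{(0)}\})\bigr)$ — logarithmic in $1/\epsilon$, quadratic in $\opt/\Phi$ — and $\x{t+k}\in X$ by the invariance from the first paragraph.

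\textbf{Main obstacle.} Step~(b) of the second regime is where essentially all the work concentrates: showing that the iterate is exponentially (in $\ell-t$) close to a \emph{non-negative} feasible kernel-free vector, with decay rate $\Omega(h\Phi/\opt)$ — only a quadratic, not quartic, loss in $\opt/\Phi$. The combinatorial flow-cancellation of~\cite{Physarum-Complexity-Bounds,SV-Flow} must be replaced by an argument valid for every positive LP: tracking how $q^{(\ell)}$ and the associated potential $p^{(\ell)}$ change between consecutive steps, keeping the \emph{scale-invariant} determinant $\D$ (rather than $\D_S$) in every estimate, and keeping the $\ahl$-normalisation consistent throughout so that the analysis degrades gracefully to the feasible case $\ahl\equiv1$.
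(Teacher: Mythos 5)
Your first regime is essentially the paper's: the identity $1-\ah{\ell+1}=(1-h)(1-\ahl)$ from Lemma~\ref{lem:OneStep}, the resulting invariance of $[1/2,1/\hprm]$, and the step counts for $\ahz<1/2$ and $\ahz>1/\hprm$ via Corollary~\ref{cor:ahlConv} all match.

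The gap is in steps (b)--(c) of your second regime, precisely the point you flag as the ``main obstacle'' without resolving it. You propose that the approximation of $\xl$ by a non-negative kernel-free vector decays at rate $(1-\Omega(h\Phi/\opt))^{\ell-t}$, coming from ``a discrete Lyapunov estimate on $\cost(\xl)$ normalised by $\ahl$'', and then in (c) you infer from $\cost(f^{(k)})$ being within a small multiple of $\Phi$ of $\opt$ that $f^{(k)}$ is supported on optimal bases. Neither half is substantiated, and the first is the crux: no multiplicative contraction of the cost gap at rate $h\Phi/\opt$ is established (the cost of the directed iterates need not even be monotone), and driving the cost to within $O(\epsilon\Phi)$ of $\opt$ is exactly the route that produces the $\epsilon$-dependent step size and quartic iteration count of \cite{SV-LP} that the theorem is meant to avoid. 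The paper decouples the two error sources. First, Lemma~\ref{lem_ApproxNonNegKernelFree} shows $\infnorm{x^{(t+k)}-f}$ is small for a non-negative feasible kernel-free $f$ with decay rate $(1-h)^{k}$ --- no $\Phi$ at all --- via the averaged iterate $\overline{q}^{(t,k)}$ and the rounding Lemmas~\ref{lem_RoundingFlows} and~\ref{lem_FeasibleKernelFree}. Second, and this is the missing idea, for each non-optimal basic feasible solution $g$ one tracks the logarithmic potential $\mathcal{B}_{g}^{(\ell)}=\sum_{e}g_{e}c_{e}\ln x_{e}^{(\ell)}$ and proves the \emph{additive} per-step gain $\mathcal{B}_{\fstr}^{(\ell+1)}-\mathcal{B}_{\fstr}^{(\ell)}\geq\mathcal{B}_{g}^{(\ell+1)}-\mathcal{B}_{g}^{(\ell)}+h\Phi/2$ (this is where the condition $\hl(1/(2\ahl\hprm))^{2}\opt\leq\Phi/2$, hence the step-size bound, is used); since $\mathcal{B}_{\fstr}$ is bounded above by $\opt\cdot\ln\PSI{0}$, after $k\geq 4c^{T}g/(h\Phi)\cdot\ln(\PSI{0}/(\epsilon x_{\min}^{(0)}))$ steps some coordinate $e$ with $g_{e}>0$ satisfies $x_{e}^{(t+k)}<\epsilon$. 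Lemma~\ref{lem_optCrit} then converts this \emph{coordinatewise} smallness --- not a cost bound --- into smallness of the weights of non-optimal bases in the convex decomposition of $f$, yielding $\infnorm{f-\fstr}<\epsilon/(\D\gamma_{A})$ and, by the triangle inequality, the claimed distance bound. Without the per-basis potential $\mathcal{B}_{g}$, your outline has no mechanism by which $\Phi$ enters only through the iteration count while the step size stays independent of $\epsilon$.
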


We stated the bounds on $h$ in terms of the unknown quantities $\Phi$ and $\opt$. However, $\Phi/\opt \ge 1/C_3$ by Lemma~\ref{lem:fsDg} and hence replacing $\Phi/\opt$ by $1/C_3$ yields constructive bounds for $h$.

\paragraph{Organization:} 

This section is devoted to proving Theorem~\ref{thm_main_full}, and it is organized as follows: 
Subsection~\ref{subsec:subsecUsefulLemmas} establishes core efficiency bounds that
extend \cite{SV-LP} and yield a \emph{scale-invariant} determinant dependence 
of the step size and are applicable to strongly dominating points.
Subsection~\ref{subsec:SDCV} gives the definition of strongly dominating points and
shows that the Physarum-inspired dynamics \eqref{eq:DdirD} initialized with such a point is
well defined.
Subsection~\ref{subsection:CloseNonNengKernelFreeVector} extends the analysis in
\cite{Physarum-Complexity-Bounds,SV-Flow,SV-LP} to positive linear programs, 
by generalizing the concept of non-negative flows to non-negative \emph{feasible kernel-free} vectors.
Subsection~\ref{subsec:epsCloseToOPT} shows that $\xl$ converges to $\Xstar$ for large enough $\ell$. 
Subsection~\ref{subsection:ProofThmMAIN} concludes the proof of Theorem~\ref{thm_main_full}.

\subsection{Useful Lemmas}\label{subsec:subsecUsefulLemmas}

Recall that $\Rl = \mathrm{diag}(c) \cdot (\Xl)^{-1}$ is a positive diagonal matrix and 
$\Ll  \overset{\mathrm{def}}{=}A(\Rl)^{-1}A^{\rot}$ is invertible. Let  $\pl$ be the unique solution
of $\Ll \pl=b$. We improve the dependence on $D_S$ in \cite[Lemma 5.2]{SV-LP} to $D$.

\begin{lemma}\cite[extension of Lemma 5.2]{SV-LP} \label{lem_ATLA_UB}
	Suppose $\xl>0$, $\Rl$
	is a positive diagonal matrix and $L=A(\Rl)^{-1}A^{\rot}$. Then for every
	$e\in[m]$, it holds that $\lVert A^{\rot}(\Ll)^{-1}A_{e}\rVert_{\infty}\leq
	\D  \cdot c_{e}/\xl_{e}$.
\end{lemma}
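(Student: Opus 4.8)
The plan is to use the closed-form expression for $A^T L^{-1} A_e$ via Cramer's rule and the Cauchy--Binet formula, exactly in the spirit of the computation carried out in the proof of Lemma~\ref{Locally Lipschitz, General Case} (equations \eqref{explicit p}--\eqref{explicit q}), and then to bound the resulting ratio of sums of products of determinants. First I would write $L = A R^{-1} A^T$ with $R = \operatorname{diag}(c_e/x_e^{(\ell)})$ and expand, for an index $j \in [n]$, the entry $(A^T L^{-1} A_e)_j = A_j^T L^{-1} A_e$ where $A_j$ denotes the $j$-th \emph{column} of $A$. Using the adjugate formula $L^{-1} = \frac{1}{\det L}\big((-1)^{i+k}\det L_{-k,-i}\big)_{ik}$ together with the Cauchy--Binet expansions
\[
\det L = \sum_{\substack{S\subseteq[m]\\ |S|=n}}\Big(\prod_{e'\in S}\tfrac{x_{e'}}{c_{e'}}\Big)(\det A_S)^2,
\qquad
\det L_{-k,-i} = \sum_{\substack{S\subseteq[m]\\ |S|=n-1}}\Big(\prod_{e'\in S}\tfrac{x_{e'}}{c_{e'}}\Big)\det A_{-k,S}\det A_{-i,S},
\]
one collects the alternating sums over $i$ and $k$ into full $n\times n$ determinants, yielding
\[
A_j^T L^{-1} A_e = \frac{\displaystyle\sum_{\substack{S\subseteq[m]\\ |S|=n-1}}\Big(\prod_{e'\in S}\tfrac{x_{e'}}{c_{e'}}\Big)\det(A_S\mid A_e)\,\det(A_S\mid A_j)}{\displaystyle\sum_{\substack{S\subseteq[m]\\ |S|=n}}\Big(\prod_{e'\in S}\tfrac{x_{e'}}{c_{e'}}\Big)(\det A_S)^2}.
\]

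Next I would combine the two columns to reach the quantity that actually matters. Since $A^T L^{-1} A_e$ has $j$-th coordinate as above, I would write, for each term indexed by $S$ in the numerator, $\det(A_S\mid A_j)$ as (up to sign) a cofactor expansion, so that summing the absolute values over $j$ is controlled by a single $(n-1)$-subset determinant. The cleanest route is to observe that for fixed $S$ with $|S|=n-1$ and fixed $e$, the vector $\big(\det(A_S\mid A_j)\big)_{j\in[n]}$ is, up to sign, the cofactor vector of $A_S$; hence $\sum_j \big|\det(A_S\mid A_j)\big|$ can be bounded, and more importantly $\det(A_S\mid A_e)$ and each $\det(A_S\mid A_j)$ are integer multiples of $\gamma_A^{n}$ — after dividing out $\gamma_A$ from every column, each such determinant has absolute value at most $\D$ by the definition \eqref{eq:defD} of $\D$ (submatrices of $A/\gamma_A$ of dimension $n$, but the relevant $(n-1)$-dimensional minors are also covered). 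The key cancellation is that $\prod_{e'\in S\cup\{e\}} x_{e'}/c_{e'}$ appears in the numerator while $\prod_{e'\in S'} x_{e'}/c_{e'}$ with $|S'|=n$ appears in the denominator: pairing each numerator term with the denominator term for $S' = S\cup\{e\}$ makes the $x$-weights cancel exactly, leaving ratios of the form $\frac{\sum_S w_S \det(A_S\mid A_e)\det(A_S\mid A_j)}{\sum_{S'} w_{S'}(\det A_{S'})^2}$ where $w_S = \prod_{e'\in S}x_{e'}/c_{e'}$; bounding each such ratio by $\D^2$ and then extracting the factor $c_e/x_e^{(\ell)}$ is the heart of the argument.

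More concretely, I would show that $\big|A_j^T (L^{(\ell)})^{-1} A_e\big| \le \D\cdot c_e/x^{(\ell)}_e \cdot \big(\text{something}\big)$ is not quite the right grouping; instead the efficient way is to bound $\|A^T(L^{(\ell)})^{-1}A_e\|_\infty = \max_j |A_j^T (L^{(\ell)})^{-1}A_e|$ directly. Fixing the maximizing $j$, I pair numerator terms $S$ with denominator terms $S\cup\{e\}$ (valid only when $e\notin S$; the terms with $e\in S$ are handled by noting $\det(A_S\mid A_e)=0$), obtaining after the weight cancellation a bound of the shape $\frac{x_e/c_e \cdot \D^2 \cdot \sum_{S} w_{S\cup e}(\det A_{S\cup e})^2 \cdot (\ldots)}{\sum_{S'} w_{S'}(\det A_{S'})^2}$; a Cauchy--Schwarz / sign-grouping step reduces the extra $(\ldots)$ factor, and what remains is exactly $(c_e/x^{(\ell)}_e)\cdot \D$. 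The main obstacle I anticipate is the bookkeeping in this pairing step — getting the determinant bound down from the naive $\D^2$ (one factor from $\det(A_S\mid A_e)/\gamma_A^n$, one from $\det(A_S\mid A_j)/\gamma_A^n$) to the claimed single $\D$, which is precisely where \cite[Lemma 5.2]{SV-LP}'s $D_S$-bound gets sharpened to $\D$; this requires carefully using that the $(n{-}1)\times(n{-}1)$ minors of $A_S$ themselves are bounded by $\D$ (they have dimension $n-1$, covered by \eqref{eq:defD}) so that one of the two $n$-dimensional determinants can be traded for an $(n{-}1)$-dimensional one via the adjacency of $e$ to $S$. I would also double-check the scale-invariance: dividing $A$ by $\gamma_A$ rescales $A_e$ and hence $c_e/x_e = r_e$ must be read against the same normalization, but since the statement keeps $A$ (not $A/\gamma_A$) on the left and $\D$ is defined via $A/\gamma_A$, the powers of $\gamma_A$ must cancel — verifying this cancellation is a short but necessary check.
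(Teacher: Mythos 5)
Your overall route --- expand $A_j^T(\Ll)^{-1}A_e$ via the adjugate formula and Cauchy--Binet, pair each numerator term indexed by an $(n-1)$-set $S$ with the denominator term indexed by $S\cup\{e\}$ so that the weights $\prod_{e'}x^{(\ell)}_{e'}/c_{e'}$ cancel, and then bound a ratio of determinants --- is exactly the argument behind the cited \cite[Lemma 5.2]{SV-LP}; the paper gives no independent proof here, it only says to combine that proof with the integrality bounds of Lemma~\ref{lem:fsDg}. So in approach you are on the paper's track. (Two small slips: the coordinate index $j$ ranges over $[m]$, not $[n]$, since $A^T(\Ll)^{-1}A_e\in\R^m$; and $(\det(A_S\mid A_j))_j$ is not the cofactor vector of $A_S$ --- that would require $A_j$ to be a standard basis vector --- though you never actually use that remark.)

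The step you leave open is the last one, and the mechanism you propose for it would not work: no trading of an $n$-dimensional determinant for an $(n-1)$-dimensional minor of $A_S$ is involved, and the $(n-1)$-dimensional minors in \eqref{eq:defD} are there for Lemma~\ref{lem:fsDg}, not for this lemma. The correct closing move is elementary. Discard the (non-negative) denominator terms not of the form $S\cup\{e\}$, note that numerator terms with $\det(A_S\mid A_e)=0$ vanish, and apply the mediant inequality to get
\[
\frac{x^{(\ell)}_e}{c_e}\,\bigl|A_j^T(\Ll)^{-1}A_e\bigr|
\;\le\;
\max_{S\,:\,\det(A_S\mid A_e)\neq 0}\ \frac{\bigl|\det(A_S\mid A_e)\bigr|\cdot\bigl|\det(A_S\mid A_j)\bigr|}{\det(A_S\mid A_e)^2}
\;=\;
\max_{S}\ \frac{\bigl|\det(A_S\mid A_j)\bigr|}{\bigl|\det(A_S\mid A_e)\bigr|}.
\]
One copy of $\det(A_S\mid A_e)$ simply cancels against the denominator's square; the factor $c_e/x^{(\ell)}_e$ in the claim is just $w_{S\cup e}/w_S$. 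What remains is controlled by integrality: both $(A_S\mid A_j)$ and $(A_S\mid A_e)$ are, up to column permutation, $n\times n$ submatrices of $A$, so each determinant equals $\gamma_A^{\,n}$ times the determinant of an $n$-dimensional square submatrix of the \emph{integral} matrix $A/\gamma_A$. The numerator is therefore at most $\gamma_A^{\,n}\D$ by \eqref{eq:defD}, while the denominator, being a nonzero integer multiple of $\gamma_A^{\,n}$, is at least $\gamma_A^{\,n}$. This lower bound on the surviving denominator determinant is the ingredient missing from your plan --- it, not any $(n-1)$-minor argument, is what brings the bound from the naive $\D^2$ down to $\D$ and makes the powers of $\gamma_A$ cancel. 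With this substitution your plan closes.
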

\begin{proof}
	The statement follows by combining the proof in \cite[Lemma 5.2]{SV-LP} with Lemma~\ref{lem:fsDg}.
\end{proof}

We show next that \cite[Corollary 5.3]{SV-Flow} holds for $x$-capacitated vectors,
which extends the class of feasible starting points, and further yields a bound in terms of $D$.

\begin{lemma}\cite[extension of Corollary 5.3]{SV-Flow} \label{lem_ATp_UB}
	Let $\pl$ be the unique solution of $\Ll\pl=b$ and assume $\xl$ is a positive vector
	with corresponding positive scalar $\ahl$ such that there is a vector $f$ satisfying 
	$Af=\ahl\cdot b$ and $0\leq f\leq \xl$.
	Then $\|A^{\rot}\pl\|_{\infty}\leq \D  \onenorm{c}/\ahl$.
\end{lemma}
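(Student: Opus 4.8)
The statement to prove is Lemma~\ref{lem_ATp_UB}: if $L^{(\ell)}p^{(\ell)} = b$ with $L^{(\ell)} = A(R^{(\ell)})^{-1}A^T$, and $x^{(\ell)} > 0$ admits a vector $f$ with $Af = \alpha^{(\ell)} b$ and $0 \le f \le x^{(\ell)}$, then $\|A^T p^{(\ell)}\|_\infty \le \D\,\onenorm{c}/\alpha^{(\ell)}$. The natural approach is to bound $A^T p^{(\ell)}$ coordinatewise using the electrical-flow interpretation together with the previous lemma. I would first recall that $q^{(\ell)} = (R^{(\ell)})^{-1}A^T p^{(\ell)}$ is the minimum-energy solution of $Aq = b$ with resistances $r_e = c_e/x^{(\ell)}_e$, so that for each coordinate $e$ one has $q^{(\ell)}_e = (x^{(\ell)}_e/c_e)\,A_e^T p^{(\ell)}$, i.e.\ $A_e^T p^{(\ell)} = r_e q^{(\ell)}_e$. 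Hence $\|A^T p^{(\ell)}\|_\infty = \max_e r_e |q^{(\ell)}_e|$, and it suffices to bound $r_e |q^{(\ell)}_e|$ for each $e$.

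**Key steps.** The cleanest route is via the energy. Write $E(g) = \sum_e r_e g_e^2$ for the energy with respect to $R^{(\ell)}$. Since $f/\alpha^{(\ell)}$ is a feasible solution of $Ag = b$ (because $Af = \alpha^{(\ell)} b$), minimality of $q^{(\ell)}$ gives
\[
E(q^{(\ell)}) \;\le\; E(f/\alpha^{(\ell)}) \;=\; \frac{1}{(\alpha^{(\ell)})^2}\sum_e r_e f_e^2 \;=\; \frac{1}{(\alpha^{(\ell)})^2}\sum_e \frac{c_e}{x^{(\ell)}_e} f_e^2 \;\le\; \frac{1}{(\alpha^{(\ell)})^2}\sum_e c_e f_e \;\le\; \frac{\onenorm{c}}{(\alpha^{(\ell)})^2}\,\infnorm{f},
\]
using $0 \le f_e \le x^{(\ell)}_e$, so $f_e^2/x^{(\ell)}_e \le f_e$. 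One then also has $E(q^{(\ell)}) = b^T p^{(\ell)} = (f/\alpha^{(\ell)})^T A^T p^{(\ell)} = (1/\alpha^{(\ell)})\sum_e f_e\,A_e^T p^{(\ell)}$. However, rather than juggling energies, I expect the slicker argument is the one suggested by the proof stub: combine the identity $A^T p^{(\ell)} = A^T (L^{(\ell)})^{-1} b$ with Lemma~\ref{lem_ATLA_UB}. Indeed, expand $b = (1/\alpha^{(\ell)}) A f = (1/\alpha^{(\ell)})\sum_{e'} f_{e'} A_{e'}$, so
\[
A^T p^{(\ell)} \;=\; A^T (L^{(\ell)})^{-1} b \;=\; \frac{1}{\alpha^{(\ell)}} \sum_{e'} f_{e'}\, A^T (L^{(\ell)})^{-1} A_{e'}.
\]
Taking $\|\cdot\|_\infty$ and applying Lemma~\ref{lem_ATLA_UB} (which gives $\|A^T(L^{(\ell)})^{-1}A_{e'}\|_\infty \le \D\, c_{e'}/x^{(\ell)}_{e'}$) together with $0 \le f_{e'} \le x^{(\ell)}_{e'}$ yields
\[
\|A^T p^{(\ell)}\|_\infty \;\le\; \frac{1}{\alpha^{(\ell)}}\sum_{e'} f_{e'}\cdot \D\,\frac{c_{e'}}{x^{(\ell)}_{e'}} \;\le\; \frac{\D}{\alpha^{(\ell)}}\sum_{e'} c_{e'} \;=\; \frac{\D\,\onenorm{c}}{\alpha^{(\ell)}},
\]
which is exactly the claim.

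**Main obstacle.** There is no serious obstacle; the only things to be careful about are (a) invoking Lemma~\ref{lem_ATLA_UB} correctly with the resistances $r_e = c_e/x^{(\ell)}_e$ so that the bound reads $\D\,c_{e'}/x^{(\ell)}_{e'}$, and (b) using the hypothesis $0 \le f \le x^{(\ell)}$ twice, once to pass from $f_{e'}/x^{(\ell)}_{e'}$ to $1$ inside the sum and once to justify that the triangle inequality step loses nothing (all $f_{e'} \ge 0$). I would present the second argument (via $b = \tfrac{1}{\alpha^{(\ell)}}Af$ and Lemma~\ref{lem_ATLA_UB}) as the main proof, since it is short and exactly parallels the extension of \cite[Corollary 5.3]{SV-Flow} being cited.
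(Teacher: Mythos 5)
Your main argument (expanding $b=\tfrac{1}{\ahl}Af=\tfrac{1}{\ahl}\sum_{e'}f_{e'}A_{e'}$, applying the triangle inequality and Lemma~\ref{lem_ATLA_UB}, and then using $0\le f\le\xl$ to bound $\sum_{e'}f_{e'}c_{e'}/\xl_{e'}\le\onenorm{c}$) is correct and is exactly the paper's proof, which merely writes the same chain of inequalities with both sides multiplied by $\ahl$. The preliminary energy-based detour is unnecessary but harmless.
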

\begin{proof}
	By assumption, $f$ satisfies $\ahl b=Af=\sum_{e}f_{e}A_{e}$ and
	$0\le f\le \xl$. This yields
	\begin{align*}
	\ahl\lVert A^{\rot}\pl\rVert_{\infty} & =\lVert A^{\rot}(\Ll)^{-1}\cdot\ahl b\rVert_{\infty} = 
	\lVert\sum_{e}f_{e}A^{\rot}(\Ll)^{-1}A_{e}\rVert_{\infty}\\
	& \leq\sum_{e}f_{e}\lVert A^{\rot}(\Ll)^{-1}A_{e}\rVert_{\infty}
	\overset{(\text{Lem. }\ref{lem_ATLA_UB})}{\leq}
	\D \sum_{e}f_{e}\frac{c_{e}}{\xl_{e}}\leq 
	\D \lVert c\rVert_{1}.\hfill\qedhere
	\end{align*}
\end{proof}
We note that applying Lemma~\ref{lem_ATLA_UB} and Lemma~\ref{lem_ATp_UB} into 
the analysis of \cite[Theorem 1.3]{SV-LP} yields an improved result that depends on
the scale-invariant determinant $D$. Moreover, we show in the next 
Subsection~\ref{subsec:SDCV} that the Physarum-inspired dynamics \eqref{eq:DdirD} 
can be initialized with any strongly dominating point.
\bigskip

We establish now an upper bound on $q$ that does not depend on $x$. We then use this upper bound on $q$ to establish a uniform upper bound on $x$. 

\begin{lemma}\label{lem:upper bound for q}
	For any $\xl>0$, $\lVert \ql\rVert_{\infty}\leq m\D^{2}\|b/\gamma_{A}\|_{1}$.
\end{lemma}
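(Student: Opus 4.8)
The plan is to bound $\|q^{(\ell)}\|_\infty$ using the explicit formula for $q$ provided by Fact~\ref{Formula for q}, namely $q^{(\ell)} = (R^{(\ell)})^{-1}A^T p^{(\ell)}$ where $p^{(\ell)}$ solves $L^{(\ell)} p^{(\ell)} = b$ with $L^{(\ell)} = A(R^{(\ell)})^{-1}A^T$. Writing out the $e$-th component gives $q^{(\ell)}_e = (x^{(\ell)}_e/c_e)\, A_e^T (L^{(\ell)})^{-1} b$. The idea is then to control $A_e^T(L^{(\ell)})^{-1}b$ by expanding $b$ along a basic feasible solution and invoking the bound from Lemma~\ref{lem_ATLA_UB}.

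First I would fix a basic feasible solution $f$ of $Af = b$, $f \ge 0$ (which exists by the standing assumption that the system is feasible together with Lemma~\ref{sign-compatible representation} / the finite basis theorem, after orienting columns so the solution is nonnegative); by Lemma~\ref{lem:fsDg} we have $\|f\|_\infty \le \D\,\|b/\gamma_A\|_1$ and $f$ has at most $n \le m$ nonzero entries. Then $b = \sum_{e'} f_{e'} A_{e'}$, so
\[
q^{(\ell)}_e = \frac{x^{(\ell)}_e}{c_e}\, A_e^T (L^{(\ell)})^{-1} b = \frac{x^{(\ell)}_e}{c_e} \sum_{e'} f_{e'}\, A_e^T (L^{(\ell)})^{-1} A_{e'}.
\]
By Lemma~\ref{lem_ATLA_UB} applied with the roles of $e$ and $e'$ interchanged, $|A_e^T (L^{(\ell)})^{-1} A_{e'}| \le \|A^T (L^{(\ell)})^{-1} A_{e'}\|_\infty \le \D\, c_{e'}/x^{(\ell)}_{e'}$. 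Hence
\[
|q^{(\ell)}_e| \le \frac{x^{(\ell)}_e}{c_e} \sum_{e'} f_{e'}\, \D\, \frac{c_{e'}}{x^{(\ell)}_{e'}}.
\]
This still carries the awkward ratios $x^{(\ell)}_e/x^{(\ell)}_{e'}$ and factors $c_{e'}/c_e$, so a more symmetric estimate is needed.

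The cleaner route, which I would actually carry out, is to bound $A_e^T(L^{(\ell)})^{-1}A_{e'}$ more carefully so that the $x$-dependence cancels. Note $A_e^T (L^{(\ell)})^{-1} A_{e'}$ is an entry of the matrix $A^T(L^{(\ell)})^{-1}A$, which one can expand via Cramer's rule and the Cauchy--Binet formula exactly as in the proof of Lemma~\ref{Locally Lipschitz, General Case} (equation~\eqref{explicit q}): the numerator is $\sum_{|S|=n-1}(\prod_{e''\in S\cup\{e,e'\}} x_{e''}/c_{e''})\det(A_S|A_e)\det(A_S|A_{e'})$ wait — more directly, using $q^{(\ell)}_e$ itself, formula~\eqref{explicit q} with $b$ replaced appropriately gives a ratio of sums of products of $x_{e''}/c_{e''}$ over $(n-1)$- and $n$-subsets. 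Every submatrix determinant is at most $\D\gamma_A^{\,?}$... the scale-invariant version has each $|\det A_S| \le \gamma_A^n \D$ but actually after factoring $\gamma_A$ powers one gets $\D$; combined with $|\det(A_S|b)| \le$ something in terms of $\|b/\gamma_A\|_1$ and $\D$. The bound $|q^{(\ell)}_e| \le \D^2 m\,\|b/\gamma_A\|_1$ is precisely the scale-invariant sharpening of the bound $|q_e| \le \D^2 m\,\|b\|_1$ from~\cite{SV-LP} noted right after Lemma~\ref{upper bound on q}, so I expect the argument to mirror theirs, replacing $D_S$ by $D$ throughout via Lemma~\ref{lem:fsDg} and Lemma~\ref{lem_ATLA_UB}.

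The main obstacle is getting the ratios $x_e/x_{e'}$ to cancel cleanly so that the bound is genuinely independent of $x^{(\ell)}$: this forces one to work with the fully expanded Cauchy--Binet expression for $q_e$ (where a common product $\prod x_{e''}/c_{e''}$ appears in numerator and denominator and the ratio of the remaining sums is bounded by the worst-case ratio of determinant products, which is at most $m\D^2$ after accounting for the number of $(n-1)$-subsets and the $\|b/\gamma_A\|_1$ factor) rather than the naive $A_e^T(L^{(\ell)})^{-1}b$ estimate. Once that bookkeeping is done — tracking the $\gamma_A$ powers to land on $\|b/\gamma_A\|_1$ and on the scale-invariant $\D$ — the inequality $\|q^{(\ell)}\|_\infty \le m\D^2\|b/\gamma_A\|_1$ follows. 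Alternatively, and perhaps more economically, one can use that $q^{(\ell)}$ is the minimum-energy feasible solution for $Af = \alpha^{(\ell)} b$... no: for the directed dynamics $q$ satisfies $Aq = b$ exactly, so $q$ is a (generally non-sign-compatible) feasible solution, and unlike in the continuous undirected case it need not be a convex combination of basic feasible solutions, which is exactly why the determinant-based argument is needed here rather than the clean Lemma~\ref{upper bound on q}.
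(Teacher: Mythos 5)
Your first attempt is the right one and is exactly the paper's route --- fix a nonnegative basic feasible solution $f$, write $b=\sum_{u}f_{u}A_{u}$, and bound each $A_{e}^{T}(\Ll)^{-1}A_{u}$ via Lemma~\ref{lem_ATLA_UB} --- but you apply the lemma with the wrong index and then abandon the approach precisely at the point where a one-line observation finishes it. The matrix $(\Ll)^{-1}$ is symmetric, so $A_{e}^{T}(\Ll)^{-1}A_{u}=A_{u}^{T}(\Ll)^{-1}A_{e}$, which is the $u$-th entry of the vector $A^{T}(\Ll)^{-1}A_{e}$; hence $|A_{e}^{T}(\Ll)^{-1}A_{u}|\le \lVert A^{T}(\Ll)^{-1}A_{e}\rVert_{\infty}\le \D\, c_{e}/\xl_{e}$, with the bound indexed by the \emph{fixed} $e$ rather than by the summation index $u$. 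This cancels the prefactor $\xl_{e}/c_{e}$ exactly and gives $|\ql_{e}|\le \D\lVert f\rVert_{1}\le m\D\,\lVert f\rVert_{\infty}\le m\D^{2}\lVert b/\gamma_{A}\rVert_{1}$ by Lemma~\ref{lem:fsDg}. Your version, with $\D\, c_{e'}/x_{e'}$, leaves the $x$-dependent ratios you complain about, but the problem is not that ``a more symmetric estimate is needed'' in some deeper sense --- the symmetric estimate is already available from the same lemma.

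The fallback you then sketch (Cramer's rule plus Cauchy--Binet, as in Lemma~\ref{Locally Lipschitz, General Case}) is not carried out: you leave the determinant bounds unresolved (``$\le \gamma_A^{\,?}$'', ``something in terms of $\|b/\gamma_A\|_1$ and $\D$'') and explicitly defer the $\gamma_{A}$ bookkeeping, which is the entire content of getting the scale-invariant $\D$ rather than $\D_S$. As written, neither branch of the proposal closes, so this counts as a genuine gap even though the opening decomposition and the final arithmetic ($\D\cdot m\cdot \D\lVert b/\gamma_{A}\rVert_{1}$) are correct.
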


\begin{proof}
	Let $f$ be a basic feasible solution of $Af=b$.
	By definition, $\ql_{e}=(\xl_{e}/c_{e})A_{e}^{\rot}(\Ll)^{-1}b$ and thus
	\[
	\left|\ql_{e}\right|=\left|\frac{\xl_{e}}{c_{e}}\sum_{u}A_{e}^{\rot}(\Ll)^{-1}A_{u}f_{u}\right|
	\leq\frac{\xl_{e}}{c_{e}}\sum_{u}|f_{u}|\cdot\left|A_{e}^{\rot}(\Ll)^{-1}A_{u}\right|
	\leq\D \lVert f\rVert_{1},
	\]
	where the last inequality follows by
	\[
	\left|A_{e}^{\rot}(\Ll)^{-1}A_{u}\right|=\left|A_{u}^{\rot}(\Ll)^{-1}A_{e}\right|
	\leq\lVert A^{\rot}(\Ll)^{-1}A_{e}\rVert_{\infty}
	\overset{(\text{Lem. }\ref{lem_ATLA_UB})}{\leq}
	\D \cdot c_{e}/\xl_{e}.
	\]
	By Cramer's rule and Lemma~\ref{lem:fsDg}, we have 
	$|\ql_{e}|\leq\D \lVert f\rVert_{1}\leq m\D^{2}\|b/\gamma_{A}\|_{1}.$
\end{proof}

Let $k,t\in\mathbb{N}$. We denote by
\begin{equation}\label{eq:PhysDyn}
\overline{q}^{(t,k)}=
\sum_{i=t}^{t+k-1}\frac{h\left(1-h\right)^{t+k-1-i}}{1-(1-h)^{k}}q^{(i)}
\quad\text{and}\quad\overline{p}^{(t,k)}=\sum_{i=t}^{t+k-1}p^{(i)}.
\end{equation}
Straightforward checking shows that $A\overline{q}^{(t,k)}=b$. 
Further, for $C:=\mathrm{diag}(c)$, $t\geq0$ and $k\geq1$, we have
\[
\x{t}\prod_{i=t}^{t+k-1}[1+h(C^{-1}A^{\rot}p^{(i)}-1)]\,\,
=\,\,\x{t+k}\,\,=\,\,(1-h)^{k}\x{t}+[1-(1-h)^{k}]\overline{q}^{(t,k)}.
\]

We give next an upper bound on $\xk$ that is independent of $k$. 

\begin{lemma}\label{lem_UBxk}
	Let $\PSI{0}=\max\{m\D^{2}\|b/\gamma_{A}\|_{1},\|\xz \|_{\infty}\}$.
	Then $\lVert x^{(k)}\rVert_{\infty}\leq\PSI{0}$, $\forall k\in\mathbb{N}$.
\end{lemma}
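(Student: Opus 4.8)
The plan is to apply the identity displayed just before the statement with $t = 0$, namely
\[
\x{k} = (1-h)^{k}\xz + [1-(1-h)^{k}]\,\overline{q}^{(0,k)},
\]
and to bound the two terms on the right separately. The first is controlled directly by the hypothesis $\|\xz\|_{\infty} \le \PSI{0}$. For the second, I would first observe that $\overline{q}^{(0,k)}$, as defined in~\eqref{eq:PhysDyn}, is a \emph{convex} combination of $q^{(0)},\dots,q^{(k-1)}$: for a step size $h \in (0,1)$ the coefficients $h(1-h)^{k-1-i}/(1-(1-h)^{k})$ are nonnegative, and they sum to $1$ by the finite geometric series $\sum_{j=0}^{k-1}(1-h)^{j} = (1-(1-h)^{k})/h$. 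Consequently $\|\overline{q}^{(0,k)}\|_{\infty} \le \max_{0 \le i < k}\|q^{(i)}\|_{\infty}$, and by Lemma~\ref{lem:upper bound for q} each $\|q^{(i)}\|_{\infty} \le m\D^{2}\|b/\gamma_{A}\|_{1} \le \PSI{0}$.

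Combining the two bounds via the triangle inequality, and using that the scalars $(1-h)^{k}$ and $1-(1-h)^{k}$ are nonnegative and sum to $1$, I obtain
\[
\|\x{k}\|_{\infty} \le (1-h)^{k}\|\xz\|_{\infty} + [1-(1-h)^{k}]\,\|\overline{q}^{(0,k)}\|_{\infty} \le (1-h)^{k}\PSI{0} + [1-(1-h)^{k}]\PSI{0} = \PSI{0},
\]
which is the claim. Equivalently, one may phrase this as a one-line induction on $k$: the base case $k=0$ is the hypothesis on $\xz$, and the inductive step follows from the recursion $\x{k+1} = (1-h)\x{k} + h\,q^{(k)}$ together with $\|q^{(k)}\|_{\infty} \le m\D^{2}\|b/\gamma_{A}\|_{1} \le \PSI{0}$ and $h \in (0,1)$.

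There is essentially no obstacle here; the two points that need care are verifying that the weights defining $\overline{q}^{(0,k)}$ genuinely form a convex combination (which is where $0 < h < 1$ enters) and invoking the $x$-independent bound on $\|q^{(i)}\|_{\infty}$ from Lemma~\ref{lem:upper bound for q}, which is precisely what makes the resulting bound uniform in $k$.
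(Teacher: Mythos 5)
Your proof is correct, and the ``one-line induction'' you mention at the end is precisely the paper's own proof: induction on $k$ using the recursion $\x{k+1}=(1-h)\x{k}+h\,q^{(k)}$, the triangle inequality, and the $x$-independent bound $\|q^{(k)}\|_{\infty}\le m\D^{2}\|b/\gamma_{A}\|_{1}$ from Lemma~\ref{lem:upper bound for q}. Your primary phrasing via the convex combination $\overline{q}^{(0,k)}$ is just the unrolled form of that same induction, so the two arguments are essentially identical.
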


\begin{proof}
	We prove the statement by induction. The base case $\|\xz \|_{\infty}\leq\PSI{0}$
	is clear. Suppose the statement holds for some $k>0$. Then, triangle
	inequality and Lemma \ref{lem:upper bound for q} yield
	\[
	\|x^{(k+1)}\|_{\infty}\le(1-h)\|x^{(k)}\|_{\infty}+h\|q^{(k)}\|_{\infty}\leq(1-h)\PSI{0}+h\PSI{0}\leq\PSI{0}.\hfill\qedhere
	\]
\end{proof}

We show now convergence to feasibility. 

\begin{lemma}\label{convergence to feasibility}
	Let $\rk = b - A \xk$. Then $\rkp =  (1 - h) \rk$ 
	and hence $\rk = (1 - h)^k (b - A \xz)$.
\end{lemma}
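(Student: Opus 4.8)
The plan is to unwind the recursion in~\eqref{eq:DdirD} directly. First I would write $\rkp = b - A\xkp$ and substitute the update rule $\xkp = (1-h)\xk + h\qk$, so that
\[
\rkp = b - A\big((1-h)\xk + h\qk\big) = b - (1-h)A\xk - hA\qk.
\]
The key observation is that $\qk$ is a feasible solution for the system at step $k$, i.e.\ $A\qk = b$; this is exactly what definition~\eqref{Defq} guarantees (the minimum energy feasible solution satisfies the constraint $A f = b$). Substituting $A\qk = b$ gives
\[
\rkp = b - (1-h)A\xk - hb = (1-h)b - (1-h)A\xk = (1-h)(b - A\xk) = (1-h)\rk.
\]

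Having established the one-step identity $\rkp = (1-h)\rk$, the closed form follows by a trivial induction on $k$: the base case is $r^{(0)} = b - A\xz$, and if $\rk = (1-h)^k(b - A\xz)$ then $\rkp = (1-h)\rk = (1-h)^{k+1}(b-A\xz)$. This yields $\rk = (1-h)^k(b - A\xz)$ for all $k \ge 0$, as claimed.

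There is essentially no obstacle here: the only thing one needs is that $A\qk = b$, which holds by construction of $\qk$ in~\eqref{Defq}, and that the step size $h$ is the same across the iterations covered by the statement (if variable step sizes $\h{t}$ are used, the identity becomes $r^{(t+1)} = (1-\h{t})r^{(t)}$ and the product $\prod_{i=0}^{k-1}(1-\h{i})$ replaces $(1-h)^k$; in the present constant-step regime this is immaterial). The lemma is purely a bookkeeping consequence of the affine update and the feasibility of $\qk$, and requires no appeal to the strongly dominating hypothesis or to any determinant bound.
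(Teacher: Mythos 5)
Your proposal is correct and follows exactly the same route as the paper: substitute the update rule $\xkp = (1-h)\xk + h\qk$ into $\rkp = b - A\xkp$, use $A\qk = b$, and iterate. The paper's proof is just the one-line computation you give, with the closed form left implicit.
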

\begin{proof} 
	By definition $\xkp = (1 - h) \xk + h \qk$, and thus the statement follows by
	\[ 
	\rkp = b - A \xkp = b - (1 - h) A \xk - h b = (1 - h) \rk.
	\hfill\qedhere
	\]
\end{proof}

\subsection{Strongly Dominating Capacity Vectors}\label{subsec:SDCV}

For the shortest path problem, it is known that one can start from any capacity vector $x$ for which the directed capacity of every source-sink cut is positive, where the directed capacity of a cut is the total capacity of the edges crossing the cut in source-sink direction minus the total capacity of the edges crossing the cut in the sink-source direction. We generalize this result. 
We start with the max-flow like LP
\begin{equation}\label{eq:primalWDS}
\max \set{t}{Af = t\cdot b;\ 0 \le f \le x}
\end{equation}
in variables $f \in \R^m$ and $t \in \R$ and its dual
\begin{equation}\label{eq:dualWDS}
\min \set{ x^{\rot}z}{z \ge 0;\ z \ge A^{\rot}y;\ b^{\rot}y = 1} 
\end{equation}
in variables $z \in \R^m$ and $y \in \R^n$. The feasible region of the dual contains no line. Assume otherwise; say it contains $(z,y) = (\z{0},\y{0}) + \lambda (\z{1},\y{1})$ for all $\lambda \in \R$. 
Then, $z\ge 0$ implies $\z{1} = 0$ and further $z \ge A^{\rot}y$ implies 
$\z{0} \ge A^{\rot}\y{0} + \lambda A^{\rot}\y{1}$ and hence $A^{\rot}\y{1} = 0$. Since $A$ has full row rank, 
we have $\y{1} = 0$. The optimum of the dual is therefore attained at a vertex. 
In an optimum solution, we have $z = \max\{0,A^{\rot}y\}$. 
Let $V$ be the set of vertices of the feasible region of the dual~\eqref{eq:dualWDS}, and let 
\[    
Y := \set{y}{(z,y) \in V} 
\]
be the set of their projections on $y$-space. Then, the optimum of the dual~\eqref{eq:dualWDS} 
is given by 	
\begin{equation}\label{eq:dealWDSopt}
\min_{y \in Y} \left\{ \max\{0,y^{\rot}A\}\cdot x \right\}.
\end{equation}

The set of strongly dominating capacity vectors $x$ is defined by
\begin{equation}\label{eq:defSDS}
X := \set{x \in \R_{> 0}^m}{y^{\rot} A x > 0 \text{ for all } y \in Y}.
\end{equation}
We next show that for all $\xz \in X$ and sufficiently small step size, the sequence 
$\{\xk\}_{k\in\N}$ stays in $X$. Moreover, $y^{\rot} A \xk$ converges to 1 for every $y \in Y$.
We define by
\[
\alpha(y,x) := y^{\rot} A x\quad\text{ and }\quad\alpha(x) := \min \set{\alpha(y,x)}{y \in Y}.
\]
Let $\ahl:=\alpha(\xl)$. Then, $\xl\in X$ iff $\ahl>0$. We summarize the discussion
in the following Lemma.

\begin{lemma}\label{lem:SDSgivesXCapF}
	Suppose $\xl\in X$. Then, there is a vector $f$ such that $Af=\ahl\cdot b$ and 
	$0\leq f \leq \xl$.
\end{lemma}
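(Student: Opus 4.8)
The plan is to read off the claim directly from LP duality applied to the primal-dual pair~\eqref{eq:primalWDS}--\eqref{eq:dualWDS}, instantiated at the given $\xl$. First I would recall that the primal LP~\eqref{eq:primalWDS} is always feasible (take $f = 0$, $t = 0$) and bounded (since $0 \le f \le \xl$ forces $\abs{Af}$, and hence $\abs{t}\cdot\abs{b}$, to be bounded), so by strong duality it attains its optimum, call it $t^\star$, which equals the optimum of the dual~\eqref{eq:dualWDS}. The whole point is to show $t^\star \ge \ahl$; then any primal-optimal $f$ satisfies $Af = t^\star b$ with $0 \le f \le \xl$, and since $b \ne 0$ and $t^\star \ge \ahl > 0$ we get the desired $f$ (if $t^\star > \ahl$ we can scale $f$ down by $\ahl/t^\star \in (0,1]$, which keeps $0 \le f \le \xl$ and makes $Af = \ahl b$ exactly).

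Next I would evaluate the dual optimum using the description~\eqref{eq:dealWDSopt}: since the dual feasible region contains no line (as argued in the text just above the statement, using that $A$ has full row rank), the optimum is attained at a vertex, and in an optimal solution $z = \max\{0, A^Ty\}$, so the dual optimum equals $\min_{y \in Y}\bigl(\max\{0, y^TA\} \cdot \xl\bigr)$. For any fixed $y \in Y$ we have $b^Ty = 1$, hence $\max\{0, y^TA\}\cdot \xl \ge y^TA\,\xl = \alpha(y,\xl) \ge \alpha(\xl) = \ahl$, where the first inequality is the componentwise bound $\max\{0,\cdot\} \ge (\cdot)$ combined with $\xl \ge 0$. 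Taking the minimum over $y \in Y$ gives that the dual optimum, and therefore $t^\star$, is at least $\ahl$.

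The only mild subtlety — and the step I would be most careful about — is making sure the scaling argument is stated cleanly: we need $t^\star \ge \ahl$ rather than equality, and we must note $\ahl > 0$ precisely because $\xl \in X$ means $y^TA\,\xl > 0$ for all $y \in Y$, so $\ahl = \alpha(\xl) = \min_{y\in Y} y^TA\,\xl > 0$ (the set $Y$ is finite, being the vertex projections, so the minimum is attained and positive). Then $f' := (\ahl / t^\star) f$ has $Af' = \ahl b$ and $0 \le f' \le \xl$ since $\ahl/t^\star \le 1$, which is exactly the asserted vector. This is entirely routine; there is no real obstacle, as all the structural work (no line in the dual feasible region, vertex attainment, the form of $z$ at optimum) has already been done in the paragraphs preceding the lemma.
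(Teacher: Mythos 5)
Your proposal is correct and follows essentially the same route as the paper: apply strong duality to the pair~\eqref{eq:primalWDS}--\eqref{eq:dualWDS}, use~\eqref{eq:dealWDSopt} together with $\max\{0,y^TA\}\cdot \xl \ge y^TA\,\xl$ for $\xl \ge 0$ to conclude the primal optimum $t^\star$ is at least $\ahl$, and then read off the desired $f$. Your explicit rescaling by $\ahl/t^\star$ is a small detail the paper leaves implicit, but the argument is the same.
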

\begin{proof}
	By the strong duality theorem applied on \eqref{eq:primalWDS} and \eqref{eq:dualWDS},
	it holds by \eqref{eq:dealWDSopt} that
	\[
		t = \min_{y \in Y} \left\{ \max\{0,y^{\rot}A\}\cdot \xl \right\}
		\geq \min_{y \in Y} y^{\rot}A \xl = \ahl.
	\]
	The statement follows by the definition of \eqref{eq:primalWDS}.
\end{proof}

We demonstrate now that $\ahl$ converges to $1$.

\begin{lemma}\label{lem:OneStep}
Assume $\xl \in X$. Then, for any $\hl \leq \min\{1/4,\ahl \hprm \}$ we have
$\xlplus \in X$ and $$1 - \ahlplus  = (1 - \hl)\cdot (1 - \ahl).$$
\end{lemma}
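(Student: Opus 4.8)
The plan is to track the affine functional $x \mapsto \alpha(y,x) = y^T A x$ under one step of the dynamics for each $y \in Y$ separately, then take the minimum over $y$; positivity of the new iterate is handled afterwards using the bounds from Subsection~\ref{subsec:subsecUsefulLemmas}.

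First I would establish the identity. Fix $y \in Y$. Since $y$ is the $y$-projection of a vertex of the dual feasible region~\eqref{eq:dualWDS}, it satisfies $b^T y = 1$. Combining the update rule $\xlplus = (1-\hl)\xl + \hl\ql$ with $A\ql = b$ gives
\[
\alpha(y,\xlplus) = y^T A\xlplus = (1-\hl)\,y^T A\xl + \hl\, y^T b = (1-\hl)\,\alpha(y,\xl) + \hl,
\]
i.e. $1 - \alpha(y,\xlplus) = (1-\hl)\bigl(1 - \alpha(y,\xl)\bigr)$ for every $y \in Y$. Because $\hl \le 1/4 < 1$, the factor $1-\hl$ is positive, so taking the maximum over $y \in Y$ and using $\alpha(x) = \min_{y\in Y}\alpha(y,x)$ (hence $\max_{y}(1-\alpha(y,x)) = 1-\alpha(x)$) yields $1 - \ahlplus = (1-\hl)(1-\ahl)$, which is the second assertion of the lemma.

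Next I would check $\xlplus \in X$, i.e. (by~\eqref{eq:defSDS}) that $\xlplus > 0$ and $\alpha(y,\xlplus) > 0$ for all $y \in Y$. The second condition, equivalently $\ahlplus > 0$, follows from the identity: $\xl \in X$ gives $\alpha(y,\xl) > 0$ for all $y$; if $\alpha(y,\xl)\ge 1$ then $\alpha(y,\xlplus)\ge 1$, and if $0 < \alpha(y,\xl) < 1$ then $0 < 1-\alpha(y,\xlplus) = (1-\hl)(1-\alpha(y,\xl)) < 1$, so in either case $\alpha(y,\xlplus) > 0$. For positivity I would write componentwise $\ql_e = (\xl_e/c_e)A_e^T\pl$, so that $\xlplus_e = \xl_e\bigl(1 - \hl + \hl\,A_e^T\pl/c_e\bigr)$. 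This is where the lemmas of Subsection~\ref{subsec:subsecUsefulLemmas} enter: since $\xl \in X$, Lemma~\ref{lem:SDSgivesXCapF} produces $f$ with $Af = \ahl b$ and $0 \le f \le \xl$, so Lemma~\ref{lem_ATp_UB} gives $\lVert A^T\pl\rVert_\infty \le \D\onenorm{c}/\ahl$. Hence $|A_e^T\pl|/c_e \le \D\onenorm{c}/(c_{\min}\ahl) = 1/(4\hprm\ahl)$ by the definition $\hprm = c_{\min}/(4\D\onenorm{c})$, and the constraint $\hl \le \ahl\hprm$ then gives $|\hl\,A_e^T\pl/c_e| \le 1/4$. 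Together with $\hl \le 1/4$ this yields $\xlplus_e \ge \xl_e(1 - \tfrac14 - \tfrac14) = \xl_e/2 > 0$, so $\xlplus > 0$ and therefore $\xlplus \in X$.

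The main obstacle is the positivity step: a single step of the dynamics may push a coordinate toward zero when $\ql_e < 0$, so one genuinely needs the uniform bound $\lVert A^T\pl\rVert_\infty \le \D\onenorm{c}/\ahl$ and the matching step-size restriction $\hl \le \ahl\hprm$ to keep $\xlplus$ strictly positive — this is precisely why the admissible bound on $\hl$ degrades as $\ahl \to 0$. By contrast, the scaling identity is essentially immediate once one notices that every $y \in Y$ satisfies $b^T y = 1$; the only point requiring a little care there is that the $y$ achieving $\alpha(\xl)$ need not be the one achieving $\alpha(\xlplus)$, which is why I pass to $\max_y(1-\alpha(y,\cdot))$, a quantity that rescales by the same positive factor for every $y$.
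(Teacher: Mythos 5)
Your proof is correct and follows essentially the same route as the paper's: the identity comes from $b^Ty=1$ and $A\ql=b$ (the paper phrases this via the residual recurrence $r^{(\ell+1)}=(1-\hl)r^{(\ell)}$), and positivity of $\xlplus$ comes from combining Lemma~\ref{lem:SDSgivesXCapF} with Lemma~\ref{lem_ATp_UB} and the step-size restriction $\hl\le\ahl\hprm$. You are in fact slightly more careful than the paper on two minor points it leaves implicit — that the minimizing $y$ may change between steps (handled by passing to $\max_y(1-\alpha(y,\cdot))$) and that $\ahlplus>0$ must be verified for membership in $X$.
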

\begin{proof}
By applying Lemma~\ref{lem_ATp_UB} and Lemma~\ref{lem:SDSgivesXCapF} with $\xl\in X$, 
we have $\|A^{\rot}p^{(\ell)}\|_{\infty}\leq\D  \onenorm{c}/\ahl$ and hence for every index $e$
it holds 
$-\hl\cdot c_{e}^{-1}\xl_{e}A_{e}^{\rot}p^{(\ell)}\geq-(\hl\xl_{e})/(2\ahl h_{0})\geq-\xl_{e}/2$. 
Thus,
\[
	\xlplus_{e} =  (1- \hl)\xl_{e}+\hl\cdot[R_{e}^{(\ell)}]^{-1}A_{e}^{\rot}p^{(\ell)} 
	\ge \frac{3}{4} \xl_e - \frac{1}{2} \xl_e = \frac{1}{4} \xl_e > 0.
\]
Let $y \in \Y$ be arbitrary. Then $y^{\rot} b = 1$ and hence 
$y^{\rot} r^{(\ell)} = y^{\rot} (b - A\xl) = 1 - y^{\rot} A \xl = 1 - \alpha(y,\xl)$. 
The second claim now follows from Lemma~\ref{convergence to feasibility}.
\end{proof}

We note that the convergence speed crucially depends on the initial point $\x{0}\in X$, 
and in particular to its corresponding value $\ah{0}$. Further, this dependence naturally 
partitions the Physarum-inspired dynamics \eqref{eq:DdirD} into the five regimes given in 
Corollary~\ref{cor:ahlConv}.

\subsection{$x^{(k)}$ is Close to a Non-Negative Kernel-Free Vector}
\label{subsection:CloseNonNengKernelFreeVector}

In this subsection, we generalize \cite[Lemma 5.4]{SV-Flow} to positive linear programs. 
We achieve this in two steps. First, we generalize a result by Ito et al.~\cite[Lemma 2]{Ito-Convergence-Physarum}
to positive linear programs and then we substitute the notion of a non-negative cycle-free flow 
with a non-negative feasible kernel-free vector.

Throughout this and the consecutive subsection, we denote by
$\rhoA :=\max\left\{D\gamma_{A},\,n\D^{2}\infnorm{A}\right\}$.
\begin{lemma}\label{lem_RoundingFlows}
	Suppose a matrix $A\in\mathbb{Z}^{n\times m}$
	has full row rank and vector $b\in\mathbb{Z}^{n}$. Let $g$ be a
	feasible solution to $Ag=b$ and $S\subseteq[n]$ be a subset of row indices of $A$
	such that $\sum_{i\in S}|g_{i}|<1/\rhoA $. Then, there
	is a feasible solution $f$ such that $g_{i}\cdot f_{i}\ge0$ for
	all $i\in[n]$, $f_{i}=0$ for all $i\in S$ and $\lVert f-g\rVert_{\infty}<1/(\D \gamma_{A})$.
\end{lemma}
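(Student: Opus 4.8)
The plan is to massage the feasible solution $g$ towards a sign-compatible solution supported off $S$ by repeatedly subtracting suitable vectors from the kernel of $A$, exactly as one kills circulations in the network setting, and to control the total change using the small-$\ell_1$-mass hypothesis on $S$ together with the determinant bound $\rho_A$. First I would invoke Lemma~\ref{sign-compatible representation}: writing $g$ (after flipping the signs of the appropriate columns so that $g \ge 0$) as a convex combination $\sum_j \lambda_j z^j$ of at most $m$ basic feasible solutions plus a kernel vector $w$, all sign-compatible with $g$. Since $w$ is $g$-sign-compatible and $g_i$ is tiny for $i \in S$, every $z^j$ appearing with $\lambda_j>0$ must already have most of its mass outside $S$; more precisely, because each $z^j$ is a basic feasible solution, Lemma~\ref{lem:fsDg} says its nonzero coordinates are all $\ge 1/(\D\gamma_A)$, while $\sum_{i\in S}|g_i| < 1/\rho_A \le 1/(\D\gamma_A)$ forces $z^j_i = 0$ for every $i\in S$ (and similarly $w_i=0$ on $S$). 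So in fact the convex part $\sum_j \lambda_j z^j$ is already supported off $S$ and sign-compatible with $g$; the only obstruction to taking $f$ equal to it is the leftover kernel component $w$.

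Next I would set $f := g - w = \sum_j \lambda_j z^j$. This is feasible ($Aw=0$), it is sign-compatible with $g$ (all $z^j$ are), it vanishes on $S$ by the previous paragraph, and it remains to bound $\lVert f - g\rVert_\infty = \lVert w\rVert_\infty$. Here I would again use that $w$ is $g$-sign-compatible: on $S$ both $g$ and $w$ vanish, and on $[n]\setminus S$ one has $|w_i| \le |g_i| + |f_i|$, but that is not yet small. The cleaner route is to note that $w = g - f$ where $f$ is a convex combination of basic feasible solutions; one has to bound $\lVert w\rVert_\infty$ by the mass that $g$ places on coordinates where it disagrees with the "rounded" solution. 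Actually the right statement to aim at is $\lVert w\rVert_\infty < 1/(\D\gamma_A)$, and I expect this to follow because $w$ lies in the kernel of $A$ and is sign-compatible with $g$, hence supported on $\supp(g)$; a kernel vector supported on $\supp(g)\setminus S$ with $\ell_1$-mass comparable to $\sum_{i}|g_i|$ — no, that overshoots. The correct bound comes from Cramer's rule / Lemma~\ref{lem_ATLA_UB}-type determinant estimates: expressing the adjustment needed to move $g$ onto the affine hull of the basic solutions in terms of $n\times n$ subdeterminants of $A$, each coordinate of $w$ is a signed combination of the $|g_i|$, $i\in S$, weighted by ratios of minors bounded by $\D$ (after scaling by $\gamma_A$) and by $n\infnorm{A}$ factors, so $\lVert w\rVert_\infty \le \big(n\D^2\infnorm{A}\big)\sum_{i\in S}|g_i| < (n\D^2\infnorm{A})/\rho_A \le 1/(\D\gamma_A)$ by the definition $\rho_A = \max\{\D\gamma_A,\, n\D^2\infnorm{A}\}$ — wait, that gives $\le 1$, not $\le 1/(\D\gamma_A)$. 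I would instead track the bound so that $\rho_A$ is chosen precisely to make $\lVert w\rVert_\infty < 1/(\D\gamma_A)$ come out; i.e. the weight multiplying $\sum_{i\in S}|g_i|$ should be $\le n\D^2\infnorm{A}\cdot\D\gamma_A = \rho_A\cdot\D\gamma_A$, and dividing by $\rho_A$ yields exactly $\D\gamma_A$ — so in fact the intended reading is that $\sum_{i\in S}|g_i|<1/\rho_A$ must be combined with a weight bound of $\D\gamma_A\rho_A$ is wrong too; the safe plan is to derive the explicit weight from the projection formula and then simply cite the definition of $\rho_A$ to close the inequality, flagging that the two cases in $\max\{\cdot,\cdot\}$ handle, respectively, the sign-compatibility/nonzero-coordinate step (via $\D\gamma_A$, Lemma~\ref{lem:fsDg}) and the magnitude-of-correction step (via $n\D^2\infnorm{A}$).

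The main obstacle, then, is the quantitative bound $\lVert f-g\rVert_\infty < 1/(\D\gamma_A)$: making the projection of $g$ onto (the affine hull of) the basic feasible solutions explicit enough to see that its displacement is a linear combination of the small quantities $\{|g_i|\}_{i\in S}$ with determinant-bounded coefficients. I would handle this by extending the supports: let $B$ be a basis with $\supp(f)\subseteq B$ after the rounding; then $w=g-f$ satisfies $A_B w_B = -A_{N}w_{N}$ with $N=[n]\setminus B$, and since $w$ is supported on $\supp(g)$ and vanishes off $S\cup B$-ish, $w_N$ is supported on $S$, giving $w_B = -A_B^{-1}A_N w_N$ with $\lVert w_N\rVert_1 \le \sum_{i\in S}|g_i|$; Cramer's rule plus $\gamma_A$-scaling bounds $\lVert A_B^{-1}A_N\rVert_{\infty\to\infty}$ by $n\D\infnorm{A}/\gamma_A\cdot\gamma_A = $ a $\D$-and-$\infnorm{A}$ expression, and a final application of Lemma~\ref{lem:fsDg}-style reasoning (or just the definition of $\rho_A$) yields $\lVert w\rVert_\infty < 1/(\D\gamma_A)$. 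The sign-compatibility and vanishing-on-$S$ conditions are then immediate from the construction, completing the proof.
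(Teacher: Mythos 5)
There is a genuine gap, and it sits at the very first step. From the sign-compatible decomposition $g=\sum_j\lambda_j z^j+w$ you conclude that every basic feasible solution $z^j$ with $\lambda_j>0$ must vanish on $S$, because its nonzero coordinates are at least $1/(\D\gamma_A)$ while $\sum_{i\in S}|g_i|<1/\rhoA\le 1/(\D\gamma_A)$. But what sign-compatibility gives you is only $g_i\ge\lambda_j z^j_i\ge\lambda_j/(\D\gamma_A)$ whenever $z^j_i\neq0$; this bounds $\lambda_j$, it does not force $z^j_i=0$. A basic feasible solution entering the convex combination with a tiny coefficient can perfectly well have support meeting $S$, and likewise $w$ need not vanish on $S$. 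Consequently your candidate $f=\sum_j\lambda_j z^j$ need not satisfy $f_S=0$, and the companion claim $\infnorm{w}<1/(\D\gamma_A)$ is both unproven and false in general (sign-compatibility only gives $|w_i|\le|g_i|$). The paper obtains the existence of a nonnegative feasible solution vanishing on $S$ by a different device: it minimizes $\mathbf{1}_S^Tx$ over $\{Ax=b,\ x\ge0\}$, notes the optimum is $<1/\rhoA\le1/(\D\gamma_A)$, and applies the $1/(\D\gamma_A)$ lower bound of Lemma~\ref{lem:fsDg} to the \emph{optimal basic solution of that auxiliary LP}, not to the decomposition of $g$.

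Your final paragraph does land near the paper's quantitative argument: with $f_S=0$ one gets $A_{\overline S}(f_{\overline S}-g_{\overline S})=A_Sg_S$, one solves this on a column basis $B$ of $A_{\overline S}$ by Cramer's rule, and the right-hand side has $\ell_1$-mass at most $\infnorm{A}\sum_{i\in S}|g_i|$. But you never close the inequality; your own arithmetic attempts visibly overshoot (you get a bound of $1$ rather than $1/(\D\gamma_A)$) and you defer to an unspecified ``safe plan.'' The correct accounting is: each entry of the correction is at most $\tfrac{\D}{\gamma_A}\sum_{i\in B}|r_i|$ with $|r_i|\le\infnorm{A}\sum_{j\in S}|g_j|<\infnorm{A}/\rhoA\le 1/(n\D^2)$, so the per-coordinate weight is $n\D\infnorm{A}/\gamma_A$ (not $n\D^2\infnorm{A}$), and dividing $1/\rhoA\le1/(n\D^2\infnorm{A})$ by it yields exactly $1/(\D\gamma_A)$. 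As written, the proposal neither constructs a valid $f$ nor completes the bound, so it does not constitute a proof.
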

\begin{proof}
	W.l.o.g. we can assume that $g\ge0$ as we could change the signs
	of the columns of $A$ accordingly. Let $\mathbf{1}_{S}$ be the indicator
	vector of $S$. We consider the linear program 
	\[
	\min\{\mathbf{1}_{S}^{\rot}x\,:\,Ax=b,\ x\ge0\}
	\]
	and let $opt$ be its optimum value. 
	Notice that $0\le opt\le\mathbf{1}_{S}^{\rot}g<1/\rhoA$.
	Since the feasible region does not contain a line and the minimum
	is bounded, the optimum is attained at a basic feasible solution,
	say $f$.
	Suppose that there is an index $i\in S$ with $f_{i}>0$.
	By Lemma~\ref{lem:fsDg}, we have $f_{i}\ge1/(\D \gamma_{A})$. 
	This is a contradiction to the optimality of $f$ and hence $f_{i}=0$
	for all $i\in S$.
	
	Among the feasible solutions $f$ such that $f_i g_i \ge 0$ for all $i$ and $f_i = 0$ for all 
	$i \in S$, we choose the one that minimizes $\infnorm{f-g}$. For simplicity, 
	we also denote it by $f$.
	Note that $f$ satisfies $\supp(f)\subseteq\overline{S}$, where $\overline{S}=[m]\backslash S$.
	Further, since $f_{S}=0$ and 
	\[
	A_{S}g_{S}+A_{\overline{S}}g_{\overline{S}}=Ag=b=Af=A_{S}f_{S}+A_{\overline{S}}f_{\overline{S}}=A_{\overline{S}}f_{\overline{S}}
	\]
	we have $A_{\overline{S}}\left(f_{\overline{S}}-g_{\overline{S}}\right)=A_{S}g_{S}$.
	Let $A_{B}$ be a linearly independent column subset of $A_{\overline{S}}$
	with maximal cardinality, i.e. the column subset $A_{N}$, where $N = \overline{S} \setminus B$, is linearly
	dependent on $A_{B}$.
	Hence, there is an invertible square submatrix $A_{B}^{\prime}\in\mathbb{Z}^{|B|\times|B|}$
	of $A_{B}$ and a vector $v=(v_{B},0_{N})$ such that 
	\[
	\left(\begin{array}{c}
	A_{B}^{\prime}\\
	A_{B}^{\prime\prime}
	\end{array}\right)v_{B}
	=A_{B}v_{B}
	=A_{S}g_{S}.
	\]
	Let $r=\left(A_{S}g_{S}\right)_{B}$.
	Since $A_{B}^{\prime}$ is invertible, there is a unique vector $v_{B}$
	such that $A_{B}^{\prime}v_{B}=r$.
	Observe that
	\[
	|r_i|=\left|\sum_{j\in S}A_{i,j}g_{j}\right|\leq\inftynorm{A}\sum_{j\in S}|g_{j}| < \frac{\inftynorm{A}}{nD^{2}\inftynorm{A}}=\frac{1}{nD^{2}}.
	\]
	By Cramer's rule $v_{B}(e)$ is quotient of two determinants. 
	The denominator is $\det(A_{B}^{\prime})$ and hence at least one in absolute value. 
	For the numerator, the $e$-th column is replaced by $r$. 
	Expansion according to this column shows that the absolute value of the numerator is bounded by 
	\[ 
		\frac{D}{\gamma_{A}}\sum_{i\in B}\left| r_i \right|<\frac{D}{\gamma_{A}}\cdot\frac{|B|}{nD^{2}}\le \frac{1}{D\gamma_{A}}.
	\]
	Therefore, $\lVert f-g\rVert_{\infty}\leq1/(D\gamma_{A})$ and the statement follows.
\end{proof}

\begin{lemma}\label{lem_FeasibleKernelFree}
	Let $q\in\R^m$, $p\in\R^n$ and 
	$N=\{e\in [m]: q_{e}\leq0\text{ or } p^{\rot}A_{e}\le 0\}$,
	where $Aq=b$ and $p=L^{-1}b$.
	Suppose $\sum_{e\in N}|q_{e}|<1/\rhoA$.
	Then there is a non-negative feasible kernel-free vector $f$ 
	such that $\supp(f)\subseteq E\backslash N$
	and $\lVert f-q\rVert_{\infty}<1/(\D \gamma_{A})$.
\end{lemma}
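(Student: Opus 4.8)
The plan is to obtain the required vector $f$ directly from Lemma~\ref{lem_RoundingFlows} and then observe that it is already kernel-free, so that no post-processing that could damage the distance bound is needed. Concretely, I would invoke Lemma~\ref{lem_RoundingFlows} with $g := q$ and $S := N$: the hypothesis $\sum_{e\in S}|g_e| = \sum_{e\in N}|q_e| < 1/\rhoA$ is precisely what we are given, and $Aq = b$ holds. This produces a feasible $f$ with $q_e f_e \ge 0$ for all $e\in[m]$, with $f_e = 0$ for all $e\in N$, and with $\|f-q\|_\infty < 1/(\D\gamma_A)$. Since every $e\notin N$ satisfies $q_e > 0$ and $p^\top A_e > 0$ by the definition of $N$, the sign condition $q_e f_e \ge 0$ forces $f_e \ge 0$ on $[m]\setminus N$, while $f_e = 0$ on $N$; hence $f \ge 0$ and $\supp(f) \subseteq [m]\setminus N$. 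At this point all numerical requirements of the statement are met, and only kernel-freeness remains.

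For kernel-freeness I would use the sign-compatible decomposition of Lemma~\ref{sign-compatible representation}: write $f = \sum_j \lambda_j z^j + w$ with $z^j$ basic feasible solutions, $\lambda_j \ge 0$, $\sum_j\lambda_j = 1$, $w \in \ker A$, and all $z^j$ and $w$ sign-compatible with $f$. Because $f \ge 0$, sign-compatibility forces $w \ge 0$ and $\supp(w) \subseteq \supp(f) \subseteq [m]\setminus N$. Pairing $w$ against the potential and using $Aw = 0$ gives $0 = p^\top(Aw) = \sum_{e\in\supp(w)} (p^\top A_e)\,w_e$, a sum each of whose terms is strictly positive, since $p^\top A_e > 0$ for $e\notin N$ and $w_e > 0$ on $\supp(w)$. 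Hence $\supp(w) = \emptyset$, i.e.\ $w = 0$, so $f = \sum_j\lambda_j z^j$ lies in the convex hull of the basic feasible solutions and is kernel-free; taking this $f$ as the output finishes the proof.

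The step I expect to be the conceptual crux is the kernel-freeness argument — specifically, resisting the natural but wrong route of first accepting whatever vector Lemma~\ref{lem_RoundingFlows} returns and then subtracting its kernel component: that subtraction can move a feasible vector arbitrarily far from $q$ (the kernel component need not be small), so it would not preserve the bound $1/(\D\gamma_A)$. The insight that makes the argument trivial instead is that a \emph{non-negative} feasible vector supported on $[m]\setminus N$ has no kernel component at all, because the potential $p$ certifying $N$ pairs strictly positively with every column outside $N$, and a non-negative circulation on such columns must vanish. Note that the proof uses only the two inclusions $N \supseteq \{e : q_e \le 0\}$ (needed for $f\ge 0$) and $N \supseteq \{e : p^\top A_e \le 0\}$ (needed for $w = 0$), together with $Aq = b$; no further property of $q$ or of the particular choice $p = L^{-1}b$ enters.
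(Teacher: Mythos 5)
Your proposal is correct and follows essentially the same route as the paper: apply Lemma~\ref{lem_RoundingFlows} with $S=N$, then use the sign-compatible decomposition of Lemma~\ref{sign-compatible representation} and pair the kernel component against the potential $p$ to conclude $w=0$. The only cosmetic difference is that you phrase the kernel-freeness step as ``every term $(p^TA_e)w_e$ over $\supp(w)\subseteq[m]\setminus N$ is strictly positive,'' whereas the paper runs the equivalent contradiction ``some $e$ must have $w_e>0$ and $p^TA_e\le 0$, hence $e\in N$, contradicting $f_e=0$.''
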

\begin{proof}
	We apply Lemma~\ref{lem_RoundingFlows} to $q$ with $S=N$. Then, there is
	a non-negative feasible vector $f$ such that $\supp(f)\subseteq E\backslash N$
	and $\lVert f-q\rVert_{\infty}<1/(\D \gamma_{A})$.
	By Lemma~\ref{sign-compatible representation},
	$f$ can be expressed as a sum of a convex combination of basic feasible solutions
	plus a vector $w$ in the kernel of $A$. Moreover, all vectors in
	this representation are sign compatible with $f$, and in particular
	$w$ is non-negative too.
	
	Suppose for contradiction that $w\neq 0$. 
	By definition, $0=p^{\rot}Aw=\sum_{e\in[m]}p^{\rot}A_{e}w_e$
	and since $w\geq0$ and $w\neq0$, it follows that
	there is an index $e\in[m]$ satisfying
	$w_e>0$ and $p^{\rot}A_{e}\leq0$.	
	Since $f$ and $w$ are sign compatible, $w_e>0$ implies $f_e>0$.
	On the other hand, as $p^{\rot}A_{e}\leq0$ we have $e\in N$ and thus $f_{e}=0$.
	This is a contradiction, hence $w=0$.
\end{proof}

Using Corollary~\ref{cor:ahlConv}, for any point $\x{0}\in X$ there is a point
$\x{t}\in X$ such that $\ah{t}\in[1/2,1/\hprm]$. Thus, we can assume that 
$\ah{0}\in[1/2,1/\hprm]$ and work with $h \le \hprm /2$, where $\hprm  = c_{\min}/(2\D \onenorm{c})$.
We generalize next \cite[Lemma 5.4]{SV-Flow}.

\begin{lemma}\label{lem_ApproxNonNegKernelFree}
    Suppose $\x{t}\in X$ such that $\ah{t}\in[1/2,1/\hprm]$,
	$h\leq \hprm /2$ and $\epsilon\in(0,1)$.
	Then, for any $k\geq h^{-1}\ln(8m\rhoA \PSI{0}/\epsilon)$
	there is a non-negative feasible kernel-free vector $f$ such that
	$\lVert x^{(t+k)}-f\rVert_{\infty}<\epsilon/(\D \gamma_{A})$.
\end{lemma}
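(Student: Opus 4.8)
The plan is to apply Lemma~\ref{lem_FeasibleKernelFree} to the feasible average flow $\overline{q}^{(t,k)}$ (which satisfies $A\overline{q}^{(t,k)}=b$) with the aggregate potential $\overline{p}^{(t,k)}$ as its certificate, and then carry the resulting $\infty$-bound back to $\x{t+k}$ through the identity $\x{t+k}=(1-h)^{k}\x{t}+[1-(1-h)^{k}]\overline{q}^{(t,k)}$. First I would check that the hypotheses propagate along the orbit: since $\ah{t}\in[1/2,1/\hprm]$ and $h\le\hprm/2\le\min\{1/4,\ah{\ell}\hprm\}$, Lemma~\ref{lem:OneStep} applies at every $\ell\ge t$, so $\x{\ell}\in X$ (hence $\x{\ell}>0$), $1-\ah{\ell+1}=(1-h)(1-\ah{\ell})$, and therefore $\ah{\ell}\ge1/2$ throughout; consequently Lemma~\ref{lem:SDSgivesXCapF} and Lemma~\ref{lem_ATp_UB} give $\|A^{T}p^{(\ell)}\|_{\infty}\le\D\onenorm{c}/\ah{\ell}\le2\D\onenorm{c}$, while Lemma~\ref{lem:upper bound for q} and Lemma~\ref{lem_UBxk} give $\|q^{(i)}\|_{\infty}\le m\D^{2}\|b/\gamma_{A}\|_{1}\le\PSI{0}$, $\|\x{\ell}\|_{\infty}\le\PSI{0}$, and hence $\|\overline{q}^{(t,k)}\|_{\infty}\le\PSI{0}$.

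Write $\theta:=(1-h)^{k}$. The hypothesis $k\ge h^{-1}\ln(8m\rhoA\PSI{0}/\epsilon)$ together with $1-h\le e^{-h}$ yields $\theta\le\epsilon/(8m\rhoA\PSI{0})\le 1/8$. The displayed identity rearranges to $\x{t+k}-\overline{q}^{(t,k)}=\theta(\x{t}-\overline{q}^{(t,k)})$, so $\|\x{t+k}-\overline{q}^{(t,k)}\|_{\infty}\le2\theta\PSI{0}\le\epsilon/(4\D\gamma_{A})$ (using $\rhoA\ge\D\gamma_{A}$).

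The crux is the estimate on $N:=\{e:\overline{q}^{(t,k)}_{e}\le0\ \text{or}\ (\overline{p}^{(t,k)})^{T}A_{e}\le0\}$: I claim $|\overline{q}^{(t,k)}_{e}|\le4\theta\PSI{0}$ for every $e\in N$. If $\overline{q}^{(t,k)}_{e}\le0$, this is immediate from $0<\x{t+k}_{e}=\theta\x{t}_{e}+(1-\theta)\overline{q}^{(t,k)}_{e}$, which forces $-\overline{q}^{(t,k)}_{e}\le\theta\PSI{0}/(1-\theta)\le2\theta\PSI{0}$. If instead $(\overline{p}^{(t,k)})^{T}A_{e}=\sum_{i=t}^{t+k-1}A_{e}^{T}p^{(i)}\le0$, I would use the multiplicative form $\x{t+k}_{e}=\x{t}_{e}\prod_{i=t}^{t+k-1}\bigl[(1-h)+hA_{e}^{T}p^{(i)}/c_{e}\bigr]=\x{t}_{e}\,\theta\prod_{i=t}^{t+k-1}\bigl(1+\tfrac{h}{(1-h)c_{e}}A_{e}^{T}p^{(i)}\bigr)$; each factor is strictly positive because $|A_{e}^{T}p^{(i)}/c_{e}|\le2\D\onenorm{c}/c_{\min}=1/\hprm$ and $h\le\hprm/2$, so $1+y\le e^{y}$ gives $\x{t+k}_{e}\le\x{t}_{e}\,\theta\exp\!\bigl(\tfrac{h}{(1-h)c_{e}}(\overline{p}^{(t,k)})^{T}A_{e}\bigr)\le\theta\PSI{0}$, whence $|\overline{q}^{(t,k)}_{e}|=|\x{t+k}_{e}-\theta\x{t}_{e}|/(1-\theta)\le4\theta\PSI{0}$. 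Summing over the at most $m$ indices of $N$ yields $\sum_{e\in N}|\overline{q}^{(t,k)}_{e}|\le4m\theta\PSI{0}\le\epsilon/(2\rhoA)<1/\rhoA$, so Lemma~\ref{lem_FeasibleKernelFree} applies to $q=\overline{q}^{(t,k)}$, $p=\overline{p}^{(t,k)}$.

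It remains to combine. Lemma~\ref{lem_FeasibleKernelFree} produces a non-negative feasible kernel-free $f$ with $\supp(f)\subseteq E\setminus N$; inspecting its proof (the rounding error in Lemma~\ref{lem_RoundingFlows} is bounded by $\tfrac{\rhoA}{\D\gamma_{A}}\sum_{e\in N}|\overline{q}^{(t,k)}_{e}|$, i.e.\ linear in the discarded mass), this $f$ in fact satisfies $\|f-\overline{q}^{(t,k)}\|_{\infty}<\epsilon/(2\D\gamma_{A})$, sharper than the crude $1/(\D\gamma_{A})$. The triangle inequality then gives $\|\x{t+k}-f\|_{\infty}\le\|\x{t+k}-\overline{q}^{(t,k)}\|_{\infty}+\|\overline{q}^{(t,k)}-f\|_{\infty}<\epsilon/(4\D\gamma_{A})+\epsilon/(2\D\gamma_{A})<\epsilon/(\D\gamma_{A})$, as required. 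I expect the second case of the $N$-estimate — the telescoped multiplicative bound showing $\x{t+k}_{e}$ is exponentially small once the aggregate potential $\overline{p}^{(t,k)}$ opposes $e$ — to be the main point, together with making the $\epsilon$-linear dependence of the rounding step explicit, since Lemmas~\ref{lem_RoundingFlows}–\ref{lem_FeasibleKernelFree} are written only with the constant $1/(\D\gamma_{A})$.
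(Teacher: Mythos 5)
Your proof is correct and follows essentially the same route as the paper's: the same split of the bad set $N$ into indices with $\overline{q}^{(t,k)}_e\le 0$ (handled via positivity of $x^{(t+k)}_e$) and indices with $A_e^T\overline{p}^{(t,k)}\le 0$ (handled via the telescoped multiplicative bound $x^{(t+k)}_e\le e^{-hk}\PSI{0}$), followed by Lemma~\ref{lem_FeasibleKernelFree} and a triangle inequality. Your observation that the rounding error in Lemmas~\ref{lem_RoundingFlows}--\ref{lem_FeasibleKernelFree} scales linearly with the discarded mass $\sum_{e\in N}|\overline{q}^{(t,k)}_e|$ is exactly the (implicit) sharpening the paper also relies on to land at $\epsilon/(\D\gamma_A)$ rather than $1/(\D\gamma_A)$; you simply make it explicit.
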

\begin{proof}
	Let $\beta^{(k)}\overset{\text{def}}{=}1-(1-h)^{k}$. By \eqref{eq:PhysDyn},
	vector $\overline{q}^{(t,k)}$ satisfies $A\overline{q}^{(t,k)}=b$
	and thus Lemma \ref{lem_UBxk} yields
	\begin{equation}\label{eq:infNormXQ}
	\lVert x^{(t+k)}-\beta^{(k)}\overline{q}^{(t,k)}\rVert_{\infty}=(1-h)^{k}\cdot
	\lVert \x{t}\rVert_{\infty}\leq \exp\{-hk\}\cdot\PSI{0}\le \epsilon/(8m\rhoA ).
	\end{equation}
	Using Corollary~\ref{cor:ahlConv}, we have $x^{(t+k)}\in X$ such that 
	$\alpha^{(t+k)}\in(1/2,1/\hprm)$ for every $k\in\N_+$.
	Let $F_{k}=Q_{k}\cup P_{k}$, where
	$Q_{k}=\{e\in[m] : \overline{q}_{e}^{(t,k)}\leq0\}$
	and $P_{k}=\{e\in[m] : A_{e}^{\rot}\overline{p}^{(t,k)}\leq0\}$.
	Then, for every $e\in Q_{k}$ it holds
	\begin{equation}\label{eq:Qk}
	|\overline{q}_{e}^{(t,k)}|
	\leq[\beta^{(k)}]^{-1}\cdot|x_{e}^{(t+k)}-
	\beta^{(k)}\overline{q}_{e}^{(t,k)}|
	\leq \epsilon/(7m\rhoA ).
	\end{equation}
	By Lemma~\ref{lem_UBxk}, $\lVert x^{(\cdot)} \rVert \leq \PSI{0}$. Moreover, by (\ref{eq:PhysDyn}) for every $e\in P_{k}$ we have
	\begin{eqnarray*}
		x_{e}^{(t+k)} & = & x_{e}^{(t)}\prod_{i=t}^{k+t-1}
		\left[1+h\left(c_{e}^{-1}A_{e}^{\rot}p^{(i)}-1\right)\right]\\
		& \leq & x_{e}^{(t)}\cdot\exp\left\{ -hk + (h/c_{e})\cdot 
		A_{e}^{\rot}\overline{p}^{(t,k)}\right\} \\
		& \leq & \exp\left\{ -hk\right\} \cdot \PSI{0}\\
		& \leq & \epsilon/(8m\rhoA ),
	\end{eqnarray*}
	and by combining the triangle inequality with (\ref{eq:infNormXQ}), it
	follows for every $e\in P_{k}$ that
	\begin{eqnarray}
	|\overline{q}_{e}^{(t,k)}| & \leq & [\beta^{(k)}]^{-1}\cdot\left[|x_{e}^{(t+k)}-\beta^{(k)}\overline{q}_{e}^{(t,k)}|+|x_{e}^{(t+k)}|\right]\nonumber \\
	& \leq & [\beta^{(k)}]^{-1}\cdot \epsilon/(4m\rhoA )\nonumber \\
	& \leq & \epsilon/(3m\rhoA ). \label{eq:Pk}
	\end{eqnarray}
	Therefore, (\ref{eq:Qk}) and (\ref{eq:Pk}) yields that
	\begin{equation}\label{eq:sumQ1D}
	\sum_{e\in F_{k}}|\overline{q}_{e}^{(t,k)}|
	\leq m \cdot \epsilon/(3m\rhoA )
	\leq \epsilon/(3\rhoA ).
	\end{equation}
	
	By Lemma \ref{lem_FeasibleKernelFree} applied with $\overline{q}_{e}^{(t,k)}$
	and $N=F_{k}$, it follows by (\ref{eq:sumQ1D}) that there is a non-negative feasible
	kernel-free vector $f$ such that $\supp(f)\subseteq E\backslash N$
	and 
	\[
	\lVert f-\overline{q}^{(t,k)}\rVert_{\infty}<\epsilon/(3\D \gamma_{A}).
	\]
	By Lemma \ref{lem:upper bound for q}, we have $\lVert\overline{q}^{(t,k)}\rVert_{\infty}\leq m\D^{2}\lVert b/\gamma_{A}\rVert_{1}$
	and since $\PSI{0}\geq m\D^{2}\|b/\gamma_{A}\|_{1}$, it follows that
	\begin{align*}
		\lVert x^{(t+k)}-f\rVert_{\infty} & =\lVert x^{(t+k)}-\beta^{(k)}\overline{q}^{(t+k)}+\beta^{(k)}\overline{q}^{(t+k)}-f\rVert_{\infty}\\
		& \leq\lVert x^{(t+k)}-\beta^{(k)}\overline{q}^{(t+k)}\rVert_{\infty}+\lVert f-\overline{q}^{(t+k)}\rVert_{\infty}+\left(1-h\right)^{k}\lVert\overline{q}^{(t+k)}\rVert_{\infty}\\
		& \leq
		\frac{\epsilon}{8m\rhoA }+\frac{\epsilon}{3\D \gamma_{A}}+
		\frac{\epsilon\cdot m\D^{2}\lVert b/\gamma_{A}\rVert_{1}}{8m\rhoA \cdot\PSI{0}}
		\leq\frac{\epsilon}{\D \gamma_{A}}.\qedhere
	\end{align*}
\end{proof}

\subsection{$x^{(k)}$ is $\epsilon$-Close to an Optimal Solution}\label{subsec:epsCloseToOPT}

Recall that $\mathcal{N}$ denotes the set of non-optimal basic feasible solutions 
of \eqref{OLP} and $\Phi=\min_{g\in\mathcal{N}}c^{\rot}g-\opt$. For completeness,
we prove next a well known inequality~\cite[Lemma 8.6]{PapadimitriouSteiglitz82}
that lower bounds the value of $\Phi$.

\begin{lemma}\label{lem_lowBounds}
	Suppose $A\in\R^{n\times m}$ has full row rank, $b\in\R^n$ and $c\in\R^m$ 
	are integral. Then, $\Phi\geq1/(\D\gamma_{A})^2$.
\end{lemma}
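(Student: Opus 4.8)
The plan is to follow the classical argument (as in~\cite[Lemma 8.6]{PapadimitriouSteiglitz82}): every basic feasible solution of~\eqref{OLP} has a cost that is a rational number whose denominator is controlled by $\D\gamma_{A}$, so two distinct such costs must differ by at least $1/(\D\gamma_{A})^{2}$. First I would reduce to the case in which the optimum of~\eqref{OLP} is attained at a basic feasible solution $g^{\star}$: the feasible region $\set{x}{Ax=b,\ x\ge0}$ is pointed and, since $c>0$, the objective is bounded below, so an optimal basic feasible solution exists; and if~\eqref{OLP} is infeasible then $\mathcal{N}=\emptyset$ and there is nothing to prove.

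Next, for an arbitrary basic feasible solution $g=(g_{B},0)$ with invertible basis $A_{B}$, I would reuse the Cramer's-rule computation from the proof of Lemma~\ref{lem:fsDg}. Writing $Q_{j}$ for $A_{B}$ with its $j$-th column replaced by $b$, and $[A_{B}]_{-k,-j}$ for $A_{B}$ with its $k$-th row and $j$-th column deleted, one has for $j\in B$
\[
g_{j}=\frac{\det(Q_{j})}{\det(A_{B})}
=\frac{1}{\gamma_{A}}\cdot\frac{\sum_{k}(-1)^{j+k}b_{k}\det\bigl(\gamma_{A}^{-1}[A_{B}]_{-k,-j}\bigr)}{\det\bigl(\gamma_{A}^{-1}A_{B}\bigr)}.
\]
Because $\gamma_{A}^{-1}A$ is integral, the cofactors $\det(\gamma_{A}^{-1}[A_{B}]_{-k,-j})$ are integers (of absolute value at most $\D$, being $(n-1)\times(n-1)$ submatrices of $A/\gamma_{A}$), so the numerator $N_{j}^{g}:=\sum_{k}(-1)^{j+k}b_{k}\det(\gamma_{A}^{-1}[A_{B}]_{-k,-j})$ is an integer, and $d_{g}:=\gamma_{A}\det(\gamma_{A}^{-1}A_{B})$ is a nonzero integer with $|d_{g}|\le\gamma_{A}\D$ (an $n\times n$ submatrix of $A/\gamma_{A}$). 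Since $c$ is integral, $c^{T}g=\sum_{j\in B}c_{j}N_{j}^{g}/d_{g}=M_{g}/d_{g}$ for some integer $M_{g}$, with denominator $d_{g}$ bounded in absolute value by $\gamma_{A}\D$.

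Applying this to $g^{\star}$ as well, write $\opt=c^{T}g^{\star}=M^{\star}/d^{\star}$ with $M^{\star},d^{\star}\in\Z$, $d^{\star}\neq0$, $|d^{\star}|\le\gamma_{A}\D$. Then for any $g\in\mathcal{N}$,
\[
c^{T}g-\opt=\frac{M_{g}d^{\star}-M^{\star}d_{g}}{d_{g}\,d^{\star}}.
\]
The numerator is an integer, and it is nonzero because $g$ is non-optimal, so $c^{T}g>\opt$; hence its absolute value is at least $1$. As $|d_{g}d^{\star}|\le(\gamma_{A}\D)^{2}$, we get $c^{T}g-\opt\ge1/(\D\gamma_{A})^{2}$, and taking the minimum over $g\in\mathcal{N}$ gives $\Phi\ge1/(\D\gamma_{A})^{2}$.

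I do not expect a genuine obstacle here: the argument is elementary bookkeeping with Cramer's rule. The only points deserving care are the reduction to an optimal \emph{basic} feasible solution $g^{\star}$, and the use of the rewriting from Lemma~\ref{lem:fsDg} to bound the denominators $d_{g}$ by $\gamma_{A}\D$ (rather than by the naive $\gamma_{A}^{n}\D$ obtained from $\det(A_{B})=\gamma_{A}^{n}\det(A_{B}/\gamma_{A})$), which is exactly what makes the bound $(\D\gamma_{A})^{2}$ come out.
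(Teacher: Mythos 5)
Your proposal is correct and follows essentially the same route as the paper: Cramer's rule applied to $\gamma_A^{-1}A_B$ shows every basic feasible solution (and hence its cost, since $c$ is integral) is a rational with denominator bounded by $\D\gamma_A$, so the positive gap between a non-optimal basic cost and $\opt$ is at least $1/(\D\gamma_A)^2$. Your version is slightly more carefully bookkept (you make the reduction to an optimal \emph{basic} solution explicit and only bound denominators, whereas the paper also loosely bounds the numerators), but the argument is the same.
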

\begin{proof}
	Let $g=(g_{B},0)$ be an arbitrary basic feasible solution with basis matrix $A_B$,
	where $g_{B}(e)\neq0$ and $|\mathrm{supp}(g_{B})|=n$. 
	We write $M_{-i,-j}$ to denote the matrix $M$ with deleted $i$-th row 
	and $j$-th column. Let $Q_e$ be the matrix formed by replacing the $e$-th column 
	of $A_{B}$ by the column vector $b$.
	Then, by Cramer's rule, we have
	\[
	\left|g_{B}(e)\right| = \left|\frac{\det(Q_e)}{\det(A_{B})}\right| = \frac{1}{\gamma_A}\left|\sum_{k=1}^{n}\frac{\left(-1\right)^{j+k}\cdot b_{k}\cdot\det\left(\gamma_{A}^{-1}[A_{B}]_{-k,-j}\right)}
	{\det\left(\gamma_{A}^{-1}A_{B}\right)}\right| \geq\frac{1}{D\gamma_{A}}.
	\]
	Note that all components of vector $g_{B}$ have denominator with equal value, 
	i.e. $\det(A_{B})$. Consider an arbitrary non-optimal basic feasible solution $g$ 
	and an optimal basic feasible solution $\fstr$. Then, $g_{e}=G_{e}/G$ and $\fstr_{e}=F_{e}/F$ 
	are rationals such that $G_{e},G,F_{e},F\leq D\gamma_{A}$ for every $e$.
	Further, let $r_{e}=c_{e}\left(G_{e}F - F_{e}F\right)\in\mathbb{Z}$ for every $e\in[m]$, 
	and observe that
	\[
		c^{\rot}\left(g-\fstr\right) = \sum_{e}c_{e}\left(g_{e}-\fstr_{e}\right) = 
		\frac{1}{GF}\sum_{e}r_{e}\geq1/(D\gamma_{A})^{2},
	\]
	where the last inequality follows by $c^{\rot}(g-\fstr)>0$ implies $\sum_{e}r_{e}\geq1$.
\end{proof}

\begin{lemma}\label{lem_optCrit}
    Let $f$ be a non-negative feasible kernel-free vector and
    $\epsilon\in\left(0,1\right)$ a parameter. Suppose for every non-optimal
    basic feasible solution $g$, there exists an index $e\in[m]$
    such that $g_{e}>0$ and $f_{e}<\epsilon/(2m\D^{3}\gamma_{A}\onenorm{b})$.
    Then, $\lVert f-\fstr \rVert_{\infty}<\epsilon/(\D \gamma_{A})$ for
    some optimal $\fstr $.
\end{lemma}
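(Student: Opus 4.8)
We have a non-negative feasible kernel-free vector $f$, meaning $f$ is in the convex hull of basic feasible solutions (bfs) of the LP $\min\{c^Tx : Ax=b, x\ge 0\}$. The hypothesis says: for every non-optimal bfs $g$, there is an index $e$ with $g_e > 0$ but $f_e < \epsilon/(2m\D^3\gamma_A\|b\|_1)$ (very small). We want: $f$ is close to *some* optimal bfs $f^\star$, specifically $\|f - f^\star\|_\infty < \epsilon/(\D\gamma_A)$.

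**The plan:**

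Write $f = \sum_{i} \lambda_i g^{(i)}$ as a convex combination of bfs, with $\lambda_i > 0$ and $\sum_i \lambda_i = 1$, using Lemma~\ref{sign-compatible representation} (and since $f$ is kernel-free, the kernel component is zero). The key claim is that every $g^{(i)}$ appearing in this representation with positive weight must be an *optimal* bfs. Suppose toward a contradiction that some $g^{(i)}$ is non-optimal. By hypothesis there is an index $e$ with $g^{(i)}_e > 0$ but $f_e$ tiny. By Lemma~\ref{lem:fsDg}, $g^{(i)}_e \ge 1/(\D\gamma_A)$. Since $f$ is a convex combination with all terms sign-compatible (hence non-negative) with $f$, we get $f_e \ge \lambda_i g^{(i)}_e \ge \lambda_i/(\D\gamma_A)$. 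Combined with $f_e < \epsilon/(2m\D^3\gamma_A\|b\|_1)$, this forces $\lambda_i < \epsilon/(2m\D^2\|b\|_1)$ — so the weight of any non-optimal bfs is small, but not yet zero. That alone does not give a contradiction, so I need a sharper argument.

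The sharper route: bound the total weight on non-optimal bfs. Let $\Lambda := \sum_{i : g^{(i)} \text{ non-optimal}} \lambda_i$. Each non-optimal $g^{(i)}$ contributes, at its special index $e_i$, a lower bound $f_{e_i} \ge \lambda_i/(\D\gamma_A)$, and there are at most... hmm, several non-optimal bfs could share the same special index. Instead, I'd argue via cost: $c^Tf = \sum_i \lambda_i c^Tg^{(i)} = \opt + \sum_i \lambda_i(c^Tg^{(i)} - \opt) \ge \opt + \Lambda \cdot \Phi$, since each non-optimal bfs has cost at least $\opt + \Phi$. On the other hand, I need an upper bound on $c^Tf$ in terms of $\opt$ — but that's not directly available from the hypothesis as stated. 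Let me reconsider: the cleanest thing is to show $\Lambda = 0$ directly. Actually, re-examining: pick the non-optimal $g^{(i)}$ with largest weight $\lambda_i$ among non-optimal ones; then $\Lambda \le (\text{number of non-optimal bfs in the decomposition}) \cdot \lambda_i \le m \lambda_i < m \cdot \epsilon/(2m\D^2\|b\|_1) = \epsilon/(2\D^2\|b\|_1)$. Now write $f = f^{opt} + f^{non}$ where $f^{opt} = \sum_{i \text{ opt}} \lambda_i g^{(i)}$ and $f^{non} = \sum_{i \text{ non-opt}} \lambda_i g^{(i)}$, both non-negative. Then $\|f^{non}\|_\infty \le \sum_{i \text{ non-opt}} \lambda_i \|g^{(i)}\|_\infty \le \Lambda \cdot \D\|b/\gamma_A\|_1 \le \Lambda\D\|b\|_1 < \epsilon/(2\D\gamma_A)$ (using $\gamma_A \ge 1$ appropriately and Lemma~\ref{lem:fsDg} for the $\infty$-norm bound on each bfs). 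Finally, renormalize: $\tilde f := f^{opt}/(1-\Lambda)$ is a convex combination of optimal bfs, hence itself an optimal feasible point (optimal face is convex), and $\|f - \tilde f\|_\infty \le \|f - f^{opt}\|_\infty + \|f^{opt} - \tilde f\|_\infty = \|f^{non}\|_\infty + \frac{\Lambda}{1-\Lambda}\|f^{opt}\|_\infty$; bounding $\|f^{opt}\|_\infty \le \max_i \|g^{(i)}\|_\infty \le \D\|b/\gamma_A\|_1$ and $\Lambda/(1-\Lambda) \le 2\Lambda$ (for $\Lambda \le 1/2$), this is $< \epsilon/(\D\gamma_A)$ after collecting the constants. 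To land on an *optimal bfs* rather than an optimal feasible point, note $\tilde f$ is a convex combination of optimal bfs, so pick any optimal bfs $f^\star$ in that combination with $f^\star_e > 0 \Rightarrow$ ... actually one more step: among the optimal bfs in the decomposition of $\tilde f$, any single one $f^\star$ satisfies $\|f - f^\star\|_\infty < \epsilon/(\D\gamma_A)$? Not obviously — different optimal bfs can be far apart. So the statement likely means "$f^\star$ optimal" in the sense of optimal feasible solution, or the optimal face is a single bfs; I'll state it as: there is an optimal solution $f^\star$ (which I may take to be $\tilde f$, or a bfs if the optimum is unique) with the claimed bound, and flag this reading.

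**Main obstacle:** The delicate point is getting the constants to work out and, more importantly, deciding whether "$f^\star$" must be a basic feasible solution or just an optimal solution; if it must be a bfs, I'd need the extra hypothesis that the optimal bfs is essentially unique (or that the decomposition of $f$ involves only one optimal bfs), and I would argue that too — by the hypothesis applied in contrapositive, combined with the fact that $f$ being $\epsilon$-close to the optimal face while the hypothesis "touches" every non-optimal bfs, pins $f$ near a specific optimal bfs. I expect the intended reading is that $f^\star$ ranges over optimal bfs and the decomposition of the kernel-free $f$ lands (up to $\epsilon$) in a single one; establishing that cleanly, tracking that $\Lambda < 1/2$ so the renormalization is valid, is where the real care goes.
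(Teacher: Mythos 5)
Your proposal is correct and follows essentially the same route as the paper: decompose the kernel-free $f$ into a convex combination of sign-compatible basic feasible solutions, use the hypothesis together with the $1/(\D\gamma_A)$ lower bound from Lemma~\ref{lem:fsDg} to bound each non-optimal weight and hence the total non-optimal mass $\Lambda$ by $\epsilon/(2\D^2\onenorm{b})$, and then shift that mass onto the optimal part. The concern you flag at the end is resolved in your favor: the paper's $f^\star$ is exactly your $\tilde f$, namely a convex combination of the optimal basic feasible solutions in the decomposition (an optimal \emph{solution}, not necessarily basic) --- the only cosmetic difference being that the paper redistributes the weight $\Lambda$ additively via a vector $\beta$ rather than dividing by $1-\Lambda$.
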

\begin{proof}
	Let $C=2\D^{2}\onenorm{b}$. 
	Since $f$ is kernel-free, by Lemma~\ref{sign-compatible representation} it can
	be expressed as a convex combination of sign-compatible basic feasible solutions $f=\sum_{i=1}^{\ell}\alpha_{i}f^{(i)}+\sum_{i=\ell+1}^{m}\alpha_{i}f^{(i)}$,
	where $f^{(1)},\ldots,f^{(\ell)}$ denote the optimal solutions.
	By Lemma~\ref{lem:fsDg}, $f_e^{(i)} > 0$ implies $f_e^{(i)}\geq1/(\D \gamma_{A})$.
	By the hypothesis, for every non-optimal $f^{(i)}$, i.e. $i\geq\ell+1$,
	there exists an index $e(i)\in[m]$ such that 
	\[
	1/(\D\gamma_{A})\leq f_{e(i)}^{(i)}\quad\text{and}\quad f_{e(i)} <\epsilon/(m\D\gamma_{A}\cdot C).
	\]
	Therefore, we have
	\[
		\alpha_{i}/(\D\gamma_{A}) 
		\leq \alpha_{i}f_{e(i)}^{(i)}\leq\sum_{j=1}^{m}\alpha_{j}f_{e(i)}^{(j)}
		 = f_{e(i)}<\epsilon/(m\D\gamma_{A}\cdot C),
	\]
	and hence $\sum_{i=\ell+1}^{m}\alpha_{i}\leq\epsilon/C$.	
	Further, by Lemma~\ref{lem:fsDg}, for every $j$ we have 
	\[
		\lVert f^{(j)} \rVert_\infty \le \D\lVert b/\gamma_{A}\rVert_{1}=C/(2\D\gamma_{A}).
	\]
	Let $\beta\geq0$ be an arbitrary vector satisfying $\sum_{i=1}^{\ell}\beta_{i}=\sum_{i=\ell+1}^{m}\alpha_{i}$. 
	Let $\nu_{i}=\alpha_{i}+\beta_{i}$ for every $i\in[\ell]$ and let  
	$\fstr =\sum_{i=1}^{\ell}\nu_{i}f^{(i)}$. 
	Then, $\fstr $ is an optimal solution and we have 
\begin{align*}
\lVert \fstr  - f \rVert_\infty &= \left\Vert \sum_{i=1}^{\ell}\beta_{i}f^{(i)} - \sum_{i=\ell+1}^{m}\alpha_{i}\cdot f^{(i)} \right\Vert _{\infty}  \\
&\le \max_{i\in[1:m]}\left\Vert f^{(i)}\right\Vert _{\infty}\cdot\left(\sum_{i=1}^{\ell}\beta_{i}+\sum_{i=\ell+1}^{m}\alpha_{i}\right) \\
&\le\frac{2\epsilon}{C}\cdot\frac{C}{2\D\gamma_{A}}=\frac{\epsilon}{\D\gamma_{A}}.
\hfill\qedhere
\end{align*}
\end{proof}

In the following lemma, we extend the analysis in~\cite[Lemma 5.6]{SV-Flow} from the 
transshipment problem to positive linear programs.
Our result crucially relies on an argument that uses the parameter
$\Phi = \min_{g \in \mathcal{N}} c^{\rot} g - \opt$. 
It is here, where our analysis incurs the linear step size dependence on $\Phi/\opt$ 
and the quadratic dependence on $\opt/\Phi$ for the number of steps.

An important technical detail is that the first regime incurs an extra 
$(\Phi/\opt)$-factor dependence.
At first glance, this might seem unnecessary due to Corollary~\ref{cor:ahlConv}, 
however a careful analysis shows its necessity (see \eqref{eq:InductiveArgument}
for the inductive argument).
Further, we note that the undirected Physarum dynamics~\eqref{eq:DUndirD} 
satisfies $\x{t}_{\min}\geq(1-h)^t\cdot\x{0}_{\min}$, whereas
the directed Physarum-inspired dynamics \eqref{eq:DdirD} might yield a
value $\x{t}_{\min}$ which decreases with faster than exponential rate. 
As our analysis incurs a logarithmic dependence on $1/\x{0}_{\min}$, it is prohibitive 
to decouple the two regimes and give bounds in terms of $\log(1/\x{t}_{\min})$, which 
would be necessary as $\x{t}$ is the initial point of the second regime.

\begin{lemma}\label{lem_PotArg}
	Let $g$ be an arbitrary non-optimal basic feasible solution.
	Given $\xz\in X$ and its corresponding $\ah{0}$, the Physarum-inspired dynamics \eqref{eq:DdirD} 
	initialized with $\xz$ runs in two regimes:
	\begin{compactenum}[\mbox{}\hspace{\parindent}(i)]
	\item The first regime is executed when $\ahz \not\in[1/2,1/\hprm]$ and computes 
	a point $\x{t}\in X$ such that $\ah{t}\in[1/2,1/\hprm]$. 
	In particular, if $\ahz<1/2$ then $h \leq (\Phi/\opt)\cdot(\ah{0}\hprm)^2$ and $t = 1/h$.
	Otherwise, if $\ahz > 1/\hprm$ then $h\leq\Phi/\opt$ and 
	$t=\lfloor \log_{1/(1-h)}[h_{0}(\alpha^{(0)}-1)/(1-h_{0})] \rfloor$.
	
	\item The second regime starts from a point $\x{t}\in X$ such that 
	$\ah{t} \in[1/2,1/\hprm]$, it has step size $h \leq (\opt/\Phi)\cdot h_{0}^{2}/2$ and 
	for any $k\geq 4\cdot c^{\rot}g/(h\Phi)\cdot\ln(\PSI{0}/\epsilon x_{\min}^{(0)})$,
	guarantees the existence of an index $e\in[m]$ such that $g_{e}>0$
	and $x_{e}^{(t+k)}<\epsilon$.
	\end{compactenum}
\end{lemma}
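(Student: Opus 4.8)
The plan is to track a potential function of the form $W^{(\ell)} = \sum_e g_e c_e \ln x_e^{(\ell)}$, modeled on the continuous-time argument used in the proof of the last theorem of Section~\ref{Simple Instances} (the function $W = \sum_e x^*_e c_e \ln x_e$ there) and on \cite[Lemma 5.6]{SV-Flow}. The intuition is that if no coordinate of $x^{(t+k)}$ indexed by $\supp(g)$ ever becomes smaller than $\epsilon$, then $W^{(\ell)}$ is bounded below by $\ln\epsilon$ times a fixed constant, while we will show $W^{(\ell)}$ increases by a definite amount per step; this forces a contradiction once $k$ is large enough. The discrete increment is controlled via the elementary estimate $\ln(1+u) \ge u - u^2$ for $|u|$ small, applied to $x_e^{(\ell+1)}/x_e^{(\ell)} = 1 + h(c_e^{-1} A_e^T p^{(\ell)} - 1)$; the quadratic error term is where the step-size restriction $h \le (\Phi/\opt)\cdot h_0^2/2$ enters, using Lemma~\ref{lem_ATp_UB} (valid since $x^{(\ell)}\in X$ by Lemma~\ref{lem:OneStep} and Corollary~\ref{cor:ahlConv}, so $\|A^T p^{(\ell)}\|_\infty \le D\|c\|_1/\alpha^{(\ell)} \le 2D\|c\|_1$) to bound $|c_e^{-1}A_e^T p^{(\ell)} - 1|$ by a constant of order $1/h_0$.

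For the second regime, I would first establish the one-step drift bound
\[
W^{(\ell+1)} - W^{(\ell)} \;\ge\; h\Big(\sum_e g_e A_e^T p^{(\ell)} - \cost(g)\Big) - (\text{quadratic error}) \;\ge\; h\big(b^T p^{(\ell)} - \cost(g)\big) - \tfrac{h}{2}(\cost(g) - \opt),
\]
where $\sum_e g_e A_e^T p^{(\ell)} \ge b^T p^{(\ell)}$ (since $g\ge 0$ componentwise and $A_e^T p^{(\ell)}$ need not be signed — this step needs care, and I expect to route it through the feasible kernel-free approximant $f$ from Lemma~\ref{lem_ApproxNonNegKernelFree}, whose support avoids indices with $A_e^T \bar p \le 0$). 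Once $x^{(t+k')}$ is $\epsilon$-close to a non-negative feasible kernel-free vector for all $k'$ past a threshold (Lemma~\ref{lem_ApproxNonNegKernelFree}), we get $b^T p^{(\ell)} \approx \cost$ of that kernel-free vector $\ge \opt$, so $b^T p^{(\ell)} - \cost(g) \ge \opt - \cost(g) + o(1) = -\Phi_g + o(1)$ where $\Phi_g := \cost(g)-\opt \ge \Phi$. Wait — that is the wrong sign, so the correct reading must be that $b^Tp^{(\ell)} \to \opt^{-}$ is impossible to beat, and in fact the potential argument should be run \emph{assuming for contradiction} that some $g$-coordinate stays $\ge\epsilon$; then $b^T p^{(\ell)}$ is forced down toward $\opt$, making $W$ grow by at least $\tfrac{h}{4}\Phi$ per step (after absorbing errors using $\cost(g) \ge \opt + \Phi$ and $h \le \Phi/(2\opt)\cdot h_0^2$), while $W^{(\ell)} \le \sum_e g_e c_e \ln \Psi^{(0)}$ by Lemma~\ref{lem_UBxk} and $W^{(\ell)} \ge (\sum_e g_e c_e)\ln(\epsilon \cdot x_{\min}^{(0)} \cdot \text{something})$ under the contradiction hypothesis; comparing the two after $k = 4\cost(g)/(h\Phi)\cdot\ln(\Psi^{(0)}/(\epsilon x_{\min}^{(0)}))$ steps yields the contradiction. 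I would be careful that the lower bound on $W^{(\ell)}$ only uses coordinates where $g_e>0$, and that coordinates of $x$ not in $\supp(g)$ contribute to $W$ with coefficient zero and so are irrelevant.

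For the first regime, the analysis is essentially a prefix of the second: when $\alpha^{(0)} \notin [1/2, 1/h_0]$, Corollary~\ref{cor:ahlConv} tells us exactly how many steps $t$ bring $\alpha^{(\ell)}$ into $[1/2, 1/h_0]$ and with what step size, and I would verify that during these $t$ steps the potential $W$ does not decrease by more than a controlled amount — this is precisely the $(\Phi/\opt)$-factor loss flagged in the paragraph before the lemma, needed because in this regime the bound $\|A^T p^{(\ell)}\|_\infty \le D\|c\|_1/\alpha^{(\ell)}$ degrades (when $\alpha^{(0)}$ is tiny the potential can drop; when $\alpha^{(0)}$ is huge the dynamics contracts fast but $p$ is small). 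The main obstacle I anticipate is getting the sign and the constants right in the drift inequality for $W^{(\ell+1)}-W^{(\ell)}$ — in particular correctly combining the $o(1)$ error from the kernel-free approximation (which itself requires $k \ge h^{-1}\ln(8m\rho_A\Psi^{(0)}/\epsilon)$ from Lemma~\ref{lem_ApproxNonNegKernelFree}, so the two lower bounds on $k$ must be reconciled) with the per-step quadratic error from the $\ln(1+u)$ expansion, and making sure the net drift is a clean $\tfrac{h}{4}\Phi$ (or similar) after all absorptions. I would also need the elementary fact that if $x_e^{(t+k')}\ge\epsilon$ for all $k'\le k$ and all $e\in\supp(g)$, then no term of $W$ has dropped below $c_e g_e\ln\epsilon$, giving the uniform lower bound that closes the argument.
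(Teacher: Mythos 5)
There is a genuine gap, and it is the central mechanism of the proof. The paper's argument tracks \emph{two} potentials simultaneously: $\mathcal{B}_g^{(\ell)}=\sum_e g_ec_e\ln x_e^{(\ell)}$ for the non-optimal $g$ and $\mathcal{B}_{\fstr}^{(\ell)}$ for an optimal basic solution $\fstr$. The concavity bound $\ln(1+u)\le u$ gives $\mathcal{B}_{g}^{(\ell+1)}-\mathcal{B}_{g}^{(\ell)}\le \hl(b^Tp^{(\ell)}-c^Tg)$ (note $\sum_e g_eA_e^Tp^{(\ell)}=(Ag)^Tp^{(\ell)}=b^Tp^{(\ell)}$ exactly, since $g$ is feasible — no inequality and no detour through the kernel-free approximant is needed there), while $\ln(1+u)\ge u-u^2$ together with Lemma~\ref{lem_ATp_UB} gives $\mathcal{B}_{\fstr}^{(\ell+1)}-\mathcal{B}_{\fstr}^{(\ell)}\ge \hl(b^Tp^{(\ell)}-\opt-\Phi/2)$. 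Subtracting, the uncontrollable term $b^Tp^{(\ell)}$ \emph{cancels}, and $c^Tg-\opt\ge\Phi$ yields $\mathcal{B}_{\fstr}^{(\ell+1)}-\mathcal{B}_{\fstr}^{(\ell)}\ge \mathcal{B}_{g}^{(\ell+1)}-\mathcal{B}_{g}^{(\ell)}+\hl\Phi/2$. Since $\mathcal{B}_{\fstr}$ is bounded above and below uniformly in $k$ (Lemma~\ref{lem_UBxk} and $x^{(0)}_{\min}$), telescoping forces $\mathcal{B}_g^{(t+k)}\le -kh\Phi/2+2c^Tg\ln(\PSI{0}/x^{(0)}_{\min})$, which contradicts the lower bound $\mathcal{B}_g^{(t+k)}>c^Tg\ln\epsilon$ implied by the hypothesis that all coordinates in $\supp(g)$ stay above $\epsilon$. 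Your proposal never introduces the second potential, and your substitute — arguing that $b^Tp^{(\ell)}$ is ``forced down toward $\opt$'' via Lemma~\ref{lem_ApproxNonNegKernelFree} — is circular: convergence of $b^Tp^{(\ell)}$ to $\opt$ is a consequence of this lemma (via Lemma~\ref{lem_optCrit}), not an available input.

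Moreover, even granting $b^Tp^{(\ell)}\approx\opt$, your drift has the wrong sign: $W^{(\ell+1)}-W^{(\ell)}\approx h(\opt-c^Tg)\le -h\Phi<0$, so $W=\mathcal{B}_g$ \emph{decreases}; you notice the sign problem mid-argument but then resolve it by asserting that $W$ ``grows by at least $\tfrac{h}{4}\Phi$ per step'' and is bounded above — the roles of the upper and lower bounds are swapped relative to what the contradiction requires. Relatedly, the quadratic error from $\ln(1+u)\ge u-u^2$ (where the step-size restriction $h\le(\Phi/\opt)h_0^2/2$ and Lemma~\ref{lem_ATp_UB} enter) is needed only for the \emph{lower} bound on $\mathcal{B}_{\fstr}$, not for $\mathcal{B}_g$, where the one-sided bound $\ln(1+u)\le u$ suffices. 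Your treatment of the first regime (controlling the potential loss there via the regime-dependent step sizes from Corollary~\ref{cor:ahlConv}, so that $\hl(1/2\ahl\hprm)^2\opt\le\Phi/2$ holds throughout) is in the right spirit, but it cannot be completed without first fixing the two-potential comparison.
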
 

\begin{proof}
	Similar to the work of~\cite{Physarum-Complexity-Bounds,SV-Flow}, we use a potential function that takes
	as input a basic feasible solution $g$ and a step number $\ell$, and is defined by
	\[
	\mathcal{B}_{g}^{(\ell)}:=\sum_{e\in[m]}g_{e}c_{e}\ln x_{e}^{(\ell)}.
	\]
	Since $x_{e}^{(\ell+1)}=x_{e}^{(\ell)}(1+\hl[c_{e}^{-1}\cdot A_{e}^{\rot}p^{(\ell)}-1])$, we have 
	\begin{align}
	\mathcal{B}_{g}^{(\ell+1)}-\mathcal{B}_{g}^{(\ell)} & =\sum_{e}g_{e}c_{e}\ln\frac{x_{e}^{(\ell+1)}}{x_{e}^{(\ell)}}=\sum_{e}g_{e}c_{e}\ln\left(1+\hl\left[\frac{A_{e}^{\rot}p^{(\ell)}}{c_{e}}-1\right]\right)\nonumber \\
	& \leq\hl\sum_{e}g_{e}c_{e}\left[\frac{A_{e}^{\rot}p^{(\ell)}}{c_{e}}-1\right]=\hl\left[-c^{\rot}g+[p^{(\ell)}]^{\rot}Ag\right]=\hl\left[-c^{\rot}g+b^{\rot}p^{(\ell)}\right].\label{eq:g}
	\end{align}
	Let $\fstr$ be an optimal solution to \eqref{OLP}. In order to lower
	bound $\mathcal{B}_{\fstr}^{(\ell+1)}-\mathcal{B}_{\fstr}^{(\ell)}$,
	we use the inequality $\ln(1+x)\geq x-x^{2}$, for all $x\in[-\frac{1}{2},\frac{1}{2}]$.
	Then, we have 
	\begin{eqnarray}
	\mathcal{B}_{\fstr}^{(\ell+1)}-\mathcal{B}_{\fstr}^{(\ell)} & = & \sum_{e}f_{e}^{\star}c_{e}\ln\frac{x_{e}^{(\ell+1)}}{x_{e}^{(\ell)}}=\sum_{e}f_{e}^{\star}c_{e}\ln\left(1+\hl\left[\frac{A_{e}^{\rot}p^{(\ell)}}{c_{e}}-1\right]\right)\nonumber \\
	& \geq & \sum_{e}f_{e}^{\star}c_{e}\left(\hl\left[\frac{A_{e}^{\rot}p^{(\ell)}}{c_{e}}-1\right]-
	[\hl]^{2}\left[\frac{A_{e}^{\rot}p^{(\ell)}}{c_{e}}-1\right]^{2}\right) \nonumber \\
	& \geq & \hl\left(b^{\rot}p^{(\ell)}-\opt-\hl\cdot(1/2\ahl\hprm)^{2}\cdot\opt\right),\label{eq:one}
	\end{eqnarray}
	where the last inequality follows by combining 
	\[
	\sum_{e}f_{e}^{\star}c_{e}[(c_{e}^{-1}A_{e}^{\rot}p^{(\ell)})-1]=[p^{(\ell)}]^{\rot}A\fstr-\opt=b^{\rot}p^{(\ell)}-\opt,
	\]
	$\|A^{\rot}p^{(\ell)}\|_{\infty}\leq\D\onenorm{c}/\ahl$ (by Lemma~\ref{lem_ATp_UB}
	and Lemma~\ref{lem:SDSgivesXCapF} applied with $\xl\in X$), $h_0 = c_{\min}/(4\D \onenorm{c})$ 
	and
	\[
		\hl\sum_{e}f_{e}^{\star}c_{e}\cdot(c_{e}^{-1}A_{e}^{\rot}p^{(\ell)}-1)^{2}
		\leq\hl(2\D\onenorm{c}/\ahl c_{\min})^{2}\opt=\hl(1/2\ahl\hprm)^{2}\opt.
	\]
	Further, by combining (\ref{eq:g}), (\ref{eq:one}), $c^{\rot}g-\opt\ge\Phi$
	for every non-optimal basic feasible solution $g$ and provided that 
	the inequality $\hl(1/2\ahl\hprm)^{2}\opt\leq\Phi/2$ holds, we obtain 
	\begin{equation}\label{eq:IterativePotArgB}
	\mathcal{B}_{\fstr}^{(\ell+1)}-\mathcal{B}_{\fstr}^{(\ell)}\geq
	\hl\left(b^{\rot}p^{(\ell)}-c^{\rot}g\right)+	\hl\left(c^{\rot}g-\opt-\frac{\Phi}{2}\right)
	\geq\mathcal{B}_{g}^{(\ell+1)}-\mathcal{B}_{g}^{(\ell)}+\frac{\hl\Phi}{2}.
	\end{equation}

	Using Corollary~\ref{cor:ahlConv}, we partition the Physarum-inspired dynamics \eqref{eq:DdirD} 
	execution into three regimes, based on $\ah{0}$. For every $i\in\{1,2,3\}$, 
	we show next that the $i$-th regime has a fixed step size $\hl=h_i$ such that
	$\hl(1/2\ahl\hprm)^{2}\opt\leq\Phi/2$, for every step $\ell$ in this regime.

	By Lemma~\ref{lem:OneStep}, for every $i\in\{1,2,3\}$ it holds for every step $\ell$ 
	in the $i$-th regime that
	\begin{equation}\label{eq:recAHL}
		\ahl=1-(1-h_i)^{\ell}\cdot(1-\ah{0}).
	\end{equation}
	
	\textbf{Case 1:} Suppose $\ah{0}>1/\hprm$.
	Notice that $\hl=\Phi/\opt$ suffices, since $1/(2\ahl\hprm)<1/2$
	for every $\ahl>1/\hprm$.
	Further, by applying \eqref{eq:recAHL} with $\ah{t}:=1/\hprm$, 
	we have $t=\lfloor \log_{1/(1-\hl)}[h_{0}(\alpha^{(0)}-1)/(1-h_{0})] \rfloor\leq(\opt/\Phi)\cdot\log(\ah{0}\hprm)$.
	Note that by~\eqref{eq:recAHL} the sequence $\{\ahl\}_{\ell\leq t}$ is decreasing, and 
	by Corollary~\ref{cor:ahlConv} we have $1<\ah{t}\leq1/\hprm $.	
	
	\textbf{Case 2:} Suppose $\ah{0}\in(0,1/2)$. By \eqref{eq:recAHL}
	the sequence $\{\ahl\}_{\ell\in\N}$ is increasing and by Corollary
	\ref{cor:ahlConv} the regime is terminated once $\ahl\in[1/2,1)$.
	Observe that $\hl=(\Phi/\opt)\cdot(\ah{0}\hprm)^{2}$ suffices,
	since $\ah{0}\leq\ahl$. Then, by \eqref{eq:recAHL} applied with $\ah{t}:=1/2$, 
	this regime has at most $t=(\opt/\Phi)\cdot(1/\ah{0}\hprm)^{2}$ steps.
	
	\textbf{Case 3:} Suppose $\ah{0}\in[1/2,1/\hprm]$. By \eqref{eq:recAHL}
	the sequence $\{\ahl\}_{\ell\in\N}$ converges to $1$ (decreases
	if $\ah{0}\in(1,1/\hprm]$ and increases when $\ah{0}\in[1/2,1)$.
	Notice that $\hl=(\Phi/\opt)\cdot\hprm^{2}/2$ suffices, since
	$1/2\leq\ahl\leq1/\hprm$ for every $\ell\in\N$.
	We note that the number of steps in this regime is to be determined soon.
	
	Hence, we conclude that inequality \eqref{eq:IterativePotArgB} holds.	
	Further, using Case 1 and Case 2 there is an integer $t\in\N$ 
	such that $\ah{t}\in[1/2,1/\hprm]$. Let $k\in\N$ be the number of steps in Case 3,
	and let $h:=(\Phi/\opt)\cdot\hprm^{2}/2$.
	Then, for every $\ell\in\{t,\ldots,t+k-1\}$ it holds that $\hl=h$ and thus
	\begin{equation}\label{eq:InductiveArgument}
	\mathcal{B}_{\fstr}^{(t+k)}-\mathcal{B}_{\fstr}^{(0)}\geq
	\mathcal{B}_{g}^{(t+k)}-\mathcal{B}_{g}^{(0)}+\sum_{\ell=0}^{t+k-1}\frac{\hl\Phi}{2}
	\geq\mathcal{B}_{g}^{(t+k)}-\mathcal{B}_{g}^{(0)}+k\cdot\frac{h\Phi}{2}.
	\end{equation}
	By Lemma $\ref{lem_UBxk}$, $\mathcal{B}_{g}^{(\ell)}\leq c^{\rot}g\cdot\ln\PSI{0}$
	for every basic feasible solution $g$ and every $\ell\in\N$, and thus
	\begin{eqnarray*}
		\mathcal{B}_{g}^{(t+k)}&\leq&-k\cdot\frac{h\Phi}{2}+\mathcal{B}_{g}^{(0)}+
		\mathcal{B}_{\fstr}^{(t+k)}-\mathcal{B}_{\fstr}^{(0)}\\
		&\leq&-k\cdot\frac{h\Phi}{2}+c^{\rot}g\cdot\ln\PSI{0}+
		\opt\cdot\ln\PSI{0}-\opt\cdot\ln x_{\min}^{(0)}\\
		&\leq&-k\cdot\frac{h\Phi}{2}+2c^{\rot}g\cdot\ln\frac{\PSI{0}}{x_{\min}^{(0)}}.
	\end{eqnarray*}
	Suppose for the sake of a contradiction that for every $e\in[m]$ with
	$g_{e}>0$ it holds $x_{e}^{(t+k)}>\epsilon$. 
	Then, $\mathcal{B}_{g}^{(t+k)}>c^{\rot}g\cdot\ln\epsilon$ yields
	$k < 4\cdot c^{\rot}g/(h\Phi)\cdot\ln(\PSI{0}/(\epsilon x_{\min}^{(0)}))$,
	a contradiction to the choice of $k$. 
\end{proof}

\subsection{Proof of Theorem~\ref{thm_main_full}}\label{subsection:ProofThmMAIN}
By Corollary~\ref{cor:ahlConv} and Lemma~\ref{lem_PotArg}, if $\xz\in X$ 
such that $\ahz>1/\hprm $, we work with $h\leq\Phi/\opt$ and after 
$t=\lfloor \log_{1/(1-h)}[h_{0}(\alpha^{(0)}-1)/(1-h_{0})] \rfloor
\leq(\opt/\Phi)\cdot\log(\ah{0}\hprm)$
steps, we obtain $\x{t}\in X$ such that $\ah{t}\in(1,1/\hprm ]$.
Otherwise, if $\ahz\in(0,1/2)$ we work with $h\leq(\Phi/\opt)\cdot(\ah{0}\hprm)^{2}$ 
and after $t=1/h$ steps, we obtain $\x{t}\in X$ such that $\ah{t}\in[1/2,1)$.
Hence, we can assume that $\ah{t} \in [1/2,1/\hprm ]$ and set $h\leq(\Phi/\opt)\cdot\hprm^{2}$. 
Then, the Lemmas in Subsection~\ref{subsection:CloseNonNengKernelFreeVector} and 
\ref{subsec:epsCloseToOPT} are applicable.

Let $E_1:=\D\|b/\gamma_{A}\|_{1}\onenorm{c}$, 
$E_2:=8m\rhoA\Psi^{(0)}$, 
$E_3:=2m\D^{3}\gamma_{A}\lVert b\rVert_{1}$ and
$E_4 := 8m\D^{2}\lVert b\rVert_{1}$.
Consider an arbitrary non-optimal basic feasible solution $g$.

By Lemma~\ref{lem:fsDg}, we have $c^{\rot}g\leq E_1$ and thus both 
Lemma~\ref{lem_ApproxNonNegKernelFree} and Lemma~\ref{lem_PotArg} 
are applicable with $h$, $\epsStr :=\epsilon/E_4$ and any
$k\geq k_{0}:=4E_{1}/(h\Phi)\cdot\ln[(E_{2}/\min\{1,x_{\min}^{(0)}\})\cdot(D\gamma_{A}/\epsStr)]$.
Hence, by Lemma \ref{lem_PotArg}, the Physarum-inspired dynamics \eqref{eq:DdirD} 
guarantees the existence of an index $e\in[m]$ such that $g_{e}>0$ and
$x_{e}^{(t+k)}<\epsStr/(D\gamma_{A})$.
Moreover, by Lemma \ref{lem_ApproxNonNegKernelFree} there is a non-negative 
feasible kernel-free vector $f$ such that 
$\lVert x^{(t+k)}-f\rVert_{\infty}< \epsStr /(\D \gamma_{A})$. Thus, for the index $e$
it follows that
$g_{e}>0$ and 
$f_{e}<2\epsStr/\D\gamma_{A}=(\epsilon/2)\cdot(4/E_{4}\D\gamma_{A})=\epsilon/(2E_{3})$.
Then Lemma \ref{lem_optCrit}, yields 
$\lVert f-\fstr \rVert_{\infty}<\epsilon/(2\D \gamma_{A})$ and by
triangle inequality we have
$\lVert x^{(k)}-\fstr \rVert_{\infty}<\epsilon/(\D \gamma_{A})$.

By construction, $\rhoA =\max\{D\gamma_{A},\,n\D^{2}\infnorm{A}\}\leq 
n\D^{2}\gamma_{A}\infnorm{A}$. Let 
$E_2^{\prime}=8mn\D^{2}\gamma_{A}\infnorm{A}\Psi^{(0)}$ and
$E_5=E_{2}^{\prime}E_{4}\cdot\D\gamma_{A}
=8^{2}m^{2}n\D^{5}\gamma_{A}^{2}\infnorm{A}\lVert b\rVert_{1}$.
Further, let $C_1=E_1$ and $C_2=E_5$. Then, the statement follows for any 
$k\geq k_{1}:=4C_{1}/(h\Phi)\cdot\ln(C_{2}\Psi^{(0)}/(\epsilon\cdot\min\{1,x_{\min}^{(0)}\}))$.
\qed

\subsection{Preconditioning}\label{sec:Preconditioning}

In this subsection, we generalize the preconditioning technique developed in \cite{Physarum-Complexity-Bounds,SV-Flow} for flow problems,
to the setting of positive linear programs. 

\begin{theorem}\label{thm_main_simpl}
	Given an integral LP $(A,\,b,\,c>0)$, a positive $\xz\in\R^{m}$ and a
	parameter $\epsilon\in(0,1)$. Let $([A\,|\,b],\,b,\,(c,\,c'))$ be an extended LP with 
	$c'= 2C_1$ and 
	$\z{0} := 1+D_S\inftynorm{x}\onenorm{A}\onenorm{b}$.\footnote{
		We denote by $\onenorm{A}:=\sum_{i,j}|A_{i,j}|$, i.e. we interpret 
		matrix $A$ as a vector and apply to it the standard $\ell_1$ norm.}
	Then, $(\xz; \z{0})$ is a strongly dominating starting point of the extended problem 
	such that $y^{\rot} [A\,\vert\, b] (\xz,\,\z{0}) \ge 1$, for all $y \in Y$.
	In particular, the Physarum-inspired dynamics \eqref{eq:DdirD} initialized with
	$(\xz,\,\z{0})$ and a step size $h\leq h_{0}^{2}/C_3$, outputs for any 
	$k\geq4C_{1}\cdot(\D \gamma_{A})^{2}/h\cdot
	\ln(C_{2}\UPSILON{0}/(\epsilon\cdot\min\{1,x_{\min}^{(0)}\}))$
	a vector $(\x{k},\z{k})>0$ such that 
	$\dist(\x{k},\Xstar)<\epsilon/(\D \gamma_{A})$ and
	$\z{k}<\epsilon/(\D \gamma_{A})$,
	where $\UPSILON{0}:=\max\{\PSI{0},\z{0}\}$.
\end{theorem}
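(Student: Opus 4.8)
The plan is to reduce the statement about the extended LP $([A\,|\,b],\,b,\,(c,c'))$ to the already-proven Theorem~\ref{thm_main_full}. First I would verify the \emph{preconditioning claim}: that $(\xz;\z{0})$ with the stated value of $\z{0}$ is strongly dominating for the extended instance, and in fact satisfies $y^T[A\,|\,b](\xz,\z{0})\ge 1$ for all $y\in Y$ (the $Y$ of the extended problem). For this, note the extended constraint matrix is $[A\,|\,b]$ and the extra column $b$ admits the scaled feasible solution $(0,\dots,0,1)$, so the extended instance always has a feasible point of the form $t\cdot(0;1)$; this already shows $X^{(ext)}$ is nonempty. For an arbitrary dual vertex direction $y$ with $b^Ty=1$, we have $y^T[A\,|\,b](\xz,\z{0}) = y^TA\xz + (b^Ty)\z{0} = y^TA\xz + \z{0}$. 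Since $|y^TA\xz| \le \|y^TA\|_1\|\xz\|_\infty$ and the vertices $y$ of the dual feasible region have entries bounded in terms of subdeterminants of $A$ (a Cramer's-rule bound, giving $\|y\|_\infty \le D_S$ and hence $\|y^TA\|_1 \le D_S\|A\|_1$), the choice $\z{0} = 1 + D_S\|x\|_\infty\|A\|_1\|b\|_1$ dominates $|y^TA\xz|$ by at least $1$; more carefully one bounds $\|y^TA\|_1$ using $\|b\|_1$ as well, matching the stated formula. Hence $y^T[A\,|\,b](\xz,\z{0})\ge 1$ for every such $y$, which is exactly $\alpha^{(0)}=1$ for the extended instance (it even lies in $\XX_1^{(ext)}$), so $(\xz,\z{0})\in X^{(ext)}$.

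Next I would check that the extension does not change the optimum: any feasible $f$ of the original LP extends to $(f,0)$, feasible for the extended LP with the same cost; conversely, given an extended feasible solution $(f',s)$ with $Af' + sb = b$ and $f'\ge 0$, $s\ge 0$, if $s>0$ then $A(f'/(1-s)) = b$ provided $s<1$ — but since $c'=2C_1 \ge 2c^Tg^{\star}\ge 2\,\mathrm{opt}$ is strictly larger than twice the cost of any basic feasible solution of the original LP, any extended basic feasible solution using the new column with positive value has cost at least $c' > \mathrm{opt}$ unless the new variable is $0$; hence every \emph{optimal} extended basic feasible solution has $s=0$, so $\Xstar^{(ext)} = \Xstar \times \{0\}$. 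This also pins down the relevant combinatorial quantities: $\Phi^{(ext)}\ge\Phi$ (the cheapest non-optimal extended bfs is at least as far from the optimum — either it is a non-optimal bfs of the original LP, giving gap $\ge\Phi$, or it uses the new column, giving gap $\ge c' - \mathrm{opt}\ge\mathrm{opt}\ge\Phi$), and $\mathrm{opt}^{(ext)} = \mathrm{opt}$. The determinant $D^{(ext)}$ for $[A\,|\,b]/\gamma$ is within the range controlled by $D$ and $\gamma_A$; I would absorb this into the stated constants, noting the factor $(\D\gamma_A)^2$ in the iteration bound is exactly what accounts for replacing $\Phi/\mathrm{opt}$ by its lower bound $1/(\D\gamma_A)^2$ (via Lemma~\ref{lem_lowBounds}) so that the step size $h\le h_0^2/C_3$ becomes $\epsilon$-free and $\Phi,\mathrm{opt}$-free.

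Then I would invoke Theorem~\ref{thm_main_full} applied to the extended instance with starting point $(\xz,\z{0})$. Since $\alpha^{(0)}=1$ for the extension, only the second regime of that theorem runs: with step size $h\le(\Phi^{(ext)}/\mathrm{opt}^{(ext)})\cdot h_0^2/2$ — which is implied by $h\le h_0^2/C_3$ using $\Phi^{(ext)}/\mathrm{opt}^{(ext)}\ge\Phi/\mathrm{opt}\ge 1/C_3$ from Lemma~\ref{lem:fsDg} — and iteration count $k\ge 4C_1^{(ext)}/(h\Phi^{(ext)})\cdot\ln(\cdots)$, we get $\mathrm{dist}((\x{k},\z{k}),\Xstar^{(ext)})<\epsilon/(\D^{(ext)}\gamma_{A})$. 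Because $\Xstar^{(ext)}=\Xstar\times\{0\}$, the $\ell_\infty$ distance bound immediately gives both $\mathrm{dist}(\x{k},\Xstar)<\epsilon/(\D\gamma_A)$ and $\z{k}<\epsilon/(\D\gamma_A)$ (after reconciling $\D^{(ext)}$ with $\D$, which only costs constant/$\gamma_A$ factors already folded into $C_1,C_2$). Finally I would rewrite the iteration bound: substituting $\Phi^{(ext)}\ge 1/(\D\gamma_A)^2$ turns $1/\Phi^{(ext)}$ into the factor $(\D\gamma_A)^2$, yielding the stated $k\ge 4C_1(\D\gamma_A)^2/h\cdot\ln(C_2\UPSILON{0}/(\epsilon\min\{1,x_{\min}^{(0)}\}))$ with $\UPSILON{0}=\max\{\PSI{0},\z{0}\}$ being the bound on $\|(\x{k},\z{k})\|_\infty$ from Lemma~\ref{lem_UBxk} applied to the extended instance.

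The main obstacle I expect is the bookkeeping of how the three determinant-type quantities — $D_S$ (appearing in $\z{0}$), $D$ and $\gamma_A$ (appearing in the constants $C_1,C_2,C_3$ and the final bound) — transform when one passes from $A$ to $[A\,|\,b]$, and ensuring the stated closed-form constants genuinely absorb these changes (in particular the relation $(\gamma_A)^{n-1}D\le D_S\le(\gamma_A)^n D$ and the analogous bounds for the augmented matrix). The conceptual content is light — it is essentially "extend, check feasibility/optimality are preserved, check strong dominance, apply the master theorem" — but getting every constant to line up with the paper's definitions of $C_1,C_2,C_3,h_0,\PSI{0}$ is where the real work lies.
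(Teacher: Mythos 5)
Your proposal follows the paper's own route: precondition by appending the column $b$ with cost $c'=2C_1$ and capacity $\z{0}$, prove strong dominance of $(\xz,\z{0})$ by bounding the dual vertices via Cramer's rule ($\|y\|_\infty\le D_S\onenorm{b}$, exactly the paper's $\rho$), and then hand the extended instance to Theorem~\ref{thm_main_full}; in fact you are more explicit than the paper, whose written proof only establishes the strong-dominance lemma and fixes $c'$, leaving the optimum-preservation and constant bookkeeping implicit. One justification you give is wrong, though the conclusion survives: you argue that an extended basic feasible solution using the new column has cost at least $c'$ and that its gap is at least $c'-\opt\ge\opt\ge\Phi$, but the new variable $s$ need not be $\ge 1$ (only $s\ge 1/(\D\gamma_A)$ for the extended matrix), and $\opt\ge\Phi$ is false in general (e.g.\ $A=[1\ 1]$, $b=1$, $c=(1,100)$ gives $\opt=1$, $\Phi=99$). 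The clean fix is the convexity argument you start but do not finish: for $0<s<1$, $f'/(1-s)$ is feasible for the original LP, so the extended cost is at least $(1-s)\opt+c's\ge\opt+sC_1>\opt$, which shows $\Xstar^{(ext)}=\Xstar\times\{0\}$; and for the step size one should not rely on $\Phi^{(ext)}\ge\Phi$ at all but instead apply Lemma~\ref{lem_lowBounds} directly to the extended instance, as you in effect do when rewriting the iteration count. With that repair, your reduction is the paper's argument, and the residual difficulty you correctly identify (reconciling $\D^{(ext)},\gamma^{(ext)},h_0^{(ext)}$ with $\D,\gamma_A,h_0$) is one the paper's own terse proof also leaves unaddressed.
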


Theorem~\ref{thm_main_simpl} subsumes~\cite[Theorem 1.2]{SV-Flow} for flow problems
by giving a tighter asymptotic convergence rate, 
since for the transshipment problem $A$ is a totally unimodular matrix 
and satisfies $\D=D_S=1$, $\gamma_{A}=1$, $\inftynorm{A}=1$ and $\Phi=1$.
We note that the scalar $\z{0}$ depends on the scaled determinant $D_S$, 
see Theorem~\ref{thm:Prev}.

\subsubsection{Proof of Theorem~\ref{thm_main_simpl}}

In the extended problem, we concatenate to matrix $A$ a column equal to $b$ such that
the resulting constraint matrix becomes $[A\,\vert\, b]$.
Let $c'$ be the cost and let $x'$ be the initial capacity of the newly
inserted constraint column. We will determine $c'$ and $x'$ in the course of the discussion. 
Consider the dual of the max-flow like LP for the extended problem. 
It has an additional variable $z'$ and reads
\[      
\min \set{x^{\rot}z + x'z'}{z \ge 0;\ z' \ge 0;\ z \ge A^{\rot}y;\ 
	z' \ge b^{\rot}y;\ b^{\rot}y = 1}.
\]
In any optimal solution, $z' = b^{\rot}y = 1$ and hence the dual is equivalent to 
\begin{equation}\label{eq:dualExt}
	\min \set{x^{\rot}z + x' }{z \ge 0;\ z\ge A^{\rot}y;\ b^{\rot} y = 1}.
\end{equation}
The strongly dominating set of the extended problem is therefore equal to 
\begin{equation}\label{defSDSExt}
	X = \set{\ex \in \R_{> 0}^{m+1}}{y^{\rot} [A\,\vert\, b] \ex > 0 \text{ for all } y \in Y}.
\end{equation}
The defining condition translates into $x' >  - y^{\rot} A x$ for all $y \in Y$. We summarize the discussion in the following Lemma.

\begin{lemma} 
	Given a positive $x\in\R^m$, let $\rho := \onenorm{b}D_S$ and
	$x' := 1+\rho\onenorm{A}\inftynorm{x}$, where
	$\onenorm{A}:=\sum_{i,j}|A_{i,j}|$ and 
	$\D_S := \max\{|\det(A')|\,:\, A'\text{ is a square sub-matrix of }A\}$.
	Then, $(x;\ x')$ is a strongly dominating 
	starting point of the extended problem such that
	$y^{\rot} [A\,\vert\, b] (x;\ x') = y^{\rot} A x + x' \ge 1$, for all $y \in Y$.
\end{lemma}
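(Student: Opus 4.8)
The plan is to verify the two claimed bounds directly from the definition of the strongly dominating set $X$ for the extended problem, as recorded in \eqref{defSDSExt}. Recall from the discussion preceding the Lemma that $(x;x') \in \R_{>0}^{m+1}$ lies in the strongly dominating set of the extended LP exactly when $y^T[A\,|\,b](x;x') > 0$ for every $y \in Y$, and that this condition unwinds to $x' > -y^T A x$ for all $y \in Y$. So the whole task is to produce a value of $x'$, depending only on $x$ (and the fixed data $A,b$), that makes $y^T A x + x' \ge 1$ for every $y \in Y$; positivity of $x'$ is then automatic and $(x;x')$ is positive since $x$ is.

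First I would bound $|y^T A x|$ uniformly over $y \in Y$. The set $Y$ consists of the $y$-projections of the vertices of the dual polyhedron $\{(z,y) : z \ge 0,\ z \ge A^T y,\ b^T y = 1\}$; each such vertex is determined by a nonsingular square subsystem of the defining inequalities/equalities, so by Cramer's rule each coordinate $y_i$ is a ratio of subdeterminants of the matrix $\big[\begin{smallmatrix} -A^T & I \\ b^T & 0\end{smallmatrix}\big]$ augmented by the normalization $b^T y = 1$. The denominator is a nonzero integer determinant, hence at least $1$ in absolute value, and the numerators are bounded in absolute value by $D_S \cdot \onenorm{b}$ after expanding along the row containing $b$ — this is precisely why the factor $\rho := \onenorm{b} D_S$ appears. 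Thus $\infnorm{y} \le \rho$ for every $y \in Y$. Consequently, for every $y \in Y$,
\[
|y^T A x| \le \sum_{i,j} |y_i|\,|A_{i,j}|\,|x_j| \le \infnorm{y}\cdot\onenorm{A}\cdot\infnorm{x} \le \rho\,\onenorm{A}\,\infnorm{x}.
\]
Now set $x' := 1 + \rho\,\onenorm{A}\,\infnorm{x}$. Then $x' > 0$ and for every $y \in Y$,
\[
y^T A x + x' \ge x' - |y^T A x| \ge 1 + \rho\,\onenorm{A}\,\infnorm{x} - \rho\,\onenorm{A}\,\infnorm{x} = 1,
\]
which is exactly the asserted inequality. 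In particular $y^T[A\,|\,b](x;x') = y^T A x + x' \ge 1 > 0$ for all $y \in Y$, so by \eqref{defSDSExt} the point $(x;x')$ is a strongly dominating starting point of the extended problem. This also matches the choice $\z{0} := 1 + D_S\infnorm{x}\onenorm{A}\onenorm{b}$ in Theorem~\ref{thm_main_simpl}, since $\rho = \onenorm{b}D_S$.

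The only step with any real content is the bound $\infnorm{y} \le \rho$ for vertices $y$ of the dual feasible region; everything else is the triangle inequality. I would get that bound cleanly by invoking the standard fact (e.g.\ \cite[Lemma 8.6-style argument]{PapadimitriouSteiglitz82}, or the reasoning already used in Lemma~\ref{lem:fsDg} and Lemma~\ref{lem_lowBounds}) that a vertex of a rational polyhedron $\{u : Mu \le d\}$ has all coordinates expressible as ratios of subdeterminants of $(M\,|\,d)$; applied to the dual polyhedron with $M$ built from $-A^T$, $I$, and $\pm b^T$, cofactor expansion along the $b$-row yields the numerator bound $D_S\onenorm{b}$ while the denominator is a nonzero integer. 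I expect the main obstacle to be purely bookkeeping: writing the dual constraint system in a normal form to which the vertex-determinant bound applies verbatim, and tracking that it is $D_S$ (not $D$) that governs here because the relevant submatrices of $[-A^T\mid I\mid b]$ can involve up to $n$ rows of $A$ with no $\gamma_A$-scaling available. This is also the reason the statement and Theorem~\ref{thm_main_simpl} are phrased in terms of $D_S$ rather than the scale-invariant $D$.
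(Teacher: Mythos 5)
Your proposal is correct and follows essentially the same route as the paper: reduce the claim to the uniform bound $\infnorm{y}\le\rho$ for all $y\in Y$ via the triangle inequality $|y^TAx|\le\infnorm{y}\onenorm{A}\infnorm{x}$, and then obtain that bound by writing each $y\in Y$ as (the $y$-part of) a basic feasible solution of the dual polyhedron and applying Cramer's rule, with the integral denominator at least $1$ and the numerator at most $\onenorm{b}\D_S$ by Laplace expansion along the $b$-row. The paper's proof carries out exactly this vertex-determinant bookkeeping (it also notes that the polyhedron is nonempty and line-free so that an optimal vertex exists), so no further comment is needed.
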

\begin{proof}
	We show first that $\max_{y\in Y}\inftynorm{y}\leq\rho$ implies the statement. Let $y \in Y$ be arbitrary. 
	Since $|y^{\rot} A x|\leq\onenorm{A}\inftynorm{x}\inftynorm{y}$, 
	we have $\max_{y\in Y}|y^{\rot} A x|\leq \rho\onenorm{A}\inftynorm{x}=x'-1$ and hence
	$y^{\rot} [A\,\vert\, b] (x;\ x') \ge 1$.
	
	It remains to show that $\max_{y\in Y}\inftynorm{y}\leq\rho$.
	The constraint polyhedron of the dual~\eqref{eq:dualExt} is given in matrix notation as
	\[
	\PExt:=\left\{ \left(\begin{array}{c}
	z\\
	y
	\end{array}\right)\in\R^{m+n}\,:\,\left[\begin{array}{cc}
	I_{m\times m} & -A^{\rot}\\
	\mathbf{0}_{m}^{\rot} & b^{\rot}\\
	I_{m\times m} & \mathbf{0}_{m\times n}
	\end{array}\right]\left(\begin{array}{c}
	z\\
	y
	\end{array}\right)\begin{array}{c}
	\geq\\
	=\\
	\geq
	\end{array}\left(\begin{array}{c}
	\mathbf{0}_{m}\\
	1\\
	\mathbf{0}_{m}
	\end{array}\right)\right\}.
	\]
	Let us denote the resulting constraint matrix and vector by $M\in\R^{2m+1 \times m+n}$
	and $d\in\R^{2m+1}$, respectively.
	
	Note that if $b=\vZeros$ then the primal LP~\eqref{eq:primalWDS} 
	is either unbounded or infeasible. Hence, we consider the non-trivial case 
	when $b\neq\vZeros$. Observe that the polyhedron $\PExt$ is not empty, 
	since for any $y$ such that $b^{\rot}y=1$ there is $z=\max\{0,A^{\rot}y\}$ satisfying 
	$(z;\,y)\in\PExt$.
	Further, $\PExt$ does not contain a line (see Subsection~\ref{subsec:SDCV}) 
	and thus $\PExt$ has at least one extreme point $p^{\prime}\in \PExt$. 
	As the dual LP~\eqref{eq:dualWDS} has a bounded value (the target function is lower 
	bounded by $0$) and an extreme point exists ($p^{\prime}\in \PExt$), the optimum 
	is attained at an extreme point $p\in \PExt$.
	Moreover, as every extreme point is a basic feasible solution
	and matrix $M$ has linearly independent columns ($A$ has full row rank),
	it follows that $p$ has $m+n$ tight linearly independent constraints.
	
	Let $M_{B(p)}\in\R^{m+n\times m+n}$ be the basis submatrix of $M$ satisfying $M_{B(p)}p=d_{B(p)}$.
	Since $A,b$ are integral and $M_{B(p)}$ is invertible, 
	using Laplace expansion we have
	$1\leq|\det(M_{B(p)})|\leq\onenorm{b}\D_S=\rho$.
	Let $Q_i$ denotes the matrix formed by replacing the $i$-th column 
	of $M_{B(p)}$ by the column vector $d_{B(p)}$. Then, by Cramer's rule, it follows that	
	$|y_i|=|\det(Q_i)/\det(M_{B(p)})|
	\leq|\det(M_{B(p)})|\leq\rho$,	for all $i\in[n]$.
\end{proof}

It remains to fix the cost of the new column. Using Lemma~\ref{lem:fsDg}, 
$\opt\leq c^{\rot}\x{k}\leq C_1$ for every $k\in\N$, 
and thus we set $c' := 2C_1$.

\subsection{A Simple Lower Bound}\label{SimpleLowerBound}

Building upon~\cite[Lemma B.1]{SV-Flow}, we give a lower bound on 
the number of steps required for computing an $\epsilon$-approximation to 
the optimum shortest path.
In particular, we show that for the Physarum-inspired dynamics \eqref{eq:DdirD} to compute 
a point $\x{k}$ such that $\dist(\x{k},\Xstar)<\epsilon$, 
the required number of steps $k$ has to grow linearly in $\opt/(h\Phi)$ and $\ln(1/\epsilon)$.

\begin{theorem}\label{thm_one} 
	Let $(A,b,c)$ be a positive LP instance such that $A=[1\,\,1]$, $b=1$ and 
	$c=(\opt,\,\,\opt+\Phi)^{\rot}$, where $\opt>0$ and $\Phi>0$.
	Then, for any $\epsilon\in(0,1)$ the discrete directed Physarum-inspired dynamics \eqref{eq:DdirD}
	initialized with $\x{0}=(1/2,1/2)$ and any step size $h\in(0,1/2]$, 
	requires at least $k=(1/2h) \cdot \max\{\opt/\Phi,1\}\cdot\ln(2/\epsilon)$ steps	
	to guarantee $\x{k}_{1}\geq1-\epsilon$, $\x{k}_{2}\leq\epsilon$. 
	Moreover, if $\epsilon\leq\Phi/(2\opt)$ then
	$c^{\rot} \xk \ge (1 + \epsilon)\opt$ as long as 
	$k \le (1/2h) \cdot \max\{\opt/\Phi,1\}\cdot\ln(2 \Phi/(\epsilon\cdot\opt))$.
\end{theorem}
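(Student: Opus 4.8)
The plan is to solve the two-dimensional dynamics explicitly. With $A = [1\;1]$, $b = 1$, and $x^{(\ell)} = (x_1^{(\ell)}, x_2^{(\ell)})$, the Laplacian is the scalar $L^{(\ell)} = A (R^{(\ell)})^{-1} A^T = x_1^{(\ell)}/c_1 + x_2^{(\ell)}/c_2$, the potential is $p^{(\ell)} = 1/L^{(\ell)}$, and the flow is $q_i^{(\ell)} = (x_i^{(\ell)}/c_i)\cdot p^{(\ell)}$, so $q_1^{(\ell)} + q_2^{(\ell)} = 1$ at every step and the iterates stay feasible (this follows directly from Lemma~\ref{convergence to feasibility} since $b - Ax^{(0)} = 0$). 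First I would record the key monotonicity fact: since $c_1 = \opt < c_2 = \opt + \Phi$, one checks $q_1^{(\ell)} = x_1^{(\ell)}/(x_1^{(\ell)} + (c_1/c_2) x_2^{(\ell)}) \ge x_1^{(\ell)}$ whenever $x^{(\ell)}$ is feasible, with the gap controlled by $\Phi/\opt$; hence $x_1^{(\ell)}$ is nondecreasing and bounded by $1$, while $x_2^{(\ell)}$ is nonincreasing and bounded below by $0$.

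Next I would derive an upper bound on the per-step progress of $x_2^{(\ell)}$. Writing $x_2^{(\ell+1)} = (1-h)x_2^{(\ell)} + h q_2^{(\ell)}$ and using $q_2^{(\ell)} = (c_1/c_2)\,x_2^{(\ell)}/(x_1^{(\ell)} + (c_1/c_2)x_2^{(\ell)}) \ge (c_1/c_2)\,x_2^{(\ell)} = \frac{\opt}{\opt+\Phi}\, x_2^{(\ell)}$ (valid since $x_1^{(\ell)} + (c_1/c_2)x_2^{(\ell)} \le x_1^{(\ell)} + x_2^{(\ell)} = 1$), I get
\[
x_2^{(\ell+1)} \ge \Bigl(1 - h + h\tfrac{\opt}{\opt+\Phi}\Bigr) x_2^{(\ell)} = \Bigl(1 - \tfrac{h\Phi}{\opt+\Phi}\Bigr) x_2^{(\ell)}.
\]
Iterating from $x_2^{(0)} = 1/2$ gives $x_2^{(k)} \ge \tfrac12 (1 - h\Phi/(\opt+\Phi))^k$. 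Since $\opt + \Phi \le 2\max\{\opt,\Phi\}$, for $x_2^{(k)} \le \epsilon$ to hold we need $(1 - h\Phi/(\opt+\Phi))^k \le 2\epsilon$, and using $\ln(1/(1-t)) \le 2t$ for $t \le 1/2$ (here $h\Phi/(\opt+\Phi) \le h \le 1/2$), this forces $k \ge \tfrac{\opt+\Phi}{2h\Phi}\ln(1/2\epsilon) \ge \tfrac{1}{2h}\max\{\opt/\Phi,1\}\ln(2/\epsilon)$. Since $\Xstar = \{(1,0)\}$ and $x_1^{(k)} = 1 - x_2^{(k)}$, the condition $\dist(x^{(k)},\Xstar) < \epsilon$ is equivalent to $x_2^{(k)} < \epsilon$, so this proves the first claim.

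For the cost statement, I would note $c^T x^{(k)} = \opt\cdot x_1^{(k)} + (\opt+\Phi)x_2^{(k)} = \opt + \Phi\, x_2^{(k)}$, so $c^T x^{(k)} \ge (1+\epsilon)\opt$ is equivalent to $x_2^{(k)} \ge \epsilon\,\opt/\Phi$. Repeating the iteration bound above, $x_2^{(k)} \ge \tfrac12(1 - h\Phi/(\opt+\Phi))^k$, and this stays $\ge \epsilon\,\opt/\Phi$ as long as $(1 - h\Phi/(\opt+\Phi))^k \ge 2\epsilon\,\opt/\Phi$; taking logarithms and using $\ln(1/(1-t)) \le 2t$ again shows this holds whenever $k \le \tfrac{\opt+\Phi}{2h\Phi}\ln(\Phi/(2\epsilon\,\opt))$, which is implied by $k \le \tfrac{1}{2h}\max\{\opt/\Phi,1\}\ln(2\Phi/(\epsilon\opt))$ after adjusting the constant inside the logarithm (the hypothesis $\epsilon \le \Phi/(2\opt)$ ensures the logarithm argument exceeds $1$). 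The main obstacle is purely bookkeeping: getting the constants in the two logarithms and the $\max\{\opt/\Phi,1\}$ factor to line up exactly with the stated bounds, which requires care with the inequality $\opt+\Phi \le 2\max\{\opt,\Phi\}$ and with whether $\opt \ge \Phi$ or $\opt < \Phi$; there is no conceptual difficulty since the dynamics is one-dimensional after eliminating $x_1 = 1 - x_2$.
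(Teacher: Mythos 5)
Your proposal is correct and follows essentially the same route as the paper: both derive $x_1^{(k)}+x_2^{(k)}=1$ by induction, lower-bound the per-step decay of $x_2$ by the factor $1-h\Phi/(\opt+\Phi)$ (your bound $q_2\ge (c_1/c_2)x_2$ is the same inequality as the paper's $s^{(k)}\le\gamma$), and convert via $1-z\ge e^{-2z}$ to the stated iteration counts. The residual $\ln 2$ bookkeeping in the logarithm arguments that you flag is present in the paper's own proof as well, so there is nothing further to reconcile.
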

\begin{proof} 
Let $c_1=\opt$ and $c_2=\gamma \opt$, where $\gamma = 1 + \Phi/\opt$. 
We first derive closed-form expressions for $\xk_1$, $\xk_2$, and $\xk_1 + \xk_2$. 
Let $\sk = \gamma \xk_1 + \xk_2$. 
For any $k\in\N$, we have $\qk_1 + \qk_2 = 1$ and 
$\qk_1/\qk_2 = (\xk_1/c_1)/(\xk_2/c_2) = \gamma \xk_1/\xk_2$. 
Therefore, $\qk_1 = \gamma \xk_1/\sk$ and $\qk_2 = \xk_2/\sk$, and hence 
\begin{equation}\label{eq:xkp1}
\xk_{1}=(1+h(-1+\gamma/\skm))\xkm_{1}\quad\text{and}\quad\xk_{2}=(1+h(-1+1/\skm))\xkm_{2}.
\end{equation}
Further,  $\x{k}_{1}+\x{k}_{2}=(1-h)(\xkm_{1} +\xkm_2)+h$, and thus by induction 
$\x{k}_{1}+\x{k}_{2}= 1$ for all $k\in\N$. 

Therefore, $\sk \le \gamma$ for all $k\in\N$ and hence
$\xk_1 \geq \xkm_1$, i.e. the sequence $\{\xk_1\}_{k \in\N}$ is increasing and 
the sequence $\{\xk_2\}_{k \in\N}$ is decreasing. 
Moreover, since $h(-1+1/\skm) \geq h(1-\gamma)/\gamma = -h\Phi/(\opt+\Phi)$ 
and using the inequality $1-z\geq e^{-2z}$ for every $z\in[0,1/2]$, 
it follows by \eqref{eq:xkp1} and induction on $k$ that
\[
\xk_{2} \ge \left(1-\frac{h\Phi}{\opt+\Phi}\right)^{k}\x{0}_{2}
\ge \frac{1}{2}\exp\left\{ -k\cdot\frac{2h\Phi}{\opt+\Phi}\right\}.
\]
Thus, $\xk_2 \ge \epsilon$ whenever $k\leq(1/2h)\cdot(\opt/\Phi+1)\cdot\ln(2/\epsilon)$.
This proves the first claim. 

For the second claim, observe that 
$c^{\rot} \xk = \opt\cdot \xk_1 + \gamma \opt\cdot \xk_2 = \opt\cdot (1 + (\gamma - 1)\xk_2)$. 
This is greater than $(1 + \epsilon)\opt$ iff $\xk_2 \ge \epsilon\cdot\opt/\Phi$. 
Thus, $c^{\rot} \xk \ge (1 + \epsilon)\opt$ as long as 
$k\leq(1/2h)\cdot(\opt/\Phi+1)\cdot\ln(2/(\epsilon\cdot\opt/\Phi))$.
\end{proof}

\section{Conclusions and Open Problems}

We proved convergence of the Physarum dynamics under conditions \eqref{c is nonnegative} to
\eqref{positive start vector}. One of the reviewers asked us to discuss whether these conditions
can be further relaxed.
1) Can we allow a start vector with components equal to zero? 
This is not interesting because $x_e(0) = 0$ implies $x_e(t) = 0$ for all $t$, 
i.e., the column $e$ is simply removed from the system. 
2) Can we allow that there is an element $f$ in the kernel of $A$ with $c_e f_e = 0$ for every $e$? 
Then the minimum energy solution is no longer unique; cf. Lemma~\ref{q is kernel-free}. 
3) Can we allow components of $c$ to be negative? We did not explore this direction because we were thinking of $r_e$ as resistances and negative resistances do not make sense. 
However, the mathematics might go through as long as non-zero vectors in the kernel of $A$ 
have positive cost.

An interesting open question for the Physarum-inspired dynamics~\eqref{eq:DdirD} 
is whether it converges to the set of optimal solutions under conditions
\eqref{c is nonnegative} to \eqref{positive start vector}.

\bibliographystyle{alpha}
\bibliography{references}

\end{document}